\newtheorem{theorem}{Theorem}[section]
\newtheorem{definition}[theorem]{Definition}
\newtheorem{proposition}[theorem]{Proposition}
\newtheorem{lemma}[theorem]{Lemma}
\newtheorem{corollary}[theorem]{Corollary}
\newtheorem{observation}{Observation}[section]
\newcommand{\qed}{\hfill$\Box$}
\newenvironment{proof}[1][]{\addvspace{.2cm} \noindent{\bf Proof: #1}}{\qed\vspace{.3cm}}
\newcommand{\bigO}{\ensuremath{\mathcal{O}}}
\newcommand{\bigOs}{\ensuremath{\mathcal{O}^*}}
\newcommand{\calC}{\ensuremath{\mathcal{C}}}
\newcommand{\NP}{\ensuremath{\mathcal{NP}}}
\newcommand{\sharpP}{\ensuremath{\mathcal{\#P}}}
\newcommand{\N}{\ensuremath{\mathbb{N}}}
\newcommand{\W}{\ensuremath{\mathcal{W}}}
\newcommand{\Z}{\ensuremath{\mathbb{Z}}}
\title{Fast Dynamic Programming on Graph Decompositions\footnote{Preliminary parts of this paper have appeared under the title `Dynamic Programming on Tree Decompositions Using Generalised Fast Subset Convolution' on the 17th Annual European Symposium on Algorithms (ESA 2009), Lecture Notes in Computer Science 5757, pages 566-577, and under the title `Faster Algorithms on Branch and Clique Decompositions' on the 35th International Symposium Mathematical Foundations of Computer Science (MFCS 2010), Lecture Notes in Computer Science 6281, pages 174-185.}}
\author{Johan M. M. van Rooij\footnote{Department of Information and Computing Sciences, Utrecht University, P. O. Box 80.089, NL-3508 TB Utrecht, The Netherlands, \texttt{jmmrooij@cs.uu.nl, hansb@cs.uu.nl}} \and Hans L. Bodlaender\footnotemark[\value{footnote}] \and Erik Jan van Leeuwen\footnote{Department of Informatics, University of Bergen, P. O. Box 7803, NO-5020 Bergen, Norway, \texttt{E.J.van.Leeuwen@ii.uib.no, Martin.Vatshelle@ii.uib.no}} \and Peter Rossmanith\footnote{Department of Computer Science, RWTH Aachen University, DE-52056 Aachen, Germany, \texttt{rossmani@cs.rwth-aachen.de}} \and \addtocounter{footnote}{-1}Martin Vatshelle\footnotemark[\value{footnote}]}
\begin{document}
\maketitle

\abstract{
In this paper, we consider three types of graph decompositions, namely tree decompositions, branch decompositions, and clique decompositions.
We improve the running time of dynamic programming algorithms on these graph decompositions for a large number of problems as a function of the treewidth, branchwidth, or cliquewidth, respectively.
Such algorithms have many practical and theoretical applications.

On tree decompositions of width~$k$, we improve the running time for {\sc Dominating Set} to $\bigOs(3^k)$.
Hereafter, we give a generalisation of this result to $[\rho,\sigma]$-domination problems with finite or cofinite $\rho$~and~$\sigma$.
For these problems, we give $\bigOs(s^k)$-time algorithms.
Here, $s$ is the number of `states' a vertex can have in a standard dynamic programming algorithm for such a problems.
Furthermore, we give an $\bigOs(2^k)$-time algorithm for counting the number of perfect matchings in a graph, and generalise this result to $\bigOs(2^k)$-time algorithms for many clique covering, packing, and partitioning problems.
On branch decompositions of width~$k$, we give an $\bigOs(3^{\frac{\omega}{2}k})$-time algorithm for {\sc Dominating Set}, an $\bigOs(2^{\frac{\omega}{2}k})$-time algorithm for counting the number of perfect matchings, and $\bigOs(s^{\frac{\omega}{2}k})$-time algorithms for $[\rho,\sigma]$-domination problems involving $s$~states with finite or cofinite $\rho$~and~$\sigma$.
Finally, on clique decompositions of width~$k$, we give $\bigOs(4^k)$-time algorithms for {\sc Dominating Set}, {\sc Independent Dominating Set}, and {\sc Total Dominating Set}.

The main techniques used in this paper are a generalisation of fast subset convolution, as introduced by Bj\"orklund et al., now applied in the setting of graph decompositions and augmented such that multiple states and multiple ranks can be used.
In the case of branch decompositions, we combine this approach with fast matrix multiplication, as suggested by Dorn.
Recently, Lokshtanov et al. have shown that some of the algorithms obtained in this paper have running times in which the base in the exponents is optimal, unless the Strong Exponential-Time Hypothesis fails.
}

\section{Introduction}
Width parameters of graphs and their related graph decompositions are important concepts in the theory of graph algorithms.
Many investigations show that problems that are \NP-hard on general graphs become polynomial or even linear-time solvable when restricted to the class of graphs in which a given width parameter is bounded.
However, the constant factors involved in the upper bound on the running times of such algorithms are often large and depend on the parameter.
Therefore, it is often useful to find algorithms where these factors grow as slow as possible as a function of the width parameter~$k$.

In this paper, we consider such algorithms involving three prominent graph-width parameters and their related decompositions: \emph{treewidth} and \emph{tree decompositions}, \emph{branchwidth} and \emph{branch decompositions}, and \emph{cliquewidth} and \emph{$k$-expressions} or \emph{clique decompositions}.
These three graph-width parameters are probably the most commonly used ones in the literature.
However, other parameters such as rankwidth~\cite{OumS06} or booleanwidth~\cite{Bui-XuanTV09} and their related decompositions also exist.

Most algorithms solving combinatorial problems using a graph-width parameter consist of two steps:
\begin{enumerate}
\item Find a graph decomposition of the input graph of small width.
\item Solve the problem by dynamic programming on this graph decomposition.
\end{enumerate}
In this paper, we will focus on the second of these steps and improve the running time of many known algorithms on all three discussed types of graph decompositions as a function of the width parameter.
The results have both theoretical and practical applications, some of which we survey below.

We obtain our results by using variants of the {\em covering product} and the {\em fast subset convolution} algorithm \cite{BjorklundHKK07} in conjunction with known techniques on these graph decompositions.
These two algorithms have been used to speed up other dynamic programming algorithms before, but not in the setting of graph decompositions.
Examples include algorithms for {\sc Steiner Tree}~\cite{BjorklundHKK07,Nederlof09}, graph motif problems~\cite{BetzlerFKN08}, and graph recolouring problems~\cite{PontaHN08}.
An important aspect of our results is an implicit generalisation of the fast subset convolution algorithm that is able to use multiple states.
This contrasts to the set formulation in which the covering product and subset convolution are defined: this formulation is equivalent to using two states (in and out).
Moreover, the fast subset convolution algorithm uses ranked M\"{o}bius transforms, while we obtain our results 
by using transformations that use multiple states and multiple ranks.
It is interesting to note that the state-based convolution technique that we use reminds of the technique used in Strassen's algorithm for fast matrix multiplication~\cite{Strassen69}.

Some of our algorithms also use \emph{fast matrix multiplication} to speed up dynamic programming as introduced by Dorn~\cite{Dorn06}.
To make this work efficiently, we introduce the use of asymmetric vertex states.
We note that matrix multiplication has been used for quite some time as a basic tool for solving combinatorial problems.
It has been used for instance in the {\sc All Pairs Shortest Paths} problem \cite{Seidel95}, in recognising triangle-free graphs~\cite{ItaiR78}, and in computing graph determinants.
One of the results of this paper is that (generalisations of) fast subset convolution and fast matrix multiplication can be combined to obtain faster dynamic programming algorithms for many optimisation problems.

\paragraph{Treewidth-Based Algorithms.}
Tree-decomposition-based algorithms can be used to effectively solve combinatorial problems on graphs of small treewidth both in theory and in practice.
Practical algorithms exist for problems like partial constraint satisfaction~\cite{KosterHK02}.
Furthermore, tree-decomposition-based algorithms are used as subroutines in many areas such as approximation algorithms~\cite{DemaineH08,Eppstein00}, parameterised algorithms~\cite{CyganNPPRW11,DemaineFHT05,MolleRR08,ThilikosSB05}, exponential-time algorithms~\cite{FominGSS09,ScottS07,vanRooijND09}, and subexponential-time algorithms~\cite{BodlaenderR10,CyganNPPRW11,FominT04}.

Many \NP-hard problems can be solved in polynomial time on a graphs whose treewidth is bounded by a constant.
If we assume that a graph~$G$ is given with a tree decomposition~$T$ of~$G$ of width~$k$, then the running time of such an algorithm is typically polynomial in the size of~$G$, but exponential in the treewidth~$k$.
Examples of such algorithms include many kinds of vertex partitioning problems (including the $[\rho,\sigma]$-domination problems) \cite{TelleP97}, edge colouring problems such as {\sc Chromatic Index} \cite{Bodlaender90}, or other problems such as {\sc Steiner Tree}~\cite{KorachS90}.

Concerning the first step of the general two-step approach above, we note that finding a tree decomposition of minimum width is \NP-hard~\cite{ArnborgCP87}.
For fixed~$k$, one can find a tree decomposition of width at most~$k$ in linear time, if such a decomposition exists~\cite{Bodlaender96}.
However, the constant factor involved in the running time of this algorithm is very high.
On the other hand, tree decompositions of small width can be obtained efficiently for special graph classes~\cite{Bodlaender98}, and there are also several good heuristics that often work well in practice~\cite{BodlaenderK10}.

Concerning the second step of this two-step approach, there are several recent results about the running time of algorithms on tree decompositions, with special considerations for the running time as a function of the width of the tree decomposition~$k$.
For several vertex partitioning problems, Telle and Proskurowski showed that there are algorithms that, given a graph with a tree decomposition of width~$k$, solve these problems in $\bigO(c^k n)$ time~\cite{TelleP97}, where~$c$ is a constant that depends only on the problem at hand.
For {\sc Dominating Set}, Alber and Niedermeier gave an improved algorithm that runs in $\bigO(4^k n)$ time~\cite{AlberN02}.
Similar results are given in~\cite{AlberBFKN02} for related problems: {\sc Independent Dominating Set}, {\sc Total Dominating Set}, {\sc Perfect Dominating Set}, {\sc Perfect Code}, {\sc Total Perfect Dominating Set}, {\sc Red-Blue Dominating Set}, and weighted versions of these problems.

If the input graph is planar, then other improvements are possible.
Dorn showed that {\sc Dominating Set} on planar graphs given with a tree decomposition of
width $k$ can be solved in $O^*(3^k)$ time~\cite{Dorn10}; he also gave similar improvements for other problems. 
We obtain the same result without requiring planarity.

In this paper, we show that the number of dominating sets of each given size in a graph can be counted in $\bigOs(3^k)$ time.
After some modifications, this leads to an $\bigOs(nk^23^k)$-time algorithm for {\sc Dominating Set}.
We also show that one can count the number of perfect matchings in a graph in $\bigOs(2^k)$ time, and we generalise these results to the $[\rho,\sigma]$-domination problems, as defined in~\cite{TelleP97}.

For these $[\rho,\sigma]$-domination problems, we show that they can be solved in $\bigOs(s^k)$ time, where~$s$ is the natural number of states required to represent partial solutions.
Here, $\rho$ and $\sigma$ are subsets of the natural numbers, and each choice of these subsets defines a different combinatorial problem.
The only restriction that we impose on these problems is that we require both $\rho$ and $\sigma$ to be either finite or cofinite.
That such an assumption is necessary follows from Chappelle's recent result~\cite{Chapelle10}: he shows that $[\rho,\sigma]$-domination problems are \W[1]-hard when parameterised by the treewidth of the graph if $\sigma$ is allowed to have arbitrarily large gaps between consecutive elements and $\rho$ is cofinite.
The problems to which our results apply include {\sc Strong Stable Set}, {\sc Independent Dominating Set}, {\sc Total Dominating Set}, {\sc Total Perfect Dominating Set}, {\sc Perfect Code}, {\sc Induced $p$-Regular Subgraph}, and many others.
Our results also extend to other similar problems such as {\sc Red-Blue Dominating Set} and {\sc Partition Into Two Total Dominating Sets}.

Finally, we define families of problems that we call $\gamma$-clique covering, $\gamma$-clique packing, or $\gamma$-clique partitioning problems: these families generalise standard problems like {\sc Minimum Clique Partition} in the same way as the $[\rho,\sigma]$-domination problems generalise {\sc Domination Set}.
The resulting families of problems include {\sc Maximum Triangle Packing}, {\sc Partition Into $l$-Cliques} for fixed $l$, the problem to determine the minimum number of odd-size cliques required to cover~$G$, and many others.
For these $\gamma$-clique covering, packing, or partitioning problems, we give $\bigOs(2^k)$-time algorithms, improving the straightforward $\bigOs(3^k)$-time algorithms for these problems.

\paragraph{Branchwidth-Based Algorithms.}
Branch decompositions are closely related to tree decompositions.
Like tree decompositions, branch decompositions have shown to be an effective tool for solving many combinatorial problems with both theoretical and practical applications.
They are used extensively in designing algorithms for planar graphs and for graphs excluding a fixed minor.
In particular, most of the recent results aimed at obtaining faster exact or parameterised algorithms on these graphs rely on branch decompositions~\cite{Dorn06,DornFT08,FominT04,FominT06}.
Practical algorithms using branch decompositions include those for ring routing problems~\cite{CookS93}, and tour merging for the {\sc Travelling Salesman Problem}~\cite{CookS03}.

Concerning the first step of the general two-step approach, we note that finding the branchwidth of a graph is \NP-hard in general~\cite{SeymourT94}.
For fixed~$k$, one can find a branch decomposition of width~$k$ in linear time, if such a decomposition exists, by combining the results from~\cite{Bodlaender96} and~\cite{BodlaenderT97}.
This is similar to tree decompositions, and the constant factors involved in the running time of this algorithms are very large.
However, in contrast to tree decompositions for which the complexity on planar graphs is unknown, there exists a polynomial-time algorithm that computes a branch decomposition of minimal width of a planar graph~\cite{SeymourT94}.
For general graphs several useful heuristics exist~\cite{CookS93,CookS03,Hicks02}.

Concerning the second step of the general two-step approach, Dorn has shown how to use fast matrix multiplication to speed up dynamic programming algorithms on branch decompositions~\cite{Dorn06}.
Among others, he gave an $\bigOs(4^k)$-time algorithm for the {\sc Dominating Set} problem.
On planar graphs, faster algorithms exist using so-called sphere-cut branch decompositions~\cite{DornPBF05}.
On these graphs, {\sc Dominating Set} can be solved in $\bigOs(3^{\frac{\omega}{2}k})$ time, where $\omega$ is the smallest constant such that two $n \times n$ matrices can be multiplied in $\bigO(n^\omega)$ time.
Some of these results can be generalised to graphs that avoid a minor~\cite{DornFT08}.
We obtain the same results without imposing restrictions on the class of graphs to which our algorithms can be applied.

In this paper, we show that one can count the number of dominating sets of each given size in a graph in $\bigOs(3^{\frac{\omega}{2}k})$ time.
We also show that one can count the number of perfect matchings in a graph in $\bigOs(2^{\frac{\omega}{2}k})$ time, and we show that the $[\rho,\sigma]$-domination problems with finite or cofinite $\rho$ and $\sigma$ can be solved in $\bigOs(s^{\frac{\omega}{2}k})$ time, where $s$ is again the natural number of states required to represent partial solutions.

\paragraph{Cliquewidth Based Algorithms.}
The notion of cliquewidth was first studied by Courcelle et al.~\cite{CourcelleER93}.
The graph decomposition associated with cliquewidth is a $k$-expression, which is sometimes also called a clique decomposition.
Similar to treewidth and branchwidth, many well-known problems can be solved in polynomial time on graphs which cliquewidth is bounded by a constant~\cite{CourcelleMR00}.

Whereas the treewidth and branchwidth of any graph are closely related, its cliquewidth can be very different from both of them.
For example, the treewidth of the complete graph on $n$~vertices is equal to $n-1$, while its cliquewidth is equal to~$2$.
However, the cliquewidth of a graph is always bounded by a function of its treewidth~\cite{CourcelleO00}.
This makes cliquewidth an interesting graph parameter to consider on graphs where the treewidth or branchwidth is too high for obtaining efficient algorithms.

Concerning the first step of the general two-step approach, we note that, like the other two parameters, computing the cliquewidth of general graphs is \NP-hard~\cite{FellowsRRS09}.
However, graphs of cliquewidth 1, 2, or 3 can be recognised in polynomial time~\cite{CorneilHLRR00}.
For $k \ge 4$, there is a fixed-parameter-tractable algorithm that, given a graph of cliquewidth~$k$, outputs a $2^{k+1}$ expression.

Concerning the second step, the first singly-exponential-time algorithm for {\sc Dominating Set} on clique decompositions of width~$k$ is an $\bigOs(16^k)$-time algorithm by Kobler and Rotics~\cite{KoblerR03}.
The previously fastest algorithm for this problem has a running time of $\bigOs(8^k)$, obtained by transforming the problem to a problem on boolean decompositions~\cite{Bui-XuanTV09}.
In this paper, we present a direct algorithm that runs in $\bigOs(4^k)$ time.
We also show that one can count the number of dominating sets of each given size at the cost of an extra polynomial factor in the running time.
Furthermore, we show that one can solve {\sc Independent Dominating Set} in $\bigOs(4^k)$ and {\sc Total Dominating Set} in  $\bigOs(4^k)$ time.

We note that there are other width parameters of graphs that potentially have lower values than cliquewidth, for example rankwidth (see~\cite{OumS06}) and booleanwidth (see~\cite{Bui-XuanTV09}).
These width parameters are related since a problem is fixed-parameter tractable parameterised by cliquewidth if and only if it is fixed-parameter tractable parameterised by rankwidth or booleanwidth~\cite{Bui-XuanTV09,OumS06}.
However, for many problems the best known running times for these problems are often much better as a function of the cliquewidth than as a function of the rankwidth or booleanwidth.
For example, dominating set can be solved on rank decompositions of width~$k$ in $\bigOs(2^{\frac{3}{4}k^2+\frac{23}{4}k})$ time~\cite{Bui-XuanTV10,GanianH10} and on boolean decompositions of width~$k$ in $\bigO(8^k)$ time~\cite{Bui-XuanTV09}.

\paragraph{Optimality, Polynomial Factors.}
We note that our results attain, or are very close to, intuitive natural lower bounds for the problems considered, namely a polynomial times the amount of space used by any dynamic programming algorithm for these problems on graph decompositions.
Similarly, it makes sense to think about the number of states necessary to represent partial solutions as the best possible base of the exponent in the running time: this equals the space requirements.
Currently, this is $\bigOs(3^k)$ space for {\sc Dominating Set} on tree decompositions and branch decompositions and $\bigOs(4^k)$ space on clique decompositions.

Very recently, this intuition has been strengthened by a result of Lokshtanov et al.~\cite{LokshtanovMS10}.
They prove that it is impossible to improve the exponential part of the running time for a number of tree-decomposition-based algorithms that we present in this paper, unless the \emph{Strong Exponential-Time Hypothesis} fails.
That is, unless there exist an algorithm for the general {\sc Satisfiability} problem running in $\bigO((2-\epsilon)^n)$ time, for any $\epsilon > 0$.
In particular, this holds for our tree decomposition based algorithms for {\sc Dominating Set} and {\sc Partition Into Triangles}.

We see that our algorithms on tree decompositions and clique decompositions all attain this intuitive lower bound.
On branch decompositions, we are very close.
When the number of states that one would naturally use to represent partial solutions equals~$s$, then our algorithms run in $\bigOs(s^{\frac{\omega}{2}k})$ time, where $\omega/2 < 1.188$.
Under the hypothesis that $\omega = 2$, which could be the true value of~$\omega$, our algorithms do attains this space bound.

Because of these seemingly-optimal exponential factors in the running times of our algorithms, we spend quite some effort to make the polynomial factors involved as small as possible.
In order to improve these polynomial factors, we need to distinguish between different problems based on a technical property for each type of graph decomposition that we call the \emph{de Fluiter property}.
This property is related to the concept of \emph{finite integer index}~\cite{BodlaenderA01,Fluiter97}.

Considering the polynomial factors of the running times of our algorithms sometimes leads to seemingly strange situations when the matrix multiplication constant is involved.
To see this, notice that $\omega$ is defined as the smallest constant such that two $n \times n$ matrices can be multiplied in $\bigO(n^\omega)$ time.
Consequently, any polylogarithmic factor in the running time of the corresponding matrix-multiplication algorithm disappears in an infinitesimal increase of~$\omega$.
These polylogarithmic factors are additional polynomial factors in the running times of our algorithms on branch decompositions.
In our analyses, we pay no extra attention to this, and we only explicitly give the polynomial factors involved that are not related to the time required to multiply matrices.

Also, because many of our algorithms use numbers which require more than a constant amount of bits to represent (often $n$-bit numbers are involved), the time and space required to represent these numbers and perform arithmetic operations on these numbers affects the polynomial factors in the running times of our algorithms.
We will always include these factors and highlight them using a special notation ($i_+(n)$ and $i_\times(n)$).

\paragraph{Model of Computation.}
In this paper, we use the \emph{Random Access Machine} (RAM) model with $\bigO(k)$-bit word size~\cite{FredmanW93} for the analysis of our algorithms.
In this model, memory access can be performed in constant time for memory of size~$\bigO(c^k)$, for any constant~$c$.
We consider addition and multiplication operations on $\bigO(k)$-bit numbers to be unit-time operations.
For an overview of this model, see for example~\cite{Hagerup98}.

We use this computational model because we do not want the table look-up operations to influence the polynomial factors of the running time.
Since the tables have size $\bigOs(s^k)$, for a problem specific integer $s \geq 2$, these operations are constant-time operations in this model.

\paragraph{Paper Organisation}
This paper is organised as follows.
We start with some preliminaries in Section~\ref{sec:prelim}.
In Section~\ref{sec:treewidth}, we present our results on tree decompositions.
This is followed by our results on branch decompositions in Section~\ref{sec:branchwidth} and clique decompositions in Section~\ref{sec:cliquewidth}.
To conclude, we briefly discuss the relation between the de Fluiter properties and finite integer index in Section~\ref{sec:fluiterprop}.
Finally, we give some concluding remarks in Section~\ref{sec:conclusion}.

\section{Preliminaries} \label{sec:prelim}
Let $G=(V,E)$ be an $n$-vertex graph with $m$ edges.
We denote the open neighbourhood of a vertex $v$ by $N(v)$ and the closed neighbourhood of a $v$ by $N[v]$, i.e., $N(v) = \{ u \in V \;|\; \{u,v\} \in E\}$ and $N[v] = N(v) \cup \{v\} $.
For a vertex subset~$U \subseteq V$, we denote by~$G[U]$ the subgraph induced by~$U$, i.e., $G[U] = (U, E \cap (U \times U))$.
We denote the powerset of a set $S$ by $2^S$.

For a decomposition tree~$T$, we often identify~$T$ with the set of nodes in~$T$, and we write~$E(T)$ for the edges of~$T$.
We often associate a table with each node or each edge in a decomposition tree $T$.
Such a table $A$ can be seen as a function, while we write $|A|$ for the size of $A$, that is, the total space required to store all entries of $A$.

We denote the time required to add and multiply $n$-bit numbers by $i_+(n)$ and $i_\times(n)$, respectively.
Currently, $i_\times(n) = n \log(n)2^{\bigO(\log^*(n))}$ due to F\"urer's algorithm \cite{Furer09}, and $i_+(n) = \bigO(n)$.
Note that $i_\times(k) = i_+(k) = \bigO(k)$ due to the chosen model of computation.

\subsection{Combinatorial Problems Studied} \label{sec:problems}
A \emph{dominating set} in a graph $G$ is a set of vertices $D \subseteq V$ such that for every vertex $v \in V \setminus D$ there exists a vertex $d \in D$ with $\{v,d\}\in E$, i.e, $\bigcup_{v \in D} N[v] = V$.
A dominating set $D$ in $G$ is a \emph{minimum dominating set} if it is of minimum cardinality among all dominating sets in $G$.
The classical \NP-hard problem {\sc Dominating Set} asks to find the size of a minimum dominating set in $G$.
Given a (partial) solution $D$ of {\sc Dominating Set}, we say that a vertex $d \in D$ \emph{dominates} a vertex $v$ if $v \in N[d]$, and that a vertex $v$ is \emph{undominated} if $N[v] \cap D = \emptyset$.
Besides the standard minimisation version of the problem, we also consider counting the number of minimum dominating sets, and counting the number of dominating sets of each given size.

A \emph{matching} in $G$ is a set of edges $M \subseteq E$ such that no two edges are incident to the same vertex.
A vertex that is an endpoint of an edge in $M$ is said to be \emph{matched} to the other endpoint of this edge.
A \emph{perfect matching} is a matchings in which every vertex $v \in V$ is matched.
Counting the number of perfect matchings in a graph is a classical \sharpP-complete problem~\cite{Valiant79}.

A \emph{$[\rho, \sigma]$-dominating set} is a generalisation of a dominating set introduced by Telle in \cite{Telle94,TelleP97}.
\begin{definition}[{$[\rho, \sigma]$}-dominating set]
Let $\rho, \sigma \subseteq \N$.
A $[\rho, \sigma]$-dominating set in a graph $G$ is a subset $D \subseteq V$ such that:
\begin{itemize}
\item for every $v \in V \setminus D$: $|N(v) \cap D| \in \rho$;
\item for every $v \in D$: $|N(v) \cap D| \in \sigma$.
\end{itemize}
\end{definition}

The $[\rho, \sigma]$-domination problems are the computational problems of finding $[\rho, \sigma]$-dominating sets; see Table~\ref{tab:rhosigma}.
Of these problems, we consider several variants: the \emph{$[\rho, \sigma]$-existence problems} ask whether a $[\rho, \sigma]$-dominating set exists in a graph $G$; the \emph{$[\rho, \sigma]$-minimisation} and \emph{$[\rho, \sigma]$-maximisation problems} ask for the minimum or maximum cardinality of a $[\rho, \sigma]$-dominating set in a graph $G$; and the \emph{$[\rho, \sigma]$-counting problems} ask for the number of $[\rho, \sigma]$-dominating sets in a graph $G$.
In a $[\rho, \sigma]$-counting problem, we sometimes restrict ourselves to counting $[\rho, \sigma]$-dominating sets of minimum size, maximum size, or of each given size.

Throughout this paper, we assume that $\rho$ and $\sigma$ are either finite or cofinite.

\begin{table}[tb]
	\begin{center}
	\begin{tabular}{|l|l||l|}
	\hline
	$\rho$ & $\sigma$ & Standard problem description \\
	\hline
	$\{0,1,\ldots\}$ & $\{0\}$ & Independent Set\\
	$\{1,2,\ldots\}$ & $\{0,1,\ldots\}$ & Dominating Set\\
	$\{0,1\}$ & $\{0\}$ & Strong Stable Set/2-Packing/ \\
	& & Distance-2 Independent Set\\
	$\{1\}$ & $\{0\}$ & Perfect Code/Efficient Dominating Set\\
	$\{1,2,\ldots\}$ & $\{0\}$ & Independent Dominating Set\\
	$\{1\}$ & $\{0,1,\ldots\}$ & Perfect Dominating Set\\
	$\{1,2,\ldots\}$ & $\{1,2,\ldots\}$ & Total Dominating Set\\
	$\{1\}$ & $\{1\}$ & Total Perfect Dominating Set\\
	$\{0,1\}$ & $\{0,1,\ldots\}$ & Nearly Perfect Set\\
	$\{0,1\}$ & $\{0,1\}$ & Total Nearly Perfect Set\\
	$\{1\}$ & $\{0,1\}$ & Weakly Perfect Dominating Set\\
	$\{0,1,\ldots\}$ & $\{0,1,\ldots,p\}$ & Induced Bounded Degree Subgraph\\
	$\{p,p+1,\ldots\}$ & $\{0,1,\ldots\}$ & $p$-Dominating Set \\
	$\{0,1,\ldots\}$ & $\{p\}$ & Induced $p$-Regular Subgraph \\
	\hline
	\end{tabular}
	\end{center}
	\caption{$[\rho,\sigma]$-domination problems (taken from \cite{Telle94,TelleP97}).}
	\label{tab:rhosigma}
\end{table}

Another type of problems we consider are clique covering, packing, and partitioning problems.
Because we want to give general results applying to many different problems, we will define a class of problems of our own: we define the notion of \emph{$\gamma$-clique covering}, \emph{$\gamma$-clique packing}, and \emph{$\gamma$-clique partitioning problems}.

We start by defining the $\gamma$-clique problems and note that their definitions somewhat resemble the definition of $[\rho,\sigma]$-domination problems.
\begin{definition}[$\gamma$-clique covering, packing, and partitioning] \label{def:cliqueproblems}
Let $\gamma \subseteq \N\setminus\{0\}$, let $G$ be a graph, and let $\calC$ be a collection of cliques from $G$ such that the size of every clique in $\calC$ is contained in $\gamma$.
We define the following notions:
\begin{itemize}
\item $\calC$ is a $\gamma$-clique cover of $G$ if $\calC$ covers the vertices of $G$, i.e, $\bigcup_{C \in \calC} C = V$.
\item $\calC$ is a $\gamma$-clique packing of $G$ if the cliques are disjoint, i.e, for any two $C_1, C_2 \in \calC$: $C_1 \cap C_2 = \emptyset$.
\item $\calC$ is a $\gamma$-clique partitioning of $G$ if it is both a $\gamma$-clique cover and a $\gamma$-clique packing.
\end{itemize}
\end{definition}

The corresponding computational problems are defined in the following way.
The $\gamma$-\emph{clique covering} problems asks for the cardinality of the smallest $\gamma$-clique cover.
The $\gamma$-\emph{clique packing} problems asks for the cardinality of the largest $\gamma$-clique packing.
The $\gamma$-\emph{clique partitioning} problems asks whether a $\gamma$-clique partitioning exists.
For these problems, we also consider their minimisation, maximisation, and counting variants.
See Table~\ref{tab:cliqueproblems} for some concrete example problems.
We note that clique covering problems in the literature often ask to cover all the edges of a graph: here we cover only the vertices.

Throughout this paper, we assume that $\gamma$ is decidable in polynomial time, that is, for every $j \in \N$ we can decide in time polynomial in $j$ whether $j \in \gamma$.

\begin{table}
	\begin{center}
	\begin{tabular}{|l|l||l|}
	\hline
	$\gamma$ & problem type & Standard problem description \\
	\hline
	$\{1,2,\ldots\}$ & partitioning, minimisation & Minimum clique partition \\
	$\{2\}$ & partitioning, counting & Count perfect matchings \\
	$\{3\}$ & covering & Minimum triangle cover of vertices \\
	$\{3\}$ & packing & Maximum triangle packing \\
	$\{3\}$ & partitioning & Partition into triangles \\
	$\{p\}$ & partitioning & Partition into $p$-cliques \\
	$\{1,3,5,7,\ldots\}$ & covering & Minimum cover by odd-cliques \\
	\hline
	\end{tabular}
	\end{center}
\caption{$\gamma$-clique covering, packing and partitioning problems.}
\label{tab:cliqueproblems}
\end{table}

\subsection{Graph Decompositions}
We consider dynamic programming algorithms on three different kinds of graph decompositions, namely tree decompositions, branch decompositions, and clique decompositions.

\subsubsection{Tree Decompositions}
The notions of a tree decomposition and treewidth were introduced by Robertson and Seymour \cite{RobertsonS86} and measure the tree-likeness of a graph.

\begin{definition}[tree decomposition]
A \emph{tree decomposition} of a graph~$G$ consists of a tree~$T$ in which each node~$x \in T$ has an associated set of vertices~$X_x \subseteq V$ (called a \emph{bag}) such that $\bigcup_{x \in T} X_x = V$ and the following properties hold:
\begin{enumerate}
\item for each $\{u,v\} \in E$, there exists a node $x \in T$ such that $\{u,v\} \in X_x$.
\item if $v \in X_x$ and $v \in X_y$, then $v \in X_z$ for all nodes $z$ on the path from node $x$ to node $y$ in~$T$.
\end{enumerate}
\end{definition}

The \emph{width}~$tw(T)$ of a tree decomposition~$T$ is the size of the largest bag of $T$ minus one.
The treewidth of a graph $G$ is the minimum treewidth over all possible tree decompositions of~$G$.
We note that the minus one in the definition exists to set the treewidth of trees to one.
In this paper, we will always assume that tree decompositions of the appropriate width are given.

Dynamic programming algorithms on tree decompositions are often presented on nice tree decompositions, which were introduced by Kloks~\cite{Kloks94}.
We give a slightly different definition of a nice tree decomposition.
\begin{definition}[nice tree decomposition] \label{def:nicetreedecomp}
A \emph{nice tree decomposition} is a tree decomposition with one special node $z$ called the \emph{root} with $X_z = \emptyset$ and in which each node is one of the following types:
\begin{enumerate}
\item \emph{Leaf node}: a leaf $x$ of $T$ with $X_x = \{v\}$ for some vertex $v \in V$.
\item \emph{Introduce node}: an internal node~$x$ of $T$ with one child node~$y$; this type of node has $X_x = X_y \cup \{v\}$, for some $v \notin X_y$. The node is said to \emph{introduce} the vertex~$v$.
\item \emph{Forget node}: an internal node~$x$ of $T$ with on child node~$y$; this type of node has $X_x = X_y \setminus \{v\}$, for some $v \in X_y$. The node is said to \emph{forget} the vertex~$v$.
\item \emph{Join node}: an internal node $x$ with two child nodes~$l$ and~$r$; this type of node has $X_x = X_r = X_l$.
\end{enumerate}
\end{definition}
We note that this definition is slightly different from the usual.
In our definition, we have the extra requirements that a bag $X_x$ associated with a leaf $x$ of $T$ consists of a single vertex $v$ ($X_x = \{v\}$), and that the bag $X_z$ associated with the root node $Z$ is empty ($X_z = \emptyset$).

Given a tree decomposition consisting of $\bigO(n)$ nodes, a nice tree decomposition of equal width and also consisting of $\bigO(n)$ nodes can be found in $\bigO(n)$ time~\cite{Kloks94}.
By adding a series of forget nodes to the old root, and by adding a series of introduce nodes below an old leaf node if its associated bag contains more than one vertex, we can easily modify any nice tree decomposition to have our extra requirements within the same running time.

By fixing the root of $T$, we associate with each node $x$ in a tree decomposition $T$ a vertex set $V_x \subseteq V$: a vertex $v$ belongs to $V_x$ if and only if there exists a bag $y$ with $v \in X_y$ such that either $y=x$ or $y$ is a descendant of $x$ in $T$.
Furthermore, we associate with each node $x$ of $T$ the induced subgraph $G_x = G[V_x]$ of $G$.
I.e., $G_x$ is the following graph:
\[ G_x = G\!\!\left[\bigcup \{ X_y \;|\; \textrm{$y = x$ or $y$ is a descendant of $x$} \} \right] \]

For an overview of tree decompositions and dynamic programming on tree decompositions,  see~\cite{BodlaenderK08,HicksKK05}.

\subsubsection{Branch Decompositions} \label{sec:defbw}
Branch decompositions are related to tree decompositions and also originate from the series of papers on graph minors by Robertson and Seymour \cite{RobertsonS91}.

\begin{definition}[branch decomposition]
A \emph{branch decomposition} of a graph~$G$ is a tree~$T$ in which each internal node has degree three and in which each leaf~$x$ of~$T$ has an assigned edge $e_x \in E$ such that this assignment is a bijection between the leaves of~$T$ and the edges~$E$ of~$G$.
\end{definition}

If we would remove any edge~$e$ from a branch decomposition~$T$ of~$G$, then this cuts~$T$ into two subtrees~$T_1$ and~$T_2$.
In this way, the edge~$e \in E(T)$ partitions the edges of~$G$ into two sets~$E_1$,~$E_2$, where~$E_i$ contains exactly those edges in the leaves of subtree~$T_i$.
The \emph{middle set} $X_e$ associated to the edge~$e \in E(T)$ is defined to be the set of vertices~$X_e \subseteq V$ that are both an endpoint of an edge in the edge partition~$E_1$ and an endpoint of an edge in the edge partition~$E_2$, where $E_1$ and $E_2$ are associated with $e$.
That is, if $V_i = \bigcup E_i$, then $X_e = V_1 \cap V_2$.

The \emph{width}~$bw(T)$ of a branch decomposition~$T$ is the size of the largest middle set associated with the edges of~$T$.
The branchwidth $bw(G)$ of a graph~$G$ is the minimum width over all possible branch decompositions of~$G$.
In this paper, we always assume that a branch decomposition of the appropriate width is given.

Observe that vertices $v$ of degree one in $G$ are not in any middle set of a branch decomposition $T$ of $G$.
Let $u$ be the neighbour of such a vertex $v$.
We include the vertex $v$ in the middle set of the edge $e$ of $T$ incident to the leaf of $T$ that contains $\{u,v\}$.
This raises the branchwidth to $\max\{2,bw(G)\}$. 
Throughout this paper, we ignore this technicality.

The treewidth $tw(G)$ and branchwidth $bw(G)$ of any graph are related in the following way:
\begin{proposition}[\cite{RobertsonS91}] \label{prop:1.5}
For any graph $G$ with branchwidth $bw(G) \geq 2$:
\[ bw(G) \leq tw(G) + 1 \leq \left\lfloor \frac{3}{2} bw(G) \right\rfloor \]
\end{proposition}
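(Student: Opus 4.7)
The plan is to prove the two inequalities separately, in both cases by an explicit construction turning one decomposition into the other.

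For the lower bound $bw(G) \le tw(G)+1$, I start from a tree decomposition $(T,\{X_t\})$ of width $tw(G)$ and build a branch decomposition from it. For each edge $e=\{u,v\}\in E$, I fix one node $t(e)\in T$ whose bag contains both $u$ and $v$, which exists by the edge property of tree decompositions. I then attach a new leaf labelled $e$ adjacent to $t(e)$ for every edge of $G$, and finally binarize the resulting tree so that every internal node has degree three (introducing auxiliary nodes where necessary). The key claim is that for any edge $f$ of the resulting tree, its middle set is contained in some bag $X_t$ of the original decomposition: the vertex-appearance property of tree decompositions guarantees that a vertex of $G$ occurring on both sides of $f$ must appear in the bag of the unique node of $T$ closest to $f$ on the path between the two sides. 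Since every $X_t$ has at most $tw(G)+1$ vertices, this bounds the width of the branch decomposition by $tw(G)+1$.

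For the upper bound $tw(G)+1\le \lfloor\frac{3}{2}bw(G)\rfloor$, I start from a branch decomposition $T$ of $G$ of width $k=bw(G)$ and turn it into a tree decomposition on the same tree $T$. For each leaf $\ell$ of $T$ labelled with edge $e_\ell=\{u,v\}$ I set $X_\ell=\{u,v\}$, and for each internal node $v$ with incident edges $e_1,e_2,e_3$ I set
\[
X_v = X_{e_1}\cup X_{e_2}\cup X_{e_3}.
\]
Verifying the tree-decomposition axioms is routine: every edge of $G$ appears at a leaf, and connectivity of the set of nodes containing a vertex $x$ follows from the observation that removing an internal node $v$ splits the leaves into three subtrees $T_1,T_2,T_3$ with edge-sets $F_1,F_2,F_3$, and $x\in X_v$ if and only if $x$ is incident to edges of at least two of the $F_i$, which is precisely the condition that $v$ lies on the unique subtree of $T$ spanning the leaves incident to $x$.

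The crucial quantitative step is bounding $|X_v|$. Decompose membership in the union by counting: a vertex $x$ lies in $X_{e_i}$ exactly when $x$ is incident to edges in both $F_i$ and $F_j\cup F_k$, which forces $x$ to be incident to edges in at least two of $F_1,F_2,F_3$, and in that case $x$ lies in at least two of the three middle sets $X_{e_1},X_{e_2},X_{e_3}$. Hence
\[
2|X_v| \le |X_{e_1}|+|X_{e_2}|+|X_{e_3}| \le 3k,
\]
so $|X_v|\le\lfloor 3k/2\rfloor$. Since leaf bags have size $2\le\lfloor 3k/2\rfloor$ under the assumption $bw(G)\ge 2$, the maximum bag size is $\lfloor 3k/2\rfloor$, giving $tw(G)+1\le\lfloor\frac{3}{2}bw(G)\rfloor$. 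The main obstacle I expect is not the counting inequality, which is clean, but the bookkeeping in the binarization step of the first direction: one must check that the subdivisions and reattachments introduced to make every internal node cubic do not enlarge any middle set beyond what a single bag of the original tree decomposition contains.
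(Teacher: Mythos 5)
Your proof is correct. The paper does not actually prove Proposition~2.1 --- it is stated as an imported result with a citation to Robertson and Seymour --- and your two constructions (hanging a leaf for each edge $e$ of $G$ off a bag containing $e$ and then making the tree cubic, for $bw(G)\le tw(G)+1$; taking $X_v$ to be the union of the three incident middle sets and observing that each element of $X_v$ lies in at least two of them, for $tw(G)+1\le\lfloor\tfrac{3}{2}bw(G)\rfloor$) are exactly the standard arguments from that reference, including the double-counting step $2|X_v|\le|X_{e_1}|+|X_{e_2}|+|X_{e_3}|\le 3\,bw(G)$ and the use of the hypothesis $bw(G)\ge 2$ to absorb the leaf bags of size two.
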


To perform dynamic programming on a branch decomposition $T$, we need $T$ to be rooted.
To create a root, we choose any edge $e \in E(T)$ and subdivide it, creating edges $e_1$ and $e_2$ and a new node $y$.
Next, we create another new node $z$, which will be our root, and add it together with the new edge $\{y,z\}$ to $T$.
The middle sets associated with the edges created by the subdivision are set to $X_e$, i.e., $X_{e_1} = X_{e_2} = X_e$.
Furthermore, the middle set of the new edge $\{y,z\}$ is the empty set: $X_{\{y,z\}} = \emptyset$.

We use the following terminology on the edges in a branch decomposition $T$ giving similar names to edges as we would usually do to vertices.
We call any edge of $T$ that is incident to a leaf but not the root a \emph{leaf edge}.
Any other edge is called an \emph{internal edge}.
Let $x$ be the lower endpoint of an internal edge $e$ of $T$ and let $l$, $r$ be the other two edges incident to $x$.
We call the edges $l$ and $r$ the \emph{child edges} of $e$.

\begin{definition}[partitioning of middle sets] \label{def:middlesets}
For a branch decomposition $T$, let $e \in E(T)$ be an edge not incident to a leaf with left child $l \in E(T)$ and right child $r \in E(T)$.
We define the following partitioning of $X_e \cup X_l \cup X_r$:
\begin{enumerate}
\item The \emph{intersection vertices}: $I = X_e \cap X_l \cap X_r$.
\item The \emph{forget vertices}: $F = (X_l \cap X_r) \setminus I$.
\item The \emph{vertices passed from the left}: $L = (X_e \cap X_l) \setminus I$.
\item The \emph{vertices passed from the right}: $R = (X_e \cap X_r) \setminus I$.
\end{enumerate}
\end{definition}
Notice that this is a partitioning because any vertex in at least one of the sets $X_e$, $X_l$, $X_r$ must be in at least two of them by definition of a middle set.

Because each bag has size at most $k$, the partitioning satisfies the properties:
\[ |I| + |L| + |R| \leq k \qquad\qquad |I| + |L| + |F| \leq k \qquad\qquad |I| + |R| + |F| \leq k \]

We associate with each edge $e \in E(T)$ of a branch decomposition $T$ the induced subgraph $G_e = G[V_e]$ of $G$.
A vertex $v \in V$ belongs to $V_e$ in this definition if and only if there is a middle set $f$ with $f=e$ or $f$ below $e$ in $T$ with $v \in X_f$.
That is, $v$ is in $V_e$ if and only if $v$ is an endpoint of an edge associated with a leaf of $T$ that is below $e$ in $T$, i.e.:
\[ G_e = G\!\!\left[\bigcup \{ X_f \;|\; \textrm{$f = e$ or $f$ is below $e$ in $T$} \} \right] \]

For an overview of branch decomposition based techniques, see~\cite{HicksKK05}.

\subsubsection{$k$-Expressions and Cliquewidth} \label{sec:defcw}
Another notion related to the decomposition of graphs is cliquewidth, introduced by Courcelle et al.~\cite{CourcelleER93}.

\begin{definition}[$k$-expression]
A \emph{$k$-expression} is an expression combining any number of the following four operations on labelled graphs with labels $\{1,2,\ldots,k\}$:
\begin{enumerate}
\item \emph{create a new graph}: create a new graph with one vertex having any label,
\item \emph{relabel}: relabel all vertices with label $i$ to $j$ ($i \not= j$),
\item \emph{add edges}: connect all vertices with label $i$ to all vertices with label $j$ ($i \not= j$),
\item \emph{join graphs}: take the disjoint union of two labelled graphs.
\end{enumerate}
\end{definition}

The \emph{cliquewidth} $cw(G)$ of a graph $G$ is defined to be the minimum $k$ for which there exists a $k$-expression that evaluates to a graph isomorphic to $G$.

The definition of a $k$-expression can also be turned into a rooted decomposition tree.
In this decomposition tree $T$, leafs of the tree $T$ correspond to the operations that create new graphs, effectively creating the vertices of $G$, and internal vertices of $T$ correspond to one of the other three above operations described above.
We call this tree a \emph{clique decomposition} of width $k$.
In this paper, we always assume that a given decomposition of the appropriate width is given.

In this paper, we also ssume that any $k$-expression does not contain superfluous operations, e.g., a $k$-expression does apply the operation to add edges between vertices with labels $i$ and $j$ twice in a row without first changing the sets of vertices with the labels $i$ and $j$, and it does not relabel vertices with a given label or add edges between vertices with a given label if the set of vertices with such a label is empty.
Under these conditions, it is not hard to see that any $k$-expressions consists of at most $\bigO(n)$ join operations and $\bigO(nk^2)$ other operations.

More information on solving problems on graphs of bounded cliquewidth can be found in \cite{CourcelleMR00}.

\subsection{Fast Algorithms to Speed Up Dynamic Programming} \label{sec:matrixmultiplic} \label{sec:fastsubsetconv}
In this paper, we will use fast algorithms for two standard problems as subroutines to speed up dynamic programming.
These are fast multiplication of matrices, and fast subset convolution.

\paragraph{Fast Matrix Multiplication.}
In this paper, we let $\omega$ be the smallest constant such that two $n \times n$ matrices can be multiplied in $\bigO(n^\omega)$ time; that is, $\omega$ is the matrix multiplication constant.
Currently, $\omega < 2.376$ due to the algorithm by Coppersmith and Winograd \cite{CoppersmithW90}.

For multiplying an $(n \times p)$ matrix $A$ and a $(p \times n)$ matrix $B$, we differentiate between $p \leq n$ and $p > n$.
Under the assumption that $\omega = 2.376$, an $\bigO(n^{1.85}p^{0.54})$ time algorithm is known if $p \leq n$ \cite{CoppersmithW90}.
Otherwise, the matrices can be multiplied in $\bigO(\frac{p}{n}n^\omega) = \bigO(pn^{\omega-1})$ time by matrix splitting: split the matrices $A$ and $B$ into $\frac{p}{n}$ many $n \times n$ matrices $A_1,\ldots A_\frac{p}{n}$ and $B_1,\ldots B_\frac{p}{n}$, multiply each of the $\frac{p}{n}$ pairs $A_i \times B_i$, and sum up the results.

\paragraph{Fast Subset Convolution.} 
Given a set $U$ and two functions $f,g: 2^{U} \rightarrow \Z$, their \emph{subset convolution} $(f * g)$ is defined as follows:
\[ (f * g)(S) = \sum_{X \subseteq S} f(X) g(S \setminus X) \]
\noindent The fast subset convolution algorithm by Bj\"orklund et al.~can compute this convolution using $\bigO(k^2 2^k)$ arithmetic operations \cite{BjorklundHKK07}.

Similarly, Bj\"orklund et al.~define the \emph{covering product} $(f *_c g)$ and the \emph{packing product} $(f *_p g)$ of $f$ and $g$ in the following way:
\[ (f *_c g)(S) = \mathop{\sum_{X, Y \subseteq S}}_{X \cup Y = S} f(X) g(Y)  \qquad \qquad
   (f *_p g)(S) = \mathop{\sum_{X, Y \subseteq S}}_{X \cap Y = \emptyset} f(X) g(Y) \]
\noindent These products can be computed using $\bigO(k 2^k)$ arithmetic operations \cite{BjorklundHKK07}.

In this paper, we will not directly use the algorithms of Bj\"orklund et al.~as subroutines.
Instead, we present their algorithms based on what we will call state changes.
The result is exactly the same as using the algorithms by Bj\"orklund et al.~as subroutines.
We choose to present our results in this way because it allows us to easily generalise the fast subset convolution algorithm to a more complex setting than functions with domain $2^U$ for some set~$U$.

\section{Dynamic Programming on Tree Decompositions} \label{sec:treewidth}
Algorithms solving \NP-hard problems in polynomial time on graphs of bounded treewidth are often dynamic programming algorithms of the following form.
The tree decomposition $T$ is traversed in a bottom-up manner.
For each node $x \in T$ visited, the algorithm constructs a table with partial solutions on the subgraph $G_x$, that is, the induced subgraph on all vertices that are in a bag $X_y$ where $y = x$ or $y$ is a descendant of $x$ in $T$.
Let an \emph{extension} of such a partial solution be a solution on $G$ that contains the partial solution on $G_x$, and let two such partial solutions $P_1$, $P_2$ have the same \emph{characteristic} if any extension of $P_1$ also is an extension of $P_2$ and vice versa.
The table for a node $x \in T$ does not store all possible partial solutions on $G_x$: it stores a set of solutions such that it contains exactly one partial solution for each possible characteristic.
While traversing the tree $T$, the table for a node $x \in T$ is computed using the tables that had been constructed for the children of $x$ in $T$.

This type of algorithm typically has a running time of the form $\bigO(f(k)poly(n))$ or even $\bigO(f(k)n)$, for an some function $f$ that grows at least exponentially.
This is because the size of the computed tables often is (at least) exponential in the treewidth $k$, but polynomial (or even constant) in the size of the graph $G$.
See Proposition~\ref{prop:simpletwdsalg} for an example algorithm.

In this section, we improve the exponential part of running time for many dynamic programming algorithms on tree decompositions for a large class of problems.
When the number of partial solutions of different characteristics stored in the table is $\bigOs(s^k)$, previous algorithms typically run in time $\bigOs(r^k)$ for some $r > s$.
This is because it is hard for these algorithms to compute a new table for a node in $T$ with multiple children.
In this case, the algorithm often needs to inspect exponentially many combinations of partial solutions from it children per entry of the new table.
We will show that algorithms with a running time of $\bigOs(s^k)$ exist for many problems.

This section is organised as follows.
We start by setting up the framework that we use for dynamic programming on tree decompositions by giving a simple algorithm in Section~\ref{sec:twintro}.
Here, we also define the de Fluiter property for treewidth.
Then, we give our results on {\sc Dominating Set} in Section~\ref{sec:dstw}, our results on counting perfect matchings in Section~\ref{sec:countpmtwalg}, our results on $[\rho,\sigma]$-domination problems in Section~\ref{sec:rhosigmatw}, and finally our results on the $\gamma$-clique covering, packing, and partitioning problems in Section~\ref{sec:cliquetwalg}.

\subsection{General Framework on Tree Decompositions} \label{sec:twintro}
We will now give a simple dynamic programming algorithm for the {\sc Dominating Set} problem.
This algorithm follows from standard techniques for treewidth-based algorithms, and we will give faster algorithms later.

\begin{proposition} \label{prop:simpletwdsalg}
There is an algorithm that, given a tree decomposition of a graph~$G$ of width~$k$, computes the size of a minimum dominating set in~$G$ in $\bigO(n 5^k i_+(\log(n)))$ time.
\end{proposition}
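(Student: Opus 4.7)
The plan is a textbook dynamic programming algorithm on a nice tree decomposition, where the $5^k$ factor will come out of a careful counting of the join operation. First, I would convert the given tree decomposition into a nice tree decomposition (with the extra conditions of Definition~\ref{def:nicetreedecomp}) of the same width and of size $\bigO(n)$ in $\bigO(n)$ time, by the result of Kloks. I would then root the decomposition arbitrarily and process it bottom-up.

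The state space uses three colours per vertex of each bag: $\mathbf{1}$ (in the partial dominating set), $\mathbf{0}$ (not in the set, already dominated by the set in the subgraph $G_x$), and $\mathbf{?}$ (not in the set, not yet dominated). For each node $x$ and each colouring $c\colon X_x \to \{\mathbf{1},\mathbf{0},\mathbf{?}\}$ the table $A_x[c]$ stores the minimum size of a set $D \subseteq V_x$ which (i) realises $c$ on $X_x$ and (ii) dominates every vertex of $V_x \setminus X_x$. The answer is $A_z[\emptyset]$ at the root $z$.

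The update rules are standard. Leaves with $X_x=\{v\}$ are initialised directly. At an introduce node introducing $v$, one extends each child colouring by assigning $v$ one of three colours and adjusts the colours of the neighbours of $v$ in $X_x$ that become newly dominated; this costs $\bigO(k\cdot 3^k)$ per node. At a forget node forgetting $v$, one sets $A_x[c] = \min(A_y[c,v\mapsto \mathbf{1}],\, A_y[c,v\mapsto \mathbf{0}])$, discarding the $\mathbf{?}$-extension since $v$ would then remain permanently undominated; this costs $\bigO(3^k)$. At a join node with children $l,r$ and parent colouring $c$, the combinatorial rule is: a parent $\mathbf{1}$ requires $\mathbf{1}$ in both children; a parent $\mathbf{?}$ requires $\mathbf{?}$ in both children; a parent $\mathbf{0}$ requires either $(\mathbf{0},\mathbf{0})$, $(\mathbf{0},\mathbf{?})$ or $(\mathbf{?},\mathbf{0})$. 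Thus $A_x[c] = \min \bigl(A_l[c_l] + A_r[c_r] - |\{v\in X_x\colon c(v)=\mathbf{1}\}|\bigr)$, where the minimum ranges over all admissible $(c_l,c_r)$.

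The total number of admissible triples $(c,c_l,c_r)$ at a join is exactly $\prod_{v\in X_x}(1+1+3) = 5^k$, because each vertex contributes independently one pattern for $\mathbf{1}$, one for $\mathbf{?}$, and three for $\mathbf{0}$. Iterating over all admissible triples and taking the appropriate minimum fills $A_x$ in $\bigO(5^k)$ arithmetic operations per join node. Since partial set sizes are integers in $[0,n]$, each addition or comparison costs $i_+(\log n)$. Summing over all $\bigO(n)$ nodes of the nice tree decomposition gives the claimed $\bigO(n\cdot 5^k\cdot i_+(\log n))$ running time. The main obstacle, modest in this proposition, is simply to verify the $5$-per-vertex count at the join; the remaining work is routine bookkeeping. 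Faster algorithms improving this bound are the subject of the subsequent subsections.
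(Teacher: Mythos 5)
Your proposal is correct and follows essentially the same route as the paper's proof: the same three-state table (your $\mathbf{0},\mathbf{?}$ are the paper's $0_1,0_0$), the same forget and join rules with the $5=1+1+3$ per-vertex count at the join, and the same $\#_1(c)$ correction. The only cosmetic difference is that you organise the introduce node as a forward push over child colourings in $\bigO(k\,3^k)$ rather than the paper's minimisation over matching colourings in $\bigO(4^k)$; both are dominated by the join cost and yield the claimed bound.
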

\begin{table}[tb]
	\begin{center}
	\begin{tabular}{r|l}
	state & meaning \\
	\hline \hline
	$1$   & this vertex is in the dominating set. \\
	$0_1$ & this vertex is not in the dominating set and has already been dominated. \\
	$0_0$ & this vertex is not in the dominating set and has not yet been dominated. \\
	$0_?$ & this vertex is not in the dominating set and may or may not be dominated. \\
	\end{tabular}
	\end{center}
	\caption{Vertex states for the {\sc Dominating Set} problem.}
	\label{tab:dsstates}
\end{table}
\begin{proof}
First, we construct a nice tree decomposition~$T$ of~$G$ of width~$k$ from the given tree decomposition in $\bigO(n)$ time.

Similar to Telle and Proskurowski~\cite{TelleP97}, we introduce vertex states $1$, $0_1$, and~$0_0$ that characterise the `state' of a vertex with respect to a vertex set~$D$ that is a partial solution of the {\sc Dominating Set} problem: $v$ has state~$1$ if $v \in D$; $v$ has state~$0_1$ if $v \not\in D$ but~$v$ is dominated by~$D$, i.e., there is a $d \in D$ with $\{v,d\} \in E$; and, $v$ has state~$0_0$ if $v \not\in D$ and~$v$ is not dominated by~$D$; see also Table~\ref{tab:dsstates}.

For each node~$x$ in the nice tree decomposition~$T$, we consider partial solutions $D \subseteq V_x$, such that all vertices in $V_x \setminus X_x$ are dominated by~$D$.
We characterise these sets~$D$ by the states of the vertices in~$X_x$ and the size of~$D$.
More precisely, we will compute a table~$A_x$ with an entry $A_x(c) \in \{0,1,\ldots,n\} \cup \{\infty\}$ for each $c \in \{1,0_1,0_0\}^{|X_x|}$.
We call $c \in \{1,0_1,0_0\}^{|X_x|}$ a \emph{colouring} of the vertices in~$X_x$.
A table entry $A_x(c)$ represents the size of the partial solution~$D$ of {\sc Dominating Set} in the induced subgraph~$G_x$ associated with the node~$x$ of~$T$ that satisfies the requirements defined by the states in the colouring~$c$, or infinity if no such set exists.
That is, the table entry gives the size of the smallest partial solution~$D$ in~$G_x$ that contains all vertices in~$X_x$ with state~$1$ in~$c$ and that dominates all vertices in~$G_x$ except those in~$X_x$ with state~$0_0$ in~$c$, or infinity if no such set exists.
Notice that these $3^{|X_x|}$ colourings correspond to $3^{|X_x|}$ partial solutions with different characteristics, and that it contains a partial solution for each possible characteristic.

We now show how to compute the table~$A_x$ for the next node $x \in T$ while traversing the nice tree decomposition~$T$ in a bottom-up manner.
Depending on the type of the node~$x$ (see Definition~\ref{def:nicetreedecomp}), we do the following:

\smallskip \noindent {\it Leaf node}:
Let~$x$ be a leaf node in~$T$.
The table consists of three entries, one for each possible colouring $c \in \{1,0_1,0_0\}$ of the single vertex~$v$ in~$X_x$.
\[ A_x(\{1\}) = 1 \qquad\qquad A_x(\{0_1\}) = \infty \qquad\qquad A_x(\{0_0\}) = 0\] 
Here, $A_x(c)$ corresponds to the size of the smallest partial solution satisfying the requirements defined by the colouring~$c$ on the single vertex~$v$.

\smallskip \noindent {\it Introduce node}:
Let~$x$ be an introduce node in~$T$ with child node~$y$.
We assume that when the $l$-th coordinate of a colouring of~$X_x$ represents a vertex~$u$, then the same coordinate of a colouring of~$X_y$ also represents~$u$, and that the last coordinate of a colouring of~$X_x$ represents the newly introduced vertex~$v$.
Now, for any colouring $c \in \{1,0_1,0_0\}^{|X_y|}$:
\begin{eqnarray*}
A_x(c \times \{0_1\}) & = & \left\{ \begin{array}{ll} A_y(c) & \textrm{if $v$ has a neighbour with state $1$ in $c$} \\ \infty & \textrm{otherwise} \end{array} \right. \\
A_x(c \times \{0_0\}) & = & \left\{ \begin{array}{ll} A_y(c) & \textrm{if $v$ has no neighbour with state $1$ in $c$} \\ \infty & \textrm{otherwise} \end{array} \right.
\end{eqnarray*}
For colourings with state~$1$ for the introduced vertex, we say that a colouring~$c_x$ of~$X_x$ \emph{matches} a colouring~$c_y$ of~$X_y$ if: 
\begin{itemize}
\item For all $u \in X_y \setminus N(v)$: $c_x(u) = c_y(u)$.
\item For all $u \in X_y \cap N(v)$: either $c_x(u)\!=\!c_y(u)\!=\!1$, or $c_x(u) \!=\! 0_1$ and $c_y(u) \!\in\! \{0_1,0_0\}$.
\end{itemize}
Here, $c(u)$ is the state of the vertex~$u$ in the colouring~$c$.
We compute $A_x(c)$ by the following formula:
\begin{eqnarray*}
A_x(c \times \{1\})	& \!=\! & \left\{ \begin{array}{ll} \infty & \textrm{if $c(u)=0_0$ for some $u \in N(v)$} \\
														1 + \min \{ A_y(c') \;|\; \textrm{$c'$ matches $c$} \} & \textrm{otherwise} \end{array} \right.
\end{eqnarray*}
It is not hard to see that $A_x(c)$ now corresponds to the size of the partial solution satisfying the requirements imposed on~$X_x$ by the colouring~$c$..

\smallskip \noindent {\it Forget node}:
Let~$x$ be a forget node in~$T$ with child node~$y$.
Again, we assume that when the $l$-th coordinate of a colouring of~$X_x$ represents a vertex~$u$, then the same coordinate of a colouring of~$X_y$ also represents~$u$, and that the last coordinate of a colouring of~$X_y$ represents vertex~$v$ that we are forgetting.
\[ A_x(c) = \min\{ A_y(c \times \{1\}), A_y(c \times \{0_1\}) \} \]
Now, $A_x(c)$ corresponds to the size of the smallest partial solution satisfying the requirements imposed on~$X_x$ by the colouring~$c$ as we consider only partial solutions that dominate the forgotten vertex.

\smallskip \noindent {\it Join node}:
Let~$x$ be a join node in~$T$ and let~$l$ and~$r$ be its child nodes.
As $X_x = X_l = X_r$, we can assume that the same coordinates represent the same vertices in a colouring of each of the three bags.

Let $c_x(v)$ be the state that represents the vertex~$v$ in colouring~$c_x$ of~$X_x$.
We say that three colourings~$c_x$, $c_l$, and~$c_r$ of~$X_x$, $X_l,$ and~$X_r$, respectively, \emph{match} if for each vertex $v \in X_x$:
\begin{itemize}
\item either $c_x(v) = c_l(v) = c_r(v) = 1$, 
\item or $c_x(v) = c_l(v) = c_r(v) = 0_0$, 
\item or $c_x(v) = 0_1$ while $c_l(v)$ and $c_r(v)$ are $0_1$ or $0_0$, but not both $0_0$.
\end{itemize}
Notice that three colourings~$c_x$, $c_l$, and~$c_r$ match if for each vertex~$v$ the requirements imposed by the states are correctly combined from the states in the colourings on both child bags~$c_l$ and~$c_r$ to the states in the colourings of the parent bag~$c_x$.
That is, if a vertex is required by~$c_x$ to be in the vertex set of a partial solution, then it is also required to be so in~$c_l$ and~$c_r$; if a vertex is required to be undominated in~$c_x$, then it is also required to be undominated in~$c_l$ and~$c_r$; and, if a vertex is required to be not in the partially constructed dominating set but it is required to be dominated in~$c_x$, then it is required not to be in the vertex sets of the partial solutions in both~$c_l$ and~$c_r$, but it must be dominated in one of both partial solutions.

The new table~$A_x$ can be computed by the following formula:
\[ A_x(c_x) = \min_{c_x, c_l, c_r \,\textrm{\scriptsize match}} A_l(c_l) + A_r(c_r) - \#_1(c_x) \]
Here, $\#_1(c)$ stands for the number of $1$-states in the colouring~$c$.
This number needs to be subtracted from the total size of the partial solution because the corresponding vertices are counted in each entry of $A_l(c_l)$ as well as in each entry of $A_r(c_r)$.
One can easily check that this gives a correct computation of~$A_x$.

\smallskip
After traversing the nice tree decomposition~$T$, we end up in the root node $z \in T$.
As $X_z = \emptyset$ and thus $G_z = G$, we find the size of the minimum dominating set in~$G$ in the single entry of~$A_z$.

\smallskip
It is not hard to see that the algorithm stores the size of the smallest partial solution of {\sc Dominating Set} in~$A_x$ for each possible characteristic on~$X_x$ for every node $x \in T$.
Hence, the algorithm is correct.

For the running time, observe that, for a leaf or forget node, $\bigO(3^{|X_x|}i_+(\log(n)))$ time is required since we work with $\log(n)$-bit numbers.
In an introduce node, we need more time as we need to inspect multiple entries from $A_y$ to compute an entry of $A_x$.
For a vertex~$u$ outside $N(v)$, we have three possible combinations of states, and for a vertex $u \in N(v)$ we have four possible combinations we need to inspect: the table entry with $c_x(u) = c_y(u) = 0_0$, colourings with $c_x(u) = c_y(u) = 1$, and colourings with $c_x(u) = 0_1$ while $c_y(u) = 0_0$ or $c_y(u) = 0_1$.
This leads to a total time of $\bigO(4^{|X_x|}i_+(\log(n)))$ for an introduce node.
In a join node, five combinations of states need to be inspected per vertex requiring $\bigO(5^{|X_x|}i_+(\log(n)))$ time in total.
As the largest bag has size at most $k+1$ and the tree decomposition~$T$ has $\bigO(n)$ nodes, the running time is $\bigO(n 5^k i_+(\log(n)))$.
\end{proof}

Many of the details of the algorithm described in the proof of Proposition~\ref{prop:simpletwdsalg} also apply to other algorithms described in this section.
We will not repeat these details: for the other algorithms we will only specify how to compute the tables $A_x$ for all four kinds of nodes.

We notice that the above algorithm computes only the size of a minimum dominating set in $G$, not the dominating set itself.
To construct a minimum dominating set $D$, the tree decomposition $T$ can be traversed in top-down order (reverse order compared to the algorithm of Proposition~\ref{prop:simpletwdsalg}).
We start by selecting the single entry in the table of the root node, and then, for each child node $y$ of the current node $x$, we select an the entry in $A_y$ which was used to compute the selected entry of $A_x$.
More specifically, we select the entry that was either used to copy into the selected entry of $A_x$, or we select one, or in a join node two, entries that lead to the minimum that was computed for $A_x$.
In this way, we trace back the computation path that computed the size of $D$.
During this process, we construct $D$ by adding each vertex that is not yet in $D$ and that has state $1$ in $c$ to $D$.
As we only use colourings that lead to a minimum dominating set, this process gives us a minimum dominating set in $G$.

Before we give a series of new, fast dynamic programming algorithms for a broad range of problems, we need the following definition.
We use it to improve the polynomial factors involved in the running times of the algorithms in this section.
\begin{definition}[de Fluiter property for treewidth] \label{def:fluiterproptw}
Given a graph optimisation problem~$\Pi$, consider a method to represent the different characteristics of partial solutions used in an algorithm that performs dynamic programming on tree decomposition to solve~$\Pi$.
Such a representation of partial solutions has the \emph{de Fluiter property for treewidth} if the difference between the objective values of any two partial solutions of $\Pi$ that are associated with a different characteristic and can both still be extended to an optimal solution is at most $f(k)$, for some non-negative function $f$ that depends only on the treewidth~$k$.
\end{definition}

This property is named after Babette van Antwerpen-de Fluiter, as this property implicitly play an important role in her work reported in \cite{BodlaenderA01,Fluiter97}.
Note that although we use the value $\infty$ in our dynamic programming tables, we do not consider such entries since they can never be extended to an optimal solution.
Hence, these entries do not influence the de Fluiter property.
Furthermore, we say that a problem has the \emph{linear de Fluiter property for treewidth} if $f$ is a linear function in~$k$.

Consider the representation used in Proposition~\ref{prop:simpletwdsalg} for the {\sc Dominating Set} problem.
This representation has the de Fluiter property for treewidth with $f(k) = k + 1$ because any table entry that is more than $k+1$ larger than the smallest value stored in the table cannot lead to an optimal solution.
This holds because any partial solution of {\sc Dominating Set}~$D$ that is more than $k+1$ larger than the smallest value stored in the table cannot be part of a minimum dominating set.
Namely, we can obtain a partial solution that is smaller than~$D$ and that dominates the same vertices or more by taking the partial solution corresponding to the smallest value stored in the table and adding all vertices in $X_x$ to it.

For a discussion of the de Fluiter properties and their relation to the related property \emph{finite integer index}, see Section~\ref{sec:fluiterprop}.

\subsection{Minimum Dominating Set} \label{sec:dstw}
Alber et al.~showed that one can improve the straightforward result of Proposition~\ref{prop:simpletwdsalg} by choosing a different set of states to represent characteristics of partial solutions~\cite{AlberBFKN02,AlberN02}: they obtained an $\bigOs(4^k)$-time algorithm using the set of states $\{1,0_1,0_?\}$ (see Table~\ref{tab:dsstates}).
We obtain an $\bigOs(3^k)$-time algorithm by using yet another set of states, namely $\{1,0_0,0_?\}$.

Note that $0_?$ represents a vertex $v$ that is not in the vertex set $D$ of a partial solution of {\sc Dominating Set}, while we do not specify whether $v$ is dominated; i.e., given~$D$, vertices with state $0_1$ and with state $0_0$ could also have state $0_?$.
In particular, there is no longer a unique colouring of $X_x$ with states for a specific partial solution: a partial solution can correspond to several such colourings.
Below, we discuss in detail how we can handle this situation and how it can lead to faster algorithms.

Since the state $0_0$ represents an undominated vertex and the state $0_?$ represents a vertex that may or may not be dominated, one may think that it is impossible to guarantee that a vertex is dominated using these states.
We circumvent this problem by not just computing the \emph{size} of a minimum dominating set, but by computing the \emph{number} of dominating sets of each fixed size~$\kappa$ with $0 \leq \kappa \leq n$.
This approach does not store (the size of) a solution per characteristic of the partial solutions, but counts the number of partial solutions of each possible size per characteristic.
We note that the algorithm of Proposition~\ref{prop:simpletwdsalg} can straightforwardly be modified to also count the number of (minimum) dominating sets.

For our next algorithm, we use dynamic programming tables in which an entry $A_x(c,\kappa)$ represents the number of partial solutions of {\sc Dominating Sets} on $G_x$ of size exactly $\kappa$ that satisfy the requirements defined by the states in the colouring $c$.
That is, the table entries give the number of partial solutions in $G_x$ of size $\kappa$ that dominate all vertices in $V_x \setminus X_x$ and all vertices in $X_x$ with state $0_1$, and that do not dominate the vertices in $X_x$ with state $0_0$.
This approach leads to the following result.

\begin{theorem} \label{thrm:countingtwdsalg}
There is an algorithm that, given a tree decomposition of a graph~$G$ of width~$k$, computes the number of dominating sets in~$G$ of each size~$\kappa$, $0 \leq \kappa \leq n$, in $\bigO(n^3 3^k i_\times(n))$ time.
\end{theorem}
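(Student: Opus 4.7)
The plan is to switch from the three standard states $\{1, 0_1, 0_0\}$ of Proposition~\ref{prop:simpletwdsalg} to the alternative three states $\{1, 0_0, 0_?\}$, where $0_?$ indicates that the vertex is not in the dominating set but makes no commitment about whether it is dominated. For each node $x$ of a nice tree decomposition, each colouring $c \in \{1,0_0,0_?\}^{|X_x|}$, and each integer $\kappa \in \{0,\ldots,n\}$, I would define $B_x(c,\kappa)$ as the number of vertex subsets $D \subseteq V_x$ of size $\kappa$ such that (i) every vertex of $V_x \setminus X_x$ is dominated by $D$, and (ii) for each $v \in X_x$: if $c(v)=1$ then $v \in D$; if $c(v)=0_0$ then $v \notin D$ and $v$ is not dominated by $D$; and if $c(v)=0_?$ then $v \notin D$, with no constraint on its domination. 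Because $X_z = \emptyset$ at the root $z$, the entry $B_z(\emptyset,\kappa)$ will be exactly the number of dominating sets of size $\kappa$ in $G$, so no final recovery transformation is required.

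I would then fill these tables bottom-up. The leaf case returns three constant entries read off directly. At an introduce node that adds $v$ to $X_y$, I would copy the tables from $B_y$ to $B_x$, shifting $\kappa$ by one when $c_x(v) = 1$, and zero out any entry whose colouring is locally inconsistent, for example when $c_x(v)=1$ but some neighbour of $v$ in $X_x$ has state $0_0$ (or vice versa); this is sound because the newly introduced $v$ has its neighbours in $G_x$ contained in $X_x$ by the tree decomposition property. At a forget node that drops $v$, since a forgotten vertex must be dominated, I would use the inclusion-exclusion identity
\[
B_x(c_x,\kappa) = B_y(c_x \times \{1\},\kappa) + B_y(c_x \times \{0_?\},\kappa) - B_y(c_x \times \{0_0\},\kappa),
\]
exploiting that the $0_?$ entry counts both the dominated and the undominated cases for $v$, while the $0_0$ entry isolates the undominated ones.

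The decisive step is the join node, which yields the claimed improvement in the exponential factor. Because $0_?$ in the parent imposes no domination requirement in $G_x$, and because a vertex is dominated in $G_x$ exactly when it is dominated in $G_l$ or in $G_r$, the state $0_?$ in $c_x$ corresponds uniquely to $0_?$ in both $c_l$ and $c_r$; similarly $0_0$ in $c_x$ forces $0_0$ in both children, and $1$ forces $1$ in both. Thus every parent state determines a single state in each child, giving
\[
B_x(c,\kappa) = \sum_{\kappa_l + \kappa_r = \kappa + \#_1(c)} B_l(c,\kappa_l) \cdot B_r(c,\kappa_r),
\]
a one-dimensional polynomial convolution in the size variable $\kappa$ for each of the $3^{|X_x|}$ colourings. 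Each such convolution uses $\bigO(n^2\, i_\times(n))$ time, so a join node costs $\bigO(3^k n^2\, i_\times(n))$, which dominates the introduce and forget nodes. Summing over the $\bigO(n)$ nodes of the tree decomposition yields the claimed $\bigO(n^3 3^k\, i_\times(n))$ bound.

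The part I expect to require the most care is verifying the semantic equivalence between these three-state colourings and the underlying sets of partial solutions — in particular, justifying the inclusion-exclusion identity at forget nodes and the "same-state-in-both-children" rule at join nodes. Both rely on viewing $0_?$ not as a genuinely independent third state but as an unconstrained marker, morally the zeta-transformed combination of $0_0$ and $0_1$; this reinterpretation is precisely what collapses the five join-node combinations per vertex in Proposition~\ref{prop:simpletwdsalg} down to three, converting the $\bigOs(5^k)$ base into $\bigOs(3^k)$.
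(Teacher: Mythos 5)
Your proposal is correct and follows essentially the same route as the paper's proof: the same state set $\{1,0_0,0_?\}$, the same table semantics, the identical inclusion--exclusion step at forget nodes, the identical size-convolution with the $\#_1(c)$ correction at join nodes, and the same running-time accounting. Nothing further is needed.
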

\begin{proof}
We will show how to compute the table~$A_x$ for each type of node~$x$ in a nice tree decomposition~$T$.
Recall that an entry $A_x(c,\kappa)$ counts the number of partial solution of {\sc Dominating Set} of size exactly~$\kappa$ in~$G_x$ satisfying the requirements defined by the states in the colouring~$c$.

\smallskip \noindent {\it Leaf node}:
Let $x$ be a leaf node in $T$ with $X_x = \{v\}$.
We compute $A_x$ in the following way:
\begin{eqnarray*}
A_x(\{1\},\kappa) 	& = & \left\{ \begin{array}{ll} 1 & \textrm{if $\kappa = 1$} \\ 0 & \textrm{otherwise} \end{array} \right. \\
A_x(\{0_0\},\kappa) & = & \left\{ \begin{array}{ll} 1 & \textrm{if $\kappa = 0$} \\ 0 & \textrm{otherwise} \end{array} \right. \\
A_x(\{0_?\},\kappa) & = & \left\{ \begin{array}{ll} 1 & \textrm{if $\kappa = 0$} \\ 0 & \textrm{otherwise} \end{array} \right.
\end{eqnarray*}
Notice that this is correct since there is exactly one partial solution of size one that contains~$v$, namely $\{v\}$, and exactly one partial solution of size zero that does not contain~$v$, namely $\emptyset$.

\smallskip \noindent {\it Introduce node}:
Let~$x$ be an introduce node in~$T$ with child node~$y$ that introduces the vertex~$v$, and let $c \in \{1,0_1,0_0\}^{|X_y|}$.
We compute $A_x$ in the following way:
\begin{eqnarray*}
A_x(c \times \{1\}, \kappa) 	& = & \left\{ \begin{array}{ll} 0 & \textrm{if $v$ has a neighbour with state $0_0$ in $c$} \\
																		0 & \textrm{if $\kappa = 0$} \\
																		A_y(c, \kappa-1) & \textrm{otherwise} \end{array} \right. \\
A_x(c \times \{0_0\}, \kappa) & = & \left\{ \begin{array}{ll} 0 & \textrm{if $v$ has a neighbour with state $1$ in $c$} \\ 
																		A_y(c, \kappa) & \textrm{otherwise} \end{array} \right. \\
A_x(c \times \{0_?\}, \kappa) & = & A_y(c, \kappa)																		
\end{eqnarray*}
As the state~$0_?$ is indifferent about domination, we can copy the appropriate value from~$A_y$.
With the other two states, we have to set $A_x(c,\kappa)$ to zero if a vertex with state~$0_0$ can be dominated by a vertex with state~$1$.
Moreover, we have to update the size of the set if~$v$ gets state~$1$.

\smallskip \noindent {\it Forget node}:
Let~$x$ be a forget node in~$T$ with child node~$y$ that forgets the vertex~$v$.
We compute~$A_x$ in the following way:
\[ A_x(c, \kappa) = A_y(c \times \{1\}, \kappa) + A_y(c \times \{0_?\}, \kappa) -  A_y(c \times \{0_0\}, \kappa) \]
The number of partial solutions of size~$\kappa$ in~$G_x$ satisfying the requirements defined by~$c$ equals the number of partial solutions of size~$\kappa$ that contain~$v$ plus the number of partial solutions of size~$\kappa$ that do not contain~$v$ but where~$v$ is dominated.
This last number can be computed by subtracting the number of such solutions in which~$v$ is not dominated (state~$0_0$) from the total number of partial solutions in which~$v$ may be dominated or not (state~$0_?$).
This shows the correctness of the above formula.

The computation in the forget node is a simple illustration of the principle of inclusion/exclusion and the related M\"obius transform; see for example~\cite{BjorklundHK09}.

\smallskip \noindent {\it Join node}:
Let~$x$ be a join node in~$T$ and let~$l$ and~$r$ be its child nodes.
Recall that $X_x = X_l = X_r$.

If we are using the set of states $\{1,0_0,0_?\}$, then we do not have the consider colourings with matching states in order to compute the join.
Namely, we can compute~$A_x$ using the following formula:
\[ A_x(c,\kappa) = \sum_{\kappa_l + \kappa_r = \kappa + \#_1(c)} A_l(c,\kappa_l) \cdot A_r(c,\kappa_r) \]
The fact that this formula does not need to consider multiple matching colourings per colouring~$c$ (see Proposition~\ref{prop:simpletwdsalg}) is the main reason why the algorithm of this theorem is faster than previous results.

To see that the formula is correct, recall that any partial solution of {\sc Dominating Set} on~$G_x$ counted in the table~$A_x$ can be constructed from combining partial solutions~$G_l$ and~$G_r$ that are counted in~$A_l$ and~$A_r$, respectively.
Because an entry in~$A_x$ where a vertex~$v$ that has state~$1$ in a colouring of~$X_x$ counts partial solutions with~$v$ in the vertex set of the partial solution, this entry must count combinations of partial solutions in~$A_l$ and~$A_r$ where this vertex is also in the vertex set of these partial solutions and thus also has state~$1$.
Similarly, if a vertex~$v$ has state~$0_0$, we count partial solutions in which~$v$ is undominated; hence~$v$ must be undominated in both partial solutions we combine and also have state~$0_0$.
And, if a vertex~$v$ has state~$0_?$, we count partial solutions in which~$v$ is not in the vertex set of the partial solution and we are indifferent about domination; hence, we can get all combinations of partial solutions from~$G_l$ and~$G_r$ if we also are indifferent about domination in~$A_l$ and~$A_r$ which is represented by the state~$0_?$.
All in all, if we fix the sizes of the solutions from~$G_l$ and~$G_r$ that we use, then we only need to multiply the number of solutions from~$A_r$ and~$A_l$ of this size which have the same colouring on~$X_x$.
The formula is correct as it combines all possible combinations by summing over all possible sizes of solutions on~$G_l$ and~$G_r$ that lead to a solution on~$G_x$ of size~$\kappa$.
Notice that the term $\#_1(c)$ under the summation sign corrects the double counting of the vertices with state~$1$ in~$c$.

\smallskip
After the execution of this algorithm, the number of dominating sets of~$G$ of size~$\kappa$ can be found in the table entry $A_z(\emptyset,\kappa)$, where~$z$ is the root of~$T$.

\smallskip
For the running time, we observe that in a leaf, introduce, or forget node~$x$, the time required to compute~$A_x$ is linear in the size of the table~$A_x$.
The computations involve $n$-bit numbers because there can be up to~$2^n$ dominating sets in~$G$.
Since $c \in \{1,0_0,0_?\}^{|X_x|}$ and $0 \leq \kappa \leq n$, we can compute each table~$A_x$ in $\bigO(n3^k i_+(n))$ time.
In a join node~$x$, we have to perform $\bigO(n)$ multiplications to compute an entry of~$A_x$.
This gives a total of $\bigO(n^2 3^k i_\times(n))$ time per join node.
As the nice tree decomposition has $\bigO(n)$ nodes, the total running time is $\bigO(n^3 3^k i_\times(n))$.
\end{proof}

The algorithm of Theorem~\ref{thrm:countingtwdsalg} is exponentially faster in the treewidth~$k$ compared to the previous fastest algorithm of Alber et al.~\cite{AlberBFKN02,AlberN02}.
Also, no exponentially faster algorithm exists unless the Strong Exponential-Time Hypothesis fails~\cite{LokshtanovMS10}.
The exponential speed-up comes from the fact that we use a different set of states to represent the characteristics of partial solutions: a set of states that allows us to perform the computations in a join node much faster.
We note that although the algorithm of Theorem~\ref{thrm:countingtwdsalg} uses underlying ideas of the covering product of~\cite{BjorklundHKK07}, no transformations associated with such an algorithm are used directly.

To represent the characteristics of the partial solutions of the {\sc Dominating Set} problem, we can use any of the following three sets of states: $\{1,0_1,0_0\}$, $\{1,0_1,0_?\}$, $\{1,0_0,0_?\}$.
Depending on which set we choose, the number of combinations that we need to inspect in a join node differ.
We give an overview of this in Figure~\ref{fig:jointablesds}: each table represents a join using a different set of states, and each state in an entry of such a table represents a combination of the states in the left and right child nodes that need to be inspected to the create this new state.
The number of non-empty entries now shows how many combinations have to be considered per vertex in a bag of a join node.
Therefore, one can easily see that a table in a join node can be computed in $\bigOs(5^k)$, $\bigOs(4^k)$, and $\bigOs(3^k)$ time, respectively, depending on the set of states used.
These tables correspond to the algorithm of Proposition~\ref{prop:simpletwdsalg}, the algorithm of Alber et al.~\cite{AlberBFKN02,AlberN02}, and the algorithm of Theorem~\ref{thrm:countingtwdsalg}, respectively.

The way in which we obtain the third table in Figure~\ref{fig:jointablesds} from the first one reminds us of Strassen's algorithm for matrix multiplication~\cite{Strassen69}: the speed-up in this algorithm comes from the fact that one multiplication can be omitted by using a series of extra additions and subtractions.
Here, we do something similar by adding up all entries with a $0_1$-state or $0_0$-state together in the $0_?$-state and computing the whole block of four combinations at once.
We then reconstruct the values we need by subtracting to combinations with two $0_0$-states.

\begin{figure}[tb]
	\begin{center}
	\begin{multicols}{3}
	\begin{tabular}{c||c|c|c|} $\times$ & $1$ & $0_1$ & $0_0$ \\ \hline \hline $1$ & $1$ & & \\ \hline $0_1$ & & $0_1$ & $0_1$ \\ \hline $0_0$ & & $0_1$ & $0_0$ \\ \hline
	\end{tabular}
	\begin{tabular}{c||c|c|c|} $\times$ & $1$ & $0_1$ & $0_?$ \\ \hline \hline $1$ & $1$ & & \\ \hline $0_1$ & &  & $0_1$ \\ \hline $0_?$ & & $0_1$ & $0_?$ \\ \hline
	\end{tabular}
	\begin{tabular}{c||c|c|c|} $\times$ & $1$ & $0_?$ & $0_0$ \\ \hline \hline $1$ & $1$ & & \\ \hline $0_?$ & & $0_?$ &  \\ \hline $0_0$ & & & $0_0$ \\ \hline
	\end{tabular}
	\end{multicols}
	\end{center}
	\caption{Join tables for the {\sc Dominating Set} problem. From left to right they correspond to Proposition~\ref{prop:simpletwdsalg}, the algorithm from \cite{AlberBFKN02,AlberN02}, and Theorem~\ref{thrm:countingtwdsalg}.}
	\label{fig:jointablesds}
\end{figure}

The exponential speed-up obtained by the algorithm of Theorem~\ref{thrm:countingtwdsalg} comes at the cost of extra polynomial factors in the running time.
This is $n^2$ times the factor due to the fact that we work with $n$-bit numbers.
Since we compute the number of dominating sets of each size~$\kappa$, $0 \leq \kappa \leq n$, instead of computing a minimum dominating set, some extra polynomial factors in $n$ seem unavoidable.
However, the ideas of Theorem~\ref{thrm:countingtwdsalg} can also be used to count only \emph{minimum} dominating sets
Using that {\sc Dominating Set} has the de Fluiter property for treewidth, this leads to the following result, where the factor $n^2$ is replaced by the much smaller factor $k^2$.

\begin{corollary} \label{cor:countmdstwalg}
There is an algorithm that, given a tree decomposition of a graph~$G$ of width~$k$, computes the number of minimum dominating sets in~$G$ in $\bigO(n k^2 3^k i_\times(n))$ time.
\end{corollary}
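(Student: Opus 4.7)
The plan is to adapt the dynamic programming of Theorem~\ref{thrm:countingtwdsalg} so that, instead of maintaining $A_x(c,\kappa)$ for every $\kappa\in\{0,\ldots,n\}$, at each node $x$ I only keep counts for sizes $\kappa$ lying in a narrow window $[m_x,\,m_x+k+1]$, where $m_x$ is the smallest value of $\kappa$ for which $A_x(c,\kappa)>0$ over any characteristic $c$. My justification for this truncation is precisely the de Fluiter property for treewidth (Definition~\ref{def:fluiterproptw}): as the paragraph following that definition already observes, {\sc Dominating Set} enjoys this property with $f(k)=k+1$, because any partial solution $D$ at $x$ of size greater than $m_x+k+1$ can be replaced by $D_0\cup X_x$, where $D_0$ realises $m_x$. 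This yields a strictly smaller set that dominates at least as much, and whose valid extensions to $G$ are unaffected, since $X_x$ separates $V_x\setminus X_x$ from $V\setminus V_x$. Consequently no such larger partial solution can participate in a minimum dominating set of $G$, so truncating is safe for the purpose of counting them.

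The algorithm itself will run the recurrences of Theorem~\ref{thrm:countingtwdsalg} verbatim, but on these truncated tables. For leaf, introduce and forget nodes the update shifts the range of stored sizes by at most one, after which I recompute $m_x$ and discard entries with $\kappa>m_x+k+1$, spending $\bigO(k\cdot 3^k\cdot i_+(n))$ time per node. For a join node $x$ with children $l,r$, I still use
\[
A_x(c,\kappa)=\sum_{\kappa_l+\kappa_r=\kappa+\#_1(c)} A_l(c,\kappa_l)\cdot A_r(c,\kappa_r),
\]
but now $\kappa_l$ and $\kappa_r$ each range over a window of $\bigO(k)$ values, and so does $\kappa$, so computing $A_x$ for a fixed characteristic $c$ costs $\bigO(k^2\cdot i_\times(n))$, and across all characteristics $\bigO(k^2\cdot 3^k\cdot i_\times(n))$; I then read off the new $m_x$ and trim again. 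Summed over the $\bigO(n)$ nodes of a nice tree decomposition, the total running time is $\bigO(nk^2 3^k i_\times(n))$, as claimed.

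At the root $z$ we have $X_z=\emptyset$ and a single (empty) characteristic; the entry at $\kappa=m_z$ in $A_z$ is the number of minimum dominating sets of $G$. The main obstacle I expect is the correctness argument: I have to verify rigorously that the swap $D\mapsto D_0\cup X_x$ still dominates $V_x\setminus X_x$ (directly through $D_0$, by definition of a partial solution at $x$) and that it does not lose any domination of $V\setminus V_x$ (since any vertex outside $V_x$ that was dominated by a member of $D\cap X_x$ is still dominated by the same vertex in $X_x\subseteq D_0\cup X_x$, while vertices in $V_x\setminus X_x$ have no neighbours outside $V_x$ by the tree decomposition property). Together these produce a strictly smaller dominating set whenever $|D|>m_x+k+1$ extends to a minimum dominating set, contradicting optimality. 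The remaining details are bookkeeping on the update formulas of Theorem~\ref{thrm:countingtwdsalg} and the routine refinement of its running-time calculation.
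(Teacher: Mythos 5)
Your proposal is correct and follows essentially the same route as the paper's own proof: store only the window of sizes $[\xi_x,\xi_x+k+1]$ above the smallest non-zero entry, justify the truncation by the de Fluiter property via the swap $D\mapsto D_0\cup X_x$, and observe that the join recurrence then has only $\bigO(k)$ non-zero terms per entry, giving $\bigO(nk^23^k i_\times(n))$ overall. The separator-based argument you flag as the main obstacle is exactly the paper's (briefer) justification, so nothing further is needed.
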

\begin{proof}
We notice that the representation of the different characteristics of partial solutions used in Theorem~\ref{thrm:countingtwdsalg} has the linear de Fluiter property when used to count the number of minimum dominating sets.
More explicitly, when counting the number of minimum dominating sets, we need to store only the number of partial solutions of each different characteristic that are at most $k+1$ larger in size than the smallest partial solution with a non-zero entry.
This holds, as larger partial solutions can never lead to a minimum dominating set since taking any set corresponding to this smallest non-zero entry and adding all vertices in~$X_x$ leads to a smaller partial solution that dominates at least the same vertices.

In this way, we can modify the algorithm of Theorem~\ref{thrm:countingtwdsalg} such that, in each node $x \in T$, we store a number~$\xi_x$ representing the size of the smallest partial solution and a table~$A_x$ with the number of partial solutions $A_x(c,\kappa)$ with $\xi_x \leq \kappa \leq \xi_x+k+1$.

In a leaf node~$x$, we simply set $\xi_x = 0$.
In an introduce or forget node~$x$ with child node~$y$, we first compute the entries $A_x(c,\kappa)$ for $\xi_y \leq \kappa \leq \xi_y + k +1$ and then set~$\xi_x$ to the value of~$\kappa$ corresponding to the smallest non-zero entry of~$A_x$.
While computing~$A_x$, the algorithm uses $A_y(c,\kappa) = 0$ for any entry $A_y(c,\kappa)$ that falls outside the given range of~$\kappa$.
Finally, in a join node~$x$ with child nodes~$r$ and~$l$, we do the same as in Theorem~\ref{thrm:countingtwdsalg}, but we compute only the entries with $\kappa$ in the range $\xi_l + \xi_r - (k+1) \leq \kappa \leq \xi_l + \xi_r + (k+1)$.
Furthermore, as all terms of the sum with~$\kappa_l$ or~$\kappa_r$ outside the range of~$A_l$ and~$A_r$ evaluate to zero, we now have to evaluate only $\bigO(k)$ terms of the sum.
It is not hard to see that all relevant combinations of partial solutions from the two child nodes~$l$ and~$r$ fall in this range of~$\kappa$.

The modified algorithm computes $\bigO(n)$ tables of size $\bigO(k3^k)$, and the computation of each entry requires at most $\bigO(k)$ multiplications of $n$-bit numbers.
Therefore, the running time is $\bigO(n k^2 3^k i_\times(n))$.
\end{proof}

A disadvantage of the direct use of the algorithm of Corollary~\ref{cor:countmdstwalg} compared to Proposition~\ref{prop:simpletwdsalg} is that we cannot reconstruct a minimum dominating set in $G$ by directly tracing back the computation path that gave us the size of a minimum domination set.
However, as we show below, we can transform the tables computed by Theorem~\ref{thrm:countingtwdsalg} and Corollary~\ref{cor:countmdstwalg} that use the states $\{1,0_0,0_?\}$ in $\bigOs(3^k)$ time into tables using any of the other sets of states.
These transformations have two applications.
First of all, they allow us to easily construct a minimum dominating set in $G$ from the computation of Corollary~\ref{cor:countmdstwalg} by transforming the computed tables into different tables as used in Proposition~\ref{prop:simpletwdsalg} and thereafter traverse the tree in a top-down fashion as we have discussed earlier.
Secondly, they can be used to switch from using $n$-bit numbers to $\bigO(k)$-bit numbers, further improving the polynomial factors of the running time if we are interested only in solving the {\sc Dominating Set} problem.

\begin{lemma} \label{lem:dsstates}
Let~$x$ be a node of a tree decomposition~$T$ and let~$A_x$ be a table with entries $A_x(c,\kappa)$ representing the number of partial solutions of {\sc Dominating Set} of~$G_x$ of each size~$\kappa$, for some range of~$\kappa$, corresponding to each colouring~$c$ of the bag~$X_x$ with states from one of the following sets:
\[ \{1,0_1,0_0\} \qquad \qquad \{1,0_1,0_?\} \qquad \qquad \{1,0_0,0_?\} \qquad \qquad \textrm{(see Table~\ref{tab:dsstates})} \]
The information represented in the table~$A_x$ does not depend on the choice of the set of states from the options given above.
Moreover, there exist transformations between tables using representations with different sets of states using $\bigO(|X_x||A_x|)$ arithmetic operations.
\end{lemma}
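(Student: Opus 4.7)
My plan is to argue first that the three representations carry the same information because of the semantic inclusion $0_? \equiv 0_1 \vee 0_0$, and then to describe a coordinate-by-coordinate transformation in the style of Yates' algorithm (the Fast Zeta/M\"obius transform) that converts one representation to another in the claimed running time.

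For the semantic part, I would observe that a partial solution $D$ in $G_x$ of size $\kappa$ counted in an entry with state $0_?$ at some vertex $v \in X_x$ is, by definition, a partial solution counted in either the entry with $0_1$ or $0_0$ at $v$, and in exactly one of them (depending on whether $D$ dominates $v$). Hence, for any fixed colouring $c$ of $X_x \setminus \{v\}$ and any size $\kappa$, the identity
\[ A_x(c \times \{0_?\}, \kappa) = A_x(c \times \{0_1\}, \kappa) + A_x(c \times \{0_0\}, \kappa) \]
holds whenever the left-hand side is defined. This identity is the single key relation that links all three state sets: given any two of the three values $A_x(c \times \{0_?\}, \kappa)$, $A_x(c \times \{0_1\}, \kappa)$, $A_x(c \times \{0_0\}, \kappa)$, the third is determined by addition or subtraction. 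Consequently, the content of a table using any one of the three state sets determines the content of a table using either of the other two, which proves the first part of the lemma.

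For the algorithmic part, I would perform the change of representation one vertex position at a time. Enumerate the vertices of $X_x$ as $v_1, \dots, v_{|X_x|}$, and in the $i$-th round rewrite the state of $v_i$ from its current alphabet to its new alphabet, while leaving the states of all other positions (and the size index $\kappa$) untouched. In round $i$, the entries of the table partition naturally into triples indexed by the state of $v_i$ and sharing the same colouring on the remaining positions and the same $\kappa$; each triple can be updated in constant time using the identity above (or its inverse). Since every table entry appears in exactly one such triple in each round, one round costs $\bigO(|A_x|)$ arithmetic operations, and over $|X_x|$ rounds the total is $\bigO(|X_x||A_x|)$, as required. Any conversion among the three state sets can be realised by at most two such passes (e.g., routing through $\{1,0_1,0_0\}$ as an intermediate representation), which is still within the claimed bound.

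The only subtle point — and what I would flag as the main obstacle to write carefully rather than a genuine difficulty — is making sure the invariant is maintained when the original table stores only a restricted range of sizes $\kappa$, as happens in Corollary~\ref{cor:countmdstwalg}. Since the identity linking the three states uses only entries with the same $\kappa$, no out-of-range lookups are introduced, so the transformation respects any such size restriction. With that caveat handled, the arithmetic operation count and correctness both follow directly.
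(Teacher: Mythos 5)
Your proposal is correct and follows essentially the same route as the paper: the single identity $A_x(c\times\{0_?\},\kappa)=A_x(c\times\{0_1\},\kappa)+A_x(c\times\{0_0\},\kappa)$ (and its two rearrangements) applied coordinate-by-coordinate over the $|X_x|$ positions, giving $\bigO(|X_x||A_x|)$ arithmetic operations. The only cosmetic difference is that the paper performs each of the six conversions directly in one pass rather than routing through $\{1,0_1,0_0\}$ as an intermediate, which does not affect the bound.
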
 
\begin{proof}
We will transform~$A_x$ such that it represents the same information using a different set of states.
The transformation will be given for fixed~$\kappa$ and can be repeated for each~$\kappa$ in the given range.

The transformations work in~$|X_x|$ steps.
In step~$i$, we assume that the first $i-1$ coordinates of the colouring~$c$ in our table~$A_x$ use the initial set of states, and the last $|X_x|-i$ coordinates use the set of states to which we want to transform.
Using this as an invariant, we change the set of states used for the $i$-th coordinate at step~$i$.

Transforming from $\{1,0_1,0_0\}$ to $\{1,0_0,0_?\}$ can be done using the following formula in which $A_x(c,\kappa)$ represents our table for colouring~$c$, $c_1$ is a subcolouring of size $i-1$ using states $\{1,0_1,0_0\}$, and~$c_2$ is a subcolouring of size $|X_x|-i$ using states $\{1,0_0,0_?\}$.
\[ A_x(c_1 \times \{0_?\} \times c_2, \kappa) = A_x(c_1 \times \{0_1\} \times c_2, \kappa) + A_x(c_1 \times \{0_0\} \times c_2, \kappa) \]
We keep entries with states~$1$ and~$0_0$ on the $i$-th vertex the same, and we remove entries with state~$0_1$ on the $i$-th vertex after computing the new value.
In words, the above formula counts the number partial solutions that do not containing the $i$-th vertex~$v$ in their vertex sets by adding the number of partial solutions that do not contain~$v$ in their vertex sets and dominate it to the number of partial solutions that do not contain~$v$ in the vertex sets and do not dominate it.
This completes the description of the transformation.

To see that the new table contains the same information, we can apply the reverse transformation from the set of states $\{1,0_0,0_?\}$ to the set $\{1,0_1,0_0\}$ by using the same transformation with a different formula to introduce the new state:
\[ A_x(c_1 \times \{0_1\} \times c_2, \kappa) = A_x(c_1 \times \{0_?\} \times c_2, \kappa) - A_x(c_1 \times \{0_0\} \times c_2, \kappa) \]
A similar argument applies here: the number of partial solutions that dominate but do not contain the $i$-th vertex~$v$ in their vertex sets equals the total number of partial solutions that do not contain~$v$ in their vertex sets minus the number of partial solutions in which~$v$ is undominated.

The other four transformations work similarly.
Each transformation keeps the entries of one of the three states $0_1$, $0_0$, and $0_?$ intact, computes the entries for the new state by a coordinate-wise addition or subtraction of the other two states, and removes the entries using the third state from the table.
To compute an entry with the new state, either the above two formula can be used if the new state is~$0_1$ or~$0_?$, or the following formula can be used if the new state is~$0_0$:
\[ A_x(c_1 \times \{0_0\} \times c_2, \kappa) = A_x(c_1 \times \{0_?\} \times c_2, \kappa) - A_x(c_1 \times \{0_1\} \times c_2, \kappa) \]

For the above transformations, we need~$|X_x|$ additions or subtractions for each of the $|A_x|$ table entries.
Hence, a transformation requires $\bigO(|X_x||A_x|)$ arithmetic operations.
\end{proof}

We are now ready to give our final improvement for {\sc Dominating Set}.

\begin{corollary} \label{cor:solvedstwalg}
There is an algorithm that, given a tree decomposition of a graph~$G$ of width~$k$, computes the size of a minimum dominating set in~$G$ in $\bigO(n k^2 3^k)$ time.
\end{corollary}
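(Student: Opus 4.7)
The plan is to modify the algorithm of Corollary~\ref{cor:countmdstwalg} so that it computes only the \emph{size} of a minimum dominating set, which allows me to replace $n$-bit counts by boolean existence indicators and hence to drop the $i_\times(n)$ factor. Concretely, at each node $x$ I would maintain a table $A_x(c,\kappa)$ with entries in $\{0,1\}$, where $c$ ranges over characteristics using the states $\{1,0_1,0_0\}$ of Proposition~\ref{prop:simpletwdsalg} and $\kappa$ ranges over the de Fluiter window of $\bigO(k)$ sizes; the entry is $1$ precisely when a partial solution of size $\kappa$ with characteristic $c$ exists in $G_x$. Leaf, introduce, and forget nodes are handled as in Proposition~\ref{prop:simpletwdsalg} with boolean \emph{or}/\emph{and} in place of integer addition and minimisation, each taking $\bigO(k \cdot 3^k)$ time; in particular, the forget step $A_x(c,\kappa)=A_y(c\times\{1\},\kappa)\vee A_y(c\times\{0_1\},\kappa)$ is immediate in this representation.

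The main work lies in the join, where the plan is to speed up the naive $\bigO(k\cdot 5^k)$ computation to $\bigO(k^2\cdot 3^k)$ using the boolean analogue of Lemma~\ref{lem:dsstates}. The forward transformation $\tilde A_s(c;0_?) = A_s(c;0_1)\vee A_s(c;0_0)$ needs only a disjunction, which is well defined for booleans, and can be applied to both child tables in $\bigO(k\cdot 3^k)$ time. In the $\{1,0_0,0_?\}$ representation, the state-matching in Theorem~\ref{thrm:countingtwdsalg}'s join formula disappears, yielding the simpler recurrence
\[
   \tilde A_x(c,\kappa) \;=\; \bigvee_{\kappa_l+\kappa_r \,=\, \kappa + \#_1(c)} \tilde A_l(c,\kappa_l) \wedge \tilde A_r(c,\kappa_r),
\]
which can be evaluated in $\bigO(k^2\cdot 3^k)$ boolean operations per join node.

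The principal obstacle I anticipate is the back-transformation from $\tilde A_x$ to $A_x$, since Lemma~\ref{lem:dsstates} uses a subtraction for this direction and subtraction has no direct boolean counterpart. I would resolve this by computing the needed entries of $A_x$ directly from $A_l, A_r$ together with $\tilde A_l, \tilde A_r$, using the observation that for a characteristic $c$ with $c(v)=0_1$ at some vertex $v$ the join at that coordinate reduces to ``at least one of $D_l, D_r$ dominates $v$'', which admits a factored boolean expression of the form $(A_l \wedge \tilde A_r) \vee (\tilde A_l \wedge A_r)$ on the appropriate coordinates. Verifying that this per-vertex factorisation combines correctly across the whole bag is the hard part; once this is done, the cost per join stays $\bigO(k^2\cdot 3^k)$, and summing over the $\bigO(n)$ nodes of the nice tree decomposition gives the claimed $\bigO(n k^2 3^k)$ running time.
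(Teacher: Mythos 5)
Your handling of the leaf, introduce, and forget nodes is fine, and the overall plan (restricting $\kappa$ to a de Fluiter window of $\bigO(k)$ sizes, changing states before the join) is the same as the paper's. The genuine gap is in the join, exactly where you flag it: the covering-product trick cannot be carried out inside the boolean semiring. The forward transformation $0_? = 0_1 \vee 0_0$ is harmless, but the inverse transformation is a M\"obius inversion and genuinely needs subtraction. Your proposed replacement, a factored expression of the form $(A_l \wedge \tilde A_r) \vee (\tilde A_l \wedge A_r)$, does not combine correctly across the bag: applied to all $0_1$-coordinates at once it asserts that a \emph{single} child dominates all the $0_1$-vertices, missing every mixed pattern in which $D_l$ dominates some of them and $D_r$ the rest; applied per vertex as a disjoint case split ($c_l(v)=0_1$ with $c_r(v)$ arbitrary, versus $c_l(v)=0_0$ with $c_r(v)=0_1$) it is correct but produces $2^{\#_{0_1}(c)}$ combinations per colouring, i.e.\ a $\bigOs(4^k)$ join, which is only the Alber et al.\ bound. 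The existential quantifier over pairs of partial solutions does not distribute over the per-vertex disjunctions, and there is no known way to evaluate the boolean covering product in $\bigOs(3^k)$ without leaving the boolean semiring.

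The paper's resolution is to not stay boolean inside the join. It converts each child table into a $0/1$ \emph{integer} table $A'_y(c,\kappa)$ over the $\bigO(k)$-size window, applies the integer state transformations of Lemma~\ref{lem:dsstates} (additions forward, subtractions backward), evaluates the counting join of Theorem~\ref{thrm:countingtwdsalg} so that $A'_x(c,\kappa)$ counts the number of combinations of $1$-entries, and only at the very end extracts $A_x(c)=\min\{\kappa \mid A'_x(c,\kappa)\ge 1\}$. The counts that arise are bounded by a polynomial in $3^{k}$ (each child table has at most $3^{k+1}$ one-entries before the transforms), so all arithmetic is on $\bigO(k)$-bit numbers and is unit cost in the stated RAM model with $\bigO(k)$-bit words; this is how the $i_\times(n)$ factor disappears while the $\bigO(nk^23^k)$ bound is retained. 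To repair your argument, replace the boolean join by this ``count combinations of $1$-entries over the integers, then threshold'' step.
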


We could give a slightly shorter proof than the one given below.
This proof would directly combine the algorithm of Proposition~\ref{prop:simpletwdsalg} with the ideas of Theorem~\ref{thrm:countingtwdsalg} using the transformations from Lemma~\ref{lem:dsstates}.
However, combining our ideas with the computations in the introduce and forget nodes in the algorithm of Alber et al.~\cite{AlberBFKN02,AlberN02} gives a more elegant solution, which we prefer to present.

\begin{proof}
On leaf, introduce, and forget nodes, our algorithm is exactly the same as the algorithm of Alber et al.~\cite{AlberBFKN02,AlberN02}, while on a join node it is similar to Corollary~\ref{cor:countmdstwalg}.
We give the full algorithm for completeness.

For each node $x \in T$, we compute a table~$A_x$ with entries $A_x(c)$ containing the size of a smallest partial solution of {\sc Dominating Set} that satisfies the requirements defined by the colouring~$c$ using the set of states $\{1,0_1,0_?\}$.

\smallskip \noindent {\it Leaf node}:
Let~$x$ be a leaf node in~$T$. We compute~$A_x$ in the following way:
\[ A_x(\{1\}) = 1 \qquad \qquad A_x(\{0_1\}) = \infty \qquad \qquad A_x(\{0_?\}) = 0 \]

\smallskip \noindent {\it Introduce node}:
Let~$x$ be an introduce node in~$T$ with child node~$y$ introducing the vertex~$v$.
We compute~$A_x$ in the following way:
\begin{eqnarray*}
A_x(c \times \{0_1\}) & = & \left\{ \begin{array}{ll} A_y(c) & \textrm{if $v$ has a neighbour with state 1 in $c$} \\ \infty & \textrm{otherwise} \end{array} \right. \\
A_x(c \times \{0_?\}) & = & A_y(c) \\
A_x(c \times \{1\}) & = & 1 + A_y(\phi_{N(v):0_1 \rightarrow 0_?}(c))
\end{eqnarray*}
Here, $\phi_{N(v):0_1 \rightarrow 0_?}(c)$ is the colouring~$c$ with every occurrence of the state~$0_1$ on a vertex in $N(v)$ replaced by the state~$0_?$.

\smallskip \noindent {\it Forget node}:
Let~$x$ be a forget node in~$T$ with child node~$y$ forgetting the vertex~$v$.
We compute~$A_x$ in the following way:
\[ A_x(c) = \min\{ A_y(c \times \{1\}), A_y(c \times \{0_1\}) \} \]
Correctness of the operations on a leaf, introduce, and forget node are easy to verify and follow from~\cite{AlberBFKN02,AlberN02}.

\smallskip \noindent {\it Join node}:
Let~$x$ be a join node in~$T$ and let~$l$ and~$r$ be its child nodes.
We first create two tables~$A'_l$ and~$A'_r$.
For $y \in \{l,r\}$, we let $\xi_y = \min\left\{ A_y(c') \;|\; c' \in \{1,0_1,0_?\}^{|X_y|} \right\}$ and let~$A'_y$ have entries $A'_y(c,\kappa)$ for all $c \in \{1,0_1,0_?\}^{|X_y|}$ and $\kappa$ with $\xi_y \leq \kappa \leq \xi_y + k + 1$:
\[ A'_y(c,\kappa) = \left\{ \begin{array}{ll} 1 & \textrm{if $A_y(c)=\kappa$} \\ 0 & \textrm{otherwise} \end{array} \right. \]
After creating the tables~$A'_l$ and~$A'_r$, we use Lemma~\ref{lem:dsstates} to transform the tables~$A'_l$ and~$A'_r$ such that they use colourings~$c$ with states from the set $\{1,0_0,0_?\}$.
The initial tables~$A'_y$ do not contain the actual number of partial solutions; they contain a $1$-entry if a corresponding partial solution exists.
In this case, the tables obtained after the transformation count the number $1$-entries in the tables before the transformation.
In the table~$A'_x$ computed for the join node~$x$, we now count the number of combinations of these $1$-entries.
This suffices since any smallest partial solution in~$G_x$ that is obtained by joining partial solutions from both child nodes must consist of minimum solutions in~$G_l$ and~$G_r$. 

We can compute~$A'_x$ by evaluating the formula for the join node in Theorem~\ref{thrm:countingtwdsalg} for all~$\kappa$ with $\xi_l + \xi_r - (k+1) \leq \kappa \leq \xi_l + \xi_r + (k+1)$ using the tables~$A'_l$ and~$A'_r$.
If we do this in the same way as in Corollary~\ref{cor:countmdstwalg}, then we consider only the $\bigO(k)$ terms of the formula where~$\kappa_l$ and~$\kappa_r$ fall in the specified ranges for~$A_l$ and~$A_r$, respectively, as other terms evaluate to zero.
In this way, we obtain the table~$A'_x$ in which entries are marked by colourings with states from the set $\{1,0_0,0_?\}$.
Finally, we use Lemma~\ref{lem:dsstates} to transform the table~$A'_x$ such that it again uses colourings with states from the set $\{1,0_1,0_?\}$.
This final table gives the number of combinations of $1$-entries in~$A_l$ and~$A_r$ that lead to partial solutions of each size that satisfy the associated colourings.
Since we are interested only in the size of the smallest partial solution of {\sc Dominating Set} of each characteristic, we can extract these values in the following way:
\[ A_x(c) = \min\{ \kappa \;|\; A'_x(c,\kappa) \geq 1; \; \xi_l + \xi_r - (k+1) \leq \kappa \leq \xi_l + \xi_r + (k+1) \} \]

\smallskip
For the running time, we first consider the computations in a join node.
Here, each state transformation requires $\bigO(k^23^k)$ operations by Lemma~\ref{lem:dsstates} since the tables have size $\bigO(k3^k)$.
These operations involve $\bigO(k)$-bit numbers since the number of $1$-entries in $A_l$ and $A_r$ is at most $3^{k+1}$.
Evaluating the formula that computes $A'_x$ from the tables $A'_l$ and $A'_r$ costs $\bigO(k^23^k)$ multiplications.
If we do not store a $\log(n)$-bit number for each entry in the tables $A_x$ in any of the four kinds of nodes of $T$, but store only the smallest entry using a $\log(n)$-bit number and let $A'_x$ contain the difference to this smallest entry, then all entries in any of the $A'_x$ can also be represented using $\bigO(k)$-bit numbers.
Since there are $\bigO(n)$ nodes in~$T$, this gives a running time of $\bigO(n k^2 3^k)$.
Note that the time required to multiply the $\bigO(k)$-bit numbers disappears in the computational model with $\bigO(k)$-bit word size that we use.
\end{proof}

Corollary~\ref{cor:solvedstwalg} gives the currently fastest algorithm for {\sc Dominating Set} on graphs given with a tree decomposition of width~$k$.
Essentially, what the algorithm does is fixing the $1$-states and applying the covering product of Bj\"orklund et al.~\cite{BjorklundHKK07} on the~$0_1$-states and~$0_?$-states, where the~$0_1$-states need to be covered by the same states from both child nodes.
We chose to present our algorithm in a way that does not use the covering product directly, because reasoning with states allows us to generalise our results in Section~\ref{sec:rhosigmatw}.

We conclude by stating that we can directly obtain similar results for similar problems using exactly the same techniques:
\begin{proposition}
For each of the following problems, there is an algorithm that solves them, given a tree decomposition of a graph~$G$ of width~$k$, using the following running times:
\begin{itemize}
\item {\sc Independent Dominating Set} in $\bigO(n^3 3^k)$ time,
\item {\sc Total Dominating Set} in $\bigO(n k^2 4^k)$ time,
\item {\sc Red-Blue Dominating Set} in $\bigO(n k^2 2^k)$ time,
\item {\sc Partition Into Two Total Dominating Sets} in $\bigO(n 6^k)$ time.
\end{itemize}
\end{proposition}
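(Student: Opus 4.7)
The plan is to adapt the template of this section to each of the four problems: pick a small set of vertex states, write leaf, introduce, forget and join rules that reduce a join to a diagonal product (possibly after a state transformation in the style of Lemma~\ref{lem:dsstates}), and, for the minimisation variants, exploit the linear de Fluiter property to truncate the size index~$\kappa$ from $\bigO(n)$ to $\bigO(k)$, as in Corollary~\ref{cor:countmdstwalg}.

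For {\sc Independent Dominating Set}, I would use the three states $\{1, 0_0, 0_?\}$ from Theorem~\ref{thrm:countingtwdsalg} verbatim; the only change is that in an introduce node the assignment of state~$1$ to the newly introduced vertex~$v$ is forbidden whenever some neighbour of~$v$ in $X_y$ already carries state~$1$. All other rules, and in particular the diagonal join, are unchanged, giving the counting bound $\bigO(n^3 3^k)$. For {\sc Total Dominating Set} the natural four-state set $\{1_1, 1_0, 0_1, 0_0\}$ encodes membership in the set and whether a set-neighbour has already been seen; applying Lemma~\ref{lem:dsstates} separately to the two pairs $\{0_0, 0_1\}$ and $\{1_0, 1_1\}$ turns each into a pair $\{0_0, 0_?\}$, respectively $\{1_0, 1_?\}$, making the join a coordinate-wise product. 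Adding all of~$X_x$ to a minimum partial solution totally dominates every bag vertex at cost at most $k+1$, so the linear de Fluiter property holds and the $\kappa$-range can be truncated to $\bigO(k)$ as in Corollary~\ref{cor:solvedstwalg}, yielding $\bigO(n k^2 4^k)$. For {\sc Red-Blue Dominating Set} each vertex is inherently restricted to two states (a blue vertex is in~$D$ or not, a red vertex is dominated or not); the two-state analogue of Lemma~\ref{lem:dsstates} again diagonalises the join, and the linear de Fluiter property (augment a minimum partial solution by all blue vertices of $X_x$) is immediate, giving $\bigO(n k^2 2^k)$.

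The main obstacle, and the problem requiring a genuinely new state encoding, is {\sc Partition Into Two Total Dominating Sets}: we partition $V$ into $D_R \cup D_B$ so that every vertex has a neighbour in each part, which gives a wasteful natural encoding of $2 \cdot 2 \cdot 2 = 8$ states per bag vertex (colour, has a red neighbour yet, has a blue neighbour yet). To reach the claimed $\bigO(n 6^k)$ bound I would design a more compact encoding using only six states per vertex, obtained by applying a covering-style transformation that treats the two domination requirements jointly rather than bit by bit, exploiting the fact that the partition constraint couples the colour of a vertex with its demand for an own-colour and an other-colour neighbour. Once such a six-state encoding is in place the introduce, forget and join rules follow the same pattern as in Corollary~\ref{cor:solvedstwalg}, and since this is a mere existence problem no de Fluiter truncation is required for the stated bound.
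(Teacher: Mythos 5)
Your treatment of the first three problems matches the paper's (itself only sketched) proof: the same state sets, the same diagonal joins after Lemma~\ref{lem:dsstates}-style transformations, and the same use (or, for \textsc{Independent Dominating Set}, unavailability) of the de Fluiter property to control the polynomial factors. One point you leave implicit for \textsc{Independent Dominating Set}: the stated bound carries no $i_\times(n)$ factor, so you must argue that $\bigO(k)$-bit arithmetic suffices even though the tables are indexed by all sizes $0 \leq \kappa \leq n$; the paper does this by observing that the expanded tables contain a single $1$-entry per characteristic before the state change, so every entry after the transformation is bounded by $3^k$.

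The genuine gap is \textsc{Partition Into Two Total Dominating Sets}, which you correctly single out as the hard case but then leave unresolved: ``I would design a more compact encoding using only six states'' is a promissory note, not a construction. The missing observation is concrete. A vertex naturally carries $2 \times 2 \times 2 = 8$ join states (its part, plus a $\{0,?\}$ covering-state for each of the two domination demands), but the instance may be assumed to have minimum degree at least one (otherwise it is a trivial no-instance), and since the two parts cover all of $V$, every vertex is dominated by at least one of the two parts by the time it is forgotten. Hence the inclusion--exclusion correction term corresponding to ``dominated by neither part'' is identically zero at every forget node, that one combined domination state can be dropped throughout the computation (the remaining three are closed under the diagonal join), and only $2 \times 3 = 6$ states per vertex are needed, giving the claimed $\bigO(n\,6^k)$ bound. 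Without this (or an equivalent) argument, your eight-state encoding only yields $\bigOs(8^k)$.
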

\begin{proof}[(Sketch)]
Use the same techniques as in the rest of this subsection.
We emphasise only the following details.

With {\sc Independent Dominating Set}, the factor $n^3$ comes from the fact that this (minimisation) problem does not have the de Fluiter property for treewidth.
However, we can still use $\bigO(k)$-bit numbers.
This holds because, even though the expanded tables $A'_l$ and $A'_r$ have size at most $n3^k$, they still contain the value one only once for each of the $3^k$ characteristics before applying the state changes.
Therefore, the total sum of the values in the table, and thus also the maximum value of an entry in these tables after the state transformations is $3^k$; these can be represented by $\bigO(k)$-bit numbers.

With {\sc Total Dominating Set}, the running time is linear in $n$ while the extra polynomial factor is $k^2$.
This is because this problem does have the linear de Fluiter property for treewidth.

With {\sc Red-Blue Dominating Set}, an exponential factor of $2^k$ suffices as we can use two states for the red vertices (in the red-blue dominating set or not) and two different states for the blue vertices (dominated or not).

With {\sc Partition Into Two Total Dominating Sets}, we note that we can restrict ourselves to using six states when we modify the tree decomposition such that every vertex always has at least one neighbour and hence is always dominated by at least one of the two partitions.
Furthermore, the polynomial factors are smaller because this is not an optimisation problem and we do not care about the sizes of both partitions.
\end{proof}

\subsection{Counting the Number of Perfect Matchings} \label{sec:countpmtwalg}
The next problem we consider is the problem of computing the number of perfect matchings in a graph.
We give an $\bigOs(2^k)$-time algorithm for this problem.
This requires a slightly more complicated approach than the approach of the previous section.
The main difference is that here every vertex needs to be matched \emph{exactly} once, while previously we needed to dominate every vertex \emph{at least} once.
After introducing state transformations similar to Lemma~\ref{lem:dsstates}, we will introduce some extra counting techniques to overcome this problem.

The obvious tree-decomposition-based dynamic programming algorithm uses the set of states $\{0,1\}$, where $1$ means this vertex is matched and $0$ means that it is not.
It then computes, for every node $x \in T$, a table $A_x$ with entries $A_x(c)$ containing the number of matchings in $G_x$ with the property that the only vertices that are not matched are exactly the vertices in the current bag $X_x$ with state $0$ in $c$.
This algorithm will run in $\bigOs(3^k)$ time; this running time can be derived from the join table in Figure~\ref{fig:jointablepm}.
Similar to Lemma~\ref{lem:dsstates} in the previous section, we will prove that the table $A_x$ contains exactly the same information independent of whether we use the set of states $\{0,1\}$ or $\{0,?\}$, where $?$ represents a vertex for which we do not specify whether it is matched or not.
I.e., for a colouring $c$, we count the number of matchings in $G_x$, where all vertices in $V_x \setminus X_x$ and all vertices in $X_x$ with state $1$ in $c$ are matched, all vertices in $X_x$ with state $0$ in $c$ are unmatched, and all vertices in $X_x$ with state~$?$ can either be matched or not.

\begin{figure}[tb]
	\begin{center}
	\begin{multicols}{2}
	\hspace{2cm}
	\begin{tabular}{c||c|c|} $\times$ & $0$ & $1$ \\ \hline \hline $0$ & $0$ & $1$ \\ \hline $1$ & $1$ & \\ \hline
	\end{tabular}
	
	\begin{tabular}{c||c|c|} $\times$ & $0$ & $?$ \\ \hline \hline $0$ & $0$ & \\ \hline $?$ & & $\not\,?$ \\ \hline
	\end{tabular}
	\hspace{2cm}
	\end{multicols}
	\end{center}
	\caption{Join tables for counting the number of perfect matchings. We used the symbol $\not\,?$ in the last table because the direct combination of two $?$-states can lead to matching a vertex twice.}
	\label{fig:jointablepm}
\end{figure}

\begin{lemma} \label{lem:pmstates}
Let~$x$ be a node of a tree decomposition~$T$ and let~$A_x$ be a table with entries $A_x(c)$ representing the number of matchings in~$G_x$ matching all vertices in $V_x \setminus X_x$ and corresponding to each colouring~$c$ of the bag~$X_x$ with states from one of the following sets:
\[ \{1,0\} \qquad \qquad \qquad \{1,?\} \qquad \qquad \qquad \{0,?\} \]
The information represented in the table~$A_x$ does not depend on the choice of the set of states from the options given above.
Moreover, there exist transformations between tables using representations with different sets of states using $\bigO(|X_x||A_x|)$ arithmetic operations.
\end{lemma}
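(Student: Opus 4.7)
The plan is to mirror the proof of Lemma~\ref{lem:dsstates}, exploiting an additive identity analogous to the one used there. The guiding observation is that, for any partial matching counted in $A_x$, whether a given vertex $v \in X_x$ is matched or not is uniquely determined; the events ``matched'' and ``unmatched'' form a partition of all possibilities. Consequently the state~$?$ on $v$ corresponds exactly to the union of the states $1$ and $0$, so
$$A_x(c_1 \times \{?\} \times c_2) = A_x(c_1 \times \{1\} \times c_2) + A_x(c_1 \times \{0\} \times c_2)$$
for every pair of subcolourings $c_1$, $c_2$ of the coordinates to the left and right of $v$. Rearranging yields two further identities that express the state~$1$, or the state~$0$, as a difference of the other two.

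Using these three identities, I would build the transformation between any two of the three representations by processing the coordinates of $X_x$ from left to right. At step $i$, the invariant is that the first $i-1$ coordinates already use the target set of states while the remaining coordinates still use the source set; applying the appropriate one of the three identities to the $i$-th coordinate converts it to the target set while leaving the entries for the state that is common to both sets unchanged. Each such step touches each entry of $A_x$ a constant number of times, so the complete transformation performs $\bigO(|X_x||A_x|)$ arithmetic operations, matching the bound claimed in the lemma.

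To argue that no information is lost, I would observe that every elementary coordinate-wise step is invertible: applying the symmetric identity recovers the eliminated state from the new one and the preserved one. Since the full transformation is the composition of $|X_x|$ invertible elementary steps, it is itself a bijection between the two tables; hence the two representations encode precisely the same data. The only point that really needs checking is the additive identity itself, which is essentially definitional: a partial matching of $G_x$ that matches all of $V_x \setminus X_x$ and is consistent with~$c_1$ and~$c_2$ on the other bag vertices is counted under the $?$-state on $v$ exactly when it is counted under either the $1$-state or the $0$-state on~$v$, and these two alternatives are mutually exclusive. I do not foresee a genuine obstacle; the statement is the perfect-matching analogue of Lemma~\ref{lem:dsstates}, with two candidate states for each vertex rather than three, and the proof is the same bookkeeping exercise.
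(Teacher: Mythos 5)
Your proof is correct and follows essentially the same route as the paper's: a coordinate-wise transformation using the identity $A_x(c_1 \times \{?\} \times c_2) = A_x(c_1 \times \{0\} \times c_2) + A_x(c_1 \times \{1\} \times c_2)$ and its two rearrangements, with the invariant that the first $i-1$ coordinates use the target state set. The invertibility observation and the $\bigO(|X_x||A_x|)$ operation count match the paper's argument exactly.
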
 
If one defines a vertex with state~$1$ or~$?$ to be in a set~$S$, and a vertex with state~$0$ not to be in~$S$, then the state changes essentially are M\"obius transforms and inversions, see~\cite{BjorklundHKK07}.
The transformations in the proof below essentially are the fast evaluation algorithms from~\cite{BjorklundHKK07}.

\begin{proof}
The transformations work almost identical to those in the proof of Lemma~\ref{lem:dsstates}.
In step $1 \leq i \leq |X_x|$, we assume that the first $i-1$ coordinates of the colouring~$c$ in our table use one set of states, and the last $|X_x|-i$ coordinates use the other set of states.
Using this as an invariant, we change the set of states used for the $i$-th coordinate at step~$i$.

Transforming from $\{0,1\}$ to $\{0,?\}$ or $\{1,?\}$ can be done using the following formula.
In this formula, $A_x(c)$ represents our table for colouring~$c$, $c_1$ is a subcolouring of size $i-1$ using states $\{0,1\}$, and~$c_2$ is a subcolouring of size $|X_x|-i$ using states $\{0,?\}$:
\[ A_x(c_1 \times \{?\} \times c_2) = A_x(c_1 \times \{0\} \times c_2) + A_x(c_1 \times \{1\} \times c_2) \]
In words, the number of matchings that may contain some vertex~$v$ equals the sum of the number of matchings that do and the number of matchings that do not contain~$v$.

The following two similar formulas can be used for the other four transformations:
\[ A_x(c_1 \times \{1\} \times c_2) = A_x(c_1 \times \{?\} \times c_2) - A_x(c_1 \times \{0\} \times c_2) \]
\[ A_x(c_1 \times \{0\} \times c_2) = A_x(c_1 \times \{?\} \times c_2) - A_x(c_1 \times \{1\} \times c_2) \]

In these transformations, we need~$|X_x|$ additions or subtractions for each of the $|A_x|$ table entries.
Hence, a transformation requires $\bigO(|X_x||A_x|)$ arithmetic operations.
\end{proof}

Although we can transform our dynamic programming tables such that they use different sets of states, this does not directly help us in obtaining a faster algorithm for counting the number of perfect matchings.
Namely, if we would combine two partial solutions in which a vertex~$v$ has the~$?$-state in a join node, then it is possible that~$v$ is matched twice in the combined solution: once in each child node.
This would lead to incorrect answers, and this is why we put a $\not\,?$ instead of a~$?$ in the join table in Figure~\ref{fig:jointablepm}.
We overcome this problem by using some additional counting tricks that can be found in the proof below.

\begin{theorem} \label{thrm:countingpmtwalg}
There is an algorithm that, given a tree decomposition of a graph~$G$ of width~$k$, computes the number of perfect matchings in $G$ in $\bigO(n k^2 2^k i_\times(k\log(n)))$ time.
\end{theorem}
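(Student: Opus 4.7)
The plan is to augment the obvious $\{0,1\}$-state dynamic program (matched/unmatched) with a rank parameter $\kappa$ that records the number of matched bag-vertices, and then to combine the state transformations of Lemma~\ref{lem:pmstates} with the rank trick of fast subset convolution so that a join can be done in $\bigO(k^2 2^k)$ arithmetic operations rather than the naive $\bigO(3^k)$. For every node $x \in T$, I would store a table $A_x(c,\kappa)$ with $c \in \{0,1\}^{|X_x|}$ and $0 \le \kappa \le k+1$, whose entry equals the number of matchings of $G_x$ saturating $V_x \setminus X_x$ in which the matched bag-vertices are exactly the $1$-positions of $c$, provided $\kappa = \#_1(c)$, and is $0$ otherwise.

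Leaf, introduce, and forget nodes are handled by a direct adaptation of the standard $\bigOs(3^k)$ algorithm, with the rank $\kappa$ simply carried along (and incremented by $0$ or $1$ depending on the action taken on the vertex being added or forgotten); each such update runs in time linear in the table size. The real work is in the join. At a join node $x$ with children $l,r$ and $X_x = X_l = X_r$, every bag vertex must be matched in \emph{exactly one} of the two subgraphs $G_l, G_r$. Identifying a colouring with its set of $1$-positions, the correct combination is the subset convolution
\[
A_x(c) \;=\; \sum_{c_l \sqcup c_r = c} A_l(c_l)\,A_r(c_r),
\]
whose direct evaluation costs $\bigO(3^k)$; the right-hand table in Figure~\ref{fig:jointablepm} shows why a naive $\{0,?\}$-style combination is wrong, since two $?$-states on the same vertex would match it twice.

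To speed this up, I would first apply the $\{0,1\} \to \{0,?\}$ transformation of Lemma~\ref{lem:pmstates} to $A_l(\cdot,\kappa_l)$ and $A_r(\cdot,\kappa_r)$ separately for every fixed rank. In the $\{0,?\}$ representation the subset convolution becomes a pointwise product and rank is additive, so the transformed join table is
\[
\widetilde A_x(c,\kappa) \;=\; \sum_{\kappa_l+\kappa_r=\kappa} \widetilde A_l(c,\kappa_l)\,\widetilde A_r(c,\kappa_r).
\]
I then apply the inverse $\{0,?\} \to \{0,1\}$ transformation of Lemma~\ref{lem:pmstates} and finally retain only the entries with $\kappa = \#_1(c)$. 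This rank filter is exactly the inclusion--exclusion step behind fast subset convolution: it cancels precisely those contributions in which some bag vertex would have been matched twice and leaves the correct disjoint-union count in place. The number of perfect matchings of $G$ is then read off at the root as $A_z(\emptyset,0)$.

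Each of the transformations at a join node costs $\bigO(k\cdot 2^k)$ arithmetic operations per fixed rank by Lemma~\ref{lem:pmstates}, summing to $\bigO(k^2 2^k)$ over all $\bigO(k)$ ranks; the rank-convolution pointwise product also costs $\bigO(k^2 2^k)$ multiplications, and the other node types are cheaper. Each number requires $\bigO(k\log n)$ bits, since a transformed entry is a signed sum of at most $2^k$ untransformed counts and the latter are bounded, given the rank constraint, by an appropriate $n^{\bigO(k)}$ quantity. With $\bigO(n)$ nodes in the nice tree decomposition, the total time is $\bigO(n k^2 2^k i_\times(k \log n))$. The main obstacle I anticipate is verifying that the rank-filter really cancels all and only the double-matching contributions; this amounts to reconciling the classical subset-convolution derivation of Bj\"orklund et al.\ with the state-transformation language of Lemma~\ref{lem:pmstates}.
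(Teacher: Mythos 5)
Your proposal is correct and follows essentially the same route as the paper: index the child tables by the number of matched bag vertices, apply the $\{0,1\}\rightarrow\{0,?\}$ transformation of Lemma~\ref{lem:pmstates} per rank, perform the join as a pointwise product convolved over the rank, invert the transformation, and keep only the entries with $\kappa = \#_1(c)$, with the same $\bigO(k\log n)$-bit accounting. The one detail you leave implicit is the convention fixing where each matching edge is chosen (the paper requires that a state-$1$ vertex be matched only to an already-forgotten vertex, so no matching edge lies inside a bag); without some such convention the disjoint-union recurrence at a join would count a matching that uses an edge with both endpoints in $X_x$ twice, once for each child subgraph containing that edge.
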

\begin{proof}
For each node $x \in T$, we compute a table~$A_x$ with entries $A_x(c)$ containing the number of matchings that match all vertices in $V_x \setminus X_x$ and that satisfy the requirements defined by the colouring~$c$ using states $\{1,0\}$.
We use the extra invariant that vertices with state~$1$ are matched only with vertices outside the bag, i.e., vertices that have already been forgotten by the algorithm.
This prevents vertices being matched within the bag and greatly simplifies the presentation of the algorithm.

\smallskip \noindent {\it Leaf node}:
Let~$x$ be a leaf node in~$T$. We compute~$A_x$ in the following way:
\[ A_x(\{1\}) = 0 \qquad \qquad A_x(\{0\}) = 1 \]
The only matching in the single vertex graph is the empty matching.

\smallskip \noindent {\it Introduce node}:
Let~$x$ be an introduce node in~$T$ with child node~$y$ introducing the vertex~$v$.
The invariant on vertices with state~$1$ makes the introduce operation trivial:
\[ A_x(c \times \{1\}) = 0 \qquad \qquad  A_x(c \times \{0\}) = A_y(c) \]

\smallskip \noindent {\it Forget node}:
Let~$x$ be a forget node in~$T$ with child node~$y$ forgetting the vertex~$v$.
If the vertex $v$ is not matched already, then it must be matched to an available neighbour at this point:
\[ A_x(c) = A_y(c \times \{1\}) + \sum_{u \in N(v), c(u)=1} A_y(\phi_{u:1 \rightarrow 0}(c) \times \{0\}) \]
Here, $c(u)$ is the state of~$u$ in~$c$ and $\phi_{u:1 \rightarrow 0}(c)$ is the colouring~$c$ where the state of~$u$ is changed from~$1$ to~$0$.
This formula computes the number of matchings corresponding to~$c$, by adding the number of matchings in which~$v$ is matched already to the number of matchings of all possibly ways of matching~$v$ to one of its neighbours.
We note that, because of our extra invariant, we have to consider only neighbours in the current bag~$X_x$.
Namely, if we would match~$v$ to an already forgotten vertex~$u$, then we could have matched~$v$ to~$u$ in the node where~$u$ was forgotten.

\smallskip \noindent {\it Join node}:
Let~$x$ be a join node in~$T$ and let~$l$ and~$r$ be its child nodes.

The join is the most interesting operation.
As discussed before, we cannot simply change the set of states to $\{0,?\}$ and perform the join similar to {\sc Dominating Set} as suggested by Table~\ref{fig:jointablepm}.
We use the following method: we expand the tables and index them by the number of matched vertices in~$X_l$ or~$X_r$, i.e., the number of vertices with state~$1$.
Let $y \in \{l,r\}$, then we compute tables~$A'_l$ and~$A'_r$ as follows:
\[ A'_y(c,i) = \left\{ \begin{array}{ll} A_y(c) & \textrm{if $\#_1(c) = i$}\\ 0 & \textrm{otherwise} \end{array} \right.  \]
Next, we change the state representation in both tables~$A'_y$ to $\{0,\!?\}$ using Lemma~\ref{lem:pmstates}.
These tables do not use state~$1$, but are still indexed by the number of $1$-states used in the previous representation.
Then, we join the tables by combining all possibilities that arise from~$i$ $1$-states in the previous representation using states $\{0,1\}$ (stored in the index~$i$) using the following formula:
\[ A'_x(c,i) = \sum_{i_l + i_r = i} A'_l(c,i_l) \cdot A'_r(c,i_r) \]
As a result, the entries $A'_x(c,i)$ give us the total number of ways to combine partial solutions from~$G_l$ and~$G_r$ such that the vertices with state~$0$ in~$c$ are unmatched, the vertices with state~$?$ in~$c$ can be matched in zero, one, or both partial solutions used, and the total number of times the vertices with state~$?$ are matched is~$i$.

Next, we change the states in the table~$A'_x$ back to $\{0,1\}$ using Lemma~\ref{lem:pmstates}.
It is important to note that the $1$-state can now represent a vertex that is matched twice because the $?$-state used before this second transformation represented vertices that could be matched twice as well.
However, we can find those entries in which no vertex is matched twice by applying the following observation: the total number of $1$-states in~$c$ should equal the sum of those in its child tables, and this sum is stored in the index~$i$.
Therefore, we can extract the number of perfect matchings for each colouring~$c$ using the following formula:
\[ A_x(c) = A'_x(c,\#_1(c)) \]
In this way, the algorithm correctly computes the tables~$A_x$ for a join node $x \in T$.
This completes the description of the algorithm.

\smallskip
The computations in the join nodes again dominate the running time.
In a join node, the transformations of the states in the tables cost $\bigO(k^22^k)$ arithmetic operations each, and the computations of~$A'_x$ from~$A'_l$ and~$A'_r$ also costs $\bigO(k^22^k)$ arithmetic operations.
We will now show that these arithmetic operations can be implemented using $\bigO(k\log(n))$-bit numbers.
For every vertex, we can say that the vertex is matched to another vertex at the time when it is forgotten in~$T$, or when its matching neighbour is forgotten.
When it is matched at the time that it is forgotten, then it is matched to one of its at most $k+1$ neighbours.
This leads to at most $k+2$ choices per vertex.
As a result, there are at most $\bigO(k^n)$ perfect matchings in~$G$, and the described operations can be implemented using $\bigO(k\log(n))$-bit numbers.

Because a nice tree decomposition has $\bigO(n)$ nodes, the running time of the algorithm is $\bigO(n k^2 2^k i_\times(k\log(n)))$.
\end{proof}

The above theorem gives the currently fastest algorithm for counting the number of perfect matchings in graphs with a given tree decompositions of width $k$.
The algorithm uses ideas from the fast subset convolution algorithm of Bj\"orklund et al.~\cite{BjorklundHKK07} to perform the computations in the join node.

\subsection{$[\rho,\sigma]$-Domination Problems} \label{sec:rhosigmatw}
We have shown how to solve two elementary problems in $\bigOs(s^k)$ time on graphs of treewidth~$k$, where~$s$ is the number of states per vertex used in representations of partial solutions.
In this section, we generalise our result for {\sc Dominating Set} to the $[\rho,\sigma]$-domination problems.
We show that we can solve all $[\rho,\sigma]$-domination problems with finite or cofinite~$\rho$ and~$\sigma$ in $\bigOs(s^k)$ time.
This includes the existence (decision), minimisation, maximisation, and counting variants of these problems.

For the $[\rho,\sigma]$-domination problems, one can also use colourings with states to represent the different characteristics of partial solutions.
Let~$D$ be the vertex set of a partial solution of a $[\rho,\sigma]$-domination problem.
One set of states that we use involves the states~$\rho_j$ and~$\sigma_j$, where~$\rho_j$ and~$\sigma_j$ represent vertices not in~$D$, or in~$D$, that have~$j$ neighbours in~$D$, respectively.
For finite $\rho$, $\sigma$, we let $p = \max \{\rho\}$ and $q = \max \{\sigma\}$.
In this case, we have the following set of states: $\{\rho_0,\rho_1,\ldots,\rho_p,\sigma_0,\sigma_1,\linebreak[0]\ldots,\sigma_q\}$.
If~$\rho$ or~$\sigma$ are cofinite, we let $p = 1 + \max\{ \N \setminus \rho \}$ and $q = 1 + \max\{ \N \setminus \sigma \}$.
In this case, we replace the last state in the given sets by $\rho_{\geq p}$ or $\rho_{\geq q}$, respectively.
This state represents a vertex in the vertex set~$D$ of the partial solution of the $[\rho,\sigma]$-domination problem that has at least~$p$ neighbours in~$D$, or a vertex not in~$D$ with at least~$q$ neighbours in~$D$, respectively.
Let $s=p+q+2$ be the number of states involved.

Dynamic programming tables for the $[\rho,\sigma]$-domination problems can also be represented using different sets of states that contain the same information.
In this section, we will use three different sets of states.
These sets are defined as follows.
\begin{definition} \label{def:rsstates}
Let State Set~I, II, and~III be the following sets of states:
\begin{itemize}
\item State Set I: $\{\rho_0,\rho_1,\rho_2,\ldots,\rho_{p-1}, \rho_p / \rho_{\geq p}, \sigma_0,\sigma_1,\sigma_2,\ldots,\sigma_{q-1},\sigma_q / \sigma_{\geq q} \}$.
\item StateSet II: $\{\rho_0,\rho_{\leq1},\rho_{\leq2},\ldots,\rho_{\leq p-1},\rho_{\leq p} / \rho_{\N}, \sigma_0,\sigma_{\leq 1},\sigma_{\leq 2},\ldots,\sigma_{\leq q-1},\sigma_{\leq q} / \sigma_{\N} \}$.
\item State Set III: $\{\rho_0,\rho_1,\rho_2,\ldots \rho_{p-1}, \rho_p / \rho_{\geq p-1}, \sigma_0,\sigma_1,\sigma_2,\ldots,\sigma_{q-1},\sigma_q / \sigma_{\geq q-1} \}$.
\end{itemize}
\end{definition}
\noindent The meaning of all the states is self-explanatory: $\rho_{condition}$ and $\sigma_{condition}$ consider the number of partial solutions of the $[\rho,\sigma]$-domination problem that do not contain ($\rho$-state) or do contain ($\sigma$-state) this vertex with a number of neighbours in the corresponding vertex sets satisfying the $condition$.
The subscript~$\N$ stands for no condition at all, i.e., $\rho_{\N} = \rho_{\geq 0}$: all possible number of neighbours in~$\N$.
We note that the notation $\rho_p / \rho_{\geq p}$ in Definition~\ref{def:rsstates} is used to indicate that this set uses the state~$\rho_p$ if~$\rho$ is finite and~$\rho_{\geq p}$ if~$\rho$ is cofinite.

\begin{lemma} \label{lem:rsstates}
Let~$x$ be a node of a tree decomposition~$T$ and let~$A_x$ be a table with entries $A_x(c,\kappa)$ representing the number of partial solutions of size~$\kappa$ to the $[\rho,\sigma]$-domination problem in~$G_x$ corresponding to each colouring~$c$ of the bag~$X_x$ with states from any of the three sets from Definition~\ref{def:rsstates}.
The information represented in the table~$A_x$ does not depend on the choice of the set of states from the options given in Definition~\ref{def:rsstates}.
Moreover, there exist transformations between tables using representations with different sets of states using $\bigO(s|X_x||A_x|)$ arithmetic operations.
\end{lemma}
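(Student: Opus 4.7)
The plan is to follow the same coordinate-by-coordinate strategy used in Lemmas~\ref{lem:dsstates} and~\ref{lem:pmstates}. I would process the $|X_x|$ coordinates of a colouring one at a time, maintaining as invariant that after step~$i$ the first~$i$ coordinates have been re-encoded in the target state set while the remaining $|X_x|-i$ coordinates still use the source state set. At step~$i$, for every fixed assignment of the other coordinates and every relevant~$\kappa$, I apply a small invertible linear transformation on the~$s$ entries of~$A_x$ that vary in the $i$-th coordinate. Since this transformation is invertible, no information is lost or gained, and after $|X_x|$ steps the whole table has been re-encoded.

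Concretely, it suffices to describe three pairs of local transformations (each together with its inverse). Between Set~I and Set~II, I pass from individual counts to cumulative counts: writing $c_1$ for the partial colouring of the already-converted coordinates and $c_2$ for that of the coordinates yet to be converted,
\begin{equation*}
A_x(c_1 \times \rho_{\leq j} \times c_2,\kappa) \;=\; \sum_{i=0}^{j} A_x(c_1 \times \rho_i \times c_2,\kappa),
\end{equation*}
with the same formula for $\sigma$-states, and with the top cumulative entry taken to be $\rho_{\N} = \rho_{\leq p-1} + \rho_{\geq p}$ in the cofinite case. The inverse is the obvious telescoping difference. Between Set~I and Set~III, the two sets coincide in the finite case, while in the cofinite case the only change is at the last state, where
\begin{equation*}
A_x(c_1 \times \rho_{\geq p-1} \times c_2,\kappa) \;=\; A_x(c_1 \times \rho_{p-1} \times c_2,\kappa) + A_x(c_1 \times \rho_{\geq p} \times c_2,\kappa),
\end{equation*}
with the analogous inverse, and similarly for $\sigma$. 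The transformation between Set~II and Set~III is obtained by composing the two conversions above. That each of these local transformations is invertible establishes that $A_x$ carries the same information in all three encodings, since every partial solution counted under the source encoding ends up correctly counted under the target encoding.

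For the running time, a single local transformation at one coordinate affects one group of~$s$ entries using an $s \times s$ change of basis computable in $\bigO(s^2)$ arithmetic operations (and in fact $\bigO(s)$ with a running sum). There are $|A_x|/s$ such groups per coordinate and $|X_x|$ coordinates, giving $\bigO(s|X_x||A_x|)$ operations in total, as claimed.

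The proof is essentially routine; the only place that calls for care is the cofinite boundary, where the top state lumps together all sufficiently large counts. One has to check that the cumulative and ``$\geq$'' formulas behave correctly at this boundary (the $\rho_{\leq p}$ slot becomes $\rho_{\N}$ and the $\rho_{\geq p}$ slot becomes $\rho_{\geq p-1}$) and that the inverses really invert; once this is done, correctness follows directly from the meaning of the states.
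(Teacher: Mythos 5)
Your proposal is correct and follows essentially the same route as the paper: a coordinate-by-coordinate re-encoding with the same invariant, using cumulative sums and telescoping differences between Set~I and Set~II, the single boundary adjustment $\rho_{\geq p-1} = \rho_{p-1} + \rho_{\geq p}$ (and its inverse) for Set~III in the cofinite case, and the same operation count. The only cosmetic difference is that you obtain the Set~II $\leftrightarrow$ Set~III conversion by composing the other two rather than via the direct formulas the paper writes out, which changes nothing of substance.
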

\begin{proof}
We apply transformations that work in $|X_x|$ steps and are similar to those in the proofs of Lemmas~\ref{lem:dsstates} and~\ref{lem:pmstates}.
In the $i$-th step, we replace the states at the $i$-th coordinate of~$c$.
We use the following formulas to create entries with a new state.

We will give only the formulas for the $\rho$-states.
The formulas for the $\sigma$-states are identical, but with~$\rho$ replaced by~$\sigma$ and~$p$ replaced by~$q$.
We note that we slightly abuse notation below since we use that $\rho_{\leq 0} = \rho_0$.

To obtain states from State Set~I not present in State Set~II or~III, we can use:
\begin{eqnarray*}
A_x(c_1 \times \{\rho_j \} \times c_2,\kappa) & = & A_x(c_1 \times \{\rho_{\leq j}\} \times c_2, \kappa) - A_x(c_1 \times \{\rho_{\leq j-1}\} \times c_2, \kappa) \\
A_x(c_1 \times \{\rho_{\geq p} \} \times c_2,\kappa) & = & A_x(c_1 \times \{\rho_{\N}\} \times c_2, \kappa) - A_x(c_1 \times \{\rho_{\leq p-1}\} \times c_2, \kappa) \\
A_x(c_1 \times \{\rho_{\geq p} \} \times c_2,\kappa) & = & A_x(c_1 \times \{\rho_{\geq p-1}\} \times c_2, \kappa) - A_x(c_1 \times \{\rho_{p-1}\} \times c_2, \kappa)
\end{eqnarray*}

To obtain states from State Set~II not present in State Set~I or~III, we can use:
\begin{eqnarray*}
A_x(c_1 \times \{\rho_{\leq j} \} \times c_2,\kappa) & = & \sum_{l=0}^j A_x(c_1 \times \{\rho_l\} \times c_2, \kappa) \\
A_x(c_1 \times \{\rho_{\N} \} \times c_2,\kappa) & = & A_x(c_1 \times \{\rho_{\geq p}\} \times c_2, \kappa) + \sum_{l=0}^{p-1} A_x(c_1 \times \{\rho_l\} \times c_2, \kappa) \\
A_x(c_1 \times \{\rho_{\N} \} \times c_2,\kappa) & = & A_x(c_1 \times \{\rho_{\geq p-1}\} \times c_2, \kappa) + \sum_{l=0}^{p-2} A_x(c_1 \times \{\rho_l\} \times c_2, \kappa)
\end{eqnarray*}

To obtain states from State Set~III not present in State Set~I or~II, we can use the same formulas used to obtain states from State Set~I in combination with the following formulas:
\begin{eqnarray*}
A_x(c_1 \times \{\rho_{\geq p-1} \} \times c_2,\kappa) & = & A_x(c_1 \times \{\rho_{\geq p}\} \times c_2, \kappa) + A_x(c_1 \times \{\rho_{p-1}\} \times c_2, \kappa) \\
A_x(c_1 \times \{\rho_{\geq p-1} \} \times c_2,\kappa) & = & A_x(c_1 \times \{\rho_{\N}\} \times c_2, \kappa) - A_x(c_1 \times \{\rho_{\leq p-2}\} \times c_2, \kappa)
\end{eqnarray*}

As the transformations use~$|X_x|$ steps in which each entry is computed by evaluating a sum of less than~$s$ terms, the transformations require $\bigO(|X_x||A_x|)$ arithmetic operations.
\end{proof}

We note that similar transformations can also be used to transform a table into a new table that uses different sets of states on different vertices in a bag~$X_x$.
For example, we can use State Set~I on the first two vertices (assuming some ordering) and State Set~III on the other $|X_x|-2$ vertices.
We will use a transformation of this type in the proof of Theorem~\ref{thrm:rstwalg}.
\smallskip

To prove our main result for the $[\rho,\sigma]$-domination problems, we will also need more involved state transformations than those given above.
We need to generalise the ideas of the proof of Theorem~\ref{thrm:countingpmtwalg}.
In this proof, we expanded the tables~$A_l$ and~$A_r$ of the two child nodes~$l$ and~$r$ such that they contain entries $A_l(c,i)$ and $A_r(c,i)$, where~$i$ was an index indicating the number of $1$-states used to create the $?$-states in~$c$.
We will generalise this to the states used for the $[\rho,\sigma]$-domination problems.

Below, we often say that a colouring~$c$ of a bag~$X_x$ using State Set~I from Definition~\ref{def:rsstates} is \emph{counted} in a colouring~$c'$ of~$X_x$ using State Set~II.
We let this be the case when, all partial solutions counted in the entry with colouring~$c$ in a table using State Set~I are also counted in the entry with colouring~$c'$ in the same table when transformed such that it uses State Set~II.
I.e., when, for each vertex $v \in X_x$, $c(v)$ and $c'(v)$ are both $\sigma$-states or both $\rho$-states, and if $c(v) = \rho_i$ or $c(v) = \sigma_i$, then $c'(v) = \rho_{\leq j}$ or $c'(v) = \sigma_{\leq j}$ for some $j \geq i$.

Consider the case where~$\rho$ and~$\sigma$ are finite.
We introduce an \emph{index vector} $\vec{i} = (i_{\rho1}, i_{\rho2}, \ldots, i_{\rho p},i_{\sigma1},i_{\sigma2},\ldots,i_{\sigma q})$ that is used in combination with states from State Set~II from Definition~\ref{def:rsstates}.
In this index vector, $i_{\rho j}$ and~$i_{\sigma j}$ represent the sum over all vertices with state~$\rho_{\leq j}$ and~$\sigma_{\leq j}$ of the number of neighbours of the vertex in~$D$, respectively.
We say that a solution corresponding to a colouring~$c$ using State Set~I from Definition~\ref{def:rsstates} \emph{satisfies} a combination of a colouring~$c'$ using State Set~II and an index vector~$\vec{i}$ if: $c$ is counted in~$c'$, and for each~$i_{\rho j}$ or $i_{\sigma j}$, the sum over all vertices with state~$\rho_{\leq j}$ and~$\sigma_{\leq j}$ in~$c'$ of the number of neighbours of the vertex in~$D$ equals~$i_{\rho j}$ or~$i_{\sigma j}$, respectively.

We clarify this with an example.
Suppose that we have a bag of size three and a dynamic programming table indexed by colourings using the set of states $\{\rho_0,\rho_1,\rho_2,\sigma_0\}$ (State Set~I) that we want to transform to one using the set states $\{\rho_0,\rho_{\leq1},\rho_{\leq2},\sigma_0\}$ (State Set~II): thus $\vec{i} = (i_{\rho1},i_{\rho2})$.
Notice that a partial solution corresponding to the colouring $c=(\rho_0,\rho_1,\rho_2)$ will be counted in both $c'_1 = (\rho_{0},\rho_{\leq 2},\rho_{\leq 2})$ and $c'_2 = (\rho_{\leq 1},\rho_{\leq 1},\rho_{\leq 2})$.
In this case, $c$ satisfies the combination $(c'_1, \vec{i} = (0,3))$ since the sum of the subscripts of the states in~$c$ of the vertices with state~$\rho_{\leq 1}$~in $c'_1$ equals zero and this sum for the vertices with state~$\rho_{\leq 2}$ in~$c'_1$ equals three.
Also, $c$ satisfies no combination of $c'_1$ with an other index vector.
Similarly, $c$ satisfies the combination $(c'_2, \vec{i}=(1,2))$ and no other combination involving~$c'_2$.

In the case where~$\rho$ or~$\sigma$ are cofinite, the index vectors are one shorter: we do not count the sum of the number of neighbours in~$D$ of the vertices with state~$\rho_{\N}$ and~$\sigma_{\N}$.

What we will need is a table containing, for each possible combination of a colouring using State Set~II with an index vector, the number of partial solutions that satisfy these.
We can construct such a table using the following lemma.
\begin{lemma} \label{lem:rsstates2}
Let~$x$ be a node of a tree decomposition~$T$ of width~$k$.
There exists an algorithm that, given a table~$A_x$ with entries $A_x(c,\kappa)$ containing the number of partial solutions of size~$\kappa$ to the $[\rho,\sigma]$-domination problem corresponding to the colouring~$c$ on the bag~$X_x$ using State Set~I from Definition~\ref{def:rsstates}, computes in $\bigO(n(sk)^{s-1} s^{k+1} i_+(n))$ time a table~$A'_x$ with entries $A'_x(c,\kappa,\vec{i})$ containing the number partial solutions of size $\kappa$ to the $[\rho,\sigma]$-domination problem satisfying the combination of a colouring using State Set~II and the index vector $\vec{i}$. 
\end{lemma}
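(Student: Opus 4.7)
The plan is to extend the coordinate-by-coordinate state transformation of Lemma~\ref{lem:rsstates}, augmenting each step so that the subscript of the State Set~I state being absorbed is added to the appropriate component of the index vector. I would introduce intermediate tables $B^{(0)}_x, B^{(1)}_x, \ldots, B^{(|X_x|)}_x$ and initialise $B^{(0)}_x(c, \kappa, \vec{0}) = A_x(c, \kappa)$ and $B^{(0)}_x(c, \kappa, \vec{i}) = 0$ for $\vec{i} \neq \vec{0}$, maintaining the invariant that after step~$i$ the first $i$ coordinates of $c$ use State Set~II, the remaining $|X_x|-i$ coordinates still use State Set~I, and $\vec{i}$ records precisely the index contributions coming from the first $i$ coordinates. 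The final output is then $A'_x := B^{(|X_x|)}_x$.

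At step $i$, I would compute $B^{(i)}_x$ from $B^{(i-1)}_x$ by the subscript-tracking principle that an old state $\rho_l$ at position $i$ absorbed into a new state $\rho_{\leq j}$ contributes $l$ to the index entry $i_{\rho j}$. This gives the pull formula
\[
B^{(i)}_x(c_1 \times \rho_{\leq j} \times c_2, \kappa, \vec{i}) = \sum_{l=0}^{j} B^{(i-1)}_x\!\left(c_1 \times \rho_l \times c_2, \kappa, \vec{i} - l \cdot e_{\rho j}\right),
\]
where $e_{\rho j}$ is the unit vector in the $i_{\rho j}$-coordinate and terms where $\vec{i}-l\cdot e_{\rho j}$ has a negative component are treated as zero; a symmetric formula handles the $\sigma$-states. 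In the cofinite case the new state $\rho_{\N}$ absorbs both $\rho_{\geq p}$ and every $\rho_l$ with $l<p$ but carries no index component, so one would use
\[
B^{(i)}_x(c_1 \times \rho_{\N} \times c_2, \kappa, \vec{i}) = B^{(i-1)}_x(c_1 \times \rho_{\geq p} \times c_2, \kappa, \vec{i}) + \sum_{l=0}^{p-1} B^{(i-1)}_x(c_1 \times \rho_l \times c_2, \kappa, \vec{i}),
\]
and analogously for $\sigma_{\N}$. Correctness would then follow by induction on $i$: by Lemma~\ref{lem:rsstates} each partial solution in $A_x$ is routed to precisely the State Set~II colourings in which it is counted, while the step-by-step update of $\vec{i}$ accumulates exactly the subscript contributions of the transformed coordinates.

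For the running time, each component $i_{\rho j}$ of $\vec{i}$ is a sum of at most $|X_x| \leq k+1$ subscripts, each at most $p$, giving $i_{\rho j} = \bigO(sk)$, and similarly for the $\sigma$-components; hence the number of admissible index vectors is $\bigO((sk)^{s-2})$. Each of the $\bigO(k)$ steps then computes a table of size $\bigO(n \cdot s^{k+1} \cdot (sk)^{s-2})$, with every entry obtained as a sum of at most $s$ previously-computed $n$-bit entries. Multiplying yields $\bigO(n \cdot k \cdot s \cdot s^{k+1} \cdot (sk)^{s-2} \cdot i_+(n)) = \bigO(n(sk)^{s-1} s^{k+1} i_+(n))$, matching the claimed bound. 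The hard part will be the bookkeeping in the cofinite case: because $\rho_{\N}$ (resp.\ $\sigma_{\N}$) has no corresponding index component, several old states are merged into it without any update of $\vec{i}$, and I would need to verify carefully that the invariant on $\vec{i}$ is preserved exactly by the formulas above rather than accidentally under- or over-counting the contributions of $\rho_0, \ldots, \rho_{p-1}$.
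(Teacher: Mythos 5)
Your proposal is correct and follows essentially the same route as the paper's proof: the same initialisation with the all-zero index vector, the same coordinate-wise pull formula that decrements the relevant index component by the absorbed subscript, the same treatment of $\rho_{\N}$ and $\sigma_{\N}$ in the cofinite case (applying the plain Lemma~\ref{lem:rsstates} formula for each fixed index vector), and the same counting of $\bigO((sk)^{s-2})$ index vectors to obtain the stated running time. The bookkeeping concern you raise for the cofinite case is resolved exactly as you suggest, since those states carry no index component and the merged contributions need no correction.
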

\begin{proof}
We start with the following table~$A'_x$ using State Set~I:
\[ A'_x(c,\kappa,\vec{i}) = \left\{ \begin{array}{ll} A_x(c,\kappa) & \textrm{if $\vec{i}$ is the all-0 vector} \\ 0 & \textrm{otherwise} \end{array} \right. \]
Since there are no colourings with states~$\rho_{\leq j}$ and~$\sigma_{\leq j}$ yet, the sum of the number of neighbours in the vertex set~$D$ of the partial solutions of vertices with these states is zero.

Next, we change the states of the $j$-th coordinate at step~$j$ similar to Lemma~\ref{lem:rsstates}, but now we also updates the index vector~$\vec{i}$:
\begin{eqnarray*}
A'_x(c_1 \times \{\rho_{\leq j} \} \times c_2,\kappa,\vec{i} ) & = & \sum_{l=0}^j A'_x(c_1 \times \{\rho_l\} \times c_2, \kappa, \vec{i}_{i_{\rho j} \rightarrow (i_{\rho j}-l)}) \\
A'_x(c_1 \times \{\sigma_{\leq j} \} \times c_2,\kappa,\vec{i} ) & = & \sum_{l=0}^j A'_x(c_1 \times \{\sigma_l\} \times c_2, \kappa, \vec{i}_{i_{\sigma j} \rightarrow (i_{\sigma j}-l)})
\end{eqnarray*}
Here, $\vec{i}_{i_{\rho j} \rightarrow (i_{\rho j}-l)}$ denotes the index vector~$\vec{i}$ with the value of~$i_{\rho j}$ set to $i_{\rho j}-l$.

If~$\rho$ or~$\sigma$ are cofinite, we simply use the formula in Lemma~\ref{lem:rsstates} for every fixed index vector~$\vec{i}$ for the~$\rho_{\N}$-states and~$\sigma_{\N}$-states.
We do so because we do not need to keep track of any index vectors for these states.

For the running time, note that each index~$i_{\rho j}$, $i_{\sigma j}$ can have only values between zero and~$sk$ because there can be at most~$k$ vertices in~$X_x$ that each have at most~$s$ neighbours in~$D$ when considered for a state of the form~$\rho_{\leq j}$ or~$\sigma_{\leq j}$, as $j < p$ or $j < q$, respectively.
The new table has $\bigO(n (sk)^{s-2} s^{k+1})$ entries since we have $s^{k+1}$ colourings, $n + 1$ sizes~$\kappa$, and $s-2$ indices that range over~$sk$ values.
Since the algorithm uses at most $k+1$ steps in which it computes a sum with less than~$s$ terms for each entry using $n$-bit numbers, this gives a running time of $\bigO(n(sk)^{s-1} s^{k+1} i_+(n))$.
\end{proof}

We are now ready to prove our main result of this section.
\begin{theorem} \label{thrm:rstwalg}
Let $\rho, \sigma \subseteq \N$ be finite or cofinite, and let~$p$, $q$ and~$s$ be the values associated with the corresponding $[\rho,\sigma]$-domination problem.
There is an algorithm that, given a tree decomposition of a graph~$G$ of width~$k$, computes the number of $[\rho,\sigma]$-dominating sets in~$G$ of each size $\kappa$, $0 \leq \kappa \leq n$, in $\bigO(n^3 (sk)^{2(s-2)} s^{k+1} i_\times(n))$ time.
\end{theorem}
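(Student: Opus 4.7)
The plan is to extend the state-transformation approach used in Theorem~\ref{thrm:countingtwdsalg} for {\sc Dominating Set} and in Theorem~\ref{thrm:countingpmtwalg} for counting perfect matchings. I will compute, at each node $x$ of a nice tree decomposition of $G$, a table $A_x(c,\kappa)$ storing the number of partial $[\rho,\sigma]$-dominating sets of $G_x$ of size $\kappa$ whose characteristic on $X_x$ is described by a colouring $c$ over State Set~I from Definition~\ref{def:rsstates}. At the root, the resulting table entries will directly yield the number of $[\rho,\sigma]$-dominating sets of $G$ of each size.

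For leaf, introduce, and forget nodes, State Set~I is already convenient. I will initialise single-vertex entries at leaves; at an introduce node that adds a vertex $v$, I will transition by considering whether to place $v$ in $D$ or not, incrementing the neighbour-count subscript of each bag neighbour of $v$ when $v$ is placed in $D$; and at a forget node that removes $v$, I will only retain (by summation) those entries in which the final state of $v$ is permitted by $\rho$ or $\sigma$ (with the cofinite cases handled through $\rho_{\ge p}$ or $\sigma_{\ge q}$). These operations process a table of size $O(n s^{k+1})$ in linear time in the table size, so they are cheap compared to the join.

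The main obstacle is the join node, since in State Set~I a direct combination would have to enumerate, for every bag vertex, all ways of splitting its neighbour count between the two children. To avoid this, my plan is to first apply Lemma~\ref{lem:rsstates2} to both child tables $A_l$ and $A_r$ to obtain expanded tables $A'_l$ and $A'_r$ indexed by State Set~II colourings together with index vectors~$\vec{i}$. In this representation the exact per-vertex counts are aggregated per state bucket inside $\vec{i}$, which turns the join into a shifted convolution. Concretely, for a target parent entry $A'_x(c,\kappa,\vec{i})$ I will use the identity
\[
A'_x(c,\kappa,\vec{i}) = \sum_{\substack{\kappa_l+\kappa_r = \kappa + \#_\sigma(c) \\ \vec{i}_l+\vec{i}_r = \vec{i} + \vec{A}(c)}} A'_l(c,\kappa_l,\vec{i}_l) \cdot A'_r(c,\kappa_r,\vec{i}_r),
\]
where $\#_\sigma(c)$ is the number of $\sigma$-states in $c$, correcting for bag vertices in $D$ being counted by both children, and $\vec{A}(c)$ is a correction vector whose $j$-th $\rho$-coordinate is $\sum_{v:\,c(v)=\rho_{\le j}} \alpha_v$ with $\alpha_v$ the number of neighbours of $v$ in $X_x\cap D$ (and analogously for $\sigma$-coordinates). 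Both corrections depend only on $c$, so for each $c$ this is an ordinary convolution over $(\kappa,\vec{i})$. Afterwards, I will invert the transformation of Lemma~\ref{lem:rsstates2} by applying the analogous difference formulas (with the matching shifts of $\vec{i}$) in place of the cumulative-sum formulas used there, recovering $A_x$ in State Set~I.

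Finally, I will bound the running time. The expanded table in State Set~II with index vectors has $O(n\cdot(sk)^{s-2}\cdot s^{k+1})$ entries, and at each of the $O(n)$ join nodes, each entry is computed by a convolution with $O(n\cdot(sk)^{s-2})$ terms, each an $n$-bit multiplication taking time $i_\times(n)$. The transformation cost of Lemma~\ref{lem:rsstates2} is dominated by this. Summing over all join nodes yields the claimed bound $O(n^3\cdot(sk)^{2(s-2)}\cdot s^{k+1}\cdot i_\times(n))$. I expect the main subtleties to be verifying the correctness of the shift terms $\#_\sigma(c)$ and $\vec{A}(c)$ in the convolution identity, and carefully handling the cofinite cases for $\rho$ or $\sigma$ in which the top state represents an aggregated "$\ge p$" or "$\ge q$" count rather than an exact one, so that the dimension of $\vec{i}$ remains the $s-2$ used in the analysis.
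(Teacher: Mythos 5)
Your proposal is correct and rests on the same core machinery as the paper's proof: transform the child tables to State Set~II with index vectors via Lemma~\ref{lem:rsstates2}, combine identical colourings by a convolution over sizes and index vectors, invert the transformation, and extract the entries whose index vector equals the weighted sum of subscripts prescribed by the target colouring. The one genuine difference is a bookkeeping choice. The paper redefines the subscripts of the states so that they count only neighbours in~$D$ that have already been \emph{forgotten}; this makes the join a pure sum ($\vec{i}=\vec{g}+\vec{h}$, no correction vector) at the price of a rather intricate three-step forget node (updating the subscript of the forgotten vertex, shifting the states of its bag neighbours via State Set~III and the map~$\phi$, then filtering by $\rho$ and~$\sigma$). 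You instead keep full neighbour counts, maintained at introduce nodes, which makes the forget node a trivial selection but forces the shifted convolution $\vec{i}_l+\vec{i}_r=\vec{i}+\vec{A}(c)$ and $\kappa_l+\kappa_r=\kappa+\#_\sigma(c)$ at the join. Your variant is sound, but its join correctness needs one extra observation that you should make explicit: every bag neighbour of a vertex~$u$ that lies in~$D$ is present in \emph{both} children (the colourings agree on $X_x$), so each child subscript of~$u$ is at least~$\alpha_u$; consequently, among the combinations surviving the M\"obius inversion, the minimum of $j_1+j_2-\alpha_u$ is still exactly the target subscript and is attained precisely by the correct combinations. This is what lets the aggregate index constraint (total corrected sum equals the sum of the targets) force every vertex individually to its target, exactly as in the paper's inductive argument over the states $\rho_1,\rho_2,\ldots$ With that observation, and with the index-vector shifts carried through the inverse transformation as you indicate, the handling of the cofinite case and the running-time analysis go through unchanged.
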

Notice that, for any given $[\rho,\sigma]$-domination problem, $s$ is a fixed constant.
Hence, Theorem~\ref{thrm:rstwalg} gives us $\bigOs(s^k)$-time algorithms for these problems.

\begin{proof}
Before we give the computations involved for each type of node in a nice tree decomposition~$T$, we slightly change the meaning of the subscript of the states $\rho_{condition}$ and $\sigma_{condition}$.
In our algorithm, we let the subscripts of these states count only the number of neighbours in the vertex sets~$D$ of the partial solution of the $[\rho,\sigma]$-domination problem that have already been forgotten by the algorithm.
This prevents us from having to keep track of any adjacencies within a bag during a join operation.
We will update these subscripts in the forget nodes.
This modification is similar to the approach for counting perfect matchings in the proof of Theorem~\ref{thrm:countingpmtwalg}, where we matched vertices in a forget node to make sure that we did not have to deal with vertices that are matched within a bag when computing the table for a join node.

We will now give the computations for each type of node in a nice tree decomposition~$T$.
For each node $x \in T$, we will compute a table $A_x(c,\kappa)$ containing the number of partial solutions of size~$\kappa$ in~$G_x$ corresponding to the colouring~$c$ on~$X_x$ for all colourings~$c$ using State Set~I from Definition~\ref{def:rsstates} and all $0 \leq \kappa \leq n$.
During this computation, we will transform to different sets of states using Lemmas~\ref{lem:rsstates} and~\ref{lem:rsstates2} when necessary.

\smallskip \noindent {\it Leaf node}:
Let~$x$ be a leaf node in~$T$. 

Because the subscripts of the states count only neighbours in the vertex set of the partial solutions that have already been forgotten, we use only the states~$\rho_0$ and~$\sigma_0$ on a leaf.
Furthermore, the number of $\sigma$-states must equal~$\kappa$.
As a result, we can compute~$A_x$ in the following way:
\[ A_x(c,\kappa) = \left\{ \begin{array}{ll} 1 & \textrm{if $c = \{\rho_0\}$ and $\kappa = 0$} \\ 1 & \textrm{if $c = \{\sigma_0\}$ and $\kappa = 1$} \\ 0 & \textrm{otherwise} \end{array} \right. \]

\smallskip \noindent {\it Introduce node}:
Let~$x$ be an introduce node in~$T$ with child node~$y$ introducing the vertex~$v$.

Again, the entries where~$v$ has the states~$\rho_j$ or~$\sigma_j$, for $j \geq 1$, will be zero due to the definition of the (subscripts of) the states.
Also, we must again keep track of the size~$\kappa$.
Let~$\varsigma$ be the state of the introduced vertex.
We compute $A_x$ in the following way:
\[ A_x(c \times \{\varsigma\},\kappa) = \left\{ \begin{array}{ll} A_y(c,\kappa) & \textrm{if $\varsigma=\rho_0$} \\ A_y(c,\kappa-1) & \textrm{if $\varsigma=\sigma_0$ and $\kappa \geq 1$} \\ 0 & \textrm{otherwise}    \end{array} \right. \]

\smallskip \noindent {\it Forget node}:
Let~$x$ be a forget node in~$T$ with child node~$y$ forgetting the vertex~$v$.

The operations performed in the forget node are quite complicated.
Here, we must update the states such that they are correct after forgetting the vertex~$v$, and we must select those solutions that satisfy the constraints imposed on~$v$ by the specific $[\rho,\sigma]$-domination problem.
We will do this in three steps: we compute intermediate tables~$A_1$, $A_2$ in the first two steps and finally~$A_x$ in step three.
Let $c(N(v))$ be the subcolouring of~$c$ restricted to vertices in $N(v)$.

\emph{Step 1}: We update the states used on the vertex~$v$.
We do so to include the neighbours in~$D$ that the vertex~$v$ has inside the bag~$X_x$ in the states used to represent the different characteristics.
Notice that after including these neighbours, the subscripts of the states on~$v$ represent the total number of neighbours that~$v$ has in~$D$.
The result will be the table~$A_1$, which we compute using the following formulas where $\#_\sigma(c)$ stands for the number of~$\sigma$-states in the colouring~$c$:
\begin{eqnarray*}
A_1(c \times \{ \rho_j \}, \kappa) & = & \left\{ \begin{array}{ll} A_y(c \times \{ \rho_{j - \#_\sigma(c(N(v)))} \} , \kappa) & \textrm{if $j \geq \#_\sigma(c(N(v)))$} \\ 0 & \textrm{otherwise} \end{array} \right. \\
A_1(c \times \{ \sigma_j \}, \kappa) & = & \left\{ \begin{array}{ll} A_y(c \times \{ \sigma_{j - \#_\sigma(c(N(v)))} \} , \kappa) & \textrm{if $j \geq \#_\sigma(c(N(v)))$} \\ 0 & \textrm{otherwise} \end{array} \right.
\end{eqnarray*}
If~$\rho$ or~$\sigma$ are cofinite, we also need the following formulas:
\begin{eqnarray*}
A_1(c \times \{ \rho_{\geq p} \}, \kappa) & = & A_y(c \times \{ \rho_{\geq p} \}, \kappa) + \sum_{i = p - \#_\sigma(c(N(v)))}^{p-1} A_y(c \times \{ \rho_i \},\kappa) \\
A_1(c \times \{ \sigma_{\geq q} \}, \kappa) & = & A_y(c \times \{ \sigma_{\geq q} \}, \kappa) + \sum_{i = q - \#_\sigma(c(N(v)))}^{q-1} A_y(c \times \{ \sigma_i \},\kappa)
\end{eqnarray*}
Correctness of these formulas is easy to verify.

\emph{Step 2}: We update the states representing the neighbours of~$v$ such that they are according to their definitions after forgetting~$v$.
All the required information to do this can again be read from the colouring~$c$.

We apply Lemma~\ref{lem:rsstates} and change the state representation for the vertices in $N(v)$ to State Set~III (Definition~\ref{def:rsstates}) obtaining the table $A_1'(c,\kappa)$; we do not change the representation of other vertices in the bag.
That is, if~$\rho$ or~$\sigma$ are cofinite, we replace the last state~$\rho_{\geq p}$ or~$\sigma_{\geq q}$ by~$\rho_{\geq p-1}$ or~$\sigma_{\geq q-1}$, respectively, on vertices in $X_y \cap N(v)$.
We can do so as discussed below the proof of \ref{lem:rsstates}.

This state change allows us to extract the required values for the table~$A_2$, as we will show next.
We introduce the function~$\phi$ that will send a colouring using State Set~I to a colouring that uses State Set~I on the vertices in $X_y \setminus N(v)$ and State Set~III on the vertices in $X_y \cap N(v)$.
This function updates the states used on~$N(v)$ assuming that we would put~$v$ in the vertex set~$D$ of the partial solution.
We define $\phi$ in the following way: it maps a colouring~$c$ to a new colouring with the same states on vertices in $X_y \setminus N(v)$ while it applies the following replacement rules on the states on vertices in $X_y \cap N(v)$: $\rho_1 \mapsto \rho_0$, $\rho_2 \mapsto \rho_1$, \ldots, $\rho_p \mapsto \rho_{p-1}$, $\rho_{\geq p} \mapsto \rho_{\geq p-1}$, $\sigma_1 \mapsto \sigma_0$, $\sigma_2 \mapsto \sigma_1$, \ldots, $\sigma_q \mapsto \sigma_{q-1}$, $\sigma_{\geq q} \mapsto \sigma_{\geq q-1}$.
Thus, $\phi$ lowers the counters in the conditions that index the states by one for states representing vertices in $N(v)$.
We note that~$\phi(c)$ is defined only if $\rho_0, \sigma_0 \not\in c$.

Using this function, we can easily update our states as required:
\begin{eqnarray*}
A_2( c \times \{ \sigma_j \}, \kappa) & = & \left\{ \begin{array}{ll} A'_1(\phi(c) \times \{\sigma_j\},\kappa) & \textrm{if $\rho_0,\sigma_0 \not\in c(N(v))$} \\ 0 & \textrm{otherwise} \end{array} \right. \\
A_2( c \times \{ \rho_j \}, \kappa) & = & A'_1(c \times \{ \rho_j \}, \kappa)
\end{eqnarray*}
In words, for partial solutions on which the vertex~$v$ that we will forget has a~$\sigma$-state, we update the states for vertices in $X_y \cap N(v)$ such that the vertex~$v$ is counted in the subscript of the states.
Entries in~$A_2$ are set to 0 if the states count no neighbours in~$D$ while~$v$ has a~$\sigma$-state in~$c$ and thus a neighbour in~$D$ in this partial solution.

Notice that after updating the states using the above formula the colourings~$c$ in~$A_2$ again uses State Set~I from Definition~\ref{def:rsstates}.

\emph{Step 3}: We select the solutions that satisfy the constraints of the specific $[\rho,\sigma]$-domination problem on~$v$ and forget~$v$.
\[ A_x(c, \kappa) = \left( \sum_{i \in \rho} A_2(c \times \{ \rho_i \}, \kappa)  \right) + \left(  \sum_{i \in \sigma} A_2(c \times \{ \sigma_i \}, \kappa) \right) \]
We slightly abuse our notation here when~$\rho$ or~$\sigma$ are cofinite.
Following the discussion of the construction of the table~$A_x$, we conclude that this correctly computes the required values.

\smallskip \noindent {\it Join node}:
Let~$x$ be a join node in~$T$ and let~$l$ and~$r$ be its child nodes.
Computing the table~$A_x$ for the join node~$x$ is the most interesting operation.

First, we transform the tables~$A_l$ and~$A_r$ of the child nodes such that they use State Set~II (Definition~\ref{def:rsstates}) and are indexed by index vectors using Lemma~\ref{lem:rsstates2}.
As a result, we obtain tables~$A'_l$ and~$A'_r$ with entries $A'_l(c,\kappa,\vec{g})$ and $A'_r(c,\kappa,\vec{h})$.
These entries count the number of partial solutions of size~$\kappa$ corresponding to the colouring~$c$ such that the sum of the number of neighbours in~$D$ of the set of vertices with each state equals the value that the index vectors~$\vec{g}$ and~$\vec{h}$ indicate.
Here, $D$ is again the vertex set of the partial solution involved.
See the example above the statement of Lemma~\ref{lem:rsstates2}.

Then, we compute the table $A_x(c,\kappa,\vec{i})$ by combining identical states from~$A'_l$ and~$A'_r$ using the formula below.
In this formula, we sum over all ways of obtaining a partial solution of size~$\kappa$ by combining the sizes in the tables of the child nodes and all ways of obtaining index vector $\vec{i}$ from $\vec{i} = \vec{g} + \vec{h}$.
\[ A'_x(c,\kappa,\vec{i}) = \sum_{\kappa_l + \kappa_r = \kappa + \#_\sigma(c)} \left( \sum_{i_{\rho1}=g_{\rho1}+h_{\rho1}} \!\!\cdots\!\! \sum_{i_{\sigma q}=g_{\sigma q}+h_{\sigma q}} A'_l(c,\kappa_l,\vec{g}) \cdot A'_r(c,\kappa_r,\vec{h}) \right) \]

We observe the following: a partial solution~$D$ in~$A'_x$ that is a combination of partial solutions from~$A'_l$ and~$A'_r$ is counted in an entry in $A'_x(c,\kappa,\vec{i})$ if and only if it satisfies the following three conditions.
\begin{enumerate}
\item The sum over all vertices with state~$\rho_{\leq j}$ and~$\sigma_{\leq j}$ of the number of neighbours of the vertex in~$D$ of this combined partial solution equals $i_{\rho j}$ or $i_{\sigma j}$, respectively.
\item The number of neighbours in~$D$ of each vertex with state~$\rho_{\leq j}$ or~$\sigma_{\leq j}$ of both partial solutions used to create this combined solution is at most~$j$.
\item The total number of vertices in~$D$ in this joined solution is~$\kappa$.
\end{enumerate}

Let $\Sigma_\rho^l(c)$, $\Sigma_\sigma^l(c)$ be the weighted sums of the number of~$\rho_j$-states and~$\sigma_j$-states with $0 \leq j \leq l$ in~$c$, respectively, defined by:
\[ \Sigma_\rho^l(c) = \sum_{j=1}^l j \cdot \#_{\rho_j}(c) \qquad \qquad \Sigma_\sigma^l(c) = \sum_{j=1}^l j \cdot \#_{\sigma_j}(c) \]
We note that $\Sigma_\rho^1(c) = \#_{\rho_1}(c)$ and $\Sigma_\sigma^1(c) = \#_{\sigma_1}(c)$.

Now, using Lemma~\ref{lem:rsstates}, we change the states used in the table $A'_x$ back to State Set~I.
If~$\rho$ and~$\sigma$ are finite, we extract the values computed for the final table~$A_x$ in the following way:
\[ A_x(c,\kappa) = A'_x \left( c, \; \kappa, \; (\Sigma_\rho^1(c),\Sigma_\rho^2(c),\ldots,\Sigma_\rho^p(c),\Sigma_\sigma^1(c),\Sigma_\sigma^2(c),\ldots,\Sigma_\sigma^q(c)) \; \right) \]
If~$\rho$ or~$\sigma$ are cofinite, we use the same formula but omit the components $\Sigma_\rho^p(c)$ or $\Sigma_\sigma^q(c)$ from the index vector of the extracted entries, respectively.

Below, we will prove that the entries in~$A_x$ are exactly the values that we want to compute.
We first give some intuition.
In essence, the proof is a generalisation of how we performed the join operation for counting the number of perfect matchings in the proof of Theorem~\ref{thrm:countingpmtwalg}.
State Set~II has the role of the $?$-states in the proof of Theorem~\ref{thrm:countingpmtwalg}.
These states are used to count possible combinations of partial solutions from~$A_l$ and~$A_r$.
These combinations include incorrect combinations in the sense that a vertex can have more neighbours in~$D$ than it should have; this is analogous to counting the number of perfect matchings, where combinations were incorrect if a vertex is matched twice.
The values $\Sigma_\rho^l(c)$ and $\Sigma_\sigma^l(c)$ represent the total number of neighbours in~$D$ of the vertices with a~$\rho_j$-states or~$\sigma_j$-states with $0 \leq j \leq l$ in~$c$, respectively.
The above formula uses these $\Sigma_\rho^l(c)$ and $\Sigma_\sigma^l(c)$ to extract exactly those values from the table $A'_x$ that correspond to correct combinations.
That is, in this case, correct combinations for which the number of neighbours of a vertex in~$D$ is also correctly represented by the new states.

We will now prove that the computation of the entries in~$A_x$ gives the correct values.
An entry in $A_x(c,\kappa)$ with $c \in \{\rho_0,\sigma_0\}^k$ is correct: these states are unaffected by the state changes and the index vector is not used.
The values of these entries follow from combinations of partial solutions from both child nodes corresponding to the same states on the vertices.

Now consider an entry in $A_x(c,\kappa)$ with $c \in \{\rho_0,\rho_1,\sigma_0\}^k$.
Each~$\rho_1$-state comes from a~$\rho_{\leq 1}$-state in $A_x'(c,\kappa,\vec{i})$ and is a combination of partial solutions from~$A_l$ and~$A_r$ with the following combinations of states on this vertex: $(\rho_0,\rho_0)$, $(\rho_0,\rho_1)$, $(\rho_1,\rho_0)$, $(\rho_1,\rho_1)$.
Because we have changed states back to State Set~I, each $(\rho_0,\rho_0)$ combination is counted in the~$\rho_0$-state on this vertex, and thus subtracted from the combinations used to form state~$\rho_1$: the other three combinations remain counted in the~$\rho_1$-state.
Since we consider only those solutions with index vector $i_{\rho_1} = \Sigma_\rho^1(c)$, the total number of~$\rho_1$-states used to form this joined solution equals $\Sigma_\rho^1(c) = \#_{\rho_1}(c)$.
Therefore, no $(\rho_1,\rho_1)$ combination could have been used, and each partial solution counted in $A(c,\kappa)$ has exactly one neighbour in~$D$ on each of the~$\rho_1$-states, as required.

We can now inductively repeat this argument for the other states.
For $c \in \{\rho_0,\rho_1,\rho_2,\sigma_0\}^k$, we know that the entries with only~$\rho_0$-states and~$\rho_1$-states are correct.
Thus, when a~$\rho_2$-state is formed from a~$\rho_{\leq 2}$-state during the state transformation of Lemma~\ref{lem:rsstates}, all nine possibilities of getting the state~$\rho_{\leq 2}$ from the states~$\rho_0$, $\rho_1$, and~$\rho_2$ in the child bags are counted, and from this number all three combinations that should lead to a~$\rho_0$ and~$\rho_1$ in the join are subtracted.
What remains are the combinations $(\rho_0,\rho_2)$, $(\rho_1,\rho_2)$, $(\rho_2,\rho_2)$, $(\rho_1,\rho_1)$, $(\rho_1,\rho_2)$, $(\rho_2,\rho_0)$.
Because of the index vector of the specific the entry we extracted from~$A'_x$, the total sum of the number of neighbours in~$D$ of these vertices equals $\Sigma_\rho^2$, and hence only the combinations $(\rho_0,\rho_2)$, $(\rho_1,\rho_1)$, and $(\rho_2,\rho_0)$ could have been used.
Any other combination would raise the component~$i_{\rho2}$ of~$\vec{i}$ to a number larger than $\Sigma_\rho^2$.

If we repeat this argument for all states involved, we conclude that the above computation correctly computes~$A_x$ if~$\rho$ and~$\sigma$ are finite.
If~$\rho$ or~$\sigma$ are cofinite, then the argument can also be used with one small difference.
Namely, the index vectors are one component shorter and keep no index for the states~$\rho_{\N}$ and~$\sigma_{\N}$.
That is, at the point in the algorithm where we introduce these index vectors and transform to State Set~II using Lemma~\ref{lem:rsstates2}, we have no index corresponding to the sum of the number of neighbours in the vertex set~$D$ of the partial solution of the vertices with states~$\rho_{\N}$ and~$\sigma_{\N}$.
However, we do not need to select entries corresponding to having~$p$ or~$q$ neighbours in~$D$ for the states~$\rho_{\geq p}$ and~$\sigma_{\geq q}$ since these correspond to all possibilities of getting at least~$p$ or~$q$ neighbours in~$D$.
When we transform the states back to State Set~I just before extracting the values for~$A_x$ from~$A'_x$, entries that have the state~$\rho_{\geq p}$ or~$\sigma_{\geq q}$ after the transformation count all possible combinations of partial solutions except those counted in any of the other states.
This is exactly what we need since all combinations with less than~$p$ (or~$q$) neighbours are present in the other states.

\smallskip
After traversing the whole decomposition tree $T$, one can find the number of $[\rho,\sigma]$-dominating sets of size $\kappa$ in the table computed for the root node $z$ of $T$ in $A_z(\emptyset,\kappa)$.

\smallskip
We conclude with an analysis of the running time.
The most time-consuming computations are again those involved in computing the table~$A_x$ for a join node~$x$.
Here, we need $\bigO(n(sk)^{s-1} s^{k+1} i_+(n))$ time for the transformations of Lemma~\ref{lem:rsstates2} that introduce the index vectors since $\max\{ |X_x| \;|\; x \in T\} = k+1$.
However, this is still dominated by the time required to compute the table~$A'_x$: this table contains at most $s^{k+1} n (sk)^{s-2}$ entries $A'_x(c,\kappa,\vec{i})$, each of which is computed by an $n (sk)^{s-2}$-term sum.
This gives a total time of $\bigO(n^2 (sk)^{2(s-2)} s^{k+1} i_\times(n))$ since we use $n$-bit numbers.
Because the nice tree decomposition has $\bigO(n)$ nodes, we conclude that the algorithm runs in $\bigO(n^3 (sk)^{2(s-2)} s^{k+1} i_\times(n))$ time in total.
\end{proof}

This proof generalises ideas from the fast subset convolution algorithm~\cite{BjorklundHKK07}.
While convolutions use ranked M\"{o}bius transforms~\cite{BjorklundHKK07}, we use transformations with multiple states and multiple ranks in our index vectors.

The polynomial factors in the proof of Theorem~\ref{thrm:rstwalg} can be improved in several ways.
Some improvements we give are for $[\rho,\sigma]$-domination problems in general, and others apply only to specific problems.
Similar to $s = p + q + 2$, we define the value~$r$ associated with a $[\rho, \sigma]$-domination problems as follows:
\[ r = \left\{ \begin{array}{ll} \max\{p-1,q-1\} & \textrm{if $\rho$ and $\sigma$ are cofinite} \\ \max\{p,q-1\} & \textrm{if $\rho$ is finite and $\sigma$ is cofinite} \\ \max\{p-1,q\} & \textrm{if $\rho$ is confinite and $\sigma$ is finite} \\ \max\{p,q\} & \textrm{if $\rho$ and $\sigma$ are finite} \end{array} \right. \]

\begin{corollary}[General {$\boldsymbol{[\rho, \sigma]}$}-Domination Problems] \label{cor:generalrstwalg}
Let $\rho, \sigma \subseteq \N$ be finite or cofinite, and let~$p$, $q$, $r$, and~$s$ be the values associated with the corresponding $[\rho,\sigma]$-domination problem.
There is an algorithm that, given a tree decomposition of a graph~$G$ of width~$k$, computes the number of $[\rho,\sigma]$-dominating sets in~$G$ of each size~$\kappa$, $0 \leq \kappa \leq n$, in $\bigO(n^3 (rk)^{2r} s^{k+1} i_\times(n))$ time.
Moreover, there is an algorithm that decides whether there exist a $[\rho,\sigma]$-dominating set of size~$\kappa$, for each individual value of $\kappa$, $0 \leq \kappa \leq n$, in $\bigO(n^3 (rk)^{2r} s^{k+1} i_\times(log(n)+k\log(r)))$ time.
\end{corollary}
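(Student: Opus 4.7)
The plan is to prove Corollary~\ref{cor:generalrstwalg} by sharpening two parts of the proof of Theorem~\ref{thrm:rstwalg}: the polynomial-factor analysis of the join step, and, for the existence version, the bit size of the numbers used during a join.

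For the improved polynomial factor $(rk)^{2r}$ in place of $(sk)^{2(s-2)}$, I would revisit the index-vector machinery of Lemma~\ref{lem:rsstates2}. Its components were bounded there by the coarse value $sk$, but in fact $i_{\rho j}$ is a sum over at most $k$ vertices of values each at most $j$, so $i_{\rho j} \leq jk \leq rk$, and analogously $i_{\sigma j} \leq jk \leq rk$. Moreover, when $\rho$ (resp. $\sigma$) is cofinite, the state $\rho_{\N}$ (resp. $\sigma_{\N}$) needs no index entry: in the inverse transformation it is recovered as the residual of the other subcounts, so we do not have to resolve the numbers of neighbours for vertices in that state. The case analysis in the definition of $r$ then gives at most $2r$ index components, so the $\vec{i}$-space has at most $(rk)^{2r}$ cells. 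The convolution $A'_x(c,\kappa,\vec{i}) = \sum_{\vec{g}+\vec{h}=\vec{i}} A'_l \cdot A'_r$ over $\vec{i}$ is handled by applying the multidimensional M\"obius/zeta transforms of Lemma~\ref{lem:rsstates} in the $\vec{i}$ coordinates: after transform the convolution becomes a pointwise product, and transforming back yields the final table. This reduces the per-join arithmetic to $\bigO(n^2 (rk)^{2r} s^{k+1})$ operations (polylog factors absorbed as in the running-time conventions), which summed over $\bigO(n)$ nodes gives the first claim.

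For the existence version, I would run the same algorithm but after every forget and every join replace each entry $A_x(c,\kappa)$ by $\min(A_x(c,\kappa),1)$. Since we only decide whether a $[\rho,\sigma]$-dominating set of size $\kappa$ exists, this preserves correctness while guaranteeing that the inputs to each join are $\{0,1\}$-valued. Within a single join the numbers then stay small: the state transformation of Lemma~\ref{lem:rsstates2} applied to a $\{0,1\}$ table consists of $\bigO(k)$ partial sums with at most $r+1$ summands each, so entries of $A'_l,A'_r$ are at most $(r+1)^k$; every pointwise product is at most $(r+1)^{2k}$; summing over the $\bigO(n(rk)^{2r})$ contributing pairs gives $\bigO(n(rk)^{2r} r^{2k})$; and the inverse transform contributes at most a further $r^k$ factor. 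Because $r$ is a constant for each fixed problem, the logarithm of these bounds is $\bigO(\log n + k\log r)$, so every arithmetic operation fits in $i_\times(\log n + k\log r)$ time, which matches the claimed bound.

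The delicate point is that the inverse state transformation inside a join involves subtractions, so one must verify that intermediate values do not temporarily exceed $\bigO(\log n + k\log r)$ bits even though the final non-negative values do. My plan is a routine induction on the number of transformation steps: at step $i$ each entry is a signed $\Z$-linear combination of at most $(r+1)^i$ input entries with coefficients in $\{-1,0,1\}$, so its absolute value is at most $(r+1)^i$ times the maximum input. Combined with the cap-after-join invariant, this confirms that the bit-length bound holds throughout, closing the analysis.
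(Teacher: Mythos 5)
Your treatment of the second claim is essentially sound and close to the paper's own argument: capping entries at $\min\{1,e\}$ after each join preserves the existence information, the $\bigO(k)$ transformation steps with at most $r+1$ summands each bound the transformed entries by $(r+1)^{k+1}$, and your induction on the signed intermediate values in the inverse transforms correctly handles the subtraction issue, all of which yields $\bigO(\log n+k\log r)$-bit numbers. The gap is in the first claim. Your accounting keeps all $p+q\le 2r$ index components $i_{\rho 1},\dots,i_{\rho p},i_{\sigma 1},\dots,i_{\sigma q}$ and only sharpens their range from $sk$ to $rk$; this bounds the index space of a single table by $(rk)^{2r}$, but the join must evaluate $A'_x(c,\kappa,\vec{i})=\sum_{\vec{g}+\vec{h}=\vec{i}}A'_l(c,\kappa_l,\vec{g})\cdot A'_r(c,\kappa_r,\vec{h})$, and done directly this costs (number of output vectors $\vec{i}$) times (number of splits $\vec{g}$ per output), i.e.\ up to $(rk)^{2(p+q)}$, which can be as large as $(rk)^{4r}$. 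Your proposed fix --- applying the M\"obius/zeta transforms of Lemma~\ref{lem:rsstates} in the $\vec{i}$ coordinates so that the convolution becomes a pointwise product --- does not work: those transforms are prefix sums over the ordered index range, and a prefix-sum (zeta) transform linearises the covering-type convolution $\sum_{\max(\vec{g},\vec{h})=\vec{i}}$, not the additive convolution $\sum_{\vec{g}+\vec{h}=\vec{i}}$. The additive ``rank'' convolution is precisely the part of the fast-subset-convolution machinery that is evaluated by brute force, both in Bj\"orklund et al.\ and in Theorem~\ref{thrm:rstwalg}, so no pointwise shortcut is available along these lines.

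The idea you are missing is structural rather than a transform: in a join, a vertex is never assigned a $\rho$-state in one child and a $\sigma$-state in the other, so for each fixed $j$ the two components $i_{\rho j}$ and $i_{\sigma j}$ can be merged into a single index $i_j$. Together with dropping the components for $\rho_{\N}$ and $\sigma_{\N}$ in the cofinite cases (which you do observe), this cuts the number of index components from $p+q$ down to exactly $r$, so the index space per table is $(rk)^{r}$ and the brute-force additive convolution costs $(rk)^{r}\cdot(rk)^{r}=(rk)^{2r}$ per colouring and size pair, giving the claimed bound without any new transform. You should also verify that the merged index still suffices to filter out the overcounted combinations (e.g.\ that with $i_1=\Sigma_\rho^1(c)+\Sigma_\sigma^1(c)$ no $(\rho_1,\rho_1)$ or $(\sigma_1,\sigma_1)$ combination can survive), which is immediate from the same disjointness of $\rho$- and $\sigma$-states.
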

\begin{proof}
We improve the polynomial factor $(sk)^{2(s-2)}$ to $(rk)^{2r}$ by making the following observation.
We never combine partial solutions corresponding to a~$\rho$-state in one child node with a partial solution corresponding to a~$\sigma$-state on the same vertex in the other child node.
Therefore, we can combine the components of the index vector related to the states~$\rho_j$ and~$\sigma_j$ for each fixed~$j$ in a single index.
For example consider the~$\rho_1$-states and~$\sigma_1$-states.
For these states, this means the following: if we index the number of vertices used to create a~$\rho_1$-state and~$\sigma_1$-state in~$i_1$ and we have~$i_1$ vertices on which a partial solution is formed by considering the combinations $(\rho_0,\rho_1)$, $(\rho_1,\rho_0)$, $(\rho_1,\rho_1)$, $(\sigma_0,\sigma_1)$, $(\sigma_1,\sigma_0)$, or $(\sigma_1,\sigma_1)$, then non of the combinations $(\rho_1,\rho_1)$ and $(\sigma_1,\sigma_1)$ could have been used.
Since the new components of the index vector range between~$0$ and $rk$, this proves the first running time in the statement of the corollary.

The second running time follows from reasoning similar to that in Corollary~\ref{cor:solvedstwalg}.
In this case, we can stop counting the number of partial solutions of each size and instead keep track of the existence of a partial solution of each size.
The state transformations then count the number of $1$-entries in the initial tables instead of the number of solutions.
After computing the table for a join node, we have to reset all entries~$e$ of~$A_x$ to $\min\{1,e\}$.
For these computations, we can use $\bigO(\log(n)+k\log(r))$-bit numbers.
This is because of the following reasoning.
For a fixed colouring~$c$ using State Set~II, each of the at most $r^{k+1}$ colourings using State Set~I that can be counted in~$c$ occur with at most one index vector in the tables~$A'_l$ and~$A'_r$.
Note that these are $r^{k+1}$ colourings, not $s^{k+1}$ colourings, since $\rho$-states are never counted in a colouring~$c$ where the vertex has a $\sigma$-state and vice versa.
Therefore, the result of the large summation over all index vectors~$\vec{g}$ and~$\vec{h}$ with $\vec{i}=\vec{g}+\vec{h}$ can be bounded from above by $(r^k)^2$.
Since we sum over $n$ possible combinations of sizes, the maximum is $nr^{2k}$ allowing us to use $\bigO(\log(n)+k\log(r))$-bit numbers.
\end{proof}

As a result, we can, for example, compute the size of a minimum-cardinality perfect code in $\bigO(n^3k^23^k i_\times(\log(n)))$ time.
Note that the time bound follows because the problem is fixed and we use a computational model with $\bigO(k)$-bit word size.

\begin{corollary}[{$\boldsymbol{[\rho, \sigma]}$}-Optimisation Problems with the de Fluiter Property] \label{cor:defluiterrstwalg}
Let $\rho, \sigma \subseteq \N$ be finite or cofinite, and let~$p$, $q$, $r$, and~$s$ be the values associated with the corresponding $[\rho,\sigma]$-domination problem.
If the standard representation using State Set~I of the minimisation (or maximisation) variant of this $[\rho,\sigma]$-domination problem has the de Fluiter property for treewidth with function~$f$, then there is an algorithm that, given a tree decomposition of a graph~$G$ of width~$k$, computes the number of minimum (or maximum) $[\rho,\sigma]$-dominating sets in~$G$ in $\bigO(n (f(k))^2 (rk)^{2r} s^{k+1} i_\times(n))$ time.
Moreover, there is an algorithm that computes the minimum (or maximum) size of such a $[\rho,\sigma]$-dominating set in $\bigO(n (f(k))^2 (rk)^{2r} s^{k+1} i_\times(log(n)+k\log(r)))$ time.
\end{corollary}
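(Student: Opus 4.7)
The plan is to modify the algorithm of Theorem~\ref{thrm:rstwalg} with the polynomial-factor improvement of Corollary~\ref{cor:generalrstwalg}, and restrict the range of sizes $\kappa$ that we actually store in each table using the de Fluiter property, analogously to the transition from Theorem~\ref{thrm:countingtwdsalg} to Corollary~\ref{cor:countmdstwalg}. Concretely, for each node $x \in T$ I would maintain, alongside the table $A_x$, a reference value $\xi_x$ equal to the smallest (respectively largest, for maximisation) $\kappa$ for which some entry $A_x(c,\kappa)$ is non-zero and is extendable to an optimal solution, and store entries only for $\kappa$ in the window $\xi_x \le \kappa \le \xi_x + f(k)$. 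By the hypothesised de Fluiter property, any partial solution with $\kappa$ outside this window cannot be extended to an optimal $[\rho,\sigma]$-dominating set of $G$, so discarding it is harmless.

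In leaf, introduce, and forget nodes the computation is unchanged except that it is performed only on the $f(k)+1$ relevant sizes; after the computation, $\xi_x$ is recomputed from the smallest non-zero entry and the window is shifted. For join nodes, I first use Lemma~\ref{lem:rsstates2} to expand $A_l$ and $A_r$ by index vectors and switch to State Set~II, exactly as in Theorem~\ref{thrm:rstwalg}; the state transformations of Lemmas~\ref{lem:rsstates} and~\ref{lem:rsstates2} are performed separately for each fixed $\kappa$ in the window, so they commute with the restriction of sizes. The main convolution-style formula for $A'_x(c,\kappa,\vec{i})$ is then only evaluated for $\kappa$ in $[\xi_l+\xi_r - f(k),\, \xi_l+\xi_r + f(k)]$, and within such a $\kappa$ the inner sum over $\kappa_l+\kappa_r=\kappa+\#_\sigma(c)$ has only $O(f(k))$ non-zero terms since $A'_l$ and $A'_r$ are non-zero only on windows of width $f(k)$. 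The index-vector summation and the final transformation back to State Set~I, followed by the extraction of $A_x(c,\kappa)$ using the values $\Sigma_\rho^l(c)$ and $\Sigma_\sigma^l(c)$, go through exactly as in Theorem~\ref{thrm:rstwalg}, after which we re-normalise $\xi_x$ once more.

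For the counting variant the entries of $A_x$ are genuine counts bounded by $2^n$, so arithmetic uses $n$-bit numbers. The dominant cost per join node is the evaluation of $A'_x$: there are $O(s^{k+1} \cdot f(k) \cdot (rk)^{2r-1})$ entries (using the combined $\rho_j/\sigma_j$ indices from Corollary~\ref{cor:generalrstwalg}) and each is an $O(f(k)\cdot (rk)^{2r-1})$-term sum, giving $\bigO((f(k))^2 (rk)^{2r} s^{k+1} i_\times(n))$ time per node and $\bigO(n (f(k))^2 (rk)^{2r} s^{k+1} i_\times(n))$ in total. For the optimisation variant, I would mimic Corollary~\ref{cor:solvedstwalg} and Corollary~\ref{cor:generalrstwalg}: start each join by setting the initial tables to $0/1$ indicator entries and, after the state changes and the convolution, reset every entry $e$ of $A_x$ to $\min\{1,e\}$. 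Within one size window the total sum of $1$-entries before the state transformations is at most $r^{k+1}$, so all intermediate values are bounded by $f(k)\cdot r^{2k}$ and fit in $\bigO(\log(n)+k\log(r))$ bits, yielding the second stated running time.

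The only step that needs a careful sanity check is the join: I would verify that $\xi_l + \xi_r - \#_\sigma(c)$ (with the bookkeeping of vertices that are double-counted under the $\sigma$-states) really sits inside the window around $\xi_x$ guaranteed by the de Fluiter property, so that no size of a partial solution extendable to an optimum falls outside the range we compute. This is the same bound-chasing argument used in Corollary~\ref{cor:countmdstwalg}, lifted from two states to $s$ states, and it goes through because the de Fluiter property is assumed directly for the representation using State Set~I, which is exactly the representation in which $A_x$ is maintained between nodes.
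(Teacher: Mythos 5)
Your proposal takes essentially the same route as the paper: the paper's own proof is a two-sentence remark that one combines Corollary~\ref{cor:generalrstwalg} with the size-window trick of Corollary~\ref{cor:countmdstwalg}, replacing both factors of $n$ (table size over $\kappa$ and the summation over $\kappa_l+\kappa_r$) by $f(k)$, and using the $\bigO(\log(n)+k\log(r))$-bit indicator-entry argument for the optimisation variant --- all of which you do. The only blemish is your intermediate count of $(rk)^{2r-1}$ entries times $(rk)^{2r-1}$ terms, which should read $(rk)^{r}$ each; your final product $(rk)^{2r}$ is nevertheless correct, so the argument stands.
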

\begin{proof}
The difference with the proof of Corollary~\ref{cor:generalrstwalg} is that, similar to the proof of Corollary~\ref{cor:countmdstwalg}, we can keep track of the minimum or maximum size of a partial solution in each node of the tree decomposition and consider only other partial solutions whose size differs at most $f(k)$ of this minimum or maximum size.
As a result, both factors~$n$ (the factor~$n$ due to the size of the tables, and the factor~$n$ due to the summation over the sizes of partial solutions) are replaced by a factor $f(k)$.
\end{proof}

As an application of Corollary~\ref{cor:defluiterrstwalg}, it follows for example that {\sc 2-Dominating Set} can be solved in $\bigO(n k^6 4^k i_\times(\log(n)))$ time.

\begin{corollary}[{$\boldsymbol{[\rho, \sigma]}$}-Decision Problems] \label{cor:decisionrstwalg}
Let $\rho, \sigma \subseteq \N$ be finite or cofinite, and let~$p$, $q$, $r$, and~$s$ be the values associated with the corresponding $[\rho,\sigma]$-domination problem.
There is an algorithm that, given a tree decomposition of a graph~$G$ of width~$k$, counts the number of $[\rho,\sigma]$-dominating sets in~$G$ in $\bigO(n (rk)^{2r} s^{k+1} i_\times(n))$ time.
Moreover, there is an algorithm that decides whether there exists a $[\rho,\!\sigma]$-dominating set in $\bigO(n(rk)^{2r} s^{k+1} i_\times(log(n)+k\log(r)))$ time.
\end{corollary}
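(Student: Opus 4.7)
The plan is to derive both bounds by reusing the algorithm of Theorem~\ref{thrm:rstwalg} together with the index-vector consolidation of Corollary~\ref{cor:generalrstwalg}, but stripped of the size parameter~$\kappa$ entirely. Since we only want the total count of, or existence of, any $[\rho,\sigma]$-dominating set, there is no need to index the dynamic programming tables by size; doing so immediately saves one factor of~$n$ in the table size and another factor of~$n$ in the convolution over $\kappa_l+\kappa_r=\kappa+\#_\sigma(c)$ in the join formula of Theorem~\ref{thrm:rstwalg}, collapsing the $n^3$ prefactor to~$n$.

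Concretely, for each node $x\in T$ I would maintain a table $A_x(c)$ indexed only by colourings~$c$ using State Set~I, where $A_x(c)$ is the number of partial solutions in~$G_x$ realising the characteristic specified by~$c$. The leaf, introduce and forget operations transfer verbatim from the proof of Theorem~\ref{thrm:rstwalg} with every occurrence of $\kappa$ erased. In a join node~$x$ with children $l$ and~$r$, I lift $A_l$ and $A_r$ to tables indexed by State Set~II colourings and the consolidated index vector $\vec{i}$ via Lemma~\ref{lem:rsstates2}, compute
\[ A'_x(c,\vec{i}) = \sum_{\vec{g}+\vec{h}=\vec{i}} A'_l(c,\vec{g}) \cdot A'_r(c,\vec{h}), \]
transform back to State Set~I using Lemma~\ref{lem:rsstates}, and read off $A_x(c)$ at the index vector built from the values $\Sigma_\rho^l(c)$ and $\Sigma_\sigma^l(c)$ exactly as in Theorem~\ref{thrm:rstwalg}. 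The final answer is the single entry $A_z(\emptyset)$ at the root~$z$ of~$T$.

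For the counting version, entries may be as large as the total number of $[\rho,\sigma]$-dominating sets in~$G$, which is at most~$2^n$, so $n$-bit arithmetic is required; the work is dominated by computing $A'_x$, which has $s^{k+1}(rk)^{2r}$ entries, each obtained via an $(rk)^{2r}$-term sum, giving the claimed $\bigO(n(rk)^{2r} s^{k+1} i_\times(n))$. For the decision version, I would initialise the leaves with $0/1$ entries, apply $\min\{1,\cdot\}$ thresholding after each state transformation and after each join, and invoke the observation already used in Corollary~\ref{cor:generalrstwalg}: for a fixed State Set~II colouring~$c$, each of the at most $r^{k+1}$ preimages in State Set~I contributes to at most one index vector, so during a single join the intermediate entries stay bounded by $r^{2k}$. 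This permits $\bigO(\log n + k\log r)$-bit arithmetic (the $\log n$ coming from the need to encode leaves/node indices), yielding the second running time.

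I do not anticipate a genuine technical obstacle: the subtle points, namely correctness of the combined state change and index-vector convolution in the join node, have already been established in Theorem~\ref{thrm:rstwalg} and Corollary~\ref{cor:generalrstwalg}, and discarding~$\kappa$ only simplifies both the formulas and the analysis. The only small item to verify is that the decision-version thresholding commutes with the full pipeline, which is immediate because a non-zero entry after the convolution is equivalent to the existence of at least one partial solution matching the target index vector, precisely what the existence query asks.
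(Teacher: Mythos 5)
Your proposal is correct and follows essentially the same route as the paper, whose proof of this corollary is precisely ``omit the size parameter from the tables and remove the sum over the sizes in the computation of $A'_x$'', layered on top of the index-vector consolidation and the $\min\{1,\cdot\}$ thresholding already established in Corollary~\ref{cor:generalrstwalg}. One bookkeeping slip in your running-time accounting: per colouring there are $(rk)^{r}$ index vectors, each computed by an $(rk)^{r}$-term convolution over $\vec{g}+\vec{h}=\vec{i}$, which multiplies out to the $(rk)^{2r}$ factor --- not ``$(rk)^{2r}$ entries each via an $(rk)^{2r}$-term sum'' as written, which would overshoot to $(rk)^{4r}$.
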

\begin{proof}
This result follows similarly as Corollary~\ref{cor:defluiterrstwalg}. 
In this case, we can omit the size parameter from our tables, and we can remove the sum over the sizes in the computation of entries of~$A'_x$ completely.
\end{proof}

As an application of Corollary~\ref{cor:decisionrstwalg}, it follows for example that we can compute the number of strong stable sets (distance-2 independent sets) in $\bigO(nk^23^k i_\times(n))$ time.

\subsection{Clique Covering, Packing and Partitioning Problems} \label{sec:cliquetwalg}
The final class of problems that we consider for our tree decomposition-based-algorithms are the clique covering, packing, and partitioning problems.
To give a general result, we defined the $\gamma$-clique covering, $\gamma$-clique packing, and $\gamma$-clique partitioning problems in Section~\ref{sec:problems}; see Definition~\ref{def:cliqueproblems}.
For these $\gamma$-clique problems, we obtain $\bigOs(2^k)$ algorithms.

Although any natural problem seems to satisfy this restriction, we remind the reader that we restrict ourselves to polynomial-time decidable $\gamma$, that is, given an integer $j$, we can decide in time polynomial in $j$ whether $j \in \gamma$ or not.
This allows us to precompute $\gamma \cap \{1,2,\ldots,k+1\}$ in time polynomial in $k$, after which we can decide in constant time whether a clique of size $l$ is allowed to be used in an associated covering, packing, or partitioning.

We start by giving algorithms for the $\gamma$-clique packing and partitioning problems.
\begin{theorem} \label{thrm:cliqueparttwalg}
Let $\gamma \subseteq \N \setminus \{0\}$ be polynomial-time decidable.
There is an algorithm that, given a tree decomposition of a graph~$G$ of width~$k$, computes the number of  $\gamma$-clique packings or $\gamma$-clique partitionings of~$G$ using~$\kappa$, $0 \leq \kappa \leq n$, cliques in $\bigO(n^3 k^2 2^k i_\times(nk + n\log(n)))$ time.
\end{theorem}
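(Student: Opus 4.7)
The plan is to design a dynamic programming algorithm on a nice tree decomposition with only two states per bag-vertex, and to apply Bj\"orklund et al.'s fast subset convolution at both the forget and the join nodes, in order to reach the $\bigOs(2^k)$ base. For each node $x\in T$ I would define $A_x(S,\kappa)$ to be the number of ways to choose $\kappa$ pairwise vertex-disjoint $\gamma$-cliques $C_1,\ldots,C_\kappa$ in $G_x$ such that the set of vertices of $X_x$ left uncovered by $\bigcup_i C_i$ is exactly $S\subseteq X_x$, while in the partitioning variant every vertex of $V_x\setminus X_x$ must be covered and in the packing variant such vertices may freely be left uncovered. The two per-vertex states are thus ``covered'' ($v\notin S$) and ``uncovered'' ($v\in S$). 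By the Helly property of tree decompositions every clique of $G$ sits inside a single bag of $T$, so each solution-clique can be committed exactly once, namely at the forget node where the first of its vertices is removed; the final counts are read off from $A_z(\emptyset,\kappa)$ at the root $z$ of $T$.

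The transitions would be as follows. At a leaf $x$ with $X_x=\{v\}$, set $A_x(\{v\},0)=1$ and all other entries to $0$. At an introduce node for $v$, a newly introduced vertex cannot yet lie in a committed clique, so $A_x(S,\kappa)=A_y(S\setminus\{v\},\kappa)$ when $v\in S$ and $0$ otherwise. At a forget node for $v$, three contributions combine: an $A_y(S,\kappa)$ term for the case where $v$ was already covered below; in the packing variant an extra $A_y(S\cup\{v\},\kappa)$ for dropping $v$ uncovered forever; and a commit sum $\sum_{C}A_y(S\cup C,\kappa-1)$ over cliques $C\ni v$ with $C\subseteq X_y$, $|C|\in\gamma$, $C$ a clique in $G$, and $C\cap S=\emptyset$. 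Substituting $D=C\setminus\{v\}$ and complementing in $X_x$ turns the commit sum into the subset convolution $(g_v * B_{\kappa-1})(X_x\setminus S)$, where $g_v(D)=[\{v\}\cup D\text{ is a valid }\gamma\text{-clique}]$ and $B_{\kappa-1}(T)=A_y(\{v\}\cup(X_x\setminus T),\kappa-1)$, computable in $\bigO(k^2 2^k)$ arithmetic operations per value of $\kappa$ by the fast subset convolution algorithm of Bj\"orklund et al.

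At a join node $x$ with children $l,r$, the structural key is that every committed clique lies entirely in $V_l$ or entirely in $V_r$: a clique having vertices both in $V_l\setminus X_x$ and in $V_r\setminus X_x$ would need a common bag in the intersection of the two subtrees, which is empty. Hence each vertex of $X_x\setminus S$ is covered on exactly one side, yielding
\[ A_x(S,\kappa)=\sum_{\kappa_l+\kappa_r=\kappa}\;\sum_{\substack{S_l\cup S_r=X_x\\ S_l\cap S_r=S}}A_l(S_l,\kappa_l)\,A_r(S_r,\kappa_r). \]
Letting $T_\bullet=X_x\setminus S_\bullet$ and $B_\bullet(T,\kappa)=A_\bullet(X_x\setminus T,\kappa)$, the inner sum becomes $(B_l(\cdot,\kappa_l)*B_r(\cdot,\kappa_r))(X_x\setminus S)$, again a single subset convolution per pair $(\kappa_l,\kappa_r)$ in $\bigO(k^2 2^k)$ arithmetic operations.

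Summing over the $\bigO(n)$ pairs $(\kappa_l,\kappa_r)$ per join, the $\bigO(n)$ values of $\kappa$ per forget, and the $\bigO(n)$ nodes of $T$, the algorithm performs $\bigO(n^3 k^2 2^k)$ arithmetic operations. Each stored value is bounded by $2^{\bigO(nk)}$, since at each of at most $n$ forget steps a state-$1$ vertex selects one of at most $2^{k+1}$ candidate cliques, so numbers need $\bigO(nk+n\log n)$ bits and the total running time is $\bigO(n^3 k^2 2^k\, i_\times(nk+n\log n))$. The main obstacle I anticipate is the join step: one has to verify that cliques cannot straddle the two child subtrees and then recast the $(S_l,S_r)$ sum as a genuine subset convolution so that Bj\"orklund et al.'s algorithm applies directly. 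The forget step is already in subset-convolution form once one passes to complements, and the introduce step is a pure copy.
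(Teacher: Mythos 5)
Your proposal is correct and follows essentially the same route as the paper: the same covered/uncovered state on each bag vertex, cliques committed at the first forget node that drops one of their vertices (using the fact that every clique lies in a bag), and the same disjointness recurrence at join nodes, accelerated to $\bigOs(2^k)$ via fast subset convolution, with the same bit-length and running-time accounting. The only difference is presentational: the paper realises the convolutions through explicit state changes from $\{1,0\}$ to $\{0,?\}$ indexed by the number of $1$-states (and phrases the forget step as a join with an auxiliary clique table $A_M$), whereas you invoke Bj\"orklund et al.'s algorithm as a black box --- which the paper itself notes is exactly equivalent.
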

\begin{proof}
Before we start dynamic programming on the tree decomposition~$T$, we first compute the set $\gamma \cap \{1,2,\dots,k+1\}$.

We use states~$0$ and~$1$ for the colourings~$c$, where~$1$ means that a vertex is already in a clique in the partial solution, and~$0$ means that the vertex is not in a clique in the partial solution.
For each node $x \in T$, we compute a table~$A_x$ with entries $A_x(c,\kappa)$ containing the number of $\gamma$-clique packings or partitionings of~$G_x$ consisting of exactly~$\kappa$ cliques that satisfy the requirements defined by the colouring $c \in \{1,0\}^{|X_x|}$, for all $0 \leq \kappa \leq n$.

The algorithm uses the well-known property of tree decompositions that for every clique~$C$ in the graph~$G$, there exists a node $x \in T$ such that~$C$ is contained in the bag~$X_x$ (a nice proof of this property can be found in \cite{BodlaenderM93}).
As every vertex in~$G$ is forgotten in exactly one forget node in~$T$, we can implicitly assign a unique forget node~$x_C$ to every clique~$C$, namely the first forget node that forgets a vertex from~$C$. 
In this forget node~$x_C$, we will update the dynamic programming tables such that they take the choice of whether to pick~$C$ in a solution into account.

\smallskip \noindent {\it Leaf node}:
Let~$x$ be a leaf node in~$T$. We compute~$A_x$ in the following way:
\[ A_x(\{0\},\kappa) = \left\{ \begin{array}{ll} 1 & \textrm{if $\kappa = 0$} \\ 0 & \textrm{otherwise} \end{array} \right. \qquad \qquad A_x(\{1\},\kappa) = 0 \]
Since we decide to take cliques in a partial solution only in the forget nodes, the only partial solution we count in~$A_x$ is the empty solution.

\smallskip \noindent {\it Introduce node}:
Let~$x$ be an introduce node in~$T$ with child node~$y$ introducing the vertex~$v$.
Deciding whether to take a clique in a solution in the corresponding forget nodes makes the introduce operation trivial since the introduced vertex must have state~$0$:
\[ A_x(c \times \{1\},\kappa) = 0 \qquad \qquad  A_x(c \times \{0\},\kappa) = A_y(c,\kappa) \]

\smallskip \noindent {\it Join node}:
In contrast to previous algorithms, we will first present the computations in the join nodes.
We do so because we will use this operation as a subroutine in the forget nodes.
\begin{figure}[tb]
	\begin{center}
	\begin{multicols}{2}
	\hspace{2cm}
	\begin{tabular}{c||c|c|} $\times$ & $0$ & $1$ \\ \hline \hline $0$ & $0$ & $1$ \\ \hline $1$ & $1$ & \\ \hline 
	\end{tabular} \\	
	\begin{tabular}{c||c|c|} $\times$ & $0$ & $1$ \\ \hline \hline $0$ & $0$ & $1$ \\ \hline $1$ & $1$ & $1$ \\ \hline 
	\end{tabular}
	\hspace{2cm}
	\end{multicols}
	\end{center}
	\caption{Join tables for $\gamma$-clique problems: the left table corresponds to partitioning and packing problems and the right table corresponds to covering problems.}
	\label{fig:jointableclique}
\end{figure}

Let~$x$ be a join node in~$T$ and let~$l$ and~$r$ be its child nodes.
For the $\gamma$-clique partitioning and packing problems, the join is very similar to the join in the algorithm for counting the number of perfect matchings (Theorem~\ref{thrm:countingpmtwalg}).
This can be seen from the corresponding join table; see Figure~\ref{fig:jointableclique}.
The only difference is that we now also have the size parameter~$\kappa$.
Hence, for $y \in \{l,r\}$, we first create the tables~$A'_y$ with entries $A'_y(c,\kappa,i)$, where~$i$ indexes the number of $1$-states in~$c$.
Then, we transform the set of states used for these tables~$A'_y$ from $\{1,0\}$ to $\{0,?\}$ using Lemma~\ref{lem:pmstates}, and compute the table~$A'_x$, now with the extra size parameter~$\kappa$, using the following formula:
\[ A'_x(c,\kappa,i) = \sum_{\kappa_l + \kappa_r = \kappa} \sum_{i = i_l + i_r} A'_l(c,\kappa_l,i_l) \cdot A'_r(c,\kappa_r,i_r) \]
Finally, the states in~$A'_x$ are transformed back to the set $\{0,1\}$, after which the entries of~$A_x$ can be extracted that correspond to the correct number of 1-states in~$c$.
Because the approach described above is a simple extension of the join operation in the proof of Theorem~\ref{thrm:countingpmtwalg} which was also used in the proof of Theorem~\ref{thrm:rstwalg}, we omit further details.

\smallskip \noindent {\it Forget node}:
Let~$x$ be a forget node in~$T$ with child node~$y$ forgetting the vertex~$v$.
Here, we first update the table~$A_y$ such that it takes into account the choice of taking any clique in~$X_y$ that contains~$y$ in a solution or not.

Let~$M$ be a table with all the (non-empty) cliques~$C$ in~$X_y$ that contain the vertex~$v$ and such that $|C| \in \gamma$, i.e., $M$ contains all the cliques that we need to consider before forgetting the vertex~$v$.
We notice that the operation of updating~$A_y$ such that it takes into account all possible ways of choosing the cliques in~$M$ is identical to letting the new~$A_y$ be the result of the join operation on~$A_y$ and the following table~$A_M$:
\begin{eqnarray*}
A_M(c \!\times\! \{1\},\kappa) & \!\!\!=\!\!\! & \left\{ \begin{array}{ll} 1 & \textrm{if all the 1-states in $c$ form a clique with $v$ in $M$ and $\kappa \!=\! 1$} \\ 1 & \textrm{if $c$ is the colouring with only 0-states and $\kappa \!=\! 0$} \\ 0 & \textrm{otherwise} \end{array} \right. \\
A_M(c \!\times\! \{0\},\kappa) & \!\!\!=\!\!\! & 0
\end{eqnarray*}
It is not hard to see that this updates~$A_y$ as required since $A_M(c,\kappa)$ is non-zero only when a clique in~$M$ is used with size $\kappa = 1$, or if no clique is used and $\kappa = 0$.

If we consider a partitioning problem, then $A_x(c,\kappa) = A_y(c \times \{1\},\kappa)$ since~$v$ must be contained in a clique.
If we consider a packing problem, then $A_x(c,\kappa) = A_y(c \times \{1\},\kappa) + A_y(c \times \{0\},\kappa)$ since~$v$ can but does not need to be in a clique.
Clearly, this correctly computes~$A_y$.

\smallskip
After computing~$A_z$ for the root node~$z$ of~$T$, the number of $\gamma$-clique packings or partitionings of each size~$\kappa$ can be found in $A_z(\emptyset,\kappa)$.

\smallskip
For the running time, we first observe that there are at most $\bigO(n2^k)$ cliques in~$G$ since~$T$ has $\bigO(n)$ nodes that each contain at most $k+1$ vertices.
Hence, there are at most $\bigO((n2^k)^n)$ ways to pick at most~$n$ cliques, and we can work with $\bigO(nk + n\log(n))$-bit numbers.
As a join and a forget operation require $\bigO(n^2 k^2 2^k)$ arithmetical operations, the running time is $\bigO(n^3 k^2 2^k i_\times(nk + n\log(n)))$.
\end{proof}

For the $\gamma$-clique covering problems, the situation is different.
We cannot count the number of $\gamma$-clique covers of all possible sizes, as the size of such a cover can be arbitrarily large.
Even if we restrict ourselves to counting covers that contain each clique at most once, then we need numbers with an exponential number of bits.
To see this, notice that the number of cliques in a graph of treewidth~$k$ is at most $\bigOs(2^k)$ since there are at most $\bigO(2^k)$ different cliques in each bag.
Hence, there are at most $2^{\bigOs(2^k)}$ different clique covers, and these can be counted using only $\bigOs(2^k)$-bit numbers.
Therefore, we will restrict ourselves to counting covers of size at most~$n$ because minimum covers will never be larger than~$n$.

A second difference is that, in a forget node, we now need to consider covering the forgotten vertex multiple times.
This requires a slightly different approach.
\begin{theorem} \label{thrm:cliquecovertwalg}
Let $\gamma \subseteq \N \setminus \{0\}$ be polynomial-time decidable.
There is an algorithm that, given a tree decomposition of a graph~$G$ of width~$k$, computes the size and number of minimum $\gamma$-clique covers of $G$ in $\bigO(n^3 \log(k) 2^k i_\times(nk + n\log(n)))$ time.
\end{theorem}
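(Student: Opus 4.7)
The plan is to extend the framework of Theorem~\ref{thrm:cliqueparttwalg}, adjusting the forget step to handle covers rather than partitionings or packings. For each node $x \in T$ I will maintain a table $A_x(c,\kappa)$ counting the number of $\gamma$-clique covers of $G_x$ of size $\kappa$ whose induced colouring on $X_x$ (state $1$ = already covered, $0$ = still uncovered) equals $c$. Since any minimum cover has size at most $n$, I restrict $\kappa$ to $\{0,\ldots,n\}$; the minimum cover size and its number of realisations are then read off from the smallest $\kappa$ with $A_z(\emptyset,\kappa) > 0$, where $z$ is the root. Leaf and introduce nodes are handled exactly as in Theorem~\ref{thrm:cliqueparttwalg}, since cliques are still only introduced at forget nodes and a freshly introduced vertex is not yet covered. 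Join nodes use the right join table of Figure~\ref{fig:jointableclique}: a vertex is covered iff it is covered in at least one child. I implement this by first expanding $A_l,A_r$ with an auxiliary index counting the number of $1$-states (analogous to Theorem~\ref{thrm:countingpmtwalg}), switching to the $\{0,?\}$ state representation via the direct analogue of Lemma~\ref{lem:pmstates}, multiplying pointwise (convolving the size parameter), and transforming back; reading off the entries with the correct index gives $A_x$.

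The crux is the forget step. Let $x$ forget $v$ with child $y$ and let $M \subseteq 2^{X_x}$ be the set of shadows $C' = C \setminus \{v\}$ where $C \cup \{v\}$ is a clique of $G$ of size in $\gamma$. Any cover of $G_x$ is obtained from a cover of $G_y$ by further picking some subset $T \subseteq M$ of shadows, subject to the condition that $v$ is covered, i.e.\ either $c_y(v)=1$ or $T \neq \emptyset$. Each $C' \in T$ switches its vertices in $X_x$ to state $1$, and $|T|$ is added to the cover size. This is exactly a covering product of $A_y$ (viewed as a function on $X_x$) with the distribution $g : 2^{X_x} \times \{0,\ldots,n\} \to \N$ where $g(U,j)$ is the number of size-$j$ subsets of $M$ whose union equals $U$. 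I compute $g$ via a standard zeta-transform trick: its zeta transform satisfies $\hat g(S,j) = \binom{m_S}{j}$, where $m_S = |\{C' \in M : C' \subseteq S\}|$. All $m_S$ are obtained by one zeta transform of $\mathbf{1}_M$ in $\bigO(k2^k)$ operations; for each $S$ the values $\binom{m_S}{0},\ldots,\binom{m_S}{n}$ are produced in $\bigO(n)$ operations on $\bigO(nk)$-bit numbers via the usual recurrence; a M\"obius inversion over subsets then gives $g$. The covering product $A_y *_c g$ is finally computed by the same $\{0,?\}$-transformation trick as in the join step, split into the $c_y(v)=1$ case (where $T$ is unconstrained) and the $c_y(v)=0$ case (where $g$ is replaced by $g$ with the $(U,j)=(\emptyset,0)$ entry removed, enforcing $T \neq \emptyset$).

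The main obstacle is bringing the cost of the forget step down to the claimed $\bigO(n^2 \log(k) 2^k)$ arithmetic operations: a naive iteration over the at most $2^k$ cliques of $M$ would give an extra factor of $k$ or $2^k$. The gain is obtained by never materialising $M$ explicitly but instead exploiting that the Möbius-inverted function $g$ can be produced in essentially a single pass together with a repeated-squaring-style evaluation of the state transformations applied to $A_y$, so that the polynomial overhead over the $2^k$ covering-product skeleton is only $\log k$. For the bit complexity, a minimum cover uses at most $n$ of the at most $\bigO(n2^k)$ cliques of $G$, hence counts are bounded by $\binom{n2^k}{n} \leq 2^{nk+n\log n}$ and each arithmetic operation costs $i_\times(nk + n\log n)$. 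Multiplying by the $\bigO(n)$ nodes of the nice tree decomposition yields the claimed $\bigO(n^3 \log(k) 2^k\, i_\times(nk + n\log n))$ running time, and extracting the minimum $\kappa$ with $A_z(\emptyset,\kappa)>0$ reports both the size and the number of minimum $\gamma$-clique covers.
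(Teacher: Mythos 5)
Your join step is wrong for the covering problem. You propose to expand $A_l$ and $A_r$ with an index counting the number of $1$-states, switch to $\{0,?\}$, convolve, transform back, and then read off the entries with ``the correct index'' --- i.e.\ the mechanism of Theorem~\ref{thrm:countingpmtwalg}, which extracts $A'_x(c,\#_1(c))$. That mechanism exists precisely to \emph{discard} combinations in which a vertex is covered in both children, which is what one wants for packings and partitionings. For covers it is incorrect: the right-hand join table of Figure~\ref{fig:jointableclique} has the $(1,1)\mapsto 1$ entry filled in, because a vertex of $X_x$ may legitimately be covered by a clique chosen in the left subtree \emph{and} by another chosen in the right subtree, and such combinations must be counted. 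With your index filter, any cover in which some bag vertex is covered on both sides contributes to an index strictly larger than $\#_1(c)$ and is thrown away, so both the count and, in unlucky instances, even the reported minimum size can be wrong. The correct join for covers is the plain covering product with no index at all --- transform both tables to $\{0,?\}$, convolve the size parameter, transform back, exactly as in the join of Theorem~\ref{thrm:countingtwdsalg} --- which is what the paper does (and which is also cheaper, contributing no extra factor of $k$).

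Your forget step, by contrast, is a correct and genuinely different route. The paper handles multiple cliques through $v$ by building $A_M^{\lceil\log k\rceil}$ via $\bigO(\log k)$ self-joins of the single-clique table and then dividing each entry by $\kappa!\binom{2^{\lceil\log k\rceil}}{\kappa}$ to undo the overcounting of ordered sequences with empty slots; you instead compute the generating function $g(U,j)$ of $j$-element subsets of $M$ with union $U$ directly, using that its zeta transform is $\binom{m_S}{j}$ with $m_S$ obtainable from one zeta transform of $\mathbf{1}_M$, followed by M\"obius inversion and a single covering product with $A_y$. This avoids the division step entirely and, counted honestly, costs only $\bigO(n^2 2^k + nk2^k)$ arithmetic operations per forget node, which fits the claimed bound with room to spare (the $\log k$ factor is simply not needed on your route). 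What you should \emph{not} do is what the ``main obstacle'' paragraph does: there is no obstacle of the kind you describe, and the purported resolution via ``never materialising $M$'' combined with a ``repeated-squaring-style evaluation'' does not correspond to any step of your construction and justifies nothing. Replace that paragraph with the straightforward operation count of the steps you actually perform, and fix the join as above, and the argument goes through.
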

\begin{proof}
The dynamic programming algorithm for counting the number of minimum $\gamma$-clique covers is similar to the algorithm of Theorem~\ref{thrm:cliqueparttwalg}.
It uses the same tables $A_x$ for every $x \in T$ with entries $A_x(c,\kappa)$ for all $c \in \{0,1\}^{|X_x|}$ and $0 \leq \kappa \leq n$.
And, the computations of these tables in a leaf or introduce node of~$T$ are the same.

\smallskip \noindent {\it Join node}:
Let~$x$ be a join node in~$T$ and let~$l$ and~$r$ be its child nodes.
The join operation is different from the join operation in the algorithm of Theorem~\ref{thrm:cliqueparttwalg} as can be seen from Figure~\ref{fig:jointableclique}.
Here, the join operation is similar to our method of handling the~$0_1$-states and~$0_0$-states for the {\sc Dominating Set} problem in the algorithm of Theorem~\ref{thrm:countingtwdsalg}.
We simply transform the states in~$A_l$ and~$A_r$ to $\{0,?\}$ and compute~$A_x$ using these same states by summing over identical entries with different size parameters:
\[ A_x(c,\kappa) = \sum_{\kappa_l + \kappa_r = \kappa} A_l(c,\kappa_l) \cdot A_r(c,\kappa_r) \]
Then, we obtain the required result by transforming~$A_x$ back to using the set of states $\{1,0\}$.
We omit further details because this works analogously to Theorem~\ref{thrm:countingtwdsalg}.
The only difference with before is that we use the value zero for any $A_l(c,\kappa_l)$ or $A_r(c,\kappa_r)$ with $\kappa_l, \kappa_r < 0$ or $\kappa_l, \kappa_r > n$ as these never contribute to minimum clique covers.

\smallskip \noindent {\it Forget node}:
Let~$x$ be a forget node in~$T$ with child node~$y$ forgetting the vertex~$v$.
In contrast to Theorem~\ref{thrm:cliqueparttwalg}, we now have to consider covering~$v$ with multiple cliques.
In a minimum cover, $v$ can be covered at most~$k$ times because there are no vertices in~$X_x$ left to cover after using~$k$ cliques from~$X_x$.

Let~$A_M$ be as in Theorem~\ref{thrm:cliqueparttwalg}.
What we need is a table that contains more than just all cliques that can be used to cover~$v$: it needs to count all combinations of cliques that we can pick to cover~$v$ at most~$k$ times indexed by the number of cliques used.
To create this new table, we let $A_M^0 = A_M$, and let~$A_M^j$ be the result of the join operation applied to the table~$A_M^{j-1}$ with itself.
Then, the table~$A_M^j$ counts all ways of picking a series of~$2^j$ sets $C_1,C_2,\ldots,C_{2^j}$, where each set is either the empty set or a clique from~$M$.
To see that this holds, compare the definition of the join operation for this problem to the result of executing these operations repeatedly.
The algorithm computes $A_M^{\lceil\log(k)\rceil}$.
Because we want to know the number of clique covers that we can choose, and not the number of series of $2^{\lceil\log(k)\rceil}$ sets $C_1,C_2,\ldots,C_{2^{\lceil\log(k)\rceil}}$, we have to compensate for the fact that most covers are counted more than once.
Clearly, each cover consisting of~$\kappa$ cliques corresponds to a series in which $2^{\lceil\log(k)\rceil} - \kappa$ empty sets are picked: there are $\binom{2^{\lceil\log(k)\rceil}}{\kappa}$ possibilities of picking the empty sets and~$\kappa!$ permutations of picking each of the~$\kappa$ cliques in any order.
Hence, we divide each entry $A_M(c,\kappa)$ by $\kappa! \, \binom{2^{\lceil\log(k)\rceil}}{\kappa}$.
Now, $A_M^{\lceil\log(k)\rceil}$ contains the numbers we need for a join with~$A_y$.

After performing the join operation with~$A_y$ and $A_M^{\lceil\log(k)\rceil}$ obtaining a new table~$A_y$, we select the entries of~$A_y$ that cover~$v$: $A_x(c,\kappa) = A_y(c \times \{1\},\kappa)$.

\smallskip
If we have computed~$A_z$ for the root node~$z$ of~$T$, the size of the minimum $\gamma$-clique cover equals the smallest~$\kappa$ for which $A_z(\emptyset,\kappa)$ is non-zero, and this entry contains the number of such sets.

\smallskip
For the running time, we find that in order to compute $A_M^{\lceil\log(k)\rceil}$, we need $\bigO(\log(k))$ join operations.
The running time then follows from the same analysis as in Theorem~\ref{thrm:cliqueparttwalg}.
\end{proof}

Similar to previous results, we can improve the polynomial factors involved.
\begin{corollary} \label{cor:cliqueparttwalg}
Let $\gamma \subseteq \N \setminus \{0\}$ be polynomial-time decidable.
There are algorithms that, given a tree decomposition of a graph~$G$ of width~$k$:
\begin{enumerate}
\item decide whether there exists a $\gamma$-clique partition of~$G$ in $\bigO(n k^2 2^k)$ time.
\item count the number of $\gamma$-clique packings in~$G$ or the number of $\gamma$-clique partitionings in~$G$ in $\bigO(n k^2 2^k i_\times(nk + n\log(n)))$ time.
\item compute the size of a maximum $\gamma$-clique packing in~$G$, maximum $\gamma$-clique partitioning in~$G$, or minimum $\gamma$-clique partitioning in~$G$ of a problem with the de Fluiter property for treewidth in $\bigO(n k^4 2^k)$ time.
\item compute the size of a minimum $\gamma$-clique cover in~$G$ of a problem with the de Fluiter property for treewidth in $\bigO(n k^2 \log(k) 2^k)$ time, or in in $\bigO(n k^2 2^k)$ time if $|\gamma|$ is a constant.
\item compute the number of maximum $\gamma$-clique packings in~$G$, maximum $\gamma$-clique partitionings in~$G$, or minimum $\gamma$-clique partitionings in~$G$ of a problem with the de Fluiter property for treewidth in $\bigO(n k^4 2^k i_\times(nk + n\log(n)))$ time.
\item compute the number of minimum $\gamma$-clique covers in~$G$ of a problem with the de Fluiter property for treewidth in $\bigO(n k^2 \log(k) 2^k i_\times(nk + n\log(n)))$ time, or in $\bigO(n k^2 2^k i_\times(nk + n\log(n)))$ time if $|\gamma|$ is a constant.
\end{enumerate}
\end{corollary}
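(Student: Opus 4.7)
The plan is to combine the dynamic programs of Theorems~\ref{thrm:cliqueparttwalg} and~\ref{thrm:cliquecovertwalg} with three polynomial-factor reductions already used earlier in this section: dropping the size parameter~$\kappa$ when only existence or a total count is required, restricting~$\kappa$ to a window of length~$f(k)$ using the de Fluiter property for treewidth (as in Corollary~\ref{cor:countmdstwalg}), and representing entries by $\bigO(k)$-bit offsets from the current extremum when only the optimum is reported (as in Corollary~\ref{cor:solvedstwalg}).

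For items~1 and~2, I would re-run the algorithm of Theorem~\ref{thrm:cliqueparttwalg} with~$\kappa$ omitted from the tables. Each table becomes $A_x(c,i)$ with $c\in\{0,1\}^{|X_x|}$ and~$i$ the auxiliary index used by the join of Lemma~\ref{lem:pmstates}. The state transforms cost $\bigO(k^2 2^k)$ on a table of size $\bigO(k\cdot 2^k)$, the convolution $i=i_l+i_r$ contributes a further $\bigO(k^2 2^k)$, and a forget node still amounts to a single join with the table~$A_M$. Summed over the $\bigO(n)$ nodes of~$T$, this yields $\bigO(nk^2 2^k)$ arithmetic operations, which are unit-cost for item~1 (binary existence entries) and cost $i_\times(nk+n\log n)$ per operation for item~2.

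For items~3 and~5 I would apply the de Fluiter restriction to Theorem~\ref{thrm:cliqueparttwalg}: in every node keep only those entries $A_x(c,\kappa)$ whose size~$\kappa$ is within $f(k)=\bigO(k)$ of the currently best value, as in Corollary~\ref{cor:countmdstwalg}. Both factors of~$n$ in the original per-node cost (one from the table height, one from the convolution over $\kappa_l+\kappa_r=\kappa$) then become factors of~$k$, so each node costs $\bigO(k^4 2^k)$ arithmetic operations and the total is $\bigO(nk^4 2^k)$. For item~3 I additionally store $\bigO(k)$-bit differences to the current minimum entry, eliminating the $i_\times$ factor just as in Corollary~\ref{cor:solvedstwalg}; item~5 retains the wide-number cost because the actual counts must be computed.

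For items~4 and~6 the same de Fluiter restriction is applied to Theorem~\ref{thrm:cliquecovertwalg}: the $\kappa$-dependent cost of the join drops from $\bigO(n^2 2^k)$ to $\bigO(k^2 2^k)$, and the forget operation still performs the $\lceil\log k\rceil$ joins of the repeated-squaring construction of $A_M^{\lceil\log k\rceil}$. This gives $\bigO(nk^2\log k\cdot 2^k)$ overall, with the $i_\times$ factor for the counting variant in item~6. The hard part will be removing the $\log k$ factor when $|\gamma|$ is a constant: my plan is to argue that when every clique has size at most some constant $c=\max\gamma$, a minimum cover never needs to cover a forgotten vertex~$v$ by more than $\bigO(c)$ cliques from~$M$, via a local exchange argument that repacks any excess overlapping small cliques through~$v$ using the bound on clique size. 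If this bound holds, only $A_M^{\bigO(1)}$ is required, so a constant number of joins suffices per forget node and the total collapses to $\bigO(nk^2 2^k)$, matching the claim.
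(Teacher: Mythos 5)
Your proposal follows the paper's (very terse) proof essentially step for step: for items 1 and 2 you drop the size parameter~$\kappa$; for items 3--6 you use the de Fluiter property to shrink both $n$-factors in the per-node cost to $k$-factors, exactly as in Corollary~\ref{cor:countmdstwalg}; and for the decision/optimisation variants you switch to $\bigO(k)$-bit offsets from the current extremum as in Corollary~\ref{cor:solvedstwalg}. Your operation counts per node type all check out against the claimed bounds. The one place where you go beyond the paper's one-sentence sketch is the final claim --- that a constant number of join repetitions on $A_M$ suffices for covers when $|\gamma|$ is constant --- and that is where there is a genuine gap.

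Your proposed exchange argument rests on the claim that a minimum cover never needs to cover the forgotten vertex~$v$ with more than $\bigO(\max\gamma)$ cliques of~$M$, and this claim is false as stated. Take $\gamma=\{2\}$ (minimum edge cover), let~$G$ be a star with centre~$v$ and~$k$ leaves, and suppose the given width-$k$ tree decomposition places all of~$G$ in a single bag and forgets~$v$ before any leaf. Then every edge of~$G$ is assigned to~$v$'s forget node, each leaf is covered only by its edge to~$v$, so \emph{every} cover --- in particular the unique minimum one --- uses all~$k$ edges through~$v$, and there is nothing to repack. Hence the multiplicity of~$v$ in a minimum cover can be $\Theta(k)$ even for $|\gamma|=1$, and $A_M^{\lceil\log k\rceil}$ really is needed for such an input. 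You should know that the paper's own proof asserts the same constant-$p$ claim in a single unjustified sentence, so you have not overlooked a cleaner argument hidden in the text; but to make this part rigorous you would need either an additional assumption on how the decomposition orders the forgets relative to the cliques, or a different argument altogether --- the safe fallback is to keep the $\log k$ factor for all polynomial-time decidable~$\gamma$.
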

\begin{proof}[Proof (Sketch)]
Similar to before.
Either use the de Fluiter property to replace a factor~$n^2$ by~$k^2$, or omit the size parameter to completely remove this factor $n^2$ if possible.
Moreover, we can use $\bigO(k)$-bit numbers instead of $\bigO(nk + n\log(n))$-bit numbers if we are not counting the number of solutions.
In this case, we omit the time required for the arithmetic operations because of the computational model that we use with $\bigO(k)$-bit word size.
For the $\gamma$-clique cover problems where $|\gamma|$ is a constant, we note that we can use $A_M^p$ for some constant $p$ because, in a forget node, we need only a constant number of repetitions of the join operation on $A_M^0$ instead of $\log(k)$ repetitions.
\end{proof}

By this result, {\sc Partition Into Triangles} can be solved in $\bigO(n k^2 2^k)$ time.
For this problem, Lokshtanov et al.~proved that the given exponential factor in the running time is optimal, unless the Strong Exponential-Time Hypothesis fails~\cite{LokshtanovMS10}.

We note that in Corollary~\ref{cor:cliqueparttwalg} the problem of deciding whether there exists a $\gamma$-clique cover is omitted.
This is because this problem can easily be solved without dynamic programming on the tree decomposition by considering each vertex and testing whether it is contained in a clique whose size is a member of~$\gamma$.
This requires $\bigO(nk2^k)$ time in general, and polynomial time if $|\gamma|$ is a constant.

\section{Dynamic Programming on Branch Decompositions} \label{sec:branchwidth}
Dynamic programming algorithms on branch decompositions work similar to those on tree decompositions.
The tree is traversed in a bottom-up manner while computing tables $A_e$ with partial solutions on $G_e$ for every \emph{edge} $e$ of $T$ (see the definitions in Section~\ref{sec:defbw}).
Again, the table $A_e$ contains partial solutions of each possible \emph{characteristic}, where two partial solutions $P_1$ and $P_2$ have the same characteristic if any extension of $P_1$ to a solution on $G$ also is an extension of $P_2$ to a solution on $G$.
After computing a table for every edge $e \in E(T)$, we find a solution for the problem on $G$ in the single entry of the table $A_{\{y,z\}}$, where $z$ is the root of $T$ and $y$ is its only child node.
Because the size of the tables is often (at least) exponential in the branchwidth~$k$, such an algorithm typically runs in $\bigO(f(k)poly(n))$ time, for some function $f$ that grows at least exponentially.
See Proposition~\ref{prop:simplebwdsalg} for an example algorithm.

In this section, we improve the exponential part of the running time for many dynamic programming algorithms on branch decompositions.
A difference to our results on tree decompositions is that when the number of partial solutions stored in a table is $\bigOs(s^k)$, then our algorithms will run in $\bigOs(s^{\frac{\omega}{2}k})$ time.
This difference in the running time is due to the fact that the structure of a branch decomposition is different to the structure of a tree decomposition.
A tree decomposition can be transformed into a nice tree decomposition, such that every join node $x$ with children $l$, $r$ has $X_x = X_r = X_l$.
But a branch decomposition does not have such a property: here we need to consider combining partial solutions from both tables of the child edges while forgetting and introducing new vertices at the same time.

This section is organised as follows.
We start by setting up the framework that we use for dynamic programming on branch decompositions by giving a simple algorithm in Section~\ref{sec:bwframework}.
Hereafter, we give our results on {\sc Dominating Set} in Section~\ref{sec:bwds}, our results on counting perfect matchings in Section~\ref{sec:bwcpm}, and our results on the $[\rho,\sigma]$-domination problems in Section~\ref{sec:bwrhosigma}.

\subsection{General Framework on Branch Decompositions} \label{sec:bwframework}
We will first give a simple dynamic programming algorithm for the {\sc Dominating Set} problem.
This algorithm follows from standard techniques on branchwidth-based algorithms and will be improved later.

\begin{proposition} \label{prop:simplebwdsalg}
There is an algorithm that, given a branch decomposition of a graph~$G$ of width~$k$, counts the number of dominating sets in~$G$ of each size~$\kappa$, $0 \leq \kappa \leq n$, in $\bigO(m n^2 6^k i_\times(n))$ time.
\end{proposition}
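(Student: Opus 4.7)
The plan is to mimic the treewidth algorithm of Proposition~\ref{prop:simpletwdsalg}, but modified to count dominating sets of each size (as in Theorem~\ref{thrm:countingtwdsalg}) and adapted to the structure of branch decompositions. Using the three states $\{1,0_1,0_0\}$ from Table~\ref{tab:dsstates}, I would maintain, for each edge $e \in E(T)$, a table $A_e(c,\kappa)$ whose entries give the number of vertex sets $D \subseteq V_e$ of size $\kappa$ such that every vertex of $V_e \setminus X_e$ is dominated by $D$ and every $v \in X_e$ has the membership/domination status prescribed by $c(v)$. After traversing $T$ bottom-up, the answer for each $\kappa$ is read off in $A_{\{y,z\}}(\emptyset,\kappa)$, the single entry of the table for the root edge (whose middle set is empty).

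For a leaf edge $e$ assigned to $\{u,v\} \in E(G)$ we have $X_e \subseteq \{u,v\}$, so $A_e$ can be filled in constant time by enumerating the subsets of $\{u,v\}$ and recording which states they realise on $X_e$ and what size they have. For an internal edge $e$ with child edges $l,r$ I would use the partition $X_e \cup X_l \cup X_r = I \cup F \cup L \cup R$ from Definition~\ref{def:middlesets} and set
\[ A_e(c_e,\kappa) \;=\; \sum_{\substack{(c_l,c_r) \\ \text{matches } c_e}} \;\sum_{\kappa_l+\kappa_r \,=\, \kappa + \#_1(c_l|_{I\cup F})} A_l(c_l,\kappa_l) \cdot A_r(c_r,\kappa_r). \]
Here $(c_l,c_r)$ matches $c_e$ when $c_l$ and $c_e$ agree on $L$, $c_r$ and $c_e$ agree on $R$, and on $I$ the pair $(c_l(v),c_r(v))$ realises $c_e(v)$ in the same way as in a treewidth join (i.e., $(1,1)$ for state $1$, $(0_0,0_0)$ for state $0_0$, and any of $(0_1,0_1),(0_1,0_0),(0_0,0_1)$ for state $0_1$); moreover every $v \in F$ must be dominated after being forgotten, so $(c_l(v),c_r(v)) \in \{(1,1),(0_1,0_1),(0_1,0_0),(0_0,0_1)\}$. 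The correction term $\#_1(c_l|_{I\cup F})$ subtracts the contribution of vertices of $I\cup F$ with state $1$, which would otherwise be counted in both $\kappa_l$ and $\kappa_r$.

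The main point to verify is the $6^k$ factor. The number of consistent triples $(c_e,c_l,c_r)$ factors over vertices as $5^{|I|}\cdot 4^{|F|}\cdot 3^{|L|}\cdot 3^{|R|}$, with the constants $5,4,3,3$ arising directly from the case counts above. Combining this with the constraints $|I|+|L|+|R|,\,|I|+|L|+|F|,\,|I|+|R|+|F|\leq k$, a short optimisation of $|I|\log 5 + |F|\log 4 + (|L|+|R|)\log 3$ shows that the maximum is $k\log 6$, attained at $|I|=0$ and $|F|=|L|=|R|=k/2$; any $|I|>0$ strictly decreases the bound because $\log 5<\log 6$. Multiplying by $\bigO(n)$ for the convolution over sizes, by $n+1$ for the possible values of $\kappa$, by $i_\times(n)$ for arithmetic on $n$-bit numbers (needed because there can be up to $2^n$ dominating sets), and by the $\bigO(m)$ edges of $T$, yields the claimed $\bigO(mn^2 6^k i_\times(n))$ running time. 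The main obstacle is really only bookkeeping: ensuring that the join formula simultaneously encodes the join, forget, and introduce operations correctly, so that vertices in $F$ are required to be dominated, vertices in $I\cup F$ with state $1$ are not double-counted in the size, and vertices in $L\cup R$ simply pass through.
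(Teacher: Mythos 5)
Your proposal is correct and follows essentially the same route as the paper: the same state set $\{1,0_1,0_0\}$, the same per-vertex matching rules on $I$, $F$, $L$, $R$ (giving the $5^{|I|}4^{|F|}3^{|L|+|R|}$ count), the same size-convolution with the $\#_1$ correction on $I\cup F$, and the same optimisation showing the worst case is $|I|=0$, $|L|=|R|=|F|=k/2$, i.e.\ $6^k$. The only cosmetic difference is that you take the correction term from $c_l$ rather than $c_r$, which is equivalent since matching forces the $1$-states on $I\cup F$ to coincide.
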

\begin{proof}
Let~$T$ be a branch decomposition of~$G$ rooted at a vertex~$z$.
For each edge $e \in E(T)$, we will compute a table~$A_e$ with entries $A_e(c,\kappa)$ for all $c \in \{1,0_1,0_0\}^{X_e}$ and all $0 \leq \kappa \leq n$.
Here, $c$ is a colouring with states~$1$, $0_1$, and~$0_0$ that have the same meaning as in the tree-decomposition-based algorithms: see Table~\ref{tab:dsstates}.
In the table~$A_e$, an entry $A_e(c,\kappa)$ equals the number of partial solutions of {\sc Dominating Set} of size~$\kappa$ in~$G_e$ that satisfy the requirements defined by the colouring~$c$ on the vertices in~$X_e$.
That is, the number of vertex sets $D \subseteq V_e$ of size~$\kappa$ that dominate all vertices in~$V_e$ except for those with state~$0_0$ in colouring~$c$ of~$X_e$, and that contain all vertices in~$X_e$ with state~$1$ in~$c$.

The described tables~$A_e$ are computed by traversing the decomposition tree~$T$ in a bottom-up manner.
A branch decompositions has only two kinds of edges for which we need to compute such a table: leaf edges, and internal edges which have two child edges.

\smallskip \noindent {\it Leaf edges}:
Let~$e$ be an edge of~$T$ incident to a leaf of~$T$ that is not the root.
Then, $G_e = G[X_e]$ is a two-vertex graph with $X_e = \{u,v\}$.
Note that $\{u,v\} \in E$.

We compute~$A_e$ in the following way:
\[ A_e(c,\kappa) = \left\{ \begin{array}{ll} 1 & \textrm{if $\kappa = 2$ and $c=(1,1)$} \\ 1 & \textrm{if $\kappa = 1$ and either $c = (1,0_1)$ or $c = (0_1,1)$} \\ 1 & \textrm{if $\kappa = 0$ and $c=(0_0,0_0)$} \\ 0 & \textrm{otherwise} \end{array} \right. \]
The entries in this table are zero unless the colouring~$c$ represents one of the four possible partial solutions of {\sc Dominating Set} on~$G_e$ and the size of this solution is~$\kappa$.
In these non-zero entries, the single partial solution represented by~$c$ is counted.

\smallskip \noindent {\it Internal edges}:
Let~$e$ be an internal edge of~$T$ with child edges~$l$ and~$r$.
Recall the definition of the sets~$I$, $L$, $R$, $F$ induced by~$X_e$, $X_l$, and~$X_r$ (Definition~\ref{def:middlesets}).

Given a colouring~$c$, let $c(I)$ denote the colouring of the vertices of~$I$ induced by~$c$. 
We define $c(L)$, $c(R)$, and $c(F)$ in the same way.
Given a colouring~$c_e$ of~$X_e$, a colouring~$c_l$ of $X_l$, and a colouring~$c_r$ of~$X_r$, we say that these colourings \emph{match} if they correspond to a correct combination of two partial solutions with the colourings~$c_l$ and~$c_r$ on~$X_l$ and~$X_r$ which result is a partial solution that corresponds to the colouring~$c_e$ on~$X_e$.
For a vertex in each of the four partitions~$I$, $L$, $R$, and~$F$ of $X_e \cup X_l \cup X_r$, this means something different:
\begin{itemize}
\item For any $v \in I$: either $c_e(v) = c_l(v) = c_r(v) \in \{1,0_0\}$, or $c_e(v) = 0_1$ while $c_l(v), c_r(v) \in \{0_0,0_1\}$ and not $c_l(v) = c_r(v) = 0_0$. (5 possibilities)
\item For any $v \in F$: either $c_l(v) = c_r(v) = 1$, or $c_l(v), c_r(v) \in \{0_0,0_1\}$ while not $c_l(v) = c_r(v) = 0_0$. (4 possibilities)
\item For any $v \in L$: $c_e(v) = c_l(v) \in \{1,0_1,0_0\}$. (3 possibilities)
\item For any $v \in R$: $c_e(v) = c_r(v) \in \{1,0_1,0_0\}$. (3 possibilities)
\end{itemize}
That is, for vertices in~$L$ or~$R$, the properties defined by the colourings are copied from~$A_l$ and~$A_r$ to~$A_e$.
For vertices in~$I$, the properties defined by the colouring~$c_e$ is a combination of the properties defined by~$c_l$ and~$c_r$ in the same way as it is for tree decompositions (as in Proposition~\ref{prop:simpletwdsalg}).
For vertices in~$F$, the properties defined by the colourings are such that they form correct combinations in which the vertices may be forgotten, i.e., such a vertex is in the vertex set of both partial solutions, or it is not in the vertex set of both partial solutions while it is dominated.

Let $\kappa_{\#_1} = \#_{1}(c_r(I \cup F))$ be the number of vertices that are assigned state~$1$ on $I \cup F$ in any matching triple~$c_e$, $c_l$, $c_r$.
We can count the number of partial solutions on~$G_e$ satisfying the requirements defined by each colouring~$c_e$ on~$X_e$ using the following formula:
\[ A_e(c_e,\kappa) = \sum_{c_e, c_l, c_r \,\textrm{\scriptsize match}} \; \sum_{\kappa_l + \kappa_r = \kappa + \kappa_{\#_1}} A_l(c_l,\kappa_l) \cdot A_r(c_r,\kappa_r) \]
Notice that this formula correctly counts all possible partial solutions on~$G_e$ per corresponding colouring~$c_e$ on~$X_e$ by counting all valid combinations of partial solutions on~$G_l$ corresponding to a colouring~$c_l$ on~$X_l$ and partial solutions~$G_r$ corresponding to a colouring~$c_r$ on~$X_r$.

\smallskip
Let $\{y,z\}$ be the edge incident to the root~$z$ of~$T$.
From the definition of $A_{\{y,z\}}$, $A_{\{y,z\}}(\emptyset,\kappa)$ contains the number of dominating sets of size~$\kappa$ in $G_{\{y,z\}} = G$.

\smallskip
For the running time, we observe that we can compute $A_{e}$ in $\bigO(n)$ time for all leaf edges~$e$ of~$T$.
For the internal edges, we have to compute the $\bigO(n 3^k)$ values of~$A_e$, each of which requires $\bigO(n)$ terms of the above sum per set of matchings states.
Since each vertex in~$I$ has 5 possible matching states, each vertex in~$F$ has 4 possible matching states, and each vertex in~$L$ or~$R$ has 3 possible matching states, we compute each~$A_e$ in $\bigO(n^25^{|I|}4^{|F|}3^{|L|+|R|} i_\times(n))$ time.

Under the constraint that $|I|+|L|+|R|, |I|+|L|+|F|, |I|+|R|+|F| \leq k$, the running time is maximal if $|I| = 0$, $|L|=|R|=|F|=\frac{1}{2}k$.
As~$T$ has $\bigO(m)$ edges and we work with $n$-bit numbers, this leads to a running time of $\bigO(m n^2 4^{\frac{1}{2}k} 3^k i_\times(n)) = \bigO(m n^2 6^k i_\times(n))$.
\end{proof}

The above algorithm gives the framework that we use in all of our dynamic programming algorithms on branch decompositions.
In later algorithms, we will specify only how to compute the tables $A_e$ for both kinds of edges.

\paragraph{De Fluiter Propery for Branchwidth.}
We conclude this subsection with a discussion on a de Fluiter property for branchwidth.
We will see below that such a property for branchwidth is identical to the de Fluiter property for treewidth.

One could define a de Fluiter property for branchwidth by replacing the words treewidth and tree decomposition in Definition~\ref{def:fluiterproptw} by branchwidth and branch decomposition.
However, the result would be a property equivalent to the de Fluiter property for treewidth.
This is not hard to see, namely, consider any edge $e$ of a branch decomposition with middle set $X_e$.
A representation of the different characteristics on $X_e$ of partial solutions on $G_e$ used on branch decompositions can also be used as a representation of the different characteristics on $X_x$ of partial solutions on $G_x$ on tree decompositions, if $X_e = X_x$ and $G_e = G_x$.
Clearly, an extension of a partial solution on $G_e$ with some characteristic is equivalent to an extension of the same partial solution on $G_x = G_e$, and hence the representations can be used on both decompositions.
This equivalence of both de Fluiter properties follows directly.

As a result, we will use the de Fluiter property for treewidth in this section.
In Section~\ref{sec:cliquewidth}, we will also define a de Fluiter property for cliquewidth; this property will be different from the other two.

\subsection{Minimum Dominating Set} \label{sec:bwds}
We start by improving the above algorithm of Proposition~\ref{prop:simplebwdsalg}.
This improvement will be presented in two steps.
First, we use state changes similar to what we did in Section~\ref{sec:dstw}.
Thereafter, we will further improve the result by using fast matrix multiplication in the same way as proposed by Dorn in~\cite{Dorn06}.
As a result, we will obtain an $\bigOs(3^{\frac{\omega}{2}k})$ algorithm for {\sc Dominating Set} for graphs given with \emph{any} branch decomposition of width $k$.

Similar to tree decompositions, it is more efficient to transform the problem to one using states $1$, $0_0$, and $0_?$ if we want to combine partial solutions from different dynamic programming tables.
However, there is a big difference between dynamic programming on tree decompositions and on branch decompositions.
On tree decompositions, we can deal with forget vertices separately, while this is not possible on branch decompositions.
This makes the situation more complicated.
On branch decompositions, vertices in $F$ must be dealt with simultaneously with the computation of $A_e$ from the two tables $A_l$ and $A_r$ for the child edges $l$ and $r$ of $e$.
We will overcome this problem by using different sets of states simultaneously.
The set of states used depends on whether a vertex is in $L$, $R$, $I$ or $F$. 
Moreover, we do this asymmetrically as different states can be used on the same vertices in a different table $A_e$, $A_l$, $A_r$.
This use of \emph{asymmetrical vertex states} will later allow us to easily combine the use of state changes with fast matrix multiplication and obtain significant improvements in the running time.

We state this use of different states on different vertices formally.
We note that this construction has already been used in the proof of Theorem~\ref{thrm:rstwalg}.
\begin{lemma} \label{lem:asymdsstates}
Let~$e$ be an edge of a branch decomposition~$T$ with corresponding middle set~$|X_e|$, and let~$A_e$ be a table with entries $A_e(c,\kappa)$ representing the number of partial solutions of {\sc Dominating Set} in~$G_e$ of each size~$\kappa$, for some range of~$\kappa$, corresponding to all colourings of the middle set~$X_e$ with states such that for every individual vertex in~$X_e$ one of the following fixed sets of states is used:
\[ \{1,0_1,0_0\} \qquad \{1,0_1,0_?\} \qquad \{1,0_0,0_?\} \qquad \textrm{(see Table~\ref{tab:dsstates})} \]
The information represented in the table~$A_e$ does not depend on the choice of the set of states from the options given above.
Moreover, there exist transformations between tables using representations with different sets of states on each vertex using $\bigO(|X_x||A_x|)$ arithmetic operations.
\end{lemma}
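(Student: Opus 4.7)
The plan is to reduce this directly to Lemma~\ref{lem:dsstates}, exploiting the fact that each state-change formula given in its proof is \emph{purely coordinate-wise}. Specifically, every transformation in Lemma~\ref{lem:dsstates} rewrites entries $A_x(c_1 \times \{\varsigma\} \times c_2, \kappa)$ by adding or subtracting two entries whose subcolourings $c_1$ and $c_2$ are identical and are treated as opaque indices. The validity of such a formula therefore does not depend in any way on how the states in $c_1$ and $c_2$ are interpreted; in particular, the other vertices in the bag are free to use any of the three admissible sets, and they may even mix them.

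Given this observation, I would prove the lemma as follows. Fix an arbitrary ordering $v_1,\dots,v_{|X_e|}$ of the vertices in $X_e$, and fix a source assignment $\sigma$ and a target assignment $\tau$ of state sets to these vertices. Traverse the vertices in order, maintaining the invariant that after processing $v_i$ the table is indexed by colourings in which the state set at coordinate $j \le i$ equals $\tau(v_j)$ and at coordinate $j > i$ equals $\sigma(v_j)$. At step $i$ with $\sigma(v_i) \neq \tau(v_i)$, apply the single-coordinate formula from Lemma~\ref{lem:dsstates} that converts $\sigma(v_i)$ into $\tau(v_i)$; if no direct formula exists between these two sets, chain two formulas through an intermediate set (any pair of the three sets is connected in at most two steps, since each formula only alters one of the two non-$1$ states). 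The invariant is preserved because the formula touches only entries that agree on every coordinate other than $v_i$, so the interpretation of those coordinates, whatever state sets they currently use, is unchanged.

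Invertibility of the representation follows since each formula in Lemma~\ref{lem:dsstates} has an inverse also given there, so running the procedure from $\tau$ back to $\sigma$ recovers the original table; hence the two tables encode the same information. For the running time, one transformation at coordinate $v_i$ performs a single addition or subtraction for each fixed value of $\kappa$ and of the subcolouring on the other coordinates, i.e., $O(|A_e|)$ arithmetic operations; chaining at most two such steps per vertex and iterating over $|X_e|$ vertices yields the claimed $O(|X_e||A_e|)$ bound.

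The only delicate point, and the step I would take most care to write out, is verifying that the coordinate-wise formulas remain semantically correct when different vertices use different state sets. This reduces to checking that the combinatorial identity underlying each formula, for example ``number of partial solutions with $v$ not in $D$ equals (number with $v$ not in $D$ and dominated) $+$ (number with $v$ not in $D$ and undominated)'', is a statement purely about partial solutions and the single vertex $v$; the requirements imposed on the other bag vertices, in whichever state language, simply restrict which partial solutions are counted, and this restriction is the same on both sides of the identity. Once this is spelled out, the rest of the argument is a straightforward bookkeeping exercise.
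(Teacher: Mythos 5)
Your proposal is correct and follows essentially the same route as the paper: the paper's proof is exactly the observation that the $|X_e|$-step coordinate-wise procedure of Lemma~\ref{lem:dsstates} can be run with a different target formula chosen at each coordinate, since each formula only touches entries agreeing on all other coordinates. (Your fallback of chaining through an intermediate set is never needed, as all six direct pairwise transformations are available, but including it does no harm.)
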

\begin{proof}
Use the same $|X_e|$-step transformation as in the proof of Lemma~\ref{lem:dsstates} with the difference that we can choose a different formula to change the states at each coordinate of the colouring~$c$ of~$X_e$.
At coordinate~$i$ of the colouring~$c$, we use the formula that corresponds to the set of states that we want to use on the corresponding vertex in~$X_e$.
\end{proof}

We are now ready to give the first improvement of Proposition~\ref{prop:simplebwdsalg}.
\begin{proposition} \label{prop:secondbwdsalg}
There is an algorithm that, given a branch decomposition of a graph~$G$ of width~$k$, counts the number of dominating sets in~$G$ of each size $\kappa$, $0 \leq \kappa \leq n$, in $\bigO(m n^2 3^{\frac{3}{2}k} i_\times(n))$ time.
\end{proposition}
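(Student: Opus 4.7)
The plan is to adapt the state-change technique from the tree decomposition {\sc Dominating Set} algorithm (Theorem~\ref{thrm:countingtwdsalg} and Corollary~\ref{cor:solvedstwalg}) to the branch decomposition setting, using the flexibility of asymmetric vertex states granted by Lemma~\ref{lem:asymdsstates}. The leaf edge computation is unchanged from Proposition~\ref{prop:simplebwdsalg}. For an internal edge $e$ with children $l$, $r$, I will compute $A_e$ as follows. First, apply Lemma~\ref{lem:asymdsstates} to transform $A_l$ and $A_r$ from state set $\{1,0_1,0_0\}$ to state set $\{1,0_0,0_?\}$ on every vertex in $I \cup F$, leaving vertices in $L$ (resp.\ $R$) untouched in $A_l$ (resp.\ $A_r$). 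Then compute an intermediate table $A'_e$, whose colourings use $\{1,0_0,0_?\}$ on $I$ and $\{1,0_1,0_0\}$ on $L \cup R$, by evaluating a diagonal sum: for each such colouring $c_e$ of $X_e$ and each $\kappa$,
\[
A'_e(c_e,\kappa) \;=\; \sum_{v \in F}\sum_{\varsigma_v \in \{1,0_0,0_?\}} w(\varsigma_v) \!\!\!\sum_{\kappa_l + \kappa_r = \kappa + \kappa_{\#_1}}\!\!\! A_l(c_l,\kappa_l)\cdot A_r(c_r,\kappa_r),
\]
where the combined colourings $c_l$, $c_r$ agree with $c_e$ on $I$ (diagonally: same state in all three), copy $c_e$ on $L$ to $c_l$ and on $R$ to $c_r$, and assign state $\varsigma_v$ on $F$ to both $c_l$ and $c_r$. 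The weight $w$ captures the forget operation: $w(1)=w(0_?)=+1$ and $w(0_0)=-1$, so that $v\in F$ is correctly required to be either in the set or dominated in $G_e$ (inclusion--exclusion on $0_? = 0_1 + 0_0$). Finally, apply Lemma~\ref{lem:asymdsstates} again to $A'_e$ to transform states on $I$ back to $\{1,0_1,0_0\}$, yielding $A_e$.

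The correctness argument has two parts: the transformation lemma guarantees that the state changes preserve the counting information, and the body of the computation mirrors exactly the join/forget argument used on tree decompositions, only now packaged into a single step because branch decompositions forget and join simultaneously. For vertices in $I$, the diagonal join on $\{1,0_0,0_?\}$ is the same device that replaced five join combinations by three for {\sc Dominating Set} on tree decompositions. For vertices in $F$, the sum with weights $+1,+1,-1$ is exactly the forget-node identity used in Theorem~\ref{thrm:countingtwdsalg}.

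For the running time, each vertex contributes three combinations (three states for $I$, three for $F$ via the weighted sum, three for $L$, three for $R$), so the number of state combinations examined per edge is $3^{|I|+|F|+|L|+|R|}$. Adding the three branchwidth inequalities $|I|+|L|+|R|\le k$, $|I|+|L|+|F|\le k$, $|I|+|R|+|F|\le k$ and halving yields $\tfrac{3}{2}|I| + |L| + |R| + |F| \le \tfrac{3}{2}k$, hence $3^{|I|+|F|+|L|+|R|} \le 3^{\frac{3}{2}k}$. For each combination and each target size $\kappa \in \{0,\dots,n\}$ we perform $\bigO(n)$ multiplications of $n$-bit numbers, while each state transformation via Lemma~\ref{lem:asymdsstates} costs only $\bigO(k \cdot n \cdot 3^k)$ arithmetic operations and is dominated by the join. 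Summing over the $\bigO(m)$ edges gives the claimed $\bigO(m n^2 3^{\frac{3}{2}k} i_\times(n))$ bound.

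The main obstacle I anticipate is bookkeeping the asymmetric state sets cleanly when describing the diagonal join: one must verify that every partial solution is counted exactly once after the sequence of transform--join--transform, in particular that the signed sum over $F$ correctly realises the forget operation when combined with the $0_?$ coalescing on $I$, and that the state sets chosen for $L$ and $R$ in $A_l$, $A_r$, $A_e$ are mutually consistent (so that no implicit transformation is required on those coordinates). Once the bookkeeping is pinned down, correctness follows from Lemma~\ref{lem:asymdsstates} applied coordinate by coordinate together with the same inclusion--exclusion identities used on tree decompositions.
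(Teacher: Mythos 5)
Your proposal is correct and follows the paper's overall architecture: per-vertex (asymmetric) state changes via Lemma~\ref{lem:asymdsstates}, a diagonal join on $I$ with states $\{1,0_0,0_?\}$, direct copying on $L$ and $R$, three combinations per vertex, and the same optimisation $|I|+|L|+|R|+|F|\le\frac{3}{2}k$ yielding the worst case $|I|=0$, $|L|=|R|=|F|=\frac{k}{2}$. The one place where you genuinely diverge is the treatment of $F$, which is the crux of this proposition. The paper keeps all coefficients positive by using \emph{different} state sets on $F$ in the two children ($\{1,0_1,0_0\}$ in $A_l$ versus $\{1,0_1,0_?\}$ in $A_r$) and matching via the three pairs $(1,1)$, $(0_0,0_1)$, $(0_1,0_?)$, so that each of the three admissible raw combinations $(0_0,0_1),(0_1,0_0),(0_1,0_1)$ is counted exactly once with weight $+1$. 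You instead put $\{1,0_0,0_?\}$ on $F$ in both children and use the signed diagonal $A_l(1)A_r(1)+A_l(0_?)A_r(0_?)-A_l(0_0)A_r(0_0)$, which is precisely the composition of the join identity and the forget identity from Theorem~\ref{thrm:countingtwdsalg}; expanding $0_?=0_0+0_1$ confirms this equals the same three admissible combinations, so both schemes are correct and cost three combinations per forget vertex. The paper's all-positive variant is slightly more convenient as a stepping stone to Theorem~\ref{thrm:countdsbwalg}, where the $F$-matching becomes a row/column pairing in a matrix product (signs can still be absorbed there, but need not be). Two small points to tidy up: the displayed formula should be a single sum over colourings $\varsigma\in\{1,0_0,0_?\}^{F}$ weighted by $\prod_{v\in F}w(\varsigma_v)$ rather than the literal $\sum_{v\in F}\sum_{\varsigma_v}$ as written; and you should state explicitly that $\kappa_{\#_1}$ counts the $1$-states on $I\cup F$ (these are unaffected by the state transformations, so the size correction remains well defined).
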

\begin{proof}
The algorithm is similar to the algorithm of Proposition~\ref{prop:simplebwdsalg}, only we employ a different method to compute~$A_e$ for an internal edge~$e$ of~$T$.

\smallskip \noindent {\it Internal edges}:
Let~$e$ be an internal edge of~$T$ with child edges~$l$ and~$r$.

We start by applying Lemma~\ref{lem:asymdsstates} to~$A_l$ and~$A_r$ and change the sets of states used for each individual vertex in the following way.
We let~$A_l$ use the set of states $\{1,0_?,0_0\}$ on vertices in~$I$ and the set of states $\{1,0_1,0_0\}$ on vertices in~$L$ and~$F$.
We let~$A_r$ use the set of states $\{1,0_?,0_0\}$ on vertices in~$I$, the set of states $\{1,0_1,0_0\}$ on vertices in~$R$, and the set of states $\{1,0_1,0_?\}$ on vertices in~$F$.
Finally, we let~$A_e$ use the set of states $\{1,0_?,0_0\}$ on vertices in~$I$ and the set of states $\{1,0_1,0_0\}$ on vertices in~$L$ and~$R$.
Notice that different colourings use the same sets of states on the same vertices with the exception of the set of states used for vertices in~$F$; here, $A_l$~and~$A_r$ use different sets of states.

Now, three colourings~$c_e$, $c_l$~and~$c_r$ \emph{match} if:
\begin{itemize}
\item For any $v \in I$: $c_e(v) = c_l(v) = c_r(v) \in \{1,0_?,0_0\}$. (3 possibilities)
\item For any $v \in F$: either $c_l(v) = c_r(v) = 1$, or $c_l(v) = 0_0$ and $c_r(v) = 0_1$, or $c_l(v) = 0_1$ and $c_r(v) = 0_?$. (3 possibilities)
\item For any $v \in L$: $c_e(v) = c_l(v) \in \{1,0_1,0_0\}$. (3 possibilities)
\item For any $v \in R$: $c_e(v) = c_r(v) \in \{1,0_1,0_0\}$. (3 possibilities)
\end{itemize}
For the vertices on~$I$, these matching combinations are the same as used on tree decompositions in Theorem~\ref{thrm:countingtwdsalg}, namely the combinations with states from the set $\{1,0_?,0_0\}$ where all states are the same.
For the vertices on~$L$ and~$R$, we do exactly the same as in the proof of Proposition~\ref{prop:simplebwdsalg}.

For the vertices in~$F$, a more complicated method has to be used.
Here, we can use only combinations that make sure that these vertices will be dominated: combinations with vertices that are in vertex set of the partial solution, or combinations in which the vertices are not in this vertex set, but in which they will be dominated.
Moreover, by using different states for~$A_l$ and~$A_r$, every combination of partial solutions is counted exactly once.
To see this, consider each of the three combinations on~$F$ used in Proposition~\ref{prop:simplebwdsalg} with the set of states $\{1,0_1,0_0\}$.
The combination with $c_l(v) = 0_0$ and $c_r(v) = 0_1$ is counted using the same combination, while the other two combinations ($c_l(v) = 0_1$ and $c_r(v) = 0_0$ or $c_r(v) = 0_1$) are counted when combining~$0_1$ with~$0_?$.

In this way, we can compute the entries in the table~$A_e$ using the following formula:
\[ A_e(c_e, \kappa) = \sum_{c_e, c_l, c_r \,\textrm{\scriptsize match}} \; \sum_{\kappa_l + \kappa_r = \kappa + \kappa_{\#_1}} A_l(c_l,\kappa_l) \cdot A_r(c_r, \kappa_r ) \]
Here, $\kappa_{\#_1} = \#_{1}(c_r(I \cup F))$ again is the number of vertices that are assigned state~$1$ on $I \cup F$ in any matching triple~$c_e$, $c_l$, $c_r$.

After having obtained~$A_e$ in this way, we can transform the set of states used back to $\{1,0_1,0_0\}$ using Lemma~\ref{lem:asymdsstates}.

\smallskip
For the running time, we observe that the combination of the different sets of states that we are using allows us to evaluate the above formula in $\bigO(n^2 3^{|I|+|L|+|R|+|F|} i_\times(n))$ time.
As each state transformation requires $\bigO(n 3^k i_+(n))$ time, the improved algorithm has a running time of $\bigO(m n^2 3^{|I|+|L|+|R|+|F|} i_\times(n))$.
Under the constraint that $|I|+|L|+|R|, |I|+|L|+|F|, |I|+|R|+|F| \leq k$, the running time is maximal if $|I| = 0$, $|L|=|R|=|F|=\frac{1}{2}k$.
This gives a total running time of $\bigO(m n^2 3^{\frac{3}{2}k} i_{\times}(n))$.
\end{proof}

We will now give our faster algorithm for counting the number of dominating sets of each size~$\kappa$, $0 \leq \kappa \leq n$ on branch decompositions.
This algorithm uses fast matrix multiplication to speed up the algorithm of Proposition~\ref{prop:secondbwdsalg}.
This use of fast matrix multiplication in dynamic programming algorithms on branch decompositions was first proposed by Dorn in~\cite{Dorn06}.

\begin{theorem} \label{thrm:countdsbwalg}
There is an algorithm that, given a branch decomposition of a graph~$G$ of width~$k$, counts the number of dominating sets in~$G$ of each size~$\kappa$, $0 \leq \kappa \leq n$, in $\bigO(m n^2 3^{\frac{\omega}{2}k} i_\times(n))$ time.
\end{theorem}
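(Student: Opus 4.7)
The plan is to accelerate the join step of Proposition~\ref{prop:secondbwdsalg} by reformulating the summation over matching triples on $F$ as a matrix product and invoking fast matrix multiplication, in the spirit of Dorn~\cite{Dorn06}. I would keep the same asymmetric state assignment produced by Lemma~\ref{lem:asymdsstates}: on~$I$ all three tables use $\{1,0_?,0_0\}$; on~$L$ both $A_e$ and $A_l$ use $\{1,0_1,0_0\}$; on~$R$ both $A_e$ and $A_r$ use $\{1,0_1,0_0\}$; and on~$F$, $A_l$ uses $\{1,0_1,0_0\}$ while $A_r$ uses $\{1,0_1,0_?\}$. With this encoding the three legal matchings on each $v\in F$, namely $(1,1), (0_0,0_1), (0_1,0_?)$, are parametrised by a single bijection $\psi\colon\{1,0_1,0_0\}\to\{1,0_1,0_?\}$, so the matching is exactly $(x,\psi(x))$; on~$I$ it is the identity, and on~$L,R$ the state is simply copied from the appropriate child.

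Now I fix $c(I)\in\{1,0_?,0_0\}^{|I|}$, fix sizes $\kappa_l,\kappa_r\in\{0,\ldots,n\}$, and fix a $1$-count $f\in\{0,\ldots,|F|\}$ on~$F$, and I form the two matrices
\begin{align*}
M_l^f[c(L),x] &= A_l(c(I)\times c(L)\times x,\kappa_l), \\
M_r^f[x,c(R)] &= A_r(c(I)\times c(R)\times\psi(x),\kappa_r),
\end{align*}
where $x$ ranges over the $\binom{|F|}{f}2^{|F|-f}$ colourings in $\{1,0_1,0_0\}^{|F|}$ with $\#_1(x)=f$. Their product performs exactly the inner sum over $F$-matchings restricted to this $1$-count, and the entry $(M_l^f\cdot M_r^f)[c(L),c(R)]$ is accumulated into $A_e(c(I)\times c(L)\times c(R),\;\kappa_l+\kappa_r-\#_1(c(I))-f)$ to compensate for the vertices in $I\cup F$ that are placed in the dominating set and thus double-counted across $A_l$ and $A_r$. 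After the product is tabulated, one transforms $A_e$ back to the set of states $\{1,0_1,0_0\}$ by Lemma~\ref{lem:asymdsstates}.

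The running-time analysis is where the work lies, and it is the main obstacle. Per internal edge $e$ the algorithm performs $\bigO(3^{|I|}\cdot n^2\cdot(|F|+1))$ matrix products, each of dimensions at most $3^{|L|}\times 3^{|F|}\times 3^{|R|}$; by matrix splitting (Section~\ref{sec:matrixmultiplic}) each costs $\bigO(\mathrm{MM}(3^{|L|},3^{|F|},3^{|R|}))$ arithmetic operations on $n$-bit integers. I would then check, case by case, that subject to the middle-set constraints $|I|+|L|+|R|,\;|I|+|L|+|F|,\;|I|+|R|+|F|\le k$, the quantity $3^{|I|}\cdot\mathrm{MM}(3^{|L|},3^{|F|},3^{|R|})$ is always $\bigO(3^{\frac{\omega}{2}k})$. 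The bottleneck is the balanced case $|I|=0$, $|L|=|R|=|F|=k/2$, where the matrices are genuinely square of side $3^{k/2}$ and multiplication costs $\bigO(3^{\omega k/2})$; every asymmetric case gives a strictly smaller exponent because $\omega\ge 2$ implies that enlarging $|I|$ at the expense of the other three parameters cannot increase $|I|+\frac{\omega}{2}(|L|+|R|+|F|-\max)$ past $\frac{\omega}{2}k$, and singular dimensions (one of $|L|,|R|,|F|$ near zero) degenerate to the outer-product cost $\bigO(3^{\max})$. Multiplying by the $\bigO(m)$ edges, the $n^2$ choices of $(\kappa_l,\kappa_r)$, and the cost $i_\times(n)$ of an arithmetic operation on $n$-bit counts, and absorbing the polynomial $\bigO(k)$ factor coming from the stratification by $f$ into the infinitesimal slack in $\omega$ discussed in the introduction, yields the claimed bound $\bigO(m n^2 3^{\frac{\omega}{2}k}i_\times(n))$.
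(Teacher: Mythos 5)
Your proposal is correct and follows essentially the same route as the paper's proof: the same asymmetric state assignment from Proposition~\ref{prop:secondbwdsalg}, the same reformulation of the sum over $F$-matchings as a product of a $3^{|L|}\times 3^{|F|}$ and a $3^{|F|}\times 3^{|R|}$ matrix (with the matching bijection realised as a row permutation), and the same worst-case analysis at $|I|=0$, $|L|=|R|=|F|=k/2$. The only difference is bookkeeping for the size parameter: where you stratify by the number $f$ of $1$-states on $F$ and fix $(\kappa_l,\kappa_r)$, the paper fixes $(\kappa,\kappa_l)$ and makes the size argument of the entries of $M_r$ depend on the row via $\kappa_{\#_1}=\#_{1}(c_r(I\cup F))$, so a single full-size product suffices per choice of the colouring on $I$ and of $(\kappa,\kappa_l)$; your extra factor of $|F|+1=\bigO(k)$ is logarithmic in the matrix dimension and is indeed absorbable under the paper's stated convention on~$\omega$.
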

\begin{proof}
Consider the algorithm of Proposition~\ref{prop:secondbwdsalg}.
We will make one modification to this algorithm.
Namely, when computing the table~$A_e$ for an internal edge $e \in E(T)$, we will show how to evaluate the formula for $A_e(c,\kappa)$ for a number of colourings~$c$ simultaneously using fast matrix multiplication.
We give the details below.
Here, we assume that the states in the tables~$A_l$ and~$A_r$ are transformed such that the given formula for $A_e(c,\kappa)$ in Proposition~\ref{prop:secondbwdsalg} applies. 

We do the following.
First, we fix the two numbers~$\kappa$ and~$\kappa_l$, and we fix a colouring of~$I$.
Note that this is well-defined because all three tables use the same set of states for colours on~$I$.
Second, we construct a $3^{|L|} \times 3^{|F|}$ matrix~$M_l$ where each row corresponds to a colouring of~$L$ and each column corresponds to a colouring of~$F$ where both colourings use the states used by the corresponding vertices in~$A_l$.
We let the entries of~$M_l$ be the values of $A_l(c_l,\kappa_l)$ for the~$c_l$ corresponding to the colourings of~$L$ and~$F$ of the given row and column of~$M_l$, and corresponding to the fixed colouring on~$I$ and the fixed number~$\kappa_l$.
We also construct a similar $3^{|F|}\times 3^{|R|}$ matrix~$M_r$ with entries from~$A_r$ such that its rows correspond to different colourings of~$F$ and its columns correspond to different colourings of~$R$ where both colourings use the states used by the corresponding vertices in~$A_r$.
The entries of~$M_r$ are the values of $A_r(c_r,\kappa - \kappa_l - \kappa_{\#_1})$ where~$c_r$ corresponds to the colouring of~$R$ and~$F$ of the given row and column of~$M_r$, and corresponding to the fixed colouring on~$I$ and the fixed numbers~$\kappa$ and~$\kappa_l$.
Here, the value of $\kappa_{\#_1} = \#_{1}(c_r(I \cup F))$ depends on the colouring~$c_r$ in the same way as in Proposition~\ref{prop:secondbwdsalg}.
Third, we permute the rows of~$M_r$ such that column~$i$ of~$M_l$ and row~$i$ of~$M_r$ correspond to matching colourings on~$F$.

Now, we can evaluate the formula for~$A_e$ for all entries corresponding to the fixed colouring on~$I$ and the fixed values of~$\kappa$ and~$\kappa_l$ simultaneously by computing $M_e = M_l \cdot M_r$.
Clearly, $M_e$ is a $3^{|L|} \times 3^{|R|}$ matrix where each row corresponds to a colouring of~$L$ and each column corresponds to a colouring of~$R$.
If one works out the matrix product $M_l \cdot M_r$, one can see that each entry of~$M_e$ contains the sum of the terms of the formula for $A_e(c_e,\kappa)$ such that the colouring~$c_e$ corresponds to the given row and column of~$M_e$ and the given fixed colouring on~$I$ and such that $\kappa_l + \kappa_r = \kappa + \kappa_{\#_1}$ corresponding to the fixed~$\kappa$ and~$\kappa_l$.
That is, each entry in~$M_e$ equals the sum over all possible allowed matching combinations of the colouring on~$F$ for the fixed values of~$\kappa$ and~$\kappa_l$, where the~$\kappa_r$ involved are adjusted such that the number of $1$-states used on~$F$ is taken into account.

In this way, we can compute the function~$A_e$ by repeating the above matrix-multiplication-based process for every colouring on~$I$ and every value of~$\kappa$ and~$\kappa_l$ in the range from~$0$ to~$n$.
As a result, we can compute the function~$A_e$ by a series of $n^2\,3^{|I|}$ matrix multiplications.

The time required to compute~$A_e$ in this way depends on~$|I|$, $|L|$, $|R|$ and~$|F|$.
Under the constraint that $|I| + |L| + |F|, |I|+|R|+|F|, |I|+|L|+|R| \leq k$ and using the matrix-multiplication algorithms for square and non-square matrices as described in Section~\ref{sec:matrixmultiplic}, the worst case arises when $|I|=0$ and $|L|=|R|=|F| = \frac{k}{2}$.
In this case, we compute each table~$A_e$ in $\bigO(n^2 (3^{\frac{k}{2}})^\omega i_{\times}(n))$ time.
This gives a total running time of $\bigO(m n^2 3^{\frac{\omega}{2}k} i_\times(n))$.
\end{proof}

Using the fact that {\sc Dominating Set} has the de Fluiter property for treewidth, and using the same tricks as in Corollaries~\ref{cor:countmdstwalg} and~\ref{cor:solvedstwalg}, we also obtain the following results.

\begin{corollary} \label{cor:countmdsbwalg}
There is an algorithm that, given a branch decomposition of a graph~$G$ of width~$k$, counts the number of minimum dominating sets in~$G$ in $\bigO(m k^2 3^{\frac{\omega}{2}k} i_\times(n))$ time.
\end{corollary}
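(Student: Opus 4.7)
The plan is to combine the matrix-multiplication-based algorithm of Theorem~\ref{thrm:countdsbwalg} with the truncation trick from Corollary~\ref{cor:countmdstwalg}. Recall from the discussion in Section~\ref{sec:bwframework} that the de Fluiter property for branchwidth coincides with the de Fluiter property for treewidth, and that the representation of {\sc Dominating Set} from Section~\ref{sec:dstw} has the linear de Fluiter property with function $f(k) = k+1$. Concretely, for any edge $e \in E(T)$, any partial solution on $G_e$ whose size exceeds by more than $k+1$ the smallest size of a partial solution recorded in $A_e$ cannot be extended to a minimum dominating set of $G$: replacing such an oversized partial solution by the smallest partial solution in $A_e$ together with all of $X_e$ yields a set in $V_e$ that is strictly smaller and dominates at least the same vertices of $V_e$.

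I would therefore modify the algorithm of Theorem~\ref{thrm:countdsbwalg} as follows. For each edge $e \in E(T)$, maintain, in addition to the table, an integer $\xi_e$ recording the smallest size for which some entry of $A_e$ is nonzero, and store only the $O(k)$ slices $A_e(c,\kappa)$ with $\xi_e \le \kappa \le \xi_e + k + 1$. For a leaf edge, $\xi_e$ and the (at most four) relevant entries are written down directly from the leaf-edge formula of Proposition~\ref{prop:simplebwdsalg}. For an internal edge $e$ with child edges $l,r$, first apply Lemma~\ref{lem:asymdsstates} to put $A_l$ and $A_r$ into the asymmetric state representation of Proposition~\ref{prop:secondbwdsalg}; then, exactly as in Theorem~\ref{thrm:countdsbwalg}, for each fixed colouring of $I$ and each fixed pair $(\kappa_l,\kappa)$, form the matrices $M_l,M_r$ and compute $M_e = M_l \cdot M_r$, but now restrict $\kappa_l$ to the $O(k)$ values in $[\xi_l,\xi_l + k + 1]$, $\kappa_r$ to the $O(k)$ values in $[\xi_r,\xi_r + k + 1]$, and $\kappa$ to the $O(k)$ values in $[\xi_l + \xi_r - (k+1),\, \xi_l + \xi_r + (k+1)]$; any reference to entries outside these windows is treated as zero. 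After the matrix products are accumulated, read off $\xi_e$ as the smallest $\kappa$ for which some entry of $A_e(\cdot,\kappa)$ is nonzero, drop the slices with $\kappa > \xi_e + k + 1$, and transform the states back using Lemma~\ref{lem:asymdsstates}. Finally, $A_{\{y,z\}}(\emptyset,\xi_{\{y,z\}})$ counts the minimum dominating sets of $G$.

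For the running time, each internal edge now requires only $O(k^2\, 3^{|I|})$ matrix multiplications of the same size as in Theorem~\ref{thrm:countdsbwalg}, instead of $O(n^2\, 3^{|I|})$, so under the same worst case $|I|=0$, $|L|=|R|=|F|=k/2$ the work per internal edge drops from $O(n^2\, 3^{\omega k/2}\, i_\times(n))$ to $O(k^2\, 3^{\omega k/2}\, i_\times(n))$. Summing over the $O(m)$ edges yields the claimed $O(m k^2\, 3^{\omega k/2}\, i_\times(n))$ bound; the factor $i_\times(n)$ is retained because the table entries are counts of dominating sets and can still be as large as $2^n$.

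The main thing that needs care is the correctness of the truncation under the matrix-multiplication-based join. One must verify that no partial solution that could extend to a minimum dominating set of $G$ is ever discarded, which amounts to checking that at each edge $e$ such a partial solution has size within $k+1$ of $\xi_e$; this follows inductively from the de Fluiter argument above, since the window chosen for $\kappa$ at $e$ contains all sums $\kappa_l + \kappa_r - \#_1(\cdot)$ arising from surviving pairs at $l$ and $r$ that can still lead to an optimum. Once this is in place, the rest of the analysis is identical to that of Theorem~\ref{thrm:countdsbwalg}.
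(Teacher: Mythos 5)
Your proposal is correct and follows exactly the route the paper intends: the paper derives this corollary by combining Theorem~\ref{thrm:countdsbwalg} with the de Fluiter truncation of Corollary~\ref{cor:countmdstwalg}, which is precisely your construction of the windows $[\xi_e,\xi_e+k+1]$ that replace the factor $n^2$ by $k^2$ while keeping $i_\times(n)$ for the $n$-bit counts. No gaps.
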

\begin{corollary} \label{cor:solvedsbwalg}
There is an algorithm that, given a branch decomposition of a graph~$G$ of width~$k$, computes the size of a minimum dominating set in~$G$ in $\bigO(m k^2 3^{\frac{\omega}{2}k} i_\times(k))$ time.
\end{corollary}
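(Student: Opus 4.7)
The plan is to modify the algorithm of Corollary~\ref{cor:countmdsbwalg} in exactly the same way that Corollary~\ref{cor:solvedstwalg} modifies Theorem~\ref{thrm:countingtwdsalg} and Corollary~\ref{cor:countmdstwalg}, so that all arithmetic can be performed using $\bigO(k)$-bit numbers rather than $n$-bit numbers. Since {\sc Dominating Set} has the linear de Fluiter property for treewidth (and hence, as discussed in Section~\ref{sec:bwframework}, the same de Fluiter property for branchwidth), for every edge $e$ of the branch decomposition it suffices to store the indicator table $A_e(c,\kappa)$ whose entry is $1$ if some partial solution of size $\kappa$ with characteristic~$c$ exists on $G_e$ and $0$ otherwise, and only for values of $\kappa$ in a window of width $\bigO(k)$ around the smallest $\kappa$ with a nonzero entry. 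As in Corollary~\ref{cor:solvedstwalg}, one stores $\xi_e$, the smallest such size, together with a compact table containing only the $\bigO(k\cdot 3^k)$ indicator entries for $\xi_e \le \kappa \le \xi_e + k + 1$.

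For leaf edges this initialisation is immediate. For an internal edge $e$ with child edges $l$, $r$, I would follow the recipe of Theorem~\ref{thrm:countdsbwalg}: first apply Lemma~\ref{lem:asymdsstates} to transform $A_l$ and $A_r$ to the asymmetric state assignment described in Proposition~\ref{prop:secondbwdsalg}, then for each fixed colouring of~$I$ and each fixed pair $\kappa_l,\kappa_r$ in the $\bigO(k)$-wide windows associated with $l$ and $r$, perform the $3^{|L|}\times 3^{|F|}$ by $3^{|F|}\times 3^{|R|}$ matrix multiplication to obtain a block of entries of the joined table $A'_e$. After converting back to the symmetric state representation on $X_e$, replace every entry $A'_e(c,\kappa)$ by $\min\{1,A'_e(c,\kappa)\}$ and read off $\xi_e$ and the new indicator table.

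The critical step — and the main obstacle — is verifying that all intermediate entries can be represented with $\bigO(k)$-bit numbers. Before the state transformations the $A'_l$ and $A'_r$ tables contain at most $3^{k+1}$ ones each (at most one per characteristic per fixed $\kappa$, with only $\bigO(k)$ relevant values of $\kappa$). The M\"obius-like transformations of Lemma~\ref{lem:asymdsstates} are signed sums of these entries, so their absolute value is bounded by the total number of $1$-entries, which is $\bigO(k\cdot 3^k)$; and the bilinear combination in the matrix product further multiplies at most two such quantities and sums over the window of width $\bigO(k)$, yielding entries bounded by a polynomial in~$k$ times $9^k$, i.e.\ representable in $\bigO(k)$ bits. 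Hence each arithmetic operation costs $i_\times(k)$, and in the chosen model with $\bigO(k)$-bit word size this is constant.

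Putting the pieces together, each internal edge requires $\bigO(k^2)$ matrix multiplications of matrices whose side lengths multiply to at most $3^{\frac{\omega}{2}k}$ operations by the analysis of Theorem~\ref{thrm:countdsbwalg}, plus $\bigO(k\cdot 3^k)$ operations for the two state changes, each operation taking $i_\times(k)$ time. Summing over the $\bigO(m)$ edges of the branch decomposition, the total running time is $\bigO(m k^2 3^{\frac{\omega}{2}k} i_\times(k))$. The size of a minimum dominating set is read off as $\xi_{\{y,z\}}$ where $\{y,z\}$ is the edge incident to the root.
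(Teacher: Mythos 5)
Your proposal is correct and follows exactly the route the paper intends: the paper states this corollary without a separate proof, deferring to "the same tricks as in Corollaries~\ref{cor:countmdstwalg} and~\ref{cor:solvedstwalg}" applied to the matrix-multiplication join of Theorem~\ref{thrm:countdsbwalg}, which is precisely what you carry out (indicator tables, an $\bigO(k)$-wide size window justified by the de Fluiter property, and the $\bigO(k)$-bit arithmetic bound via the count of $1$-entries). The only nitpick is that the number of $1$-entries in each initial indicator table is in fact at most $3^{k+1}$ (one value of $\kappa$ per characteristic), not merely $\bigO(k\cdot 3^k)$, but this does not affect the $\bigO(k)$-bit conclusion.
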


\subsection{Counting the Number of Perfect Matchings} \label{sec:bwcpm}
The next problem we consider is the problem of counting the number of perfect matchings in a graph.
To give a fast algorithm for this problem, we use both the ideas introduced in Theorem~\ref{thrm:countingpmtwalg} to count the number of perfect matchings on graphs given with a tree decomposition and the idea of using fast matrix multiplication of Dorn~\cite{Dorn06} found in Theorem~\ref{thrm:countdsbwalg}.
We note that~\cite{Dorn06,Dorn07} did not consider counting perfect matchings.
The result will be an $\bigOs(2^{\frac{\omega}{2}k})$ algorithm.

From the algorithm to count the number of dominating sets of each given size in a graph of bounded branchwidth, it is clear that vertices in~$I$ and~$F$ need special attention when developing a dynamic programming algorithm over branch decompositions.
This is no different when we consider counting the number of perfect matchings.
For the vertices in~$I$, we will use state changes and an index similar to Theorem~\ref{thrm:countingpmtwalg}, but for the vertices in~$F$ we will require only that all these vertices are matched.
In contrast to the approach on tree decompositions, we will not take into account the fact that we can pick edges in the matching at the point in the algorithm where we forget the first endpoint of the edge.
We represent this choice directly in the tables~$A_e$ of the leaf edges~$e$: this is possible because every edge of~$G$ is uniquely assigned to a leaf of~$T$.

Our algorithm will again be based on state changes, where we will again use different sets of states on vertices with different roles in the computation.
\begin{lemma} \label{lem:asympmstates}
Let~$e$ be an edge of a branch decomposition~$T$ with corresponding middle set~$|X_e|$, and let~$A_e$ be a table with entries $A_e(c,\kappa)$ representing the number of matchings in~$H_e$ matching all vertices in $V_e \setminus X_e$ and corresponding to all colourings of the middle set~$X_e$ with states such that for every individual vertex in~$X_e$ one of the following fixed sets of states is used:
\[ \{1,0\} \qquad \{1,?\} \qquad \{0,?\} \qquad \textrm{(meaning of the states as in Lemma~\ref{lem:pmstates})} \] 
The information represented in the table~$A_e$ does not depend on the choice of the set of states from the options given above.
Moreover, there exist transformations between tables using representations with different sets of states on each vertex using $\bigO(|X_x||A_x|)$ arithmetic operations.
\end{lemma}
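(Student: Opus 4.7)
My plan is to reduce this directly to the symmetric case proved in Lemma~\ref{lem:pmstates}, in exactly the same way that Lemma~\ref{lem:asymdsstates} reduces to Lemma~\ref{lem:dsstates}. The key observation is that the $|X_e|$-step transformation procedure used in Lemma~\ref{lem:pmstates} processes one coordinate at a time, maintaining the invariant that the first $i-1$ coordinates already use their target state set while the remaining coordinates still use their original state set. Because at step $i$ only the state on coordinate $i$ is modified, and the formula applied depends solely on which two state sets one wishes to convert between on that single coordinate, one can freely choose a \emph{different} target state set per vertex without any interference between the choices.

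Concretely, I would fix an initial per-vertex assignment of a state set from $\{1,0\}$, $\{1,?\}$, $\{0,?\}$ together with a target per-vertex assignment. Then I process the coordinates of $X_e$ in order; for each coordinate $i$ whose initial and target sets differ, I apply the appropriate one of the three elementary identities already derived in the proof of Lemma~\ref{lem:pmstates},
\[ A_e(c_1 \times \{?\} \times c_2) = A_e(c_1 \times \{0\} \times c_2) + A_e(c_1 \times \{1\} \times c_2), \]
\[ A_e(c_1 \times \{1\} \times c_2) = A_e(c_1 \times \{?\} \times c_2) - A_e(c_1 \times \{0\} \times c_2), \]
\[ A_e(c_1 \times \{0\} \times c_2) = A_e(c_1 \times \{?\} \times c_2) - A_e(c_1 \times \{1\} \times c_2). \]
Each identity is a matching-counting tautology: the number of matchings in which vertex $i$ is indifferent equals the sum over its being matched or not matched, and the inversions express each of the two remaining cases as a difference. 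Crucially, these identities touch only the entries indexed on coordinate $i$ and leave all others untouched, so the step-$i$ invariant is preserved regardless of which state sets are used on the other vertices of $X_e$.

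Independence of the stored information from the choice of sets then follows by composition: any two valid per-vertex assignments can be transformed into one another by a finite sequence of such coordinate-level changes, and each change is reversible via one of the three formulas above, so the resulting tables carry the same data. For the running time, each of the at most $|X_e|$ coordinate-level transformations rewrites every one of the $|A_e|$ entries using a single addition or subtraction, giving the claimed $\bigO(|X_e|\,|A_e|)$ arithmetic operations.

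I do not expect a real obstacle here; the argument is a routine asymmetric lift of Lemma~\ref{lem:pmstates}. The only point that must be watched is conceptual rather than technical: a $?$-state on a forget vertex, if naively combined in a join, could match that vertex twice. The present lemma, however, asserts only an equivalence between representations of the \emph{count} stored in $A_e$, not the legality of any particular join procedure; the responsibility for avoiding double matching is shifted to the subsequent join computations that exploit this asymmetric freedom (cf.\ Proposition~\ref{prop:secondbwdsalg} and Theorem~\ref{thrm:countdsbwalg}), where an index on the number of $1$-states will be added analogously to the proof of Theorem~\ref{thrm:countingpmtwalg}.
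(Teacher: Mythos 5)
Your proposal is correct and follows essentially the same route as the paper: the paper's proof simply says to reuse the $|X_e|$-step coordinate-wise transformation, applying at each coordinate the formula from Lemma~\ref{lem:pmstates} that corresponds to the state set desired for that particular vertex, which is exactly what you do. Your closing remark about $?$-states and double matching being the join procedure's responsibility rather than this lemma's is also the correct reading of the division of labour.
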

\begin{proof}
The proof is identical to that of Lemma~\ref{lem:asymdsstates} while using the formulas from the proof of Lemma~\ref{lem:pmstates}.
\end{proof}

Now, we are ready to proof the result for counting perfect matchings.
\begin{theorem} \label{thrm:countpmbwalg}
There is an algorithm that, given a branch decomposition of a graph $G$ of width~$k$, counts the number of perfect matchings in~$G$ in $\bigO(m k^2 2^{\frac{\omega}{2}k} i_{\times}(k\log(n)))$ time.
\end{theorem}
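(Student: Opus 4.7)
The plan is to combine the state-change/index trick of Theorem~\ref{thrm:countingpmtwalg} with Dorn's fast-matrix-multiplication approach used in Theorem~\ref{thrm:countdsbwalg}, relying on the asymmetric-states machinery of Lemma~\ref{lem:asympmstates} to keep both ideas compatible inside a single table. For every edge $e\in E(T)$ I will compute a table $A_e(c)$ indexed by $c\in\{0,1\}^{X_e}$ that counts the matchings of $G_e$ saturating every vertex of $V_e\setminus X_e$ and saturating $v\in X_e$ iff $c(v)=1$. For a leaf edge $e$ with $X_e=\{u,v\}$ (and $\{u,v\}\in E$ the edge assigned to that leaf) the only candidates are $\emptyset$ and $\{\{u,v\}\}$, so I set $A_e((0,0))=A_e((1,1))=1$ and the other two entries to $0$.

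For an internal edge $e$ with children $l,r$ and middle-set partition $I,L,R,F$ (Definition~\ref{def:middlesets}), the four vertex classes demand different treatments. On $L$ (resp.\ $R$) the state copies from $c_l$ (resp.\ $c_r$) to $c_e$. On $I$, the allowed combinations $(c_l(v),c_r(v))$ are $(0,0),(0,1),(1,0)$ with $c_e(v)=c_l(v)\vee c_r(v)$; this is precisely the join situation of Theorem~\ref{thrm:countingpmtwalg}, and I replicate that method on the $I$-coordinates only, by first expanding to $\tilde A_l(c,i_l)$ and $\tilde A_r(c,i_r)$ with $i_l,i_r$ counting the $1$-states on $I$, and then using Lemma~\ref{lem:asympmstates} to change the $I$-coordinates from $\{0,1\}$ to $\{0,?\}$ while leaving $L,R,F$ in $\{0,1\}$. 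On $F$, each forgotten vertex must be saturated exactly once, so the allowed pairs are $(0,1)$ and $(1,0)$, i.e., $c_r$ restricted to $F$ equals the bitwise complement of $c_l$ restricted to $F$; I will enforce this directly inside the matrix product via a row permutation, with no index or state change on $F$.

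Putting everything into Dorn's framework, for each $c_I^?\in\{0,?\}^I$ and each value of the index $i$, and for every non-negative $i_l,i_r$ with $i_l+i_r=i$, I form matrices $M_l^{i_l}$ of shape $2^{|L|}\times 2^{|F|}$ (rows indexed by colourings of $L$, columns by colourings of $F$) with entries $\tilde A_l(c_L,c_I^?,c_F^l,i_l)$, and $M_r^{i_r}$ of shape $2^{|F|}\times 2^{|R|}$ with entries $\tilde A_r(c_R,c_I^?,c_F^r,i_r)$. I permute the rows of $M_r^{i_r}$ so that the row originally at $c_F^r$ moves to position $\overline{c_F^r}$, and then compute $M_l^{i_l}\cdot M_r^{i_r}$; this product automatically sums $\tilde A_l(\cdot,c_I^?,c_F,i_l)\,\tilde A_r(c_I^?,\overline{c_F},\cdot,i_r)$ over $c_F\in\{0,1\}^F$, which is precisely the saturation condition on $F$. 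Summing these products over $i_l+i_r=i$ and assembling over all $c_I^?$ yields $\tilde A_e(c,i)$. I then apply Lemma~\ref{lem:asympmstates} to change the $I$-coordinates back to $\{0,1\}$, and extract $A_e(c_e)=\tilde A_e\bigl(c_e,\#_1(c_e\text{ on }I)\bigr)$; by the same argument as in Theorem~\ref{thrm:countingpmtwalg}, this extraction discards the over-counted $(1,1)$-combinations on $I$ and retains exactly the desired counts.

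The main obstacle is to make the state-change/index trick (needed on $I$) and the complement-row permutation (needed on $F$) coexist within a single matrix multiplication that still reaches the $\bigOs(2^{\omega k/2})$ speed-up; this is precisely what Lemma~\ref{lem:asympmstates} enables, by permitting different state sets on different coordinates of the same table. For the running time, each internal edge costs $\bigO(k^2\cdot 2^{|I|})$ matrix multiplications of shapes $2^{|L|}\times 2^{|F|}$ and $2^{|F|}\times 2^{|R|}$; under the constraints $|I|+|L|+|R|,\,|I|+|L|+|F|,\,|I|+|R|+|F|\le k$, the worst case is $|I|=0$ with $|L|=|R|=|F|=k/2$, giving $\bigO(k^2\cdot 2^{\omega k/2})$ arithmetic operations per edge. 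The state transformations and expansions cost only $\bigO(k^2\cdot 2^k)$ operations and are absorbed. The bit-length bound $\bigO(k\log n)$ from Theorem~\ref{thrm:countingpmtwalg} carries over verbatim, and multiplying by the $\bigO(m)$ edges of $T$ yields the claimed $\bigO(m k^2 2^{\omega k/2}\, i_\times(k\log n))$ total running time.
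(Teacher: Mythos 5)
Your proposal is correct and follows essentially the same route as the paper: index the child tables by the number of $1$-states on $I$ only, switch the $I$-coordinates to $\{0,?\}$ via the asymmetric state changes, enforce exact saturation on $F$ through the complementary row permutation inside the matrix product, and recover the valid entries by matching the index to $\#_1(c(I))$, with the identical worst-case analysis at $|I|=0$, $|L|=|R|=|F|=k/2$. The only imprecision is that your table invariant should refer to matchings in the subgraph $H_e$ of $G_e$ containing only the edges assigned to leaves below $e$ (as $G[V_e]$ may contain edges whose leaves lie elsewhere in $T$); this does not affect the algorithm or the final count at the root.
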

\begin{proof}
Let~$T$ be a branch decomposition of~$G$ of branchwidth~$k$ rooted at a vertex~$z$.

For each edge $e \in E(T)$, we will compute a table~$A_e$ with entries $A_e(c)$ for all $c \in \{1,0\}^{X_e}$ where the states have the same meaning as in Theorem~\ref{thrm:countingpmtwalg}.
In this table, an entry $A_e(c)$ equals the number of matchings in the graph~$H_e$ matching all vertices in $V_e \setminus X_e$ and satisfying the requirements defined by the colouring~$c$ on the vertices in~$X_e$.
These entries do not count matchings in~$G_e$ but in its subgraph~$H_e$ that has the same vertices as~$G_e$ but contains only the edges of~$G_e$ that are in the leaves below~$e$ in~$T$.

\smallskip \noindent {\it Leaf edges}:
Let~$e$ be an edge of~$T$ incident to a leaf of~$T$ that is not the root.
Now, $H_e = G_e = G[X_e]$ is a two vertex graph with $X_e = \{u,v\}$ and with an edge between~$u$ and~$v$.

We compute~$A_e$ in the following way:
\[ A_e(c) = \left\{ \begin{array}{ll} 1 & \textrm{if $c=(1,1)$ or $c=(0,0)$} \\ 0 & \textrm{otherwise} \end{array} \right. \]
The only non-zero entries are the empty matching and the matching consisting of the unique edge in~$H_e$.
This is clearly correct.

\smallskip \noindent {\it Internal edges}:
Let~$e$ be an internal edge of~$T$ with child edges~$l$ and~$r$.

Similar to the proof of Theorem~\ref{thrm:countingpmtwalg}, we start by indexing the tables~$A_l$ and~$A_r$ by the number of 1-states used for later use.
However, we now count only the number of 1-states used on vertices in~$I$ in the index.
We compute indexed tables~$A'_l$ and~$A'_r$ with entries $A'_l(c_l,i_l)$ and $A'_r(c_r,i_r)$ using the following formula with $y \in \{l,r\}$:
\[ A'_y(c_y,i_y) = \left\{ \begin{array}{ll} A_y(c_y) & \textrm{if $\#_{1}(c_y(I))) = i_y$} \\ 0 & \textrm{otherwise} \end{array} \right. \]
Here, $\#_{1}(c_y(I))$ is the number of $1$-entries in the colouring~$c_y$ on the vertices in~$I$.

Next, we apply state changes by using Lemma~\ref{lem:asympmstates}.
In this case, we change the states used for the colourings in~$A'_r$ and~$A'_l$ such that they use the set of states $\{0,1\}$ on~$L$, $R$, and~$F$, and the set of states $\{0,?\}$ on~$I$.
Notice that the number of $1$-states used to create the $?$-states is now stored in the index~$i_l$ of $A'_l(c,i_l)$ and~$i_r$ of $A'_r(c,i_r)$.

We say that three colourings~$c_e$ of~$X_e$, $c_l$ of~$X_l$ and~$c_r$ of~$X_r$ using these sets of states on the different partitions of $X_e \cup X_l \cup X_r$ \emph{match} if:
\begin{itemize}
\item For any $v \in I$: $c_e(v) = c_l(v) = c_r(v) \in \{0,?\}$. (2 possibilities)
\item For any $v \in F$: either $c_l(v) = 0$ and $c_r(v) = 1$, or $c_l(v) = 1$ and $c_r(v) = 0$. (2~possibilities)
\item For any $v \in L$: $c_e(v) = c_l(v) \in \{1,0\}$. (2 possibilities)
\item For any $v \in R$: $c_e(v) = c_r(v) \in \{1,0\}$. (2 possibilities)
\end{itemize}
Now, we can compute the indexed table~$A'_e$ for the edge~$e$ of~$T$ using the following formula:
\[ A'_e(c_e, i_e) = \sum_{c_e, c_l, c_r \,\textrm{\scriptsize match}} \; \sum_{i_l + i_r = i_e} A'_l(c_l,i_l) \cdot A'_r(c_r, i_r ) \]
Notice that we can compute~$A'_e$ efficiently by using a series of matrix multiplications in the same way as done in the proof of Theorem~\ref{thrm:countdsbwalg}.
However, the index~$i$ should be treated slightly differently from the parameter~$\kappa$ in the proof of Theorem~\ref{thrm:countdsbwalg}.
After fixing a colouring on~$I$ and the two values of~$i_e$ and~$i_l$, we still create the two matrices~$M_l$ and~$M_r$.
In~$M_l$ each row corresponds to a colouring of~$L$ and each column corresponds to a colouring of~$F$, and in~$M_r$ each rows again corresponds to a colourings of~$F$ and each column corresponds to a colouring of~$R$.
The difference is that we fill~$M_l$ with the corresponding entries $A'_l(c_l,i_l)$ and~$M_r$ with the corresponding entries $A'_r(c_r,i_r)$.
That is, we do not adjust the value of~$i_r$ for the selected $A'_r(c_r,i_r)$ depending on the states used on~$F$.
This is not necessary here since the index counts the total number of $1$-states hidden in the $?$-states and no double counting can take place.
This in contrast to the parameter~$\kappa$ in the proof of Theorem~\ref{thrm:countdsbwalg}; this parameter counted the number of vertices in a solution, which we had to correct to avoid double counting of the vertices in~$F$.

After computing~$A'_e$ in this way, we again change the states such that the set of states $\{1,0\}$ is used on all vertices in the colourings used in~$A'_e$.
We then extract the values of~$A'_e$ in which no two $1$-states hidden in a $?$-state are combined to a new $1$-state on a vertex in~$I$.
We do so using the indices in the same way as in Theorem~\ref{thrm:countingpmtwalg} but with the counting restricted to~$I$:
\[ A_e(c) = A'_e(c,\#_1(c(I))) \]

\smallskip
After computing the~$A_e$ for all $e \in E(T)$, we can find the number of perfect matchings in $G = G_{\{y,z\}}$ in the single entry in $A_{\{y,z\}}$ where~$z$ is the root of~$T$ and~$y$ is its only child.

\smallskip
Because the treewidth and branchwidth of a graph differ by at most a factor $\frac{3}{2}$ (see Proposition~\ref{prop:1.5}), we can conclude that the computations can be done using $\bigO(k\log(n))$-bit numbers using the same reasoning as in the proof of Theorem~\ref{thrm:countingpmtwalg}.
For the running time, we observe that we can compute each~$A_e$ using a series of $k^22^{|I|}$ matrix multiplications.
The worst case arises when $|I|=0$ and $|L|=|R|=|F|=\frac{k}{2}$.
Then the matrix multiplications require $\bigO(k^2 2^{\frac{\omega}{2}k})$ time.
Since~$T$ has $\bigO(m)$ edges, this gives a running time of $\bigO(m k^2 2^{\frac{\omega}{2}k} i_{\times}(k\log(n)))$ time.
\end{proof}

\subsection{$[\rho,\sigma]$-Domination Problems} \label{sec:bwrhosigma}
We have shown how to solve two fundamental graph problems in $\bigOs(s^{\frac{\omega}{2}k})$ time on branch decompositions of width~$k$, where~$s$ is the natural number of states involved in a dynamic programming algorithm on branch decompositions for these problem.
Similar to the results on tree decompositions, we generalize this and show that one can solve all $[\rho,\sigma]$-domination problems with finite or cofinite~$\rho$ and~$\sigma$ in $\bigOs(s^{\frac{\omega}{2}k})$ time.

For the $[\rho,\sigma]$-domination problems, we use states~$\rho_j$ and~$\sigma_j$, where~$\rho_j$ and~$\sigma_j$ represent that a vertex is not in or in the vertex set~$D$ of the partial solution of the $[\sigma,\rho]$-domination problem, respectively, and has~$j$ neighbours in~$D$.
This is similar to Section~\ref{sec:rhosigmatw}.
Note that the number of states used equals $s = p + q + 2$.

On branch decompositions, we have to use a different approach than on tree decompositions, since we have to deal with vertices in~$L$, $R$, $I$, and~$F$ simultaneously.
It is, however, possible to reuse part of the algorithm of Theorem~\ref{thrm:rstwalg}.
Observe that joining two children in a tree decomposition is similar to joining two children in a branch decomposition if $L=R=F=\emptyset$.
Since we have demonstrated in the algorithms earlier in this section that one can have distinct states and perform different computations on~$I$, $L$, $R$, and~$F$, we can essentially use the approach of Theorem~\ref{thrm:rstwalg} for the vertices in~$I$.

\begin{theorem} \label{thrm:rsbwalg}
Let $\rho,\sigma \subseteq \N$ be finite or cofinite.
There is an algorithm that, given a branch decomposition of a graph~$G$ of width~$k$, counts the number of
$[\rho,\sigma]$-dominating sets of~$G$ of each size~$\kappa$, $0 \leq \kappa \leq n$, of a fixed $[\rho,\sigma]$-domination problem involving~$s$ states in $\bigO(m n^2 (sk)^{2(s-2)} s^{\frac{\omega}{2} k} i_\times(n))$ time.
\end{theorem}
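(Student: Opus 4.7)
The plan is to fuse the state-change machinery developed in Theorem~\ref{thrm:rstwalg} for $[\rho,\sigma]$-domination on tree decompositions with the asymmetric-vertex-state framework of Proposition~\ref{prop:secondbwdsalg} and Theorem~\ref{thrm:countdsbwalg}, while using fast matrix multiplication as in Dorn~\cite{Dorn06}. As in Proposition~\ref{prop:simplebwdsalg}, I would define, for each edge $e \in E(T)$, a table $A_e(c,\kappa)$ counting the number of partial solutions of size $\kappa$ on $G_e$ whose behaviour on $X_e$ is described by the colouring $c$ using State Set~I from Definition~\ref{def:rsstates}. Leaf edges are handled exactly as in Proposition~\ref{prop:simplebwdsalg}, adapted so that the subscript of each $\rho_j$/$\sigma_j$ state counts only neighbours in $D$ that have already been forgotten; this bookkeeping convention is the same one used in Theorem~\ref{thrm:rstwalg} and ensures that no within-middle-set adjacencies interfere with the combination step.

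For an internal edge $e$ with children $l$, $r$, I would partition $X_e \cup X_l \cup X_r$ into $I$, $L$, $R$, $F$ (Definition~\ref{def:middlesets}) and use different state representations on each class, just as in Proposition~\ref{prop:secondbwdsalg}. On the vertices of $I$, the combination behaves exactly as a tree-decomposition join, so I would apply the Theorem~\ref{thrm:rstwalg} machinery there: transform to State Set~II on $I$ and introduce an index vector $\vec{i} = (i_{\rho 1},\ldots,i_{\rho p},i_{\sigma 1},\ldots,i_{\sigma q})$ via Lemma~\ref{lem:rsstates2} that records, for each state $\rho_{\leq j}$ or $\sigma_{\leq j}$, the total number of neighbours in $D$ contributed by the vertices carrying that state. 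On the $L$ and $R$ vertices I would keep State Set~I and simply copy states (as in Proposition~\ref{prop:secondbwdsalg}). On $F$ vertices, which must be correctly classified but are then forgotten, I would use asymmetric states so that the matching combinations on $F$ realise exactly the computation of step~2 of the forget-node procedure in Theorem~\ref{thrm:rstwalg}: the idea is that on $A_l$ the state on an $F$-vertex records only its already-forgotten neighbour count, while on $A_r$ the state is shifted (mapped by the function $\phi$ used in Theorem~\ref{thrm:rstwalg}) so that combining matching pairs on an $F$-vertex both closes the remaining neighbourhood contributions and selects exactly those combinations whose final state lies in $\rho$ (if the $F$-vertex is outside $D$) or in $\sigma$ (if it is inside $D$).

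With these three kinds of states fixed, an expanded table $A'_e(c_e,\kappa,\vec{i})$ is computed by the natural sum
\[
A'_e(c_e,\kappa,\vec{i}) \;=\; \sum_{c_e,c_l,c_r \text{ match}} \; \sum_{\kappa_l+\kappa_r=\kappa+\kappa_{\#_\sigma}} \; \sum_{\vec{g}+\vec{h}=\vec{i}} A'_l(c_l,\kappa_l,\vec{g}) \cdot A'_r(c_r,\kappa_r,\vec{h}),
\]
after which one transforms the $I$-states back to State Set~I and extracts $A_e(c,\kappa) = A'_e(c,\kappa,\vec{\Sigma}(c))$ with $\vec{\Sigma}(c) = (\Sigma^1_\rho(c(I)), \ldots, \Sigma^p_\rho(c(I)), \Sigma^1_\sigma(c(I)), \ldots, \Sigma^q_\sigma(c(I)))$ exactly as in Theorem~\ref{thrm:rstwalg}. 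The correctness argument is identical in structure to the one given there, applied coordinate by coordinate on $I$; on $L$, $R$, $F$ correctness reduces to the analysis of Proposition~\ref{prop:secondbwdsalg} generalised from two $\rho$-states to the full state set. To get the claimed running time, I would evaluate the inner matching sum by fast matrix multiplication in the manner of Theorem~\ref{thrm:countdsbwalg}: for each fixed colouring of $I$, each fixed $(\kappa_l, \kappa_r)$ with $\kappa_l+\kappa_r$ in range, and each fixed pair $(\vec{g},\vec{h})$ with $\vec{g}+\vec{h}=\vec{i}$, build an $s^{|L|} \times s^{|F|}$ matrix $M_l$ of entries of $A'_l$ and an $s^{|F|} \times s^{|R|}$ matrix $M_r$ of entries of $A'_r$ (with rows permuted so that matching $F$-states line up), and compute $M_l \cdot M_r$. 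Under the constraints $|I|+|L|+|R|, |I|+|L|+|F|, |I|+|R|+|F| \leq k$, the worst case is $|I|=0$ and $|L|=|R|=|F|=k/2$, yielding $s^{\omega k/2}$ per multiplication; the outer factors $m$, $n^2$, $(sk)^{2(s-2)}$, $i_\times(n)$ come respectively from the number of edges in $T$, the two size parameters, the two independent index vectors $\vec{g},\vec{h}$ of $s-2$ coordinates bounded by $sk$, and the arithmetic on $n$-bit numbers.

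The main obstacle will be verifying that the asymmetric choice of states on $F$ faithfully encodes the step~1/step~2/step~3 forget procedure of Theorem~\ref{thrm:rstwalg} in a single combination step, simultaneously with the $I$-coordinate state changes and the index-vector extraction, and that the matrix-multiplication reshaping still lines up all three of these operations correctly. Once the right asymmetric $F$-matching (the branchwidth analogue of the function $\phi$) is pinned down, the analysis of Theorem~\ref{thrm:rstwalg} transfers essentially verbatim to the $I$-coordinates, and the running time follows from the same square/non-square matrix multiplication bounds used in Theorem~\ref{thrm:countdsbwalg}.
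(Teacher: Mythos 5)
Your plan is the same synthesis the paper uses: base tables over State Set~I with subscripts counting only already-forgotten neighbours, State Set~II plus index vectors (Lemma~\ref{lem:rsstates2}) on $I$, plain copying on $L$ and $R$, asymmetric states on $F$, and the Theorem~\ref{thrm:countdsbwalg}-style matrix multiplication with the extraction via $\Sigma_\rho^j(c)$, $\Sigma_\sigma^j(c)$ at the end. The architecture, the correctness strategy on $I$, and the running-time accounting all match the paper's proof.

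However, the part you flag as ``the main obstacle'' is a genuine gap, and it has two concrete pieces that your proposal does not supply. First, before any state changes, the paper performs an asymmetric redistribution of neighbour counts (its Step~1): in $A_l$ the states on $L \cup I$ are shifted to additionally count neighbours in $F \cap D$, the states on $F$ are shifted to count neighbours in $(L \cup I \cup F) \cap D$, while in $A_r$ the states on $R$ additionally count neighbours in $F \cap D$ and the states on $F$ count neighbours in $R \cap D$. This is indispensable and is not implied by your bookkeeping convention: since a branch decomposition has no forget nodes, the only moment at which an adjacency between a surviving vertex $u \in I \cup L \cup R$ and a vertex $v \in F \cap D$ can be added to $u$'s subscript is at this internal edge, and your combination formula contains no mechanism for it. The asymmetry of the redistribution is also what guarantees each such adjacency is counted exactly once rather than twice. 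Second, the ``branchwidth analogue of $\phi$'' is not a shift at all: the paper replaces the states on $F$ in $A_r$ (only) by aggregated states $\bar{\rho_j}$ and $\bar{\sigma_j}$ defined by $A_r(\cdots \bar{\rho_j} \cdots) = \sum_{i:\, i+j \in \rho} A^*_r(\cdots \rho_i \cdots)$ and analogously for $\sigma$, so that matching $\rho_j$ in $A_l$ against $\bar{\rho_j}$ in $A_r$ simultaneously completes the forgotten vertex's degree and filters by membership in $\rho$ or $\sigma$ in a single inner product compatible with the matrix-product reshaping. Without these two constructions the $F$-coordinates cannot be resolved within one multiplication, so the proof as proposed does not yet go through; with them, the rest of your argument transfers as you describe.
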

\begin{proof}
Let~$T$ be the branch decomposition of~$G$ of width~$k$ rooted at the vertex~$z$.

Recall the definitions of State Sets~I and~II defined in Definition~\ref{def:rsstates}.
Similar to the proof of Theorem~\ref{thrm:rstwalg}, we will use different sets of states to prove this theorem.
In this proof, we mostly use State Set~I while we let the subscripts of the states count only neighbours in~$D$ outside the current middle set.
That is, we use states~$\rho_j$ and~$\sigma_j$ for our tables~$A_e$, $A_f$, and~$A_g$ such that the subscripts~$j$ represent the number of neighbours in the vertex set~$D$ of each partial solution of the $[\sigma,\rho]$-domination problem outside of the vertex sets~$X_e$, $X_f$ and~$X_g$, respectively.
Using these states for colourings~$c$, we compute the table~$A_e$ for each edge $e \in E(T)$ such that the entry $A_e(c,\kappa)$ contains the number of partial solutions of the $[\rho,\sigma]$-domination problem on~$G_e$ consisting of~$\kappa$ vertices that satisfy the requirements defined by~$c$.

\smallskip \noindent {\it Leaf edges}:
Let~$e$ be an edge of~$T$ incident to a leaf of~$T$ that is not the root.
Now, $G_e = G[X_e]$ is a two vertex graph.

We compute~$A_e$ in the following way:
\[ A_e(c, \kappa) = \left\{ \begin{array}{ll} 1 & \textrm{if $c=(\rho_0,\rho_0)$ and $\kappa = 0$} \\ 1 & \textrm{if $c=(\rho_0,\sigma_0)$ or $c=(\sigma_0,\rho_0)$, and $\kappa = 1$} \\ 1 & \textrm{if $c=(\sigma_0,\sigma_0)$ and $\kappa = 2$} \\ 0 & \textrm{otherwise} \end{array} \right. \]
Since the subscripts of the states count only vertices in the vertex set of a partial solutions of the $[\rho,\sigma]$-domination problem on~$G_e$ that are outside the middle set~$X_e$, we only count partial solutions in which the subscripts are zero.
Moreover, the size parameter~$\kappa$ must equal the number of $\sigma$-states since these represent vertices in the vertex set of the partial solutions.

\smallskip \noindent {\it Internal edges}:
Let~$e$ be an internal edge of~$T$ with child edges~$l$ and~$r$.

The process of computing the table~$A_e$ by combining the information in the two tables~$A_l$ and~$A_r$ is quite technical.
This is mainly due to the fact that we need to do different things on the different vertex sets~$I$, $L$, $R$, and~$F$.
We will give a three-step proof.

{\it Step 1}: As a form of preprocessing, we will update the entries in~$A_l$ and~$A_r$ such that the subscripts will not count only the vertices in vertex sets of the partial solutions outside of~$X_l$ and~$X_r$, but also some specific vertices in the vertex sets of the partial solutions in the middle sets.
Later, we will combine the information from~$A_l$ and~$A_r$ to create the table~$A_e$ according to the following general rule: combining~$\rho_i$ and~$\rho_j$ gives~$\rho_{i+j}$, and~$\sigma_i$ and~$\sigma_j$ gives~$\sigma_{i+j}$.
In this context, the preprocessing makes sure that the subscripts of the states in the result in~$A_e$ correctly count the number of vertices in the vertex sets of the partial solutions of the $[\rho,\sigma]$-domination problem.

Recall that for an edge~$e$ of the branch decomposition~$T$ the vertex set~$V_e$ is defined to be the vertex set of the graph~$G_e$, that is, the union of the middle set of~$e$ and all middle sets below~$e$ in~$T$.
We update the tables~$A_l$ and~$A_r$ such that the subscripts of the states~$\rho_j$ and~$\sigma_j$ count the number of neighbours in the vertex sets of the partial solutions with the following properties:
\begin{itemize}
\item States used in~$A_l$ on vertices in~$L$ or~$I$ count neighbours in $(V_l \setminus X_l) \cup F$.
\item States used in~$A_l$ on vertices in~$F$ count neighbours in $(V_l \setminus X_l) \cup L \cup I \cup F$.
\item States used in~$A_r$ on vertices in~$I$ count neighbours in $(V_r \setminus X_r)$ (nothing changes here).
\item States used in~$A_r$ on vertices in~$R$ count neighbours in $(V_r \setminus X_r) \cup F$.
\item States used in~$A_r$ on vertices in~$F$ count neighbours in $(V_r \setminus X_r) \cup R$.
\end{itemize}
If we now combine partial solutions with state~$\rho_i$ in~$A_l$ and state~$\rho_j$ in~$A_r$ for a vertex in~$I$, then the state~$\rho_{i+j}$ corresponding to the combined solution in~$A_e$ correctly counts the number of neighbours in the partial solution in $V_e \setminus X_e$.
Also, states for vertices in~$L$ and~$R$ in~$A_e$ count their neighbours in the partial solution in $V_e \setminus X_e$.
And, if we combine solutions with a state~$\rho_i$ in~$A_l$ and a state~$\rho_j$ in~$A_r$ for a vertex in~$F$, then this vertex will have exactly $i+j$ neighbours in the combined partial solution.

Although one must be careful which vertices to count and which not to count, the actual updating of the tables~$A_l$ and~$A_r$ is simple because one can see which of the counted vertices are in the vertex set of a partial solution ($\sigma$-state) and which are not ($\rho$-state).

Let~$A^*_y$ be the table before the updating process with $y \in \{l,r\}$.
We compute the updated table $A_y$ in the following way:
\[ A_y(c,\kappa) = \left\{ \begin{array}{ll} 0 & \textrm{if $\phi(c)$ is not a correct colouring of $X_y$} \\ A^*_y(\phi(c),\kappa) & \textrm{otherwise} \end{array} \right. \]
Here, $\phi$ is the inverse of the function that updates the subscripts of the states, e.g., if $y = l$ and we consider a vertex in~$I$ with exactly one neighbour with a $\sigma$-state on a vertex in~$F$ in~$c$, then it changes~$\rho_2$ into~$\rho_1$.
The result of this updating is not a correct colouring of~$X_y$ if the inverse does not exist, i.e., if the strict application of subtracting the right number of neighbours results in a negative number.
For example, this happens if~$c$ contains a~$\rho_0$- or~$\sigma_0$-state while it has neighbours that should be counted in the subscripts.

{\it Step 2}: Next, we will change the states used for the tables~$A_l$ and~$A_r$, and we will add index vectors to these tables that allow us to use the ideas of Theorem~\ref{thrm:rstwalg} on the vertices in~$I$.

We will not change the states for vertices in~$L$ in the table~$A_l$, nor for vertices in~$R$ in the table~$A_r$.
But, we will change the states for the vertices in~$I$ in both~$A_l$ and~$A_r$ and on the vertices in~$F$ in~$A_r$.
On~$F$, simple state changes suffice, while, for vertices on~$I$, we need to change the states and introduce index vectors at the same time.

We will start by changing the states for the vertices in~$F$.
On the vertices in~$F$, we will not change the states in~$A_l$, but introduce a new set of states to use for~$A_r$.
We define the states~$\bar{\rho_j}$ and~$\bar{\sigma_j}$.
A table entry with state~$\bar{\rho_j}$ on a vertex~$v$ requires that the vertex has an allowed number of neighbours in the vertex set of a partial solution when combined with a partial solution from~$A_l$ with state~$\rho_j$.
That is, a partial solution that corresponds to the state~$\rho_i$ on~$v$ is counted in the entry with state~$\bar{\rho_j}$ on~$v$ if $i + j \in \rho$.
The definition of $\bar{\sigma_j}$ is similar.

Let~$A^*_r$ be the result of the table for the right child~$r$ of~$e$ obtained by Step~1.
We can obtain the table~$A_r$ with the states on~$F$ transformed as described by a coordinate-wise application of the following formula on the vertices in~$F$.
The details are identical to the state changes in the proofs of Lemmas~\ref{lem:asymdsstates} and~\ref{lem:asympmstates}.
\begin{eqnarray*}
A_r(c_1 \times \{\bar{\rho_j}\} \times c_2) & = & \sum_{i+j \in \rho} A^*_r(c_1 \times \{\rho_i\} \times c_2) \\
A_r(c_1 \times \{\bar{\sigma_j}\} \times c_2) & = & \sum_{i+j \in \sigma} A^*_r(c_1 \times \{\sigma_i\} \times c_2)
\end{eqnarray*}
Notice that if we combine an entry with state~$\rho_j$ in~$A_l$ with an entry with state~$\bar{\rho_j}$ from~$A_r$, then we can count all valid combinations in which this vertex is not in the vertex set of a partial solution of the $[\rho,\sigma]$-domination problem.
The same is true for a combination with state~$\sigma_j$ in~$A_l$ with state~$\bar{\sigma_j}$ in~$A_r$ for vertices in the vertex set of the partial solutions.

As a final part of Step~2, we now change the states in~$A_l$ and~$A_r$ on the vertices in~$I$ and introduce the index vectors $\vec{i} = (i_{\rho1}, i_{\rho2}, \ldots, i_{\rho p},i_{\sigma1},i_{\sigma2},\ldots,i_{\sigma q})$, where~$i_{\rho j}$ and~$i_{\sigma j}$ index the sum of the number of neighbours in the vertex set of a partial solution of the $[\rho,\sigma]$-domination problem over the vertices with state $\rho_{\leq j}$ and $\sigma_{\leq j}$, respectively.
That is, we change the states used in~$A_l$ and~$A_r$ on vertices in~$I$ to State Set~II of Definition~\ref{def:rsstates} and introduce index vectors in exactly the same way as in the proof of Lemma~\ref{lem:rsstates2}, but only on the coordinates of the vertices in~$I$, similar to what we did in the proofs of Lemmas~\ref{lem:asymdsstates} and~\ref{lem:asympmstates}.
Because the states~$\rho_{\leq j}$ and~$\sigma_{\leq j}$ are used only on~$I$, we note that that the component~$i_{\rho_j}$ of the index vector~$\vec{i}$ count the total number of neighbours in the vertex sets of the partial solutions of the $[\rho,\sigma]$-domination problem of vertices with state~$\rho_{\leq j}$ on~$I$.
As a result, we obtain tables~$A'_l$ and~$A'_r$ with entries $A'_l(c_l,\kappa_l,\vec{g})$ and $A'_r(c_r,\kappa_r,\vec{h})$ with index vectors~$\vec{g}$ and~$\vec{h}$, where these entries have the same meaning as in Theorem~\ref{thrm:rstwalg}.
We note that the components~$i_{\rho p}$ and~$i_{\sigma q}$ of the index vector are omitted if~$\rho$ or~$\sigma$ is cofinite, respectively.

We have now performed all relevant preprocessing and are ready for the final step.

{\it Step 3}: Now, we construct the table~$A_e$ by computing the number of valid combinations from~$A'_l$ and~$A'_r$ using fast matrix multiplication.

We first define when three colourings~$c_e$, $c_l$, and~$c_r$ match.
They \emph{match} if:
\begin{itemize}
\item For any $v \in I$: $c_e(v) = c_l(v) = c_r(v)$ with State Set~II. ($s$ possibilities)
\item For any $v \in F$: either $c_l(v) = \rho_j$ and $c_r(v) = \bar{\rho_j}$, or $c_l(v) = \sigma_j$ and $c_r(v) = \bar{\sigma_j}$, with State Set~I used for $c_l$ and the new states used for $c_r$. ($s$ possibilities)
\item For any $v \in L$: $c_e(v) = c_l(v)$ with State Set~I. ($s$ possibilities)
\item For any $v \in R$: $c_e(v) = c_r(v)$ with State Set~I. ($s$ possibilities)
\end{itemize}
State Set~I and State Set~II are as defined in Definition~\ref{def:rsstates}.
That is, colourings match if they forget valid combinations on $F$, and have identical states on~$I$, $L$, and~$R$.

Using this definition, the following formula computes the table~$A'_e$.
The function of this table is identical to the same table in the proof of Theorem~\ref{thrm:rstwalg}: the table gives all valid combinations of entries corresponding to the colouring~$c$ that lead to a partial solution of size~$\kappa$ with the given values of the index vector~$\vec{i}$.
The index vectors allow us to extract the values we need afterwards.
\[ A'_e(c_e,\kappa,\vec{i}) = \!\!\!
\mathop{\sum_{c_e, c_l, c_r}}_{\textrm{\scriptsize match}} \;\;
\sum_{\kappa_l + \kappa_r = \kappa + \#_\sigma(c)} \!\!
\left( \sum_{i_{\rho1}=g_{\rho1}+h_{\rho1}} \!\!\cdots\!\! \sum_{i_{\sigma q}=g_{\sigma q}+h_{\sigma q}} \!\!\!\!
A'_l(c_l,\kappa_l,\vec{g}) \cdot A'_r(c_r,\kappa_r,\vec{h})
\right) \]
Here, $\#_\sigma = \#_\sigma(c_r(I \cup F))$ is the number of vertices that are assigned a $\sigma$-state on $I \cup F$ in any matching triple~$c_e$, $c_l$, $c_r$.

We will now argue what kind of entries the table~$A'_e$ contains by giving a series of observations.
\begin{observation} \label{obs:1}
For a combination of a partial solutions on~$G_l$ counted in~$A'_l$ and a partial solution on~$G_r$ counted in~$A'_r$ to be counted in the summation for $A'_e(c,\kappa,\vec{i})$, it is required that both partial solutions contains the same vertices on $X_l \cap X_r$ ($= I \cap F$).
\end{observation}
\begin{proof}
This holds because sets of matching colourings have a $\sigma$-state on a vertex if and only if the other colourings in which the same vertex is included also have a $\sigma$-state on this vertex.
\end{proof}

\begin{observation} \label{obs:2}
For a combination of a partial solutions on~$G_l$ counted in~$A'_l$ and a partial solution on~$G_r$ counted in~$A'_r$ to be counted in the summation for $A'_e(c,\kappa,\vec{i})$, it is required that the total number of vertices that are part of the combined partial solution is~$\kappa$.
\end{observation}
\begin{proof}
This holds because we demand that~$\kappa$ equals the sum of the sizes of the partial solutions on~$G_l$ and~$G_r$ used for the combination minus the number of vertices in these partial solutions that are counted in both sides, namely, the vertices with a $\sigma$-state on~$I$ or~$F$.
\end{proof}

\begin{observation} \label{obs:3}
For a combination of a partial solutions on~$G_l$ counted in~$A'_l$ and a partial solution on~$G_r$ counted in~$A'_r$ to be counted in the summation for $A'_e(c,\kappa,\vec{i})$, it is required that the subscripts~$j$ of the states~$\rho_j$ and~$\sigma_j$ used in~$c$ on vertices in~$L$ and~$R$ correctly count the number of neighbours of this vertex in $V_e \setminus X_e$ in the combined partial solution.
\end{observation}
\begin{proof}
This holds because of the preprocessing we performed in Step~1.
\end{proof}

\begin{observation} \label{obs:4}
For a combination of a partial solutions on~$G_l$ counted in~$A'_l$ and a partial solution on~$G_r$ counted in~$A'_r$ to be counted in the summation for $A'_e(c,\kappa,\vec{i})$, it is required that the forgotten vertices in a combined partial solution satisfy the requirements imposed by the specific $[\rho,\sigma]$-domination problem.
I.e., if such a vertex is not in the vertex set~$D$ of the combined partial solutions, then it has a number of neighbours in~$D$ that is a member of~$\rho$, and if such a vertex is in the vertex set~$D$ of the combined partial solution, then it has a number of neighbours in~$D$ that is a member of~$\sigma$.
Moreover, all such combinations are considered.
\end{observation}
\begin{proof}
This holds because we combine only entries with the states~$\rho_j$ and~$\bar{\rho_j}$ or with the states~$\sigma_j$ and~$\bar{\sigma_j}$ for vertices in~$F$.
These are all required combinations by definition of the states~$\bar{\rho_j}$ and~$\bar{\sigma_j}$.
\end{proof}

\begin{observation} \label{obs:5}
For a combination of a partial solutions on~$G_l$ counted in~$A'_l$ and a partial solution on~$G_r$ counted in~$A'_r$ to be counted in the summation for $A'_e(c,\kappa,\vec{i})$, it is required that the total sum of the number of neighbours outside~$X_e$ of the vertices with state~$\rho_{\leq j}$ or~$\sigma_{\leq j}$ in a combined partial solution equals~$i_{\rho j}$ or~$i_{\sigma j}$, respectively.
\end{observation}
\begin{proof}
This holds because of the following.
First the subscripts of the states are updated such that every relevant vertex is counted exactly once in Step~1.
Then, these numbers are stored in the index vectors at Step~2.
Finally, the entries of~$A'_e$ corresponding to a given index vector combine only partial solutions which index vectors sum to the given index vector~$\vec{i}$.
\end{proof}

\begin{observation} \label{obs:6}
Let $D_l$ and $D_r$ be the vertex set of a partial solution counted in~$A_l$ and~$A_r$ that are used to create a combined partial solution with vertex set~$D$, respectively. 
After the preprocessing of Step~1, the vertices with state~$\rho_{\leq j}$ or~$\sigma_{\leq j}$ have at most~$j$ neighbours that we count in the vertex sets~$D_l$ and~$D_r$, respectively.
And, if a vertex in the partial solution from~$A_l$ has~$i$ such counted neighbours in~$D_l$, and the same vertex in the partial solution from~$A_r$ has~$j$ such counted neighbours in~$D_r$, then the combined partial solution has a total of $i+j$ neighbours in~$D$ outside of~$X_e$.
\end{observation}
\begin{proof}
The last statement holds because we count each relevant neighbour of a vertex either in the states used in~$A_l$ or in the states used in~$A_r$ by the preprocessing of Step~1.
The first part of the statement follows from the definition of the states~$\rho_{\leq j}$ or~$\sigma_{\leq j}$: here, only partial solutions that previously had a state~$\rho_i$ and~$\sigma_i$ with $i \leq j$ are counted.
\end{proof}

We will now use Observations~\ref{obs:1}-\ref{obs:6} to show that we can compute the required values for $A_e$ in the following way.
This works very similar to Theorem~\ref{thrm:rstwalg}.
First, we change the states in the table~$A'_e$ back to State Set~I (as defined in Definition~\ref{def:rsstates}).
We can do so similar as in Lemma~\ref{lem:rsstates} and Lemmas~\ref{lem:asymdsstates} and~\ref{lem:asympmstates}.
Then, we extract the entries required for the table~$A_e$ using the following formula:
\[ A_e(c,\kappa) = A'_e \left( c, \; \kappa, \; (\Sigma_\rho^1(c),\Sigma_\rho^2(c),\ldots,\Sigma_\rho^p(c),\Sigma_\sigma^1(c),\Sigma_\sigma^2(c),\ldots,\Sigma_\sigma^q(c)) \; \right) \]
Here, $\Sigma_\rho^l(c)$ and $\Sigma_\sigma^l(c)$ are defined as in the proof of Theorem~\ref{thrm:rstwalg}: the weighted sums of the number of~$\rho_j$- and~$\sigma_j$-states with $0 \leq j \leq l$, respectively.

If~$\rho$ or~$\sigma$ is cofinite, we use the same formula but omit the components $\Sigma_\rho^p(c)$ or $\Sigma_\sigma^q(c)$ from the index vector of the extracted entries, respectively.

That the values of these entries equal the values we want to compute follows from the following reasoning.
First of all, any combination considered leads to a new partial solution since it uses the same vertices (Observation~\ref{obs:1}) and forgets vertices that satisfy the constraints of the fixed $[\rho,\sigma]$-domination problem (Observation~\ref{obs:4}).
Secondly, the combinations lead to combined partial solutions of the correct size (Observation~\ref{obs:2}).
Thirdly, the subscripts of the states used in~$A_e$ correctly count the number of neighbours of these vertices in the vertex set of the partial solution in $V_e \setminus X_e$.
For vertices in~$L$ and~$R$, this directly follows from Observation~\ref{obs:3} and the fact that for any three matching colourings the states used on each vertex in~$L$ and~$R$ are the same.
For vertices in~$I$, this follows from exactly the same arguments as in the last part of the proof of Theorem~\ref{thrm:rstwalg} using Observation~\ref{obs:5} and Observation~\ref{obs:6}.
This is the argument where we first argue that any entry which colouring uses only the states~$\rho_0$ and~$\sigma_0$ is correct, and thereafter inductively proceed to~$\rho_j$ and~$\sigma_j$ for $j > 0$ by using correctness for $\rho_{j-1}$ and $\sigma_{j-1}$ and fact that we use the entries corresponding to the chosen values of the index vectors.

All in all, we see that this procedure correctly computes the required table~$A_e$.

\smallskip
After computing~$A_e$ in the above way for all $e \in E(T)$, we can find the number of $[\rho,\sigma]$-dominating sets of each size in the table $A_{\{y,z\}}$, where~$z$ is the root of~$T$ and~$y$ its only child because $G = G_{\{y,z\}}$ and $X_{\{y,z\}} = \emptyset$.

\smallskip
For the running time, we note that we have to compute the tables~$A_e$ for the $\bigO(m)$ edges $e \in E(T)$.
For each table~$A_e$, the running time is dominated by evaluating the formula for the intermediate table~$A'_e$ with entries $A'_e(c,\kappa,\vec{i})$.
We can evaluate each summand of the formula for~$A'_e$ for all combinations of matching states by $s^{|I|}$ matrix multiplications as in Theorem~\ref{thrm:countdsbwalg}.
This requires $\bigO(n^2 (sk)^{2(s-2)} s^{|I|})$ multiplications of an $s^{|L|} \times s^{|F|}$ matrix and an $s^{|F|} \times s^{|R|}$ matrix.
The running time is maximal if $|I| = 0$ and $|L|=|R|=|F| = \frac{k}{2}$.
In this case, the total running time equals $\bigO(m n^2 (sk)^{2(s-2)} s^{\frac{\omega}{2}k} i_\times(n) )$ since we can do the computations using $n$-bit numbers.
\end{proof}

Similar to our results on the $[\rho,\sigma]$-domination problem on tree decompositions, we can improve the polynomial factors of the above algorithm in several ways.
The techniques involved are identical to those of Corollaries~\ref{cor:generalrstwalg}, \ref{cor:defluiterrstwalg}, and~\ref{cor:decisionrstwalg}.
Similar to Section~\ref{sec:rhosigmatw}, we define the value~$r$ associated with a $[\rho,\sigma]$-domination problems as follows:
\[ r = \left\{ \begin{array}{ll} \max\{p-1,q-1\} & \textrm{if $\rho$ and $\sigma$ are cofinite} \\ \max\{p,q-1\} & \textrm{if $\rho$ is finite and $\sigma$ is cofinite} \\ \max\{p-1,q\} & \textrm{if $\rho$ is confinite and $\sigma$ is finite} \\ \max\{p,q\} & \textrm{if $\rho$ and $\sigma$ are finite} \end{array} \right. \]

\begin{corollary}[General {$\boldsymbol{[\rho, \sigma]}$}-Domination Problems] \label{cor:generalrsbwalg}
Let $\rho, \sigma \subseteq \N$ be finite or cofinite, and let~$p$, $q$, $r$ and~$s$ be the values associated with the corresponding $[\rho,\sigma]$-domination problem.
There is an algorithm that, given a branch decomposition of a graph~$G$ of width~$k$, computes the number of $[\rho,\sigma]$-dominating sets in~$G$ of each size~$\kappa$, $0 \leq \kappa \leq n$, in $\bigO(m n^2 (rk)^{2r} s^{\frac{\omega}{2}k} i_\times(n))$ time.
Moreover, there is an algorithm that decides whether there exist a $[\rho,\sigma]$-dominating set of size~$\kappa$, for each individual value of $\kappa$, $0 \leq \kappa \leq n$, in $\bigO(m n^2 (rk)^{2r} s^{\frac{\omega}{2}k} i_\times(log(n)+k\log(r)))$ time.
\end{corollary}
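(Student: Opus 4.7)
The plan is to mirror the two polynomial improvements of Corollary~\ref{cor:generalrstwalg}, but applied to the branch-decomposition algorithm of Theorem~\ref{thrm:rsbwalg}. Recall that in that algorithm the index vectors $\vec{g}$, $\vec{h}$, $\vec{i}$ are used only on the vertices of the intersection set $I$; on $L$ and $R$ the state sets are unchanged, and on $F$ the combinatorial bookkeeping is handled by the auxiliary $\bar{\rho_j}/\bar{\sigma_j}$ states introduced in Step~2. Consequently every subsequent polynomial factor depending on $s$ through the index vector comes from the $I$-coordinates.

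First I would observe that on $I$, the matching condition $c_e(v)=c_l(v)=c_r(v)$ in State Set~II forbids any combination of a $\rho$-state in $A'_l$ with a $\sigma$-state in $A'_r$ (or vice versa) on the same vertex. Hence, for each fixed $j$ the components $i_{\rho j}$ and $i_{\sigma j}$ of the index vector can never be simultaneously nonzero for a given combined partial solution, and they may be merged into a single index $i_j$ that records whichever of the two sums is relevant. Dropping the unnecessary $i_{\rho p}$ or $i_{\sigma q}$ components in the cofinite cases exactly as in Corollary~\ref{cor:generalrstwalg}, the effective length of the index vector shrinks from $s-2$ to $r$, while each component still ranges over $\{0,1,\ldots,rk\}$ because $|I|\le k$. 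Hence the table $A'_e$ now has $\bigO(n s^{k+1}(rk)^r)$ entries and each entry is a sum over $\bigO(n(rk)^r)$ terms (one term per admissible pair $(\vec g,\vec h)$ and per compatible pair $(\kappa_l,\kappa_r)$). Plugging this into the same matrix-multiplication step used in Theorem~\ref{thrm:rsbwalg} (fixing a colouring on $I$ and the merged indices, writing $M_l$ of dimensions $s^{|L|}\times s^{|F|}$ and $M_r$ of dimensions $s^{|F|}\times s^{|R|}$, and using the worst case $|L|=|R|=|F|=k/2$) gives the claimed running time $\bigO(mn^2 (rk)^{2r} s^{\frac{\omega}{2}k} i_\times(n))$.

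For the decision variant, I would replace the counting tables at the leaves by $0/1$ existence indicators of the corresponding size-characteristic pair, run the same algorithm, and cap every entry of $A_e$ at $\min\{1,\cdot\}$ after the state-change back to State Set~I. The bit-size bound follows from the same argument as in Corollary~\ref{cor:generalrstwalg}: for any colouring $c$ on $I$ using State Set~II, only $r^{|I|+1}$ of the $s^{|I|+1}$ State-Set-I colourings are ever counted in $c$, and each such State-Set-I colouring occurs with a single index vector in $A'_l$ and in $A'_r$. Since the matrix multiplication combines at most $r^{|L|+|R|+|F|}$ compatible outer-state matchings, the product of two $0/1$ tables coming from $A'_l$ and $A'_r$ produces entries of size at most $r^{2k}$, and the outer summation over $n$ size pairs raises this to $nr^{2k}$. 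Thus $\bigO(\log(n)+k\log(r))$-bit numbers suffice, yielding the second running time.

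The main obstacle I expect is verifying that merging $i_{\rho j}$ with $i_{\sigma j}$ does not corrupt the inductive correctness proof of Step~3 of Theorem~\ref{thrm:rsbwalg}. Specifically, the induction on $j$ that shows entries with only $\rho_{j-1}$ and $\sigma_{j-1}$ are correct before arguing about $\rho_j$ and $\sigma_j$ relies (through Observation~\ref{obs:5} and Observation~\ref{obs:6}) on having a separate component of $\vec{i}$ for each such state. One must check that the fixed extraction value $\Sigma_\rho^j(c)+\Sigma_\sigma^j(c)$ read from the merged component still uniquely identifies the entries of $A'_e$ corresponding to valid combinations, which is exactly where the observation that $\rho$-states and $\sigma$-states on $I$ never mix across child tables is needed; the $F$-vertex preprocessing via $\bar{\rho_j}/\bar{\sigma_j}$ obeys the same separation and therefore remains compatible with the merged index.
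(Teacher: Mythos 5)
Your proposal is correct and follows essentially the same route as the paper: the paper's proof of this corollary is a one-line instruction to transplant the index-merging trick of Corollary~\ref{cor:generalrstwalg} (merging $i_{\rho j}$ and $i_{\sigma j}$, justified by the fact that $\rho$- and $\sigma$-states never mix on a vertex of $I$) and the $\min\{1,\cdot\}$ capping with the $nr^{2k}$ bit-size bound into the algorithm of Theorem~\ref{thrm:rsbwalg}, which is exactly what you do. Your additional check that the merged index remains compatible with the inductive correctness argument of Step~3 and with the $\bar{\rho_j}/\bar{\sigma_j}$ bookkeeping on $F$ is a sound elaboration of a point the paper leaves implicit.
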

\begin{proof}
Apply the modifications to the algorithm of Theorem~\ref{thrm:rstwalg} that we have used in the proof of Corollary~\ref{cor:generalrstwalg} for $[\rho,\sigma]$-domination problems on tree decompositions to the algorithm of Theorem~\ref{thrm:rsbwalg} for the same problems on branch decompositions.
\end{proof}

\begin{corollary}[{$\boldsymbol{[\rho, \sigma]}$}-Optimisation Problems with the de Fluiter Property] \label{cor:defluiterrsbwalg}
Let $\rho, \sigma \subseteq \N$ be finite or cofinite, and let~$p$, $q$, $r$ and~$s$ be the values associated with the corresponding $[\rho,\sigma]$-domination problem.
If the standard representation using State Set~I of the minimisation (or maximisation) variant of this $[\rho,\sigma]$-domination problem has the de Fluiter property for treewidth with function~$f$, then there is an algorithm that, given a branch decomposition of a graph~$G$ of width~$k$, computes the number of minimum (or maximum) $[\rho,\sigma]$-dominating sets in~$G$ in $\bigO(m [f(k)]^2 (rk)^{2r} s^{\frac{\omega}{2}k} i_\times(n))$ time.
Moreover, there is an algorithm that computes the minimum (or maximum) size of such a $[\rho,\sigma]$-dominating set in $\bigO(m [f(k)]^2 (rk)^{2r} s^{\frac{\omega}{2}k} i_\times(log(n)+k\log(r)))$ time.
\end{corollary}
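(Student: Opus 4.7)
The plan is to modify the algorithm of Corollary~\ref{cor:generalrsbwalg} in exactly the same way that Corollary~\ref{cor:countmdstwalg} modifies Theorem~\ref{thrm:countingtwdsalg} and that Corollary~\ref{cor:defluiterrstwalg} modifies Corollary~\ref{cor:generalrstwalg}: namely, use the de Fluiter property to restrict the range of sizes $\kappa$ that are stored in each table $A_e$. As noted in the discussion preceding Section~\ref{sec:bwds}, the de Fluiter property for branchwidth coincides with the de Fluiter property for treewidth when the representation on $X_e$ is the same as the representation on the corresponding bag $X_x$, so we may invoke the assumed function $f$ directly for State Set~I.

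More concretely, for every edge $e\in E(T)$ I would maintain, together with the table $A_e$, a value $\xi_e$ equal to the minimum (or maximum, in the maximisation case) $\kappa$ for which $A_e$ has a non-zero entry, and store $A_e(c,\kappa)$ only for $\kappa$ in the window $\xi_e\le\kappa\le\xi_e+f(k)$. The de Fluiter property guarantees that discarding entries outside this window cannot eliminate any partial solution that extends to an optimum. For a leaf edge this window has constant size. For an internal edge $e$ with children $l,r$, the three-step computation of Theorem~\ref{thrm:rsbwalg} is carried out verbatim, except that:
\begin{itemize}
\item in Step~1 the preprocessing operates only on entries within the stored windows of $A_l^\ast$ and $A_r^\ast$, treating entries outside the window as zero;
\item the inner summation $\sum_{\kappa_l+\kappa_r=\kappa+\#_\sigma(c)}$ in the formula for $A'_e$ now has only $\bigO(f(k))$ non-vanishing terms, because relevant $\kappa_l$ and $\kappa_r$ must lie in the windows of $A'_l$ and $A'_r$;
\item after the matrix-multiplication-based join, $\xi_e$ is recomputed and the window of $A_e$ is truncated to size $\bigO(f(k))$.
\end{itemize}
The state transformations of Lemmas~\ref{lem:rsstates} and~\ref{lem:rsstates2} preserve the $\kappa$ coordinate, so they act pointwise in $\kappa$ and therefore respect the window. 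As a result, the factor $n^2$ in Corollary~\ref{cor:generalrsbwalg} (one factor $n$ from the table size in $\kappa$, one from the convolution over $\kappa$) is replaced by $[f(k)]^2$, yielding the first running time $\bigO(m[f(k)]^2(rk)^{2r}s^{\frac{\omega}{2}k}i_\times(n))$.

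For the second statement, where only the optimum size is sought, I would further replace counting by existence flags, exactly as in the second part of Corollary~\ref{cor:generalrsbwalg}. Initial windows contain only $0/1$ indicators; the state transformations and matrix products then produce entries bounded from above by the number of characteristics on $X_l$ or $X_r$ that a given State~II characteristic on $X_e$ counts, which is at most $r^{k+1}$; multiplied across both children and summed over index-vector decompositions and over $\bigO(f(k))$ size combinations, every intermediate value fits in $\bigO(\log(n)+k\log(r))$ bits. After each join one resets non-zero entries to $1$ and recomputes $\xi_e$, giving the claimed time $\bigO(m[f(k)]^2(rk)^{2r}s^{\frac{\omega}{2}k}i_\times(\log(n)+k\log(r)))$.

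The only genuinely delicate point, and the one I expect to verify most carefully, is that the de Fluiter property is indeed invariant under the state changes and under the introduction of the index vectors used in the join: one needs to check that truncating to a window of width $f(k)$ in State Set~I corresponds to throwing away only entries whose underlying partial solutions cannot extend to an optimum after the transformation to State Set~II and after the inner matrix products. Since those transformations are merely pointwise linear combinations in $\kappa$ of entries from a single table, every sub-entry they produce descends from partial solutions of the same sizes that were present before the transformation, so the window bound is preserved. With this verification in place, the running-time analysis is the routine substitution $n\mapsto f(k)$ in the bound from Corollary~\ref{cor:generalrsbwalg}.
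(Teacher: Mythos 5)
Your proposal is correct and follows essentially the same route as the paper: the paper's proof simply says to apply to Corollary~\ref{cor:generalrsbwalg} the same modification by which Corollary~\ref{cor:defluiterrstwalg} improves Corollary~\ref{cor:generalrstwalg}, namely restricting the stored sizes to a window of width $f(k)$ around the current optimum so that both factors of $n$ become $f(k)$, and switching to existence flags with $\bigO(\log(n)+k\log(r))$-bit numbers for the decision variant. Your additional verification that the state transformations and index-vector constructions act pointwise in $\kappa$ and hence respect the window is a sound and welcome elaboration of what the paper leaves implicit.
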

\begin{proof}
Improve the result of Corollary~\ref{cor:generalrsbwalg} in the same way as Corollary~\ref{cor:defluiterrstwalg} improves Corollary~\ref{cor:generalrstwalg} on tree decompositions.
\end{proof}

\begin{corollary}[{$\boldsymbol{[\rho, \sigma]}$}-Decision Problems]
Let $\rho, \sigma \subseteq \N$ be finite or cofinite, and let~$p$, $q$, $r$ and~$s$ be the values associated with the corresponding $[\rho,\sigma]$-domination problem.
There is an algorithm that, given a branch decomposition of a graph~$G$ of width~$k$, counts the number of $[\rho,\sigma]$-dominating sets in~$G$ of a fixed $[\rho,\sigma]$-domination problem in $\bigO(m (rk)^{2r} s^{\frac{\omega}{2}k} i_\times(n))$ time.
Moreover, there is an algorithm that decides whether there exists a $[\rho,\!\sigma]$-dominating set in $\bigO(m (rk)^{2r} s^{\frac{\omega}{2}k} i_\times(log(n)+k\log(r)))$ time.
\end{corollary}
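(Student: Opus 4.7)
The plan is to adapt the algorithm of Theorem~\ref{thrm:rsbwalg} in exactly the same way that Corollary~\ref{cor:decisionrstwalg} adapts Theorem~\ref{thrm:rstwalg}: we drop the size parameter $\kappa$ entirely, which removes two factors of $n$ from the polynomial part of the running time.

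Concretely, for each edge $e \in E(T)$ we now maintain a table $A_e$ with entries $A_e(c)$ counting the number of $[\rho,\sigma]$-dominating sets in $G_e$ satisfying the colouring $c$ on $X_e$, with no dependence on the size. The leaf edge computation becomes a constant-time assignment picking out the states $\rho_0$ and $\sigma_0$. The three-step procedure for internal edges carries over verbatim, with the inner sum $\sum_{\kappa_l+\kappa_r=\kappa+\#_\sigma(c)}$ removed from the formula for $A'_e$. All the observations~\ref{obs:1}--\ref{obs:6} still apply, with Observation~\ref{obs:2} becoming vacuous. The state transformations of Lemmas~\ref{lem:rsstates}, \ref{lem:rsstates2}, \ref{lem:asymdsstates}, and \ref{lem:asympmstates}, and the introduction of index vectors $\vec{i}$ on the vertices in $I$, remain unchanged; the argument that the entries extracted via $(\Sigma_\rho^1(c),\ldots,\Sigma_\sigma^q(c))$ give the correct counts is identical.

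For the counting variant, the dominating term in the running time for an internal edge is still the computation of $A'_e$ via a series of matrix multiplications. Fixing a colouring on $I$ (which contributes a factor $s^{|I|}$) and an index vector $\vec{i}$ (which contributes a factor $(rk)^{2r}$ after collapsing the $\rho$- and $\sigma$-indices as in Corollary~\ref{cor:generalrstwalg}), we multiply an $s^{|L|}\times s^{|F|}$ matrix with an $s^{|F|}\times s^{|R|}$ matrix, using $n$-bit numbers. Under the constraint $|I|+|L|+|R|,|I|+|L|+|F|,|I|+|R|+|F|\leq k$ the worst case is $|I|=0$, $|L|=|R|=|F|=k/2$, giving $\bigO((rk)^{2r} s^{\frac{\omega}{2}k}i_\times(n))$ per edge and $\bigO(m(rk)^{2r}s^{\frac{\omega}{2}k}i_\times(n))$ in total.

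For the decision variant, we replace counts by existence indicators: initial tables contain only $0$/$1$ entries, after each internal-edge computation we clamp every entry $e$ to $\min\{1,e\}$, and we answer yes iff some entry in $A_{\{y,z\}}$ is nonzero. The bit-length analysis from Corollary~\ref{cor:generalrstwalg} applies unchanged: for a fixed colouring using State Set~II at most $r^{k+1}$ colourings using State Set~I are counted, each with at most one index vector in $A'_l$ and $A'_r$, so the inner sum over $\vec{g}+\vec{h}=\vec{i}$ is bounded by $r^{2k}$, and all arithmetic can be done with $\bigO(\log(n)+k\log(r))$-bit numbers. There is no real obstacle here; the only subtlety is verifying that the clamping to $\{0,1\}$ is done \emph{after} the state-change transformations have been reversed, so that the intermediate subtractions used by the M\"obius-inversion-style transforms still work over $\Z$ rather than over Boolean semirings. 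This yields the claimed $\bigO(m(rk)^{2r}s^{\frac{\omega}{2}k}i_\times(\log(n)+k\log(r)))$ bound.
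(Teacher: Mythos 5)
Your proposal is correct and follows essentially the same route as the paper: the paper's own proof is simply the instruction to improve Corollary~\ref{cor:defluiterrsbwalg} the same way Corollary~\ref{cor:decisionrstwalg} improves Corollary~\ref{cor:defluiterrstwalg}, i.e., omit the size parameter and the sum over sizes, keep the index-vector machinery with the $(rk)^{2r}$ collapsing, and for the decision variant clamp entries to $\{0,1\}$ after the inverse state transformations so that $\bigO(\log(n)+k\log(r))$-bit arithmetic suffices. Your write-up just spells out the details the paper leaves implicit.
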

\begin{proof}
Improve the result of Corollary~\ref{cor:defluiterrsbwalg} in the same way as Corollary~\ref{cor:decisionrstwalg} improves upon Corollary~\ref{cor:defluiterrstwalg} on tree decompositions.
\end{proof}

\section{Dynamic Programming on Clique Decompositions} \label{sec:cliquewidth}
On graphs of bounded cliquewidth, we mainly consider the {\sc Dominating Set} problem.
We show how to improve the complex $\bigOs(8^{k})$-time algorithm, which computes a boolean decomposition of a graph of cliquewidth at most $k$ to solve the {\sc Dominating Set} problem~\cite{Bui-XuanTV09}, to an $\bigOs(4^{k})$-time algorithm.
Similar results for {\sc Independent Dominating Set} and {\sc Total Dominating Set} follow from the same approach.

\begin{theorem} \label{thrm:dscwalg}
There is an algorithm that, given a $k$-expression for a graph $G$, computes the number of dominating sets in $G$ of each size $0 \leq \kappa \leq n$ in $\bigO(n^{3} (k^{2} + i_{\times}(n))\, 4^{k})$ time.
\end{theorem}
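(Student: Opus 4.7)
The plan is to set up a dynamic-programming table indexed by a $4^k$-state characterisation of each label class, and to handle the disjoint-union operation by a generalised covering product using zeta and M\"obius transforms, analogously to Theorem~\ref{thrm:countingtwdsalg}. For each node $x$ of the $k$-expression I would maintain a table $A_x(b,d,\kappa)$ with $b,d \in \{0,1\}^k$ and $0 \leq \kappa \leq n$: for each label $i$, the bit $b_i$ records whether at least one vertex of label $i$ lies in the partial dominating set $D$, and $d_i$ records whether every non-$D$ vertex of label $i$ is already dominated in $G_x$. The entry $A_x(b,d,\kappa)$ counts the subsets $D \subseteq V(G_x)$ of size $\kappa$ matching this characteristic (an empty label class has $b_i = 0$ and $d_i = 1$); this is complete because all vertices sharing a label are symmetric with respect to all remaining operations.

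The updates at the non-join operations are straightforward. A create-vertex leaf is a two-entry base case. A relabel $i \to j$ merges the two classes via the rule $b_j' = b_i \vee b_j$, $d_j' = d_i \wedge d_j$, $(b_i',d_i') = (0,1)$, summing the (at most nine) old entries that collide on each new entry. An add-edges operation between labels $i$ and $j$ applies the deterministic rewrites $d_i \mapsto d_i \vee b_j$ and $d_j \mapsto d_j \vee b_i$, again summing collisions. Both update types touch $O(n \cdot 4^k)$ entries storing $n$-bit numbers, so each costs $O(n^2 \cdot 4^k)$ time; since a $k$-expression contains $O(nk^2)$ such operations, their total contribution is $O(n^3 k^2 \cdot 4^k)$.

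The main obstacle is the disjoint-union (join) operation, where the child tables must be combined via
\[
A_x(b,d,\kappa) \;=\; \!\!\!\sum_{\substack{b_l \vee b_r = b\\ d_l \wedge d_r = d\\ \kappa_l + \kappa_r = \kappa}}\!\!\! A_l(b_l,d_l,\kappa_l)\, A_r(b_r,d_r,\kappa_r),
\]
whose naive evaluation costs $16^k$ per value of $\kappa$. To bring this down to $4^k$ per $\kappa$ I would, independently for each $\kappa$, apply the downward zeta transform in $b$ (cumulative sum over $b' \leq b$) and the upward zeta transform in $d$ (cumulative sum over $d' \geq d$) to both child tables, following the fast covering product of \cite{BjorklundHKK07}; each transform needs $O(k \cdot 4^k)$ additions. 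In the transformed domain the $\vee$- and $\wedge$-constraints decouple into a pointwise product in $(b,d)$, leaving only an ordinary polynomial convolution in $\kappa$, namely $\hat{A}_x(b,d,\kappa) = \sum_{\kappa_l+\kappa_r=\kappa}\hat{A}_l(b,d,\kappa_l)\,\hat{A}_r(b,d,\kappa_r)$, to be carried out for each of the $4^k$ states; inverse transforms in $b$ and $d$ then recover $A_x$. Each join therefore costs $O(n^2 \cdot 4^k \cdot i_\times(n))$ for the size convolutions plus $O(n^2 k \cdot 4^k)$ for the zeta and M\"obius transforms, and summing over the $O(n)$ join nodes gives $O(n^3 \cdot 4^k \cdot i_\times(n))$. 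Combined with the non-join contribution, this yields the claimed $O(n^3(k^2 + i_\times(n))\, 4^k)$ bound, and the answers for all $\kappa$ are read off from $\sum_{b} A_z(b,\vec{1},\kappa)$ at the root $z$ of the $k$-expression.
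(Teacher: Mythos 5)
Your proposal is correct and follows essentially the same route as the paper: the paper also indexes its tables by a per-label-class pair of attributes (inclusion of a $D$-vertex, domination of the class), handles create/relabel/add-edges by the same $\vee$/$\wedge$ rewrites, and performs the join by changing to the state sets $\{0,?\}$ (inclusion) and $\{1,?\}$ (domination) --- which are precisely your downward zeta transform in $b$ and upward zeta transform in $d$ --- followed by a pointwise product with a size convolution and the inverse transforms. The running-time accounting ($\bigO(nk^2)$ non-join operations at $\bigO(n^2 4^k)$ each, $\bigO(n)$ joins at $\bigO(n^2 4^k i_\times(n))$ each) also matches the paper's.
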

\begin{proof}
An operation in a $k$-expression applies a procedure on zero, one, or two labelled graphs with labels $\{1,2,\ldots,k\}$ that transforms these labelled graphs into a new labelled graph with the same set of labels.
If~$H$ is such a labelled graph with vertex set~$V$, then we use $V(i)$ to denote the vertices of~$H$ with label~$i$.

For each labelled graph~$H$ obtained by applying an operation in a $k$-expression, we will compute a table~$A$ with entries $A(c,\kappa)$ that store the number of partial solutions of {\sc Dominating Set} of size~$\kappa$ that satisfy the constraints defined by the colouring~$c$.
In contrast to the algorithms on tree and branch decompositions, we do not use colourings that assign a state to each individual vertex, but colourings that assign states to the sets $V(1), V(2), \ldots, V(k)$.

The states that we use are similar to the ones used for {\sc Dominating Set} on tree decompositions and branch decomposition.
The states that we use are tuples representing two attributes: inclusion and domination. 
The first attribute determines whether at least one vertex in $V(i)$ is included in a partial solution.
We use states~$1$, $0$, and $?$ to indicate whether this is true, false, or any of both, respectively.
The second attribute determines whether all vertices of $V(i)$ are dominated in a partial solution.
Here, we also use states~$1$, $0$, and $?$ to indicate whether this is true, false, or any of both, respectively.
Thus, we get tuples of the form $(s,t)$, where the first components is related to inclusion and the second to domination, e.g., $(1,?)$ for vertex set $V(i)$ represents that the vertex set contains a vertex in the dominating set while we are indifferent about whether all vertices in $V(i)$ are dominated.

We will now show how to compute the table~$A$ for a $k$-expression obtained by using any of the four operations on smaller $k$-expressions that are given with similar tables for these smaller $k$-expressions.
This table~$A$ contains an entry for every colouring~$c$ of the series of vertex sets $\{V(1),V(2),\ldots,V(k)\}$ using the four states $(1,1)$, $(1,0)$, $(0,1)$, and $(0,0)$.
We note that the other states will be used to perform intermediate computations.
By a recursive evaluation, we can compute~$A$ for the $k$-expression that evaluates to~$G$.

\smallskip \noindent {\it Create a new graph}: In this operation, we create a new graph~$H$ with one vertex~$v$ with any label $j \in \{1,2,\ldots,k\}$.
We assume, without loss of generality by permuting the labels, that $j = k$.
We compute~$A$ by using the following formula where~$c$ is a colouring of the first $k-1$ vertex sets $V(i)$ and~$c_k$ is the state of $V(k)$:
\[ A( c \times \{c_k\} ,\kappa) = \left\{ \begin{array}{ll} 1 & \textrm{if $c_k=(1,1)$, $\kappa = 1$, and $c = \{(0,1)\}^{k-1}$} \\ 1 & \textrm{if $c_k=(0,0)$, $\kappa = 0$, and $c = \{(0,1)\}^{k-1}$} \\ 0 & \textrm{otherwise} \end{array} \right. \]
Since~$H$ has only one vertex and this vertex has label~$k$, the vertex sets for the other labels cannot have a dominating vertex, therefore the first attribute of their state must be~$0$.
Also, all vertices in these sets are dominated, hence the second attribute of their state must be~$1$.
The above formula counts the only two possibilities: either taking the vertex in the partial solution or not.

\smallskip \noindent {\it Relabel}: In this operation, all vertices with some label $i \in \{1,2,\ldots,k\}$ are relabelled such that they obtain the label $j \in \{1,2,\ldots,k\}$, $j \not= i$.
We assume, without loss of generality by permuting the labels, that $i = k$ and $j=k-1$.

Let~$A'$ be the table belonging to the $k$-expression before the relabelling and let~$A$ be the table we need to compute.
We compute~$A$ using the following formulas:
\begin{eqnarray*}
A( c \!\times\! \{(0,1)\} \!\times\! \{(i,d)\},\kappa) & \!\!=\!\! & \mathop{\sum_{i_1, i_2 \in \{0,1\}}}_{\max\{i_1, i_2\} = i} \; \mathop{\sum_{d_1, d_2 \in \{0,1\}}}_{\min\{d_1,d_2\} = d} \!\! A'(c \!\times\! \{(i_1,d_1)\} \!\times\! \{(i_2,d_2)\}, \kappa) \\
A( c \!\times\! \{(i^*,d^*)\} \!\times\! \{(i,d)\},\kappa) & \!\!=\!\! & 0 \qquad \qquad \textrm{if $(i^*,d^*) \not= (0,1)$}
\end{eqnarray*}
These formula correctly compute the table~$A$ because of the following observations.
For $V(i)$, the first attribute must be~$0$ and the second attribute must be~$1$ in any valid partial solution because $V(i) = \emptyset$ after the operations; this is similar to this requirement in the `create new graph' operation. 
If $V(j)$ must have a vertex in the dominating set, then this vertex must be in $V(i)$ or $V(j)$ originally.
And, if all vertices in $V(j)$ must be dominated, then all vertices in $V(i)$ and $V(j)$ must be dominated.
Note that the minimum and maximum under the summations correspond to `and' and `or' operations, respectively.

\smallskip \noindent {\it Add edges}: In this operation, all vertices with some label $i \in \{1,2,\ldots,k\}$ are connected to all vertices with another label $j \in \{1,2,\ldots,k\}$, $j \not= i$.
We again assume, without loss of generality by permuting the labels, that $i = k-1$ and $j=k$.

Let~$A'$ be the table belonging to the $k$-expression before adding the edges and let~$A$ be the table we need to compute.
We compute~$A$ using the following formula:
\[ A(c \times \{(i_1,d_1)\} \times \{(i_2,d_2)\}, \kappa) = \mathop{\sum_{d'_1 \in \{0,1\}}}_{\max\{d'_1,i_2\} = d_1} \; \mathop{\sum_{d'_2 \in \{0,1\}}}_{\max\{d'_2,i_1\} = d_2} \!\! A(c \times (i_1, d'_1) \times (i_2, d'_2), \kappa) \]
This formula is correct as a vertex sets $V(i)$ and $V(j)$ contain a dominating vertex if and only if they contained such a vertex before adding the edges.
For the property of domination, correctness follows because the vertex sets $V(i)$ and $V(j)$ are dominated if and only if they were either dominated before adding the edges, or if they become dominated by a vertex from the other vertex set because of the adding of the edges.

\smallskip \noindent {\it Join graphs}:
This operation joins two labelled graphs~$H_1$ and~$H_2$ with tables~$A_1$ and~$A_2$ to a labelled graph~$H$ with table~$A$.
To do this efficiently, we first apply state changes similar to those used in Sections~\ref{sec:treewidth} and~\ref{sec:branchwidth}.
We use states~$0$ and~$?$ for the first attribute (inclusion) and states~$1$ and~$?$ for the second attribute (domination).

Changing~$A_1$ and~$A_2$ to tables~$A^*_1$ and~$A^*_2$ that use this set of states can be done in a similar manner as in Lemmas~\ref{lem:dsstates} and~\ref{lem:asymdsstates}.
We first copy~$A_y$ into~$A^*_y$, for $y \in \{1,2\}$ and then iteratively use the following formulas in a coordinate-wise manner:
\begin{eqnarray}
A^*_y(c_1 \times (0,1) \times c_2,\kappa) & = & A^*_y(c_1 \times (0,1) \times c_2, \kappa) \\
A^*_y(c_1 \times (?,1) \times c_2,\kappa) & = & A^*_y(c_1 \times (1,1) \times c_2, \kappa) + A^*_y(c_1 \times (0,1) \times c_2, \kappa) \\
A^*_y(c_{1} \times (0,?) \times c_2,\kappa) & = & A^*_y(c_1 \times (0,1) \times c_2, \kappa) + A^*_y(c_1 \times (0,0) \times c_2, \kappa) \\
A^*_y(c_1 \times (?,?) \times c_2,\kappa) & = & A^*_y(c_1 \times (1,1) \times c_2, \kappa) + A^*_y(c_1 \times (1,0) \times c_2, \kappa) +\\
&& A^*_y(c_1 \times (0,1) \times c_2, \kappa) + A^*_y(c_1 \times (0,0) \times c_2, \kappa)
\end{eqnarray}

We have already seen many state changes similar to these in Sections~\ref{sec:treewidth} and~\ref{sec:branchwidth}.
Therefore, it is not surprising that we can now compute the table $A^*$ in the following way, where the table $A^*$ is the equivalent of the table $A$ we want to compute only using the different set of states:
\[ A^*(c, \kappa) = \sum_{\kappa_1 + \kappa_2 = \kappa} A^{*}_{1}(c,\kappa_1) \cdot A^{*}_{2}(c,\kappa_2) \]

Next, we apply state changes to obtain $A$ from $A^*$.
These state changes are the inverse of those given above.
Again, first copy~$A^*$ into~$A$ and then iteratively transform the states in a coordinate-wise manner using the following formulas:
\begin{eqnarray*}
A(c_1 \times (0,1) \times c_2, \kappa) & = & A(c_1 \times (0,1) \times c_2, \kappa) \\
A(c_1 \times (1,1) \times c_2, \kappa) & = & A(c_1 \times (?,1) \times c_2, \kappa) - A(c_1 \times (0,1) \times c_2, \kappa) \\
A(c_1 \times (0,0) \times c_2, \kappa) & = & A(c_1 \times (0,?) \times c_2, \kappa) - A(c_1 \times (0,1) \times c_2, \kappa) \\
A(c_1 \times (1,0) \times c_2, \kappa) & = & A(c_1 \times (?,?) \times c_2, \kappa) - A(c_1 \times (0,?) \times c_2, \kappa) \\
&&- A(c_1 \times (?,1) \times c_2, \kappa) + A(c_1 \times (0,1) \times c_2, \kappa)
\end{eqnarray*}
Correctness of the computed table~$A$ follows by exactly the same reasoning as used in Theorem~\ref{thrm:countingtwdsalg} and in Proposition~\ref{prop:secondbwdsalg}.
We note that the last of the above formulas is a nice example of an application of the principle of inclusion/exclusion: to find the number of sets corresponding to the $(1,0)$ -state, we take the number of sets corresponding to the $(?,?)$-state; then, we subtract what we counted to much, but because we subtract some sets twice, we need to add some number of sets again to obtain the required value.

\smallskip
The number of dominating sets in~$G$ of size~$\kappa$ can be computed from the table~$A$ related to the final operation of the $k$-expression for~$G$.
In this table, we consider only the entries in which the second property is~$1$, i.e., the entries corresponding to partial solutions in which all vertices in~$G$ are dominated.
Now, the number of dominating sets in~$G$ of size~$\kappa$ equals the sum over all entries $A(c,\kappa)$ with $c \in \{(0,1),(1,1)\}$.

\smallskip
For the running time, we observe that each of the $\bigO(n)$ join operations take $\bigO(n^{2} 4^{k} i_{\times}(n))$ time because we are multiplying $n$-bit numbers.
Each of the $\bigO(nk^{2})$ other operations take $\bigO(n^{2} 4^{k})$ time since we need $\bigO(n4^k)$ series of a constant number of additions using $n$-bit numbers, and $i_+(n)=\bigO(n)$.
The running time of $\bigO(n^{3} (k^{2} + i_{\times}(n)) \, 4^{k})$ follows.
\end{proof}

Similar to the algorithms for {\sc Dominating Set} on tree decompositions and branch decompositions in Section~\ref{sec:dstw} and~\ref{sec:bwds}, we can improve the polynomial factors in the running time if we are interested only in the size of a minimum dominating set, or the number of these sets.
To this end, we will introduce a notion of a de Fluiter property for cliquewidth.
This notion is defined similarly to the de Fluiter property for treewidth; see Definition~\ref{def:fluiterproptw}.
\begin{definition}[de Fluiter property for cliquewidth] \label{def:fluiterpropcw}
Consider a method to represent the different partial solutions used in an algorithm that performs dynamic programming on clique decompositions ($k$-expressions) for an optimisation problem~$\Pi$.
Such a representation has the \emph{de Fluiter property for cliquewidth} if the difference between the objective values of any two partial solutions of~$\Pi$ that are stored for a partially evaluated $k$-expression and that can both still lead to an optimal solution is at most $f(k)$, for some function $f$.
Here, the function~$f$ depends only on the cliquewidth~$k$.
\end{definition}

The definition of the de Fluiter property for cliquewidth is very similar to the same notion for treewidth.
However, the structure of a $k$-expression is different from tree decompositions and branch decompositions in such a way that the de Fluiter property for cliquewidth does not appear to be equivalent to the other two.
This in contrast to the same notion for branchwidth that is equivalent to this notion for treewidth; see Section~\ref{sec:bwframework}.
The main difference is that $k$-expressions deal with sets of equivalent vertices instead of the vertices themselves.

The representation used in the algorithm for the {\sc Dominating Set} problem above also has the de Fluiter property for cliquewidth.
\begin{lemma}
The representation of partial solutions for the {\sc Dominating Set} problem used in Theorem~\ref{thrm:dscwalg} has the de Fluiter property for cliquewidth with $f(k) = 2k$.
\end{lemma}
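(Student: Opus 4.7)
The plan is to show that, for any two partial solutions $D_1, D_2$ stored in the table for a partially evaluated $k$-expression producing a labelled graph $H$, with characteristics $c_1 \ne c_2$ and both extending to an optimum dominating set of $G$, we have $\bigl||D_1| - |D_2|\bigr| \le 2k$. Without loss of generality assume $|D_1| \le |D_2|$; it then suffices to exhibit a partial solution $\tilde D \subseteq V(H)$ of characteristic exactly $c_2$ and size $|\tilde D| \le |D_1| + 2k$, since the entry stored for characteristic $c_2$ corresponds to a minimum-cardinality partial solution of that characteristic and hence $|D_2| \le |\tilde D|$ would follow.

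I would construct $\tilde D$ by modifying $D_1$ one label class at a time, spending at most two added vertices per class---one per attribute of the state of $V(i)$. In a first pass over $V(1), \ldots, V(k)$ I reconcile the inclusion attribute $s_i$: if $c_2$ requires $s_i = 1$ and the current $\tilde D \cap V(i) = \emptyset$, add one arbitrary vertex of $V(i)$ (nonempty because $s_i = 1$ is achievable); if $c_2$ requires $s_i = 0$ and $\tilde D \cap V(i) \ne \emptyset$, delete every vertex of $\tilde D \cap V(i)$. In a second pass I reconcile the domination attribute $t_i$: for every class with $c_2$-target $t_i = 1$ such that some vertex of $V(i)$ is still undominated in $\tilde D$, add one further vertex that repairs the domination of $V(i)$; classes with $t_i = 0$ need no action, since neither inaction nor deletion can manufacture new domination. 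Since deletions never increase the size and we perform at most $2k$ additions in total, this yields $|\tilde D| \le |D_1| + 2k$.

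The main obstacle is justifying the second pass. A single added vertex must suffice to dominate all still-undominated vertices of a class $V(i)$, yet label classes need not be cliques in $H$ and $D_2$ may employ many distinct dominators of $V(i)$. I would address this by exploiting two structural facts: first, the stored characteristic records only a single bit of domination per class, collapsing many structurally distinct situations into the same state; second, all vertices sharing a label in $H$ are treated identically by the remaining $k$-expression operations (relabel and add-edges are uniform on a label class), so the set of $V(i)$-vertices undominated after the first pass can be serviced by one carefully chosen representative whose existence is witnessed by the dominators of $V(i)$ inside $D_2$. A short concluding verification that the constructed $\tilde D$ has characteristic exactly $c_2$ closes the argument, giving $f(k) = 2k$.
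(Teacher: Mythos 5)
There is a genuine gap, and it sits exactly where you flagged it: the claim that one added vertex of $H$ can dominate all still\hyp{}undominated vertices of a label class $V(i)$. A label class is homogeneous only with respect to the \emph{future} operations of the $k$-expression; inside the already-constructed graph $H$ its vertices may have pairwise disjoint neighbourhoods (think of $V(i)$ an independent set whose members were dominated, in $D_2$, by many distinct private neighbours). So no single vertex of $H$ need be adjacent to all of them, and the "carefully chosen representative witnessed by the dominators of $V(i)$ inside $D_2$" does not exist in general. The paper's proof avoids this by splitting $V(i)$ differently: vertices of $V(i)$ that will receive no future neighbours must \emph{already} be dominated by any partial solution that can still lead to an optimal solution (their domination can never be established later), so they need no repair; the remaining vertices of $V(i)$ all share every future neighbour (add-edges is uniform on a label and relabel only merges classes), so a \emph{single future vertex} --- one not in $H$ --- dominates all of them at once. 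That is where the second batch of $k$ vertices comes from, and it is incompatible with your goal of building a set $\tilde D\subseteq V(H)$ realising the target characteristic $c_2$ inside $H$.

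This points to the second, structural difference. The paper never tries to hit a prescribed characteristic: it augments the smallest stored solution $S$ by at most $2k$ vertices (one vertex of each class, plus one future neighbour per class) into a set $S'$ that dominates everything any extension could need, and then argues that $S'\cup E$ is a dominating set for the extension $E$ witnessing optimality of $D_2\cup E$, whence $|D_2|\le|S'|\le|S|+2k$. Your route --- produce $\tilde D$ with characteristic exactly $c_2$ and invoke minimality of the stored entry $B(c_2)$ --- is sound as a reduction, but the construction is overconstrained: besides the single-vertex issue, the first-pass deletions (when $c_2$ forces $s_i=0$) can un-dominate vertices in other classes, and the added vertices can flip domination bits of classes where $c_2$ demands $t_j=0$, so the two passes do not actually land on $c_2$. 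To repair your argument you would have to use the hypothesis that $D_2$ extends to an \emph{optimal} solution far more heavily than you do; the cleanest way to do that is precisely the paper's comparison against $D_2\cup E$.
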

\begin{proof}
Consider any partially constructed graph~$H$ from a partial bottom-up evaluation of the $k$-expression for a graph~$G$, and let~$S$ be the set of vertices of the smallest remaining partial solution stored in the table for the subgraph~$H$.
We prove the lemma by showing that by adding at most~$2k$ vertices to~$S$, we can dominate all future neighbours of the vertices in~$H$ and all vertices in~$H$ that will receive future neighbours.
We can restrict ourselves to adding vertices to~$S$ that dominate these vertices and not vertices in~$H$ that do not receive future neighbours, because Definition~\ref{def:fluiterpropcw} considers only partial solutions on~$H$ that can still lead to an optimal solution on~$G$.
Namely, a vertex set $V(i)$ that contains undominated vertices that will not receive future neighbours when continuing the evaluation of the $k$-expression will not lead to an optimal solution on~$G$.
This is because the selection of the vertices that will be in a dominating set happens only in the `create a new graph' operations.

We now show that by adding at most~$k$ vertices to~$S$, we can dominate all vertices in~$H$, and by adding another set of at most~$k$ vertices to~$S$, we can dominated all future neighbours of the vertices in~$H$.
To dominate all future neighbours of the vertices in~$H$, we can pick one vertex from each set $V(i)$.
Next, consider dominating the vertices in each of the vertex sets $V(i)$ and are not yet dominated and that will receive future neighbours.
Since the `add edges' operations of a $k$-expression can only add edges between future neighbours and all vertices with the label~$i$, and since the `relabel' operation can only merges the sets $V(i)$ and not split them, we can add a single vertex to~$S$ that is a future neighbour of a vertex in $V(i)$ to dominate all vertices in $V(i)$.
\end{proof}

Using this property, we can easily improve the result of Theorem~\ref{thrm:dscwalg} for the case where we want to count only the number of minimum dominating sets.
This goes in a way similar to Corollaries~\ref{cor:countmdstwalg}, \ref{cor:solvedstwalg}, \ref{cor:countmdsbwalg}, and~\ref{cor:solvedsbwalg}.
\begin{corollary} \label{cor:countmdscwalg}
There is an algorithm that, given a $k$-expression for a graph~$G$, computes the number of minimum dominating sets in~$G$ in $\bigO(n k^2 4^k i_\times(n))$ time.
\end{corollary}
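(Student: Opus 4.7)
The plan is to modify the algorithm of Theorem~\ref{thrm:dscwalg} in exactly the same spirit as Corollary~\ref{cor:countmdstwalg} modifies Theorem~\ref{thrm:countingtwdsalg} on tree decompositions, exploiting the de Fluiter property for cliquewidth (with $f(k) = 2k$) just established. Concretely, for each labelled graph~$H$ produced during the bottom-up evaluation of the $k$-expression, we no longer maintain a table indexed by all sizes $0 \le \kappa \le n$. Instead, we store a single integer $\xi_H$ (the size of the smallest partial solution represented by a non-zero entry that can still be extended to a minimum dominating set of~$G$) together with a truncated table~$A$ holding only the entries $A(c,\kappa)$ for $\xi_H \le \kappa \le \xi_H + 2k$. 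By Definition~\ref{def:fluiterpropcw} and the preceding lemma, every characteristic that survives to a minimum dominating set has its size within $2k$ of~$\xi_H$, so discarding entries outside this window is safe.

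Next, I would re-examine each of the four operations in the $k$-expression under this truncation. The ``create a new graph'' operation is trivial: $\xi_H \in \{0,1\}$ and the table has constant size. For ``relabel'' and ``add edges'', each entry of the new table is a bounded-size linear combination of entries of the old table with the same~$\kappa$, so we simply copy the window $[\xi_H,\xi_H+2k]$ and possibly update~$\xi_H$ afterwards, treating out-of-range lookups as zero. For ``join graphs'', the only nontrivial formula is
\[ A^*(c,\kappa) \;=\; \sum_{\kappa_1+\kappa_2 = \kappa} A^*_1(c,\kappa_1)\cdot A^*_2(c,\kappa_2), \]
which we evaluate for $\kappa$ in the range $\xi_1 + \xi_2 \le \kappa \le \xi_1 + \xi_2 + 2k$, again treating entries outside the stored windows of $A^*_1$ and $A^*_2$ as zero. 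This is correct because any combination contributing to a minimum-size partial solution on~$H$ must combine partial solutions from the two children that are themselves within~$2k$ of $\xi_1,\xi_2$ respectively; larger combinations cannot be minimum. The state-change transformations preceding and following the join act coordinate-wise in the colouring and pointwise in~$\kappa$, so they respect the window.

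For the running time, every table now has size $\bigO(k\,4^k)$ rather than $\bigO(n\,4^k)$. Each of the $\bigO(n)$ join operations evaluates $\bigO(k\,4^k)$ entries by a sum of $\bigO(k)$ products of $n$-bit numbers, costing $\bigO(k^2\,4^k\,i_\times(n))$ per join. The $\bigO(nk^2)$ non-join operations cost $\bigO(k\,4^k\,i_+(n))$ each and are dominated by the joins; the final extraction step (summing over colourings whose second attribute is~$1$) is negligible. Summing over all operations gives the claimed $\bigO(n\,k^2\,4^k\,i_\times(n))$ bound.

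The main thing to check carefully is that the truncation is consistent across the whole computation, i.e.\ that no entry we discarded could later contribute to an optimal dominating set on~$G$. This is exactly what the de Fluiter property for cliquewidth guarantees, via the argument in the preceding lemma: any partial solution exceeding $\xi_H+2k$ can be improved by replacing it with the solution realising~$\xi_H$ and adding at most~$2k$ vertices (one per label to dominate future neighbours, and one per label to dominate still-undominated vertices within each class), producing a no-larger extension with the same or weaker characteristic. Hence these discarded partial solutions cannot be part of a minimum dominating set, and the counts we compute remain exact.
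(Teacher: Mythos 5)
Your proposal follows essentially the same route as the paper's proof: exploit the de Fluiter property for cliquewidth with $f(k)=2k$ to truncate the size index to a window of width $\bigO(k)$ anchored at the minimum, and perform the join exactly as in Theorem~\ref{thrm:dscwalg} on the truncated tables. The substance is correct, but two details differ from the paper and one affects the stated bound. First, the paper keeps only a pair $(A(c),B(c))$ --- the minimum size of a partial solution with characteristic $c$ and the number of such minimum-size solutions --- at the $\bigO(nk^2)$ create/relabel/add-edges operations, and expands to a $2k$-wide window of sizes only at the $\bigO(n)$ join operations; this suffices because two partial solutions with the same characteristic admit exactly the same extensions, so only the minimum size per colouring can ever contribute to a minimum dominating set. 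By carrying the full window through every operation, your non-join cost becomes $\bigO(nk^2\cdot k\,4^k\,i_+(n))=\bigO(n^2k^3 4^k)$, and since $k\,i_+(n)/i_\times(n)=\Theta(k/\log n)$ up to lower-order factors, this is \emph{not} dominated by the join cost $\bigO(nk^2 4^k i_\times(n))$ once $k$ exceeds roughly $\log n$; your accounting therefore does not quite yield the claimed running time, though switching to the paper's tuple representation at non-join nodes repairs this immediately. Second, $\xi_H$ cannot be defined as the smallest non-zero entry ``that can still be extended to a minimum dominating set of~$G$'' --- that notion is not computable locally. As in the paper, one must first discard (set to zero) the entries of colourings containing an undominated label class that receives no further neighbours in the remainder of the $k$-expression (which requires knowing, for each node and label, whether future edges will be added to that label, obtainable by a preprocessing pass) and anchor the window at the minimum over what remains. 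Your closing paragraph shows you have the right argument in mind, but the algorithmic definition of the window's anchor should be stated in these operational terms.
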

\begin{proof}
For each colouring~$c$, we maintain the size $B(c)$ of any minimum partial dominating set inducing~$c$, and the number $A(c)$ of such sets.
This can also be seen as a table $D(c)$ of tuples.
Define a new function $\oplus$ such that
$$(A(c), B(c)) \oplus (A(c'), B(c')) \ = \ \left\{\begin{array}{ll}
(A(c) + A(c'), B(c)) & \mbox{if}\ B(c) = B(c')\\
(A(c^{*}), B(c^{*})) & \mbox{otherwise}
\end{array}\right.$$
where $c^{*} = \arg\min\{B(c),B(c')\}$.
We will use this function to ensure that we count only dominating sets of minimum size.

We now modify the algorithm of Theorem~\ref{thrm:dscwalg} to use the tables~$D$. 
For the first three operations, simply omit the size parameter~$\kappa$ from the formula and replace any~$+$ by~$\oplus$.
For instance, the computation for the third operation that adds new edges connecting all vertices with label $V(i)$ to all vertices with label $V(j)$, becomes:
\[ D(c \times \{(i_1,d_1)\} \times \{(i_2,d_2)\}) = \mathop{\bigoplus_{d'_1 \in \{0,1\}}}_{\max\{d'_1,i_2\} = d_1} \; \mathop{\bigoplus_{d'_2 \in \{0,1\}}}_{\min\{d'_2, i_1\} = d_2} D(c \times (i_1, d'_1) \times (i_2, d'_2)) \]

For the fourth operation, where we take the union of two labelled graphs, we need to be more careful.
Here, we use that the given representation of partial solutions has the de Fluiter property for cliquewidth.
We first discard solutions that contain vertices that are undominated and will not receive new neighbours in the future, that is, we set the corresponding table entries to $D(c) = (\infty,0)$.
Then, we also discard any remaining solutions that are at least $2k$ larger than the minimum remaining solution.

Let $D_1(c) = (A_1(c),B_1(c))$ and $D_2(c) = (A_2(c),B_2(c))$ be the two resulting tables for the two labelled graphs~$H_1$ and~$H_2$ we need to join.
To perform the join operation, we construct tables $A_1(c,\kappa)$ and $A_2(c,\kappa)$ as follows, with $y \in \{1,2\}$:
\[ A_y(c, \kappa) = \left\{ \begin{array}{ll} A_y(c) & \mbox{if}\ B_y(c) = \kappa \\ 0 & \mbox{otherwise} \end{array} \right. \]
In these tables, $\kappa$ has a range of size~$2k$ and thus this table has size $\bigO(k\, 4^{k})$.

Now, we can apply the same algorithm for the join operations as described in Theorem~\ref{thrm:dscwalg}.
Afterwards, we retrieve the value of $D(c)$ by setting $A(c) = A(c,\kappa')$ and $B(c) = \kappa'$, where~$\kappa'$ is the smallest value of~$\kappa$ for which $A(c,\kappa)$ is non-zero.

\smallskip
For the running time, we observe that each of the $\bigO(k^2n)$ operations that create a new graph, relabel vertex sets, or add edges to the graph compute $\bigO(4^k)$ tuples that cost $\bigO(i_+(n))$ time each since we use a constant number of additions and comparisons of an $\log(n)$-bits number and an $n$-bits number.
Each of the $\bigO(n)$ join operations cost $\bigO(k^2 4^k i_\times(n))$ time because of the reduced table size.
In total, this gives a running time of $\bigO(n k^2 4^k i_\times(n))$.
\end{proof}

Finally, we show that one can use $\bigO(k)$-bit numbers when considering the decision version of this minimisation problem instead of the counting variant.

\begin{corollary} \label{cor:solvedscwalg}
There is an algorithm that, given a $k$-expression for a graph~$G$, computes the size of a minimum dominating sets in~$G$ in $\bigO(n k^2 4^k)$ time.
\end{corollary}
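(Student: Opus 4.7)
The plan is to combine the dynamic programming of Theorem~\ref{thrm:dscwalg} with the size-reduction trick of Corollary~\ref{cor:countmdscwalg}, in the same way that Corollary~\ref{cor:solvedstwalg} improves upon Corollary~\ref{cor:countmdstwalg} on tree decompositions: we store only the size of a smallest partial solution per characteristic, and we invoke the join-via-state-transformations machinery locally inside each join operation, restricted to the window of sizes permitted by the de Fluiter property.

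More precisely, for every labelled graph produced by the evaluation of the $k$-expression we maintain a table $B(c)$ indexed by colourings $c \in \{(1,1),(1,0),(0,1),(0,0)\}^k$, where $B(c)$ stores the size of a minimum partial dominating set whose characteristic is $c$ (or $\infty$ if none exists). For the \emph{create a new graph}, \emph{relabel}, and \emph{add edges} operations I would directly port the formulas from Theorem~\ref{thrm:dscwalg}, replacing sums by minima and keeping $\kappa$ implicit in $B(c)$; each such operation needs $O(4^k)$ table updates, each a constant number of comparisons/additions of the stored sizes. Before any join I discard entries whose associated partial solution cannot be extended to an optimal dominating set (for instance, entries in which a vertex set $V(i)$ contains undominated vertices that will receive no future neighbours) and, using the de Fluiter property for cliquewidth, I keep only those entries that are within $2k$ of the minimum retained value. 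Storing differences to this minimum lets all the numbers that enter the computation fit in $O(k)$ bits.

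For a \emph{join} of graphs $H_1, H_2$ with tables $B_1, B_r$, I expand each $B_y$ into a table $A^{\ast}_y(c,\kappa)$ with $A^{\ast}_y(c,\kappa)=1$ if $B_y(c)=\xi_y+\kappa$ and $0$ otherwise, where $\xi_y=\min_{c'} B_y(c')$ and $0\le \kappa \le 2k$. I then apply the four state transformations from Theorem~\ref{thrm:dscwalg} coordinate-wise to $A^{\ast}_1$ and $A^{\ast}_2$, perform the coordinate-wise product
\[ A^{\ast}(c,\kappa)=\sum_{\kappa_1+\kappa_2=\kappa} A^{\ast}_1(c,\kappa_1)\cdot A^{\ast}_2(c,\kappa_2), \]
restricted to $0\le \kappa \le 4k+2$, and invert the state transformations. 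Finally, I extract $B(c)=\xi_1+\xi_2+\min\{\kappa : A^{\ast}(c,\kappa)\ge 1\}$. Correctness of this extraction follows from the same argument as in Corollary~\ref{cor:solvedstwalg}: the product enumerates combinations of minimum partial solutions from both sides and $A^{\ast}(c,\kappa)\ge 1$ exactly when such a combination of total excess size $\kappa$ realises the characteristic $c$.

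For the running time, each of the $O(nk^2)$ non-join operations costs $O(4^k)$ unit-time operations. Each of the $O(n)$ join operations performs $O(k)$ state transformations on tables of size $O(k\,4^k)$ and evaluates $O(k)$-term sums for each of the $O(k\,4^k)$ new entries, for a total of $O(k^2 4^k)$ operations per join. Because every integer appearing in an $A^{\ast}$ is bounded by the total number of $1$-entries summed from both children, which is at most $O(k\,4^k)$ and hence representable in $O(k)$ bits, all arithmetic is unit-time in our model. The main obstacle is verifying that the restriction of $\kappa$ to an $O(k)$-sized window is safe under the inverse state transformations; this is handled exactly as in Corollary~\ref{cor:solvedstwalg}, using that every combination contributing to a \emph{minimum} entry after the inverse transformation must come from minimum-or-nearly-minimum entries in $A^{\ast}_1$ and $A^{\ast}_2$ thanks to the de Fluiter property. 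Summing over all $O(n)$ operations yields the claimed $O(n k^2 4^k)$ bound.
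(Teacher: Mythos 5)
Your proposal is correct and follows essentially the same route as the paper: drop the counting and the global size parameter, keep one minimum size per characteristic, use the de Fluiter property for cliquewidth to restrict the join to an $\bigO(k)$-sized window of sizes via $0/1$ indicator tables, run the state-change convolution of Theorem~\ref{thrm:dscwalg} on those, and observe that all intermediate values are bounded by roughly the number of $1$-entries, so $\bigO(k)$-bit arithmetic is unit-time in the chosen model. The paper's own proof is just a terser version of this, deferring the join details to Corollary~\ref{cor:countmdscwalg}.
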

\begin{proof}
Maintain only the size $B(c)$ of any partial solution satisfying the requirements of the colouring~$c$ in the computations involved in any of the first three operations.
Store this table by maintaining the size~$\xi$ of the smallest solution in~$B$ that has no undominated vertices that will not get future neighbours, and let~$B$ contain $\bigO(\log(k))$-bit numbers giving the difference in size between the size of the partial solutions and the number~$\xi$; this is similar to, for example, Corollary~\ref{cor:solvedstwalg}.

For the fourth operation, follow the same algorithm as in Corollary~\ref{cor:countmdscwalg}, using $A(c,\kappa) = 1$ if $B(c)=\kappa$ and $A(c,\kappa) = 0$ otherwise.
Since the total sum of all entries in this table is $4^k$, the computations for the join operation can now be implemented using $\bigO(k)$-bit numbers.
See also, Corollaries~\ref{cor:solvedstwalg} and~\ref{cor:solvedsbwalg}.
In the computational model with $\bigO(k)$-bit word size that we use, the term in the running time for the arithmetic operations disappears since $i_\times(k) = \bigO(1)$.
\end{proof}

We conclude by noticing that $\bigOs(4^k)$ algorithms for {\sc Independent Dominating Set} and {\sc Total Dominating Set} follow from the same approach.
For {\sc Total Dominating Set}, we have to change only the fact that a vertex does not dominate itself at the 'create new graph' operations.
For {\sc Independent Dominating Set}, we have to incorporate a check such that no two vertices in the solution set become neighbours in the `add edges' operation.

\section{Relations Between the de Fluiter Properties and Finite Integer Index} \label{sec:fluiterprop}
In the previous sections, we have defined \emph{de Fluiter properties} for all three types of graph decompositions.
This property is highly related to the concept \emph{finite integer index} as defined in~\cite{BodlaenderA01}.
Finite integer index is a property used in reduction algorithms for optimisation problems on graphs of small treewidth~\cite{BodlaenderA01} and is also used in meta results in the theory of kernelisation~\cite{BodlaenderFLPST09}.
We will conclude by explaining the relation between the de Fluiter properties and finite integer index.

We start with a series of definitions.
Let a \emph{terminal graph} be a graph $G$ together with an ordered set of distinct vertices $X = \{x_1,x_2,\ldots,x_l\}$ with each $x_i \in V$.
The vertices $x_i \in X$ are called the \emph{terminals} of $G$.
For two terminal graphs $G_1$ and $G_2$ with the same number of terminals, the addition operation $G_1 + G_2$ is defined to be the operation that takes the disjoint union of both graphs, then identifies each pair of terminals with the same number $1,2,\ldots,t$, and finally removes any double edges created.

For a graph optimisation problem $\Pi$, Bodlaender and van Antwerpen-de Fluiter define an equivalence relation $\sim_{\Pi,l}$ on terminal graphs with $l$ terminals~\cite{BodlaenderA01}: $G_1 \sim_{\Pi,l} G_2$ if and only if there exists an $i \in \Z$ such that for all terminal graphs $H$ with $l$ terminals:
\[ \pi(G_1 + H) = \pi(G_2 + H) + i \]
Here, the function $\pi(G)$ assigns the objective value of an optimal solution of the optimisation problem $\Pi$ to the input graph $G$.

\begin{definition}[finite integer index]
An optimisation problem $\Pi$ is of \emph{finite integer index} if $\sim_{\Pi,l}$ has a finite number of equivalence classes for each fixed $l$.
\end{definition}

When one proves that a problem has finite integer index, one often gives a representation of partial solutions that has the de Fluiter property for treewidth; see for example~\cite{Fluiter97}.
That one can prove that a problem has finite integer index in this way can see from the following proposition.
\begin{proposition} \label{prop:fii}
If a problem $\Pi$ has a representation of its partial solutions of different characteristics that can be used in an algorithm that performs dynamic programming on tree decompositions and that has the de Fluiter property for treewidth, then $\Pi$ is of finite integer index.
\end{proposition}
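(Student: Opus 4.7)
The plan is to associate to each $l$-terminal graph a finite-valued signature that determines its $\sim_{\Pi,l}$-equivalence class, then argue that only finitely many signatures can arise. Given a terminal graph $G$ with terminals $X$ of size $l$, I would extend it to a tree decomposition whose root bag equals $X$ and run the DP algorithm, obtaining a table $A_G : C \to \Z \cup \{\infty\}$, where $C$ is the (finite) set of characteristics on an $l$-vertex bag and $A_G(c)$ is the optimum value of a partial solution on $G$ inducing characteristic $c$ at $X$. For any other terminal graph $H$ with $l$ terminals, the DP's join operation produces an expression
\[
\pi(G+H) \;=\; \min_{(c,c')\in R}\bigl[\,A_G(c) + A_H(c') - \Delta(c,c')\,\bigr]
\]
for some finite relation $R \subseteq C \times C$ and correction $\Delta$ determined by the representation, so $\pi(G+H)$ depends on $G$ only through $A_G$, and shifting all entries of $A_G$ by a constant $i$ shifts $\pi(G+H)$ by $i$ for every $H$.

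Next I would invoke the de Fluiter property to prove that every feasible entry of $A_G$ lies within $f(k)$ of $m_G := \min_c A_G(c)$, where $k = l-1$. The argument is that any characteristic $c$ with $A_G(c) > m_G + f(k)$ never contributes to $\pi(G+H)$ for any $H$: if such a $c$ achieved the optimum of $G+H$, then in the DP on $G+H$ both $c$ and $c_{\min}$ (combined with suitable $H$-side partial solutions, whose values themselves have bounded spread by applying de Fluiter symmetrically to $H$) would be extendable to an optimal solution of $G+H$, contradicting the de Fluiter bound of $f(k)$. Hence setting $A_G(c) := \infty$ whenever $A_G(c) > m_G + f(k)$ gives a pruned table that still determines $\pi(G+H)$ for every $H$.

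Finally I define the signature $\sigma(G) \in \{0,1,\dots,f(k),\infty\}^C$ by $\sigma(G)(c) := A_G(c) - m_G$ if $A_G(c) \leq m_G + f(k)$ and $\sigma(G)(c) := \infty$ otherwise. If $\sigma(G_1) = \sigma(G_2)$, then the pruned tables agree up to the uniform shift $m_{G_1} - m_{G_2}$, which in view of the formula above yields $\pi(G_1+H) - \pi(G_2+H) = m_{G_1} - m_{G_2}$ for every $H$, proving $G_1 \sim_{\Pi,l} G_2$. Since $|C|$ depends only on $l$ and each coordinate of $\sigma(G)$ takes at most $f(k)+2$ values, only finitely many signatures arise, so $\sim_{\Pi,l}$ has finitely many equivalence classes and $\Pi$ has finite integer index. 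The main obstacle is the truncation step: it requires a careful simultaneous application of the de Fluiter property to both sides of the join, and in particular that $c_{\min}$ is always extendable to some optimal solution in the DP on $G+H$, which follows because combining $c_{\min}$ with any compatible partial solution on $H$ produces a feasible upper bound on $\pi(G+H)$ that the true optimum cannot exceed.
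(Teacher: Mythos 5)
Your proposal follows essentially the same route as the paper's proof: root a tree decomposition at the terminal set $X$, take the DP table there normalised by its minimum entry as a signature, use the de Fluiter property to confine the relevant normalised values to $\{0,1,\ldots,f(k)\}$, and conclude that only finitely many signatures, hence finitely many $\sim_{\Pi,l}$-classes, can occur. The only cosmetic differences are that the paper additionally records the induced subgraph $G[X]$ on the terminals as part of the signature (a finite extra factor), while you instead make explicit the truncation of entries that can never be extended to an optimal solution; neither changes the argument.
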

\begin{proof}
Let $l$ be fixed, and consider an $l$-terminal graph $G$.
Construct a tree decomposition $T$ of $G$ such that bag associated the root of $T$ equals the set of terminals $X$ of $G$.
Note that this is always possible since we have not specified a bound on the treewidth of $T$.
For an $l$-terminal graph $H$, one can construct a tree decomposition of $G + H$ by making a similar tree decomposition of $H$ and identifying the roots, which both have the same vertex set $X$.

Let $G_1$, $G_2$ be two $l$-terminal graphs to which we both add another $l$-terminal graph $H$ through addition, i.e., $G_i + H$, and let $T_1$, $T_2$ be tree decompositions of these graphs obtained in the above way.
For both graph, consider the dynamic programming table constructed for the node $x_X$ associated with the vertex set $X$ by a dynamic programming algorithm for $\Pi$ that has the de Fluiter property for treewidth.
For these tables, we assume that the induced subgraph associated with $x_X$ of the decompositions equals $G_i$, that is, the bags of the nodes below $x_X$ contain all vertices in $G_i$, and vertices in $H$ only occur in bags associated with nodes that are not descendants of $x_X$ in $T_i$.

Clearly, $\pi(G_1 + H) = \pi(G_2 + H)$ if both dynamic programming tables are the same and $G_1[X] = G_2[X]$, that is, if the tables are equal and both graphs have the same edges between their terminals.
Let us now consider a more general case where we first normalise the dynamic programming tables such that the smallest valued entry equals zero, and all other entries contain the difference in value to this smallest entry.
In this case, it is not hard to see that if both normalised dynamic programming tables are equal and $G_1[X] = G_2[X]$, then there must exists an $i \in \Z$ such that $\pi(G_1 + H) = \pi(G_2 + H) + i$.

The dynamic programming algorithm for the problem $\Pi$ can compute only finite size tables.
Moreover, as the representation used by the algorithm has the de Fluiter property for treewidth, the normalised tables can only have values in the range $0,1,\ldots,f(k)$.
Therefore, there are only a finite number of different normalised tables and a finite number of possible induced subgraphs on $l$ vertices (terminals).
We conclude that the relation $\sim_{\Pi,l}$ has a finite number of equivalence classes.
\end{proof}

By the same Proposition it follows that problems that are not of finite integer index (e.g., {\sc Independent Dominating Set}) do not have a representation of partial solutions that has the de Fluiter property for treewidth.
We note that the converse of Proposition~\ref{prop:fii} is not necessarily true.

While we focused on the relation between finite integer index and the de Fluiter property for treewidth (or branchwidth), we do not know the relation between these concepts and the de Fluiter property for cliquewidth.
This property seems to be very different to the other two.
Finding the details on the relations between all these properties is beyond the scope of this paper as we only used the de Fluiter properties to improve the polynomial factors in the running times of the presented algorithms.

\section{Conclusion} \label{sec:conclusion}
We have presented faster algorithms for a broad range of problems on three different types of graph decompositions.
These algorithms were obtained by using generalisations of the fast subset convolution algorithm, sometimes combined with using fast multiplication of matrices.
On tree decompositions and clique decompositions the exponential factor in the running times equal the space requirement for such algorithms.
On branch decompositions, the running times of our algorithms come very close to this space bound.
Additionally, a further improvement of the exponential factor in the running time for some problems on tree decompositions would contradict the Strong Exponential-Time Hypothesis.

We like to mention that, very recently, $\bigOs(c^k)$-time algorithms for various problems on tree decompositions have been obtained, for some constant $c \geq 2$, for which previously only $\bigOs(k^k)$-time algorithms existed~\cite{CyganNPPRW11}.
This includes problems like {\sc Hamilton Cycle}, {\sc Feedback Vertex Set}, {\sc Steiner Tree}, {\sc Connected Dominating Set}.
Our techniques play an important role in this paper to make sure that the constants $c$ in these algorithms are small and equal the space requirement.
Here, $c$ is often also small enough such that no faster algorithms exist under the Strong Exponential-Time Hypothesis~\cite{CyganNPPRW11}.
It would be interesting to find a general result stating which properties a problem (or join operation on nice tree decompositions) must have to admit $\bigO(c^k)$-time algorithms, where $c$ is the space requirement of the dynamic programming algorithm.

To conclude, we note that, for some problems like counting perfect matchings, the running times of our algorithms on branch decompositions come close to the running times of the currently-fastest exact exponential-time algorithms for these problems.
For this we use that the branchwidth of any graph is at most $\frac{2}{3}n$, e.g., see~\cite{Hicks05}.
In this way, we directly obtain an $\bigO(2^{\frac{\omega}{2} \cdot \frac{2}{3} n}) = \bigO(1.7315^n)$-time algorithm for counting the number of perfect matchings.
This running time is identical to the fast-matrix-multiplication-based algorithm for this problem by Bj\"orklund et al.~\cite{BjorklundH08}.
We note that this result has recently been improved to $\bigO(1.6181^n)$ by Koivisto~\cite{Koivisto09}.
Our algorithm improves this result on graphs for which we can compute a branch decomposition of width at most $0.5844n$ in polynomial time; this is a very large family of graphs since this bound is not much smaller that the given upper bound of $\frac{2}{3}n$.

\subsubsection*{Acknowledgements}
The first author is grateful to Jan Arne Telle for introducing him to several problems solved in this paper at Dagstuhl seminar 08431, and for the opportunity to visit Bergen to work with Martin Vatshelle.
The first author also thanks Jesper Nederlof for several useful discussions.

\end{document}